\renewcommand*{\backref}[1]{}
\renewcommand*{\backrefalt}[4]{[%
\ifcase #1 Not cited.%
  \or Cited on page~#2.%
  \else Cited on pages #2.%
\fi]}
\tikzset{>={Latex[length=0.6em]}} 
\titleformat{\chapter}[display]
{\bfseries\huge}
{\filleft\Large\chaptertitlename~\thechapter}
{3ex}
{\titlerule\vspace{1.5ex}\filright}
[\vspace{1ex}\titlerule]
\newcommand{\AuthorName} {Timothy Horscroft}
\newcommand{\ProjectTitle} {Convergence Properties of Dynamic Processes on Graphs}
\newif\ifStandardTitle 
\newcommand{\School} {School of Computing}
\newcommand{\College} {College of Engineering, Computing and Cybernetics (CECC)}
\newcommand{\ProjectPoints} {24}
\newif\ifHonoursThesis 
\newcommand{\Semester} {S1/S2}
\newcommand{\Year} {2023}
\newcommand{\Degree} {Bachelor of Philosophy (Honours) - Science}
\newcommand{\CourseCode} {COMP1234}
\newcommand{\CourseName} {Course Name}
\newcommand{\FirstSupervisor} {Dr.\ Ahad N. Zehmakan}
\newif\ifTwoSupervisors 
\newcommand{\SecondSupervisor} {%
Dr.\ Second Supervisor
\\Prof.\ Dr.\ Third Supervisor (if there is any)
\\Prof.\ Dr.\ Dr.\ Fourth Supervisor (if we need four, five, etc.)
}
\theoremstyle{definition}
\newtheorem{definition}{Definition}[section]   
\theoremstyle{plain}
\newtheorem{prop}[definition]{Proposition} 
\newtheorem{lem}[definition]{Lemma}        
\newtheorem{thm}[definition]{Theorem}      
\newtheorem{cor}[definition]{Corollary}    
\newtheorem{ass}[definition]{Assumption}
\newtheorem{myEquation}[definition]{Equation}
\newenvironment{proofsketch}{\proof}{\endproof}
\newenvironment{defn}[1]{%
    \begin{definition}#1}{%
    \end{definition}%
}
\newenvironment{eqn}[1]{
    \begin{myEquation}#1}{
    \end{myEquation}
}
\newcommand{\Erdos}{Erdős\xspace}
\newcommand{\Barabasi}{Barabási\xspace}
\newcommand{\ER}{Erdős–Rényi\xspace}
\newcommand{\BA}{Barabási–Albert\xspace}
\newcommand{\bigO}{\mathcal{O}}
\newcommand{\R}{\mathbb{R}}
\newcommand{\Z}{\mathbb{Z}}
\newcommand{\E}{\mathbb{E}}
\crefname{enumi}{type}{types} 
\newenvironment{mytable}{\par\addvspace{6pt plus2pt minus1pt}\captionsetup{type=table}}{\par}
\newenvironment{myfigure}{\par\addvspace{6pt plus2pt minus1pt}\captionsetup{type=figure}}{\par}
\def\picgap{2em} 
\tikzset{
    main/.style = {draw, thick, circle, minimum size = 1em},
    redd/.style = {ultra thick, red},
    bluu/.style = {ultra thick, blue},
    grnn/.style = {ultra thick, green!70!black},
    yelw/.style = {ultra thick, yellow!70!black},
    unredd/.style = {thin, black},
    scc/.style = {teal, very thick, fill=lightgray, fill opacity=0.3},
    sccred/.style = {red, very thick, fill=lightgray!60!red, fill opacity=0.3},
    bfs/.style = {green!70!black, very thick, fill=lightgray, fill opacity=0.3},
    scclabel/.style = {teal},
    bfslabel/.style = {green!70!black}
}                                    
\begin{document}

\pagenumbering{roman}

%
%

\newgeometry{left=2.5cm,right=2.5cm,top=2.5cm}
\thispagestyle{empty}

\newdateformat{monthyeardate}{%
  \monthname[\THEMONTH] \THEYEAR}

\ifStandardTitle 

\noindent
\begin{minipage}[t]{6cm}%
{\footnotesize%
\raisebox{-\height}{{\bfseries The Australian National University}} \\
~2600 ACT~\textbar~Canberra~\textbar~Australia}
\end{minipage}%
\hfill%
\begin{minipage}[b]{10cm}%
\hfill\raisebox{-\height}{\includegraphics[height=2 cm]{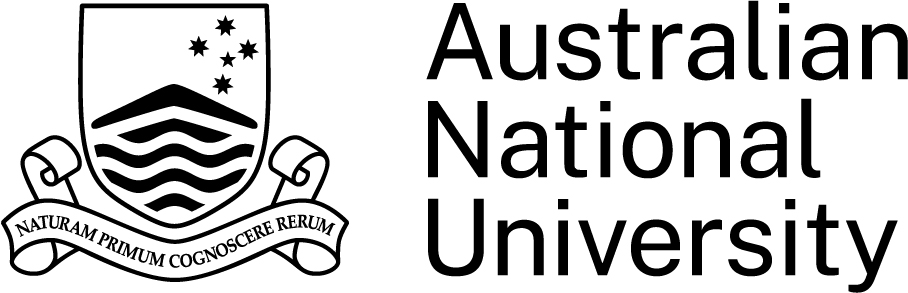}}
\end{minipage}

\ \\[2em]
\phantom{x} \hfill
\begin{minipage}{58.75 mm}
\raggedright
\bfseries \School\\[.5em]
\mdseries%
\noindent\College
\end{minipage}\\[6 em]
\hfill

\noindent
\parbox{140mm}{\sffamily \bfseries \Huge %
\ProjectTitle%
}\\[.75 em]
{--- \ProjectPoints{} pt \ifHonoursThesis Honours \else research \fi project (\Semester{} \Year)}\\[3 em]

\ifHonoursThesis%
A thesis submitted for the degree\\
\emph{\Degree}\\[3 em]
\else%
A report submitted for the course\\
\emph{\CourseCode, \CourseName}\\[3 em]
\fi

\noindent
{\footnotesize \textbf{By:}}\\
\AuthorName\\[2em]

\noindent
{\footnotesize \bfseries Supervisor\ifTwoSupervisors{}s\fi:}\\
{\footnotesize \FirstSupervisor%
\ifTwoSupervisors\\\SecondSupervisor\fi}\\[2 em]
\vfill
{\footnotesize \monthyeardate\today}

\else 

\begin{center}
\ \\[1em]
{\bfseries \Huge \ProjectTitle}\\[4em]
\ifHonoursThesis%
\Large{A thesis submitted for the degree}\\
\Large{\emph{\Degree}}\\[.5em]
{\ProjectPoints{} pt Honours project, \Semester{} \Year}
\else%
\Large{A report submitted for the course}\\
\Large{\emph{\CourseCode, \CourseName}}\\[.5em]
{\ProjectPoints{} pt research project, \Semester{} \Year}
\fi
\ \\[4em]
{\footnotesize \textbf By:}\\
\textbf{\AuthorName}\\[3em]
{\bfseries Supervisor\ifTwoSupervisors{}s\fi:}\\
{\FirstSupervisor%
\ifTwoSupervisors\\\SecondSupervisor\fi}\\[6em]
\includegraphics[height=2.5cm]{figures/ANU-logos/ANU_Primary_Horizontal_Black.jpg}\ \\[3em]
{\bfseries \School}\\
{\mdseries \College}\\
The Australian National University
\vfill
\normalsize{\monthyeardate\today}
\end{center}

\fi

\restoregeometry
{\sffamily\bfseries\Large Declaration:}\\

I declare that this work:\\

\begin{itemize}
  \item upholds the principles of academic integrity, as defined in the \href{https://www.anu.edu.au/about/governance/legislation}{University Academic Misconduct Rules};
  \item is original, except where collaboration (for example group work) has been authorised in writing by the course convener in the class summary and/or Wattle site;
  \item is produced for the purposes of this assessment task and has not been submitted for assessment in any other context, except where authorised in writing by the course convener;
  \item gives appropriate acknowledgement of the ideas, scholarship and intellectual property of others insofar as these have been used;
  \item in no part involves copying, cheating, collusion, fabrication, plagiarism or recycling.
\end{itemize}

\vspace{1 cm}
\hfill \monthname, \AuthorName
\newpage

\chapter*{Acknowledgements}

I would first of all like to thank my honours supervisor Ahad Zehmakan, who was also my ICPC coach and employer for tutoring. His consistent dedication to all three of these roles and many others helped make this thesis possible. I would like to express my gratitude for Ahad's guidance in not only the direction and writing of this thesis, but in chats about my future and career; he has been a great role model and has made pursuing academia all the more enticing an option.

I would also like to thank my previous supervisor (and employer for tutoring) Pascal Bercher, for preparing me for an honours project, and for always being fun to be around.

Thank you to Zaran for competing in ICPC regionals alongside me, accompanying me to Korean study events and always being available to talk to.

Finally, a big thanks to my family and friends for encouraging me to pursue my interest in academic studies, and last but not least Alyssia, for the last two years of wonderful memories, road trips, beautiful cards and consistent company through all the ups and downs of life.                        
\chapter*{Abstract}

Theoretical computer science plays an important role in the understanding of social networks and their properties. We can model information rippling throughout social networks, or the opinions of social media users for example, using graph theory and Markov chains. In this thesis, we model social networks as graphs, and consider two such processes:
\begin{itemize}
    \item Nodes talk to other nodes and find middle ground, causing their opinions to come closer to consensus (the load balancing model)
    \item All nodes take the maximum value of their neighbours in lockstep (the synchronous maximum model)
\end{itemize}
We study the convergence behaviours of each process, such as the eventual state of the graph, the convergence time and the period. We provide proofs of the eventual states and periods for each of the above models, and theoretical bounds for the worst case convergence times. We verify these with experiments, and explore further questions such as the average case convergence time of various special classes of graphs, or the convergence times when the model is altered slightly.                                

\renewcommand{\contentsname}{Table of Contents}   
\cleardoublepage\tableofcontents\cleardoublepage  
\pagenumbering{arabic}

\chapter{Introduction}



The \citeauthor{disinfoDozen2021} published a report in 2021 exposing that the majority of anti-vaccine fake news posts on various social media websites such as Facebook and Twitter were spread by just twelve people. These posts included claims that the pandemic was fake, orchestrated by Bill Gates, or designed to oppress and wipe out non-white people. An analysis of over 812\,000 of such posts found that 65\% of were attributable to just twelve individuals, nicknamed the \emph{disinformation dozen}. This disinformation was spread around the globe; almost 80\% of consumers in the United States reported having seen fake news on the coronavirus outbreak, and 38.2\% admitted to accidentally having shared fake news with others \citep{statistaFakeNews}. Widespread misinformation pertaining to vaccines caused hesitancy in many places around the globe, and estimates of tens of thousands of lives could have been saved if the World Health Organisation had met its vaccination goals \citep{wells2022global}.

Heading into the 21st century, the world is becoming increasingly globalised and interconnected. The internet, which started as a way for government researchers to share information by having computers communicate with each other \citep{leiner2009brief} has become a facet of daily life for over half of the world's population \citep{dataReportalOverview}. With 51\% of teenagers reporting that they use social media at least daily \citep{vogels2022teens}, and 54\% saying it would be hard to give it up \citep{lenhart2015teens}, the importance of social networks and social media in this unprecedented internet age has become clear. Many make their friends online now \citep{socialMediaAndTeens}, which has allowed a select few sites (e.g. Facebook, TikTok, Twitter, etc.) to become hubs where social interaction can take place.

As the world becomes more connected in this way, companies and institutions gain the power to spread ideas and exert their influence. Advertisers have targeted consumers with free samples in the hopes that their product will spread; if Alice mentions the product to Bob and Charlie, who each talk to two of their friends and so on, then the information ripples exponentially throughout the global social network. Because of this, many have found a career in being an `online influencer' for various products. More troubling is that political institutions or individuals such as those in the disinformation dozen, who could use this power to spread facts to help the population address the coronavirus outbreak, or call awareness to genuine political and societal issues, have instead used this power to increase their voter share by targeting and even misinforming the population.

In the wake of global events that have very quickly inflicted hardship or even death upon millions of people, a stronger understanding of the dynamics of social networks is therefore imperative in ensuring that we as a society are prepared to deal with such adversities. A solid foundation of formal models for these phenomena, and an understanding of the properties and consequences of these models, is crucial for our understanding of the real world. For this, and many other reasons, researchers across a large variety of disciplines have recently put attention into studying fake news and the ways in which information, particularly rumours, spread through social networks, cf.~\cite{kempe2003maximizing, liu2023fast, n2020rumor}. \cite{pennycook2021psychology} looked at the psychology of fake news and found a substantial disconnect between what users believe and what they share on social media, driven primarily by inattention rather than a purposeful desire to mislead. \cite{vosoughi2018spread} conducted data analysis and found that fake news spreads more than the truth due to human users, not bots. \cite{nekovee2007theory} constructed a model for rumour spreading on social networks and developed equations which show that scale-free networks (i.e. those that resemble the real world) are particularly prone to rumour spreading.

Researchers in computer science have used Natural Language Processing and Data Mining techniques to tackle fake news \citep{bondielli2019survey}. Furthermore, the processes by which the opinions of those in a social network update over time have been modelled in a number of ways and analysed from a more theoretical perspective, cf.~\cite{avin2019majority,soltani2023minimum,gartner2018majority}; this is the primary focus of this thesis. The idea is that the real world is highly complex, and therefore impenetrable by the abstract simplicity of pure mathematics and theoretical computer science. However, by constructing models which strip back the complexity and hone in on one or two key aspects of the process, the theoretical analysis becomes clearer, and insights about the dynamics of the model carry over into the real world. In other words, rather than trying to understand the complex real world all at once, each field does its part to focus on a smaller, simplified problem, and insights are pooled afterwards. The tools computer science brings to the table include Markov chain models, experimental analysis, and a comprehensive literature of graph theory.

Social networks are most often viewed as \emph{undirected graphs} where people are nodes and two nodes are connected to each other if the people are friends. Sometimes we consider users as being influenced by others or `following' others on social media, without necessitating that the converse is true, in which case we use \emph{directed graphs}. Each node has a value corresponding to its \emph{opinion}, which updates based on the opinions of adjacent nodes. We can customise which nodes update on each timestep and how they do so to get the `physics' of our \emph{dynamic process}. The consequences of the physics of this process, including its convergence behaviours and the expected time taken until its convergence, are what the literature aims to study.

In \emph{asynchronous} models, one or more nodes are chosen to update their opinions; typically the selection is random, and the \emph{expected} behaviour of the model is analysed. (Also, the setup where the choices are made by an adversary with a various level of power have been studied.) For example, we might pick an edge randomly and update both nodes relative to each other; this is called a \emph{symmetric} model, and it illustrates two people having a conversation and influencing each other. Over time, nodes with higher degree, corresponding to people in the social network with more connections, have a higher number of conversations and update more often; if we think this is unrealistic and wish to prevent this, we can consider \emph{synchronous} models, in which all nodes update their opinions simultaneously, i.e.\ in lockstep fashion.

The first of two dynamic processes considered in this thesis simulates people reaching a consensus through conversation. Each iteration, an edge is chosen, and the two friends on its endpoints have a conversation, which brings their opinions closer together in value. We assume that conversations between friends always promote agreement, which is not necessarily the case in real life; other dynamic processes could be constructed to model polarisation or radicalisation, which are also of interest to researchers \citep{bermingham2009combining} \citep{lara2017measuring}. We show that over time, each connected cluster of friends has two similar (adjacent in value) opinions that they converge to, and all extreme opinions are pulled towards the center of their cluster. We analyse a possible worst case social network in detail by reducing it to the gambler's ruin problem and using Markov chains, then use it to provide a polynomial bound on the expected number of iterations until convergence. We then use experimental analysis both to verify these results, and to investigate further questions of interest.

Then, we turn our attention to a second dynamic process where each node takes the maximum of its neighbours in lockstep, causing values to cycle around the graph. Whilst this may not have a clear analogy to a real social phenomenon, the techniques used in analysing this graph (i.e.\ decomposing into strongly connected components and considering equivalence classes) are a worthwhile addition to the literature and may prove to be useful when similar models are considered in the future. In the undirected case, we show that the maximum spreads from one end of the graph to the other and back again, meaning the convergence time is bounded by the diameter. In the directed case, we divide the graph into equivalence classes and show what value each class converges to. We show that the process eventually converges to a cycle, as the graph's maximum values are passed around in a predictable manner forever. This gives us useful results for the undirected case. Then, we analyse the convergence time, again both theoretically and experimentally.                            
\chapter{Background}\label{chap:background}

This chapter provides the prerequisite concepts for the chapters that follow. We start by providing our definitions of graphs and valuations, then dynamic processes and the models to be analysed. We cover some preliminary theory on Markov chains and strongly connected components as they are important for discussing convergence time and other convergence properties. Then, we move on to the \emph{experimental background}, which covers in detail the process of running the experiments, and justifies the methodology used. This is important, as it allows us to elaborate on the results of our experiments in subsequent chapters.

\section{Theoretical Background}

\subsection{Graphs}
Let $G=(V,E)$ be a graph where $V=\{v_1,\cdots,v_n\}$ is the set of vertices, and $E\subseteq\{(v_i,v_j) : v_i,v_j\in V, i\neq j\}$ is the set of edges. Note that this definition enforces that $G$ is simple, since there are no multi-edges or loops. Let $n\coloneqq|V|$ and $m\coloneqq|E|$. For instance, $n=7$, $m=8$, $V=\{v_1,\cdots,v_7\}$ and
\begin{align*}
    E = \{  &(v_1,v_2), \\
            &(v_2,v_3), \\
            &(v_3,v_4), \\
            &(v_4,v_5), \\
            &(v_5,v_6), \\
            &(v_6,v_1), \\
            &(v_6,v_7), \\
            &(v_7,v_5)\}
\end{align*}
corresponds to the following graph:
\begin{center}
    \begin{tikzpicture}
        \def\dist{5em};
        \foreach \angle in {1, 2, ..., 6}
            \node[main] (\angle) at ({120-60*(\angle-1)}:\dist) {$v_\angle$};
        \path (5) -- node[pos=1,main] (7) {$v_7$} +(-\dist, 0);
        \foreach \from/\to in {1/2, 2/3, 3/4, 4/5, 5/6, 6/1}
            \path[->] (\from) edge[bend left = 20] (\to);
        \foreach \from/\to in {6/7, 7/5}
            \path[->] (\from) edge[bend right = 20] (\to);
    \end{tikzpicture}
\end{center}
An \emph{undirected graph} $G$ satisfies $(u,v)\in E\iff(v,u)\in E$, meaning each edge can be traversed in both directions.

\begin{defn}[Neighbourhoods]\label{defn:Gamma}
    Let $\Gamma^k(v)$ be the set of vertices reachable in $k$ steps from $v$. That is, $\Gamma^k : V \to 2^V$, and:
    \begin{align*}
        \Gamma^0(v) &= \{v\} \\
        \Gamma^k(v) &= \{x\in V : (w,x)\in E\text{ for some }w\in\Gamma^{k-1}(v)\}
    \end{align*}
\end{defn}
In particular, let $\Gamma=\Gamma^1$, so $\Gamma(v)$ is the set of vertices $v$ pointed at by $v$. Note that:
\begin{eqn}[Immediate Neighbourhood]\label{eqn:Gamma1}
    \[ \Gamma(v) = \{x\in V : (v,x)\in E\} \]
\end{eqn}
For instance, in our example graph $\Gamma^4(v_7)=\{v_2,v_5\}$.

Let $d(u,v)$ denote the \emph{shortest distance} between nodes $u,v\in V$, that is, the number of edges in the shortest path from $u$ to $v$. Formally:
\[ d(u,v) \coloneqq \min\{k\geq 0 : v\in\Gamma^k(u)\} \]
We say $d(u,v)=\infty$ if $u$ and $v$ are not in the same connected component.
\begin{defn}[Diameter]\label{defn:diameter}
    The \emph{diameter} of a graph $G$ is denoted $D$, and is the longest of the shortest distances (excluding infinity). Formally:
    \[ D \coloneqq \max_{\shortstack{$u,v\in V$ \\ $d(u,v)\neq\infty$}}d(u,v) \]
\end{defn}

\subsection{Valuations}
Let $f_t$ be a \emph{valuation function} where $f_t(v)$ describes the value (or opinion) of $v$ at time $t\geq 0$. Note that $f_0$ describes the initial valuation (or opinions) of the network. Two valuations are \emph{isomorphic} (i.e.\ equivalent up to reordering) if each connected component is assigned the same multiset of values by both. Formally:
\begin{defn}\label{defn:isomorphic}
    Two valuations $\alpha$ and $\beta$ of an undirected graph $G=(V,E)$ are \emph{isomorphic} if and only if there is a bijection $I:V\to V$ such that $\beta(v)=\alpha(I(v))$ and $I(v)$ is in the same connected component of $G$ as $v$ for all $v\in V$.
\end{defn}
For example, these two valuations are isomorphic:
\begin{center}
    \begin{tikzpicture}[node distance = 3em]
        \def\dist{3em};
        \foreach \angle in {1, 2, 3}
            \node[main] (\angle) at ({90-(\angle-1)*120}:\dist) {\angle};
        \foreach \from/\to in {1/2, 2/3, 3/1}
            \path (\from) edge (\to);
        \node[main] (4) at (8em, 0) {4};
        \node[main] (5) [right = of 4] {5};
        \path (4) edge (5);
    \end{tikzpicture} \\
    \begin{tikzpicture}
        \def\dist{3em};
        \node[main] (1) at (90:\dist) {2};
        \node[main] (2) at (-30:\dist) {1};
        \node[main] (3) at (-150:\dist) {3};
        \foreach \from/\to in {1/2, 2/3, 3/1}
            \path (\from) edge (\to);
        \node[main] (4) at (8em, 0) {5};
        \node[main] (5) [right = of 4] {4};
        \path   (4) edge (5)
                (1) -- node[pos=1] {} +(0, \picgap);
    \end{tikzpicture}
\end{center}
On the other hand, the following two valuations are not isomorphic, as any isomorphism satisfying the first property sends $5$ to a different connected component.
\begin{center}
    \begin{tikzpicture}[node distance = 3em]
        \def\dist{3em};
        \foreach \angle in {1, 2, 3}
            \node[main] (\angle) at ({90-(\angle-1)*120}:\dist) {\angle};
        \foreach \from/\to in {1/2, 2/3, 3/1}
            \path (\from) edge (\to);
        \node[main] (4) at (8em, 0) {4};
        \node[main] (5) [right = of 4] {5};
        \path (4) edge (5);
    \end{tikzpicture} \\
    \begin{tikzpicture}
        \def\dist{3em};
        \node[main] (1) at (90:\dist) {5};
        \node[main] (2) at (-30:\dist) {1};
        \node[main] (3) at (-150:\dist) {3};
        \foreach \from/\to in {1/2, 2/3, 3/1}
            \path (\from) edge (\to);
        \node[main] (4) at (8em, 0) {2};
        \node[main] (5) [right = of 4] {4};
        \path   (4) edge (5)
                (1) -- node[pos=1] {} +(0, \picgap);
    \end{tikzpicture}
\end{center}

\subsection{Dynamic Processes}
Two types of update rules are considered in this thesis:
\begin{enumerate}
    \item $v$ is adjacent to a neighbour $w_v$. Then the update rule is $g:\R\times\R\to\R$, and for $t\geq 1$, we update $f_t(v)=g(f_{t-1}(v), f_{t-1}(w_v))$. Then we could:
    \begin{itemize}
        \item Update all $v$ simultaneously. Each $v$ picks a neighbour $w_v$ uniformly randomly \textbf{(synchronous model)}.
        \item Pick a vertex $v$ uniformly randomly and have it pick a neighbour $w_v$ uniformly randomly. Then update $v$ \textbf{(asynchronous model)}.
        \item If the graph is undirected, pick an edge $(u,v)\in E$ randomly and update $u$ and $v$ based on each other \textbf{(symmetric model)}. This means we set:
        \begin{align*}
            f_t(u) &= g(f_{t-1}(u), f_{t-1}(v)) \\
            f_t(v) &= g(f_{t-1}(v), f_{t-1}(u))
        \end{align*}
    \end{itemize}
    
    \item \label{typeTwo} $v$ updates based on all of its neighbours at once. For instance, it could update based on the min, max, sum or mode. We could:
    \begin{itemize}
        \item Update all $v$ simultaneously \textbf{(synchronous model)}
        \item Pick a vertex $v$ to update uniformly randomly \textbf{(asynchronous model)}
    \end{itemize}
\end{enumerate}

\vspace{1.5em}
For the first type, if a vertex $v$ has no neighbours, then it will not pick a neighbour, and will not update. For the second type however, there is a problem, since the minimum, maximum and mode of an empty set are undefined. Rather than introduce infinities, we assume all vertices are pointing to at least one neighbour that they can update from.
\begin{ass}\label{assNeighbour}
    When considering an update rule of \cref{typeTwo} on a graph $G=(V,E)$, we assume all vertices in $V$ have an outdegree of at least $1$.
\end{ass}

The \emph{load balancing} update rule is as follows:
\[ g(a,b) \coloneqq \begin{cases}
    a-1, & b < a \\
    a, & b = a \\
    a+1, & b > a
\end{cases} \]
That is, $a$ is brought no more than one step closer to $b$. An example of an application for the load balancing model is in politics, where opinions are often modelled as a spectrum (e.g.\ left-wing vs. right-wing, authoritarian vs. libertarian, democrat vs. republican, etc.). We can assume, to simplify the model, that conversations occur only between two people; sometimes, this assumption is lifted and people have conversations in \emph{clusters}, to prevent a \emph{celebrity} (i.e.\ high-degree node) from spreading their opinions (or a virus) to too many people at once \citep{galam2002minority}. The load balancing model illustrates when those with differing political opinions find some kind of `common ground' in their conversation, and being to understand the other side, bringing their opinions closer to consensus. Note that the load-balancing model is relative, not absolute; there is no special behaviour surrounding the opinion $0$, nor is there any difference between conversing with someone on your side vs.\ the other side. The \emph{symmetric load balancing} model is the topic of \cref{chap:load-balancing}.

The \emph{synchronous maximum model} is the topic of \cref{chap:max-model}, where $v$ updates based on the maximum of its neighbours. That is, for $t\geq 1$:
\[ f_t(v) = \max_{w\in\Gamma(v)}f_{t-1}(v) \]
Note that if $v$ is larger than all of its neighbours, it will throw away its value and update to something lower. If this were not the case, this model could mimic the spread of information throughout a social network, or the accuracy of peoples' understanding of a factual issue. Conversations between two vertices would correspond to one friend informing the other in more detail about an issue, and we could study the time taken for the entire graph to be fully informed. However, such a model would be far too simplistic mathematically, as the maximum would simply propagate throughout the network. Thus, we sacrifice real world applicability to analyse a more mathematically interesting model, with the hopes that analysing the convergence properties of such a model will introduce more insightful techniques and methods into the literature, and give results that are still worth knowing.

\subsection{Markov Chains}
Let $G=(V,E)$ be a graph and $f_0 : 2^V\to\R$ be an initial valuation function. When trying to predict $f_1$, in general, there may be many possibilities. For example, in the symmetric load balancing model, any of the following updates are possible:
\begin{center}
    \begin{tikzpicture}
        \def\dist{3em};
        \node[main,redd] (1) at (90:\dist) {4};
        \node[main] (2) at (-30:\dist) {6};
        \node[main,redd] (3) at (-150:\dist) {5};
        \path   (1) edge (2) edge[redd] (3)
                (2) edge (3);
        \draw[-{Latex[length=1em]}] (6em,0) -- +(4em,0);
        \path   (16em,0)    -- node[pos=1,main] (1) {5} +(90:\dist)
                            -- node[pos=1,main] (2) {6} +(-30:\dist)
                            -- node[pos=1,main] (3) {4} +(-150:\dist)
                (1) edge (2) edge (3)
                (2) edge (3);
    \end{tikzpicture} \\
    \begin{tikzpicture}
        \def\dist{3em};
        \node[main,redd] (1) at (90:\dist) {4};
        \node[main,redd] (2) at (-30:\dist) {6};
        \node[main] (3) at (-150:\dist) {5};
        \path   (1) edge[redd] (2) edge (3)
                (2) edge (3);
        \draw[-{Latex[length=1em]}] (6em,0) -- +(4em,0);
        \path   (16em,0)    -- node[pos=1,main] (1) {5} +(90:\dist)
                            -- node[pos=1,main] (2) {5} +(-30:\dist)
                            -- node[pos=1,main] (3) {5} +(-150:\dist)
                (1) edge (2) edge (3)
                (2) edge (3)
                (1) -- node[pos=1] {} +(0, \picgap);
    \end{tikzpicture} \\\begin{tikzpicture}
        \def\dist{3em};
        \node[main] (1) at (90:\dist) {4};
        \node[main,redd] (2) at (-30:\dist) {6};
        \node[main,redd] (3) at (-150:\dist) {5};
        \path   (1) edge (2) edge (3)
                (2) edge[redd] (3);
        \draw[-{Latex[length=1em]}] (6em,0) -- +(4em,0);
        \path   (16em,0)    -- node[pos=1,main] (1) {4} +(90:\dist)
                            -- node[pos=1,main] (2) {5} +(-30:\dist)
                            -- node[pos=1,main] (3) {6} +(-150:\dist)
                (1) edge (2) edge (3)
                (2) edge (3)
                (1) -- node[pos=1] {} +(0, \picgap);
    \end{tikzpicture}
\end{center}
If $g$, $h$ and $k$ are the possibilities for $f_1$, each of which with a $1/3$ chance, then we can describe $f_0$ updating to $f_1$ as follows:
\begin{center}
    \begin{tikzpicture}
        \node (0) {$f_0$};
        \node (0g) [above right = of 0] {};
        \node (0h) [right = of 0] {};
        \node (0k) [below right = of 0] {};
        \node (g) [right = of 0g] {$g$};
        \node (h) [right = of 0h] {$h$};
        \node (k) [right = of 0k] {$k$};
        \foreach \vert in {g, h, k}
            \path[->] (0) edge node[sloped, above] {0.33} (\vert);
    \end{tikzpicture}
\end{center}
We could continue branching to get a tree diagram, but note that it is possible to reach the same valuation multiple times. Thus, a better visualisation is a meta-graph built upon $G$, where a valuation $\alpha$ points to a valuation $\beta$ with weight $p$ if the probability is $p$ that $G$ with valuation $\alpha$ updates to become graph $G$ with valuation $\beta$. Below is an example, with one possible update sequence highlighted in red:
\begin{center}
    \begin{tikzpicture}[node distance = 5em]
        \node (0) {$f_0$};
        \node (a) [right = of 0] {$\alpha$};
        \node (b) [right = of a] {$\beta$};
        \node (c) [below = of a] {$\gamma$};
        \path (a) -- node (offa) {} (c);
        \node (offb) [right = of offa] {};
        \node (e) [right = of offb] {$\epsilon$};
        \node[redd] (f) [right = of e] {$\cdots$};
        \node (d) [above = of e] {$\delta$};
        \node (g) [right = of d] {$\cdots$};
        \path[->]   (0) edge[redd] node[above] {0.50} (a)
                        edge node[sloped, above] {0.50} (c)
                    (a) edge[redd] node[below] {0.67} (b)
                        edge[redd] node[sloped, above] {0.33} (c)
                    (b) edge[redd, out=150, in=30] node[above] {0.50} (a)
                        edge node[sloped, above] {0.25} (d)
                        edge node[sloped, below] {0.25} (e)
                    (c) edge node[sloped, below] {0.67} (b)
                        edge[redd] node[sloped, below] {0.33} (e)
                    (d) edge node[above] {0.50} (g)
                        edge[out=-60, in=60] node[sloped, below] {0.50} (e)
                    (e) edge node[sloped, above] {0.25} (d)
                        edge[redd] node[above] {0.75} (f);
    \end{tikzpicture}
\end{center}
In this example, $f_1=f_3=\alpha$, $f_2=\beta$, $f_4=\gamma$ and $f_5=\epsilon$. This directed meta-graph is called the \emph{Markov chain} of possibilities for $G,f_0$.

\subsection{Strongly Connected Components}\label{sec:SCCs}
A directed graph $G=(V,E)$ is \emph{strongly connected} if and only if for all $u,v\in V$ there is a path from $u$ to $v$, i.e.\ $u$ can reach $v$. For example, a cycle is strongly connected since any node can reach any other by going around the cycle.
\begin{center}
    \begin{tikzpicture}
        \foreach \angle in {1, 2, ..., 5}
            \node[main] (\angle) at ({90-(\angle-1)*72}:5em) {$v_\angle$};
        \foreach \from/\to in {1/2, 2/3, 3/4, 4/5, 5/1}
            \path[->] (\from) edge[bend left = 20] (\to);
    \end{tikzpicture}
\end{center}
Any directed graph can be decomposed into its \emph{strongly connected components}. To do this, let $uRv$ be a relation that is true when $u$ can reach $v$, i.e. there is a path from $u$ to $v$. This relation is reflexive (each node can reach itself using the empty path) and transitive (if $uRv$ and $vRw$, then $uRw$), but not necessarily symmetric; in the following graph $uRv$ but we do not have $vRu$:
\begin{center}
    \begin{tikzpicture}[node distance = 4em]
        \node[main] (u) {$u$};
        \node[main] (v) [right = of u] {$v$};
        \node[main] (w) [below = of u] {$w$};
        \node[main] (x) [below = of v] {$x$};
        \path (v) -- node (offv) {} (x);
        \node[main] (y) [right = of offv] {$y$};
        \node[main] (t) [below = of y] {$t$};
        \node[main] (z) [right = of y] {$z$};
        \path[->]   (u) edge (v) edge (w)
                    (v) edge (x) edge (y)
                    (x) edge (w) edge (y) edge (t)
                    (w) edge (v)
                    (y) edge[bend left] (z)
                    (t) edge (y)
                    (z) edge[bend left] (y);
    \end{tikzpicture}
\end{center}
Consider a new relation $uSv$ that is true if and only if $uRv$ and $vRu$, that is, $u$ and $v$ can reach each other. Since $R$ is reflexive and transitive, so is $S$, but we can see from the definition that $S$ is also symmetric. Thus, $S$ is an equivalence relation, so it partitions the graph into equivalence classes.
\begin{center}
    \begin{tikzpicture}[node distance = 4em]
        \node[main] (u) {$u$};
        \node[main] (v) [right = of u] {$v$};
        \node[main] (w) [below = of u] {$w$};
        \node[main] (x) [below = of v] {$x$};
        \path (v) -- node (offv) {} (x);
        \node[main] (y) [right = of offv] {$y$};
        \node[main] (t) [below = of y] {$t$};
        \node[main] (z) [right = of y] {$z$};
        \path[->]   (u) edge (v) edge (w)
                    (v) edge (x) edge (y)
                    (x) edge (w) edge (y) edge (t)
                    (w) edge (v)
                    (y) edge[bend left] node (yz) {} (z)
                    (t) edge (y)
                    (z) edge[bend left] (y);
        \draw[scc] (u) circle (1.7em);
        \draw[scc, rounded corners=2.5em] (2.8,-2.8) -- (-1.3,-2.8) -- (2.8,1.3)[rounded corners=0.8em] -- cycle;
        \draw[scc] (t) circle (1.7em);
        \path (y) -- node (yz) {} (z);
        \draw[scc] (yz) ellipse (5em and 2em);

        \path   (u) -- node[pos=1, scclabel] {\huge\textbf{A}} ++(0, 3.2em)
                (v) -- node[pos=1, scclabel] {\huge\textbf{B}} ++(0, 3.5em)
                (t) -- node[pos=1, scclabel] {\huge\textbf{C}} ++(3.2em, 0)
                (yz) -- node[pos=1, scclabel] {\huge\textbf{D}} ++(0, 3.2em);
    \end{tikzpicture}
\end{center}
These are called the \emph{strongly connected components (SCCs)}. Within each SCC, the induced graph is strongly connected, as all vertices can reach each other. We can then consider a meta-graph, where each node represents an SCC, and we have an edge from one to the other if there is a node in one SCC which has an edge to a node from the other.
\begin{center}
    \begin{tikzpicture}
        \node[main, scclabel] (A) {\huge\textbf{A}};
        \node[main, scclabel] (B) [right = of A] {\huge\textbf{B}};
        \node[main, scclabel] (C) [right = of B] {\huge\textbf{C}};
        \node[main, scclabel] (D) [right = of C] {\huge\textbf{D}};

        \path[->, scclabel] (A) edge (B)
                            (B) edge (C) edge[bend left = 35] (D)
                            (C) edge (D);
    \end{tikzpicture}
\end{center}
Such meta-graphs of SCCs are guaranteed to be \emph{acyclic}, since if there were a cycle from one SCC to another, then they would have to be the same SCC, since nodes in one can reach nodes in the other, and vice versa. The acyclicity means there is a \emph{topological sorting} of the graph; visually, this means the SCCs can be placed in an order from left to right such that edges only point towards the right.

\subsection{Convergence Time and Period}
Consider the SCCs of the example Markov chain from earlier:
\begin{center}
    \begin{tikzpicture}[node distance = 5em]
        \node (0) {$f_0$};
        \node (a) [right = of 0] {$\alpha$};
        \node (b) [right = of a] {$\beta$};
        \node (c) [below = of a] {$\gamma$};
        \path (a) -- node (offa) {} (c);
        \node (offb) [right = of offa] {};
        \node (e) [right = of offb] {$\epsilon$};
        \node[redd] (f) [right = of e] {$\cdots$};
        \node (d) [above = of e] {$\delta$};
        \node (g) [right = of d] {$\cdots$};
        \path[->]   (0) edge[redd] node[above] {0.50} (a)
                    (0) edge node[sloped, above] {0.50} (c)
                    (a) edge[redd] node[below] {0.67} (b)
                    (a) edge[redd] node[sloped, above] {0.33} (c)
                    (b) edge[redd, out=150, in=30] node[above] {0.50} (a)
                    (b) edge node[sloped, above] {0.25} (d)
                    (b) edge node[sloped, below] {0.25} (e)
                    (c) edge node[sloped, below] {0.67} (b)
                    (c) edge[redd] node[sloped, below] {0.33} (e)
                    (d) edge node[above] {0.50} (g)
                    (d) edge[out=-60, in=60] node[sloped, below] {0.50} (e)
                    (e) edge node[sloped, above] {0.25} (d)
                    (e) edge[redd] node[above] {0.75} (f);
        \draw[scc] (0) circle (1.7em);
        \draw[rounded corners=2.5em, scc] (2,1) -- (6.5,1) -- (2,-3.6)[rounded corners=0.8em] -- cycle;
        \path (d) -- node (de) {} (e);
        \draw[scc] (de) ellipse (2em and 5em);
        \path   (a) -- node (ab) {} (b)
                (ab) -- node[pos=1, scclabel] {\huge\textbf{A}} ++(0, 4em)
                (d) -- node[pos=1, scclabel] {\huge\textbf{B}} ++(0, 3em);
    \end{tikzpicture}    
\end{center}
Notice that the process bumps around a little in SCC \textbf{A} before moving on to SCC \textbf{B}. It can never return to \textbf{A} once it enters \textbf{B} because \textbf{A} can reach \textbf{B}, so if \textbf{B} could reach \textbf{A}, it would contradict that they are different SCCs. Another important observation is that because there is an edge from \textbf{A} to \textbf{B}, we cannot stay in \textbf{A} forever. In other words, the probability that we are still in \textbf{A} after $n$ steps decreases exponentially to zero. This means with high probability (w.h.p.) we will eventually end up in a SCC with no edges exiting to other SCCs; we call these \emph{absorbing states}.
\begin{thm}\label{absorb}
    Let $M$ be a finite Markov chain. Then the probability of being absorbed is $1$, that is, the probability of not being in an absorbing state after $n$ steps decreases exponentially to $0$ as $n$ goes to infinity.
\end{thm}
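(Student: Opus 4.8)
The plan is to exploit the acyclicity of the SCC condensation established in \cref{sec:SCCs}. The idea is to show that from every state there is a uniformly positive probability of falling into a terminal SCC (an absorbing state, in the sense of the discussion preceding the theorem) within a bounded number of steps, and then to amplify this over successive blocks of steps via the Markov property.

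First I would check that from any state $s$ there is a directed path in the meta-graph $M$ from $s$ to some absorbing state. Starting at $s$: if the SCC containing $s$ is terminal we are already done; otherwise some edge leaves that SCC, and concatenating a simple path inside the SCC (available because the SCC is strongly connected) with that exiting edge moves us into a strictly later SCC in a fixed topological order. Since $M$ has finitely many SCCs this iteration terminates at a terminal SCC, and as no SCC is revisited the resulting walk visits each state of $M$ at most once; writing $N$ for the number of states, it therefore has length at most $N-1$. Every transition along it has positive probability, so the whole path is realised with some probability $q_s > 0$.

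The step that carries the real content --- and where the finiteness hypothesis is essential --- is turning this into a single uniform bound. Set $T \coloneqq N$ and $p \coloneqq \min_s q_s$; this is a minimum of finitely many strictly positive numbers, hence $p > 0$. Since a terminal SCC is never exited once entered, the event of being in an absorbing state at time exactly $T$, starting from $s$, contains the event of following the length-$\le T$ path above; thus
\[ \Pr[\,\text{not absorbed after } T \text{ steps} \mid \text{started at } s\,] \;\le\; 1 - p \]
for every state $s$, and consequently for every initial distribution over states.

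Finally I would iterate. Because ``not yet absorbed at time $t$'' just says that the state at time $t$ is non-terminal, applying the Markov property at the deterministic times $T, 2T, \dots, (k-1)T$ and inserting the block bound each time gives $\Pr[\,\text{not absorbed after } kT \text{ steps}\,] \le (1-p)^k$ for all $k \ge 0$. For an arbitrary number of steps $n$, the event of still not being absorbed is monotone in $n$, so $\Pr[\,\text{not absorbed after } n \text{ steps}\,] \le (1-p)^{\lfloor n/T \rfloor}$, which decays geometrically to $0$; letting $n \to \infty$ shows the probability of never being absorbed is $0$. The only genuine subtlety is the uniformization in the third paragraph: without finiteness one could have $\inf_s q_s = 0$ or unbounded path lengths, and the conclusion can indeed fail for infinite chains.
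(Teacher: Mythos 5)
Your proof is correct. The paper states \Cref{absorb} without a formal proof, offering only the informal observation (in the paragraph preceding the theorem) that an outgoing edge from each non-terminal SCC forces the chain out of it with probability bounded away from zero; your argument is exactly the rigorous version of that sketch --- a uniform positive probability $p$ of reaching a terminal SCC within $N$ steps from any state, amplified to $(1-p)^{\lfloor n/N\rfloor}$ by the Markov property --- and is the standard argument, with the uniformization step correctly identified as the place where finiteness is used.
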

This theorem proves to be important in \cref{chap:load-balancing} for understanding the load balancing model, and also for defining convergence time and period.

Consider the Markov chain corresponding to a synchronous model. Since synchronous models are deterministic, they update in a predictable way each time. If the Markov chain is finite, by the pigeonhole principle the same state will eventually be reached twice. When this happens, the process enters a loop, and will continue running through the same possibilities forever.
\begin{center}
    \begin{tikzpicture}[node distance = 3.5em]
        \node (0) {$f_0$};
        \node (a) [right = of 0] {$\alpha$};
        \node (b) [right = of a] {$\beta$};
        \node (c) [right = of b] {$\gamma$};
        \node (d) [right = of c] {$\delta$};
        \node (dots) [right = of d] {$\cdots$};
        \node (z) [right = of dots] {$\omega$};
        \path[->]   (0) edge node[below] {1.00} (a)
                    (a) edge node[below] {1.00} (b)
                    (b) edge node[below] {1.00} (c)
                    (c) edge node[below] {1.00} (d)
                    (d) edge node[below] {1.00} (dots)
                    (dots) edge node[below] {1.00} (z)
                    (z) edge[bend right = 25] node[below] {1.00} (c);
        \draw[scc]  (0) circle (1.7em)
                    (a) circle (1.7em)
                    (b) circle (1.7em);
        \path (d) -- node (centr) {} (dots);
        \draw[scc] (centr) ellipse (9.7em and 3.2em);
    \end{tikzpicture}
\end{center}
Consider the SCCs of this Markov chain. Each node forms its own SCC, aside from the absorbing state at the end. The period $p$ is the size of this absorbing state, and the process eventually repeats the same $p$ valuations over and over again.
\begin{thm}\label{direcPeriod}
    Let $M$ is the finite Markov chain for a deterministic process. Then there exists $T\geq 0$ such that $f_t=f_{t+p}$ for all $t\geq T$.
\end{thm}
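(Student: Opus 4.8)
The plan is to exploit the fact that a deterministic synchronous process is generated by iterating a single transition map on a finite set of valuations. Because every node updates in lockstep according to a fixed rule, the valuation $f_{t+1}$ depends only on $f_t$; write $f_{t+1}=\Phi(f_t)$ for this transition map, so that $f_t=\Phi^t(f_0)$. The states of the Markov chain $M$ are exactly the valuations reachable from $f_0$ together with their successors, and since $M$ is finite, the sequence $f_0, f_1, f_2, \dots$ takes only finitely many distinct values.

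First I would apply the pigeonhole principle: among the valuations $f_0, \dots, f_N$, where $N$ exceeds the number of states of $M$, two must coincide, say $f_a=f_b$ with $a<b$. Next I would push this equality forward using determinism: applying $\Phi$ repeatedly to both sides gives $f_{a+k}=\Phi^k(f_a)=\Phi^k(f_b)=f_{b+k}$ for every $k\ge 0$. Hence the sequence is eventually periodic, and setting $T:=a$ already yields $f_t=f_{t+(b-a)}$ for all $t\ge T$.

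It remains to see that the gap $b-a$ can be taken to be the period $p$ as defined earlier, namely the size of the absorbing SCC of $M$. For this I would let $T$ be the least index at which a repetition occurs (the least $a$ with $f_a=f_{a'}$ for some $a'>a$) and let $p$ be the least such gap $a'-a$ at that $T$; a standard argument shows every other recurrence gap is a multiple of $p$, so $f_t=f_{t+p}$ holds for all $t\ge T$ with $p$ minimal. Finally, the valuations $f_T, f_{T+1}, \dots, f_{T+p-1}$ are pairwise distinct and each is reachable from the next (cyclically), so they form a strongly connected set with no edge leaving it, i.e.\ exactly an absorbing SCC of the kind guaranteed by \cref{absorb}, and it has size $p$.

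The main obstacle, and the only place needing care, is this last identification: showing that the minimal recurrence gap is well-defined independently of which repeating state is chosen, and that the resulting cycle coincides with a single absorbing SCC rather than being entered before the tail has fully settled. This is handled by the determinism of $\Phi$ — each state has out-degree exactly one in $M$ — which forces the eventually-visited states to form a single simple cycle, so that the ``rho-shaped'' picture drawn above is in fact the general situation.
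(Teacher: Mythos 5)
Your argument is correct and follows essentially the same route the paper takes: the paper justifies this theorem only by the informal discussion preceding it (pigeonhole on the finite state space, determinism forcing the sequence into a loop, and the loop being the absorbing SCC whose size is $p$), which is exactly what you have made precise. Your extra care in showing the minimal recurrence gap equals the size of that terminal cycle is a welcome tightening of the paper's sketch, not a departure from it.
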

This is consistent with the definition of a period in other fields (e.g.\ group theory in mathematics). For non-deterministic processes, we still define the period $p$ to be the size of the absorbing state, even though the process may not necessarily cycle through the same $p$ states on repeat anymore.

Now that we know what happens in finite Markov chains, we can define \emph{convergence time} to be the earliest $t$ such that $f_t$ is in an absorbing state. Since it is impossible to move from one absorbing state to another, the first time we enter an absorbing state is also the last. For deterministic processes, the convergence time is also the minimum $T$ such that $f_t=f_{t+p}$ for all $t\geq T$.

\section{Experimental Background}
The results presented in the majority of this thesis were discovered by theoretical analysis, and the proofs are simple enough to be understood with the above background. However, many interesting results concerning these models are much more challenging to analyse from a purely theoretical perspective, and so only proof sketches, or sometimes no proofs at all can be easily provided. Experimental analysis opens up opportunities to deal with the shortcomings of theoretical analysis, which include the following:
\begin{itemize}
    \item For the load-balancing and strongly connected maximum models, theoretical analysis directs us to particular bounds for the convergence time, but does not immediately present us with clean and elegant proofs. Experimental analysis can be used to reinforce the validity of such bounds, while they remain theoretically unproven.

    \item It is easy to conjecture that graphs that are more sparsely connected should take longer to spread their values throughout the graph, but such a statement is difficult to make rigorous from a theoretical perspective. Even if we define and consider \emph{expander graphs} for example, the properties of such graphs do not immediately translate into elegant proofs of their longer convergence time. We can instead verify these conjectures by simulating the process on special cases and comparing it with the theoretical bound, circumventing the need for challenging theoretical analysis.

    \item Theoretical bounds simply tell us the worst case for any valuation and any graph; analysing the average case, or the `real world' case, often takes much more effort as random graphs and random valuations need to be properly defined. It is also challenging to rigorously define what constitutes a `random graph that could feasibly show up in the real world', and such constructions do not lend themselves easily to elegant proofs. Experimental analysis can give us a much clearer idea of what tends to happen in practice when these processes play out.
\end{itemize}
The shortest path problem on graphs which allow negative edge weights illustrates an example of the benefits of experimental analysis in algorithmic computer science literature. The Shortest-Path Faster Algorithm, first published by \cite{moore1959shortest}, has the same worst case time complexity $O(|V|\cdot|E|)$ as the Bellman-Ford algorithm, but experiments seem to suggest that it is much faster $O(|E|)$ on average.

Experimental analysis has its faults however; it can be vulnerable to biased experiments (and hence biased data), and just because a pattern seems to emerge from data does not give any guarantee that it can be generalised to cases outside of the scope of testing (consider the infamous Borwein integrals). Hence, we combine theoretical and experimental analysis to achieve the greatest overall understanding of the processes. The theoretical analysis gives us guarantees we know will always hold in all cases, and the experimental analysis lets us see how far off the theoretical bounds are in practice, for real social networks.

\subsection{Methodology}
The experimental analysis in this thesis was conducted in Python 3 on a laptop with 16G RAM and 2.8GHz 11th Gen Intel Core i7-1165G7 CPU, using a single thread. The number of trials varied based on the experiment and will thus be listed with each experiment. In general, it was set to the highest power of 10 such that the largest test case still ran in under five or so minutes, but for some test cases it was unavoidable that the experiment would run for upwards of an hour. 

\subsection{Analysed Graphs}\label{analysedGraphs}
Experiments were conducted on three types of graphs:
\begin{enumerate}
    \item Graphs of users in social networks (e.g.\ Facebook, Twitter)
    \item Random graphs (\ER and \BA models)
    \item Graphs motivated by theoretical analysis (e.g.\ constructions for the worst case)
\end{enumerate}

The experiments on social networks are conducted on four datasets from the SNAP database: Facebook and Twitter \citep{leskovec2012learning}, Twitch \citep{rozemberczki2019multiscale} and Wikipedia \citep{leskovec2010signed}. The Facebook and Twitch graphs are undirected as the friendships are mutual, whereas the Twitter and Wikipedia graphs are directed, as for Twitter, a user can follow another without being followed back, and for Wikipedia, a user can vote for another without being voted for in return.
\begin{mytable}
    \centering
    \begin{tabular}{ccccc}
        \toprule
        Dataset & Directed & Nodes & Edges & Diameter \\
        \midrule
        Facebook & No & 4\,039 & 88\,234 & 8 \\
        Twitch & No & 7\,126 & 35\,324 & 10 \\
        Twitter & Yes & 81\,306 & 1\,768\,149 & 7 \\
        Twitter* & Yes & 69\,572 & 1\,696\,476 & 7 \\
        Wikipedia & Yes & 7\,115 & 103\,689 & 7 \\
        Wikipedia* & Yes & 5\,158 & 70\,922 & 6 \\
        \bottomrule
    \end{tabular}
    \caption{Basic information for the analysed social networks}
\end{mytable}
Facebook and Twitch are connected, and therefore satisfy \Cref{assNeighbour}. Twitter is weakly connected (i.e.\ connected when viewed as an undirected graph), but is not strongly connected, and contains many users who are not following anyone, violating \Cref{assNeighbour}. Wikipedia is not even weakly connected, and thus also violates this assumption. Thus, modified graphs Twitter* and Wikipedia* were constructed from the Twitter and Wikipedia graphs by recursively removing the vertices that violated the assumption, removing about 10\% of the nodes. The resulting graphs contain some anomalies such as users who only have an outgoing edge to themselves and will therefore never update their opinion, but these do not violate the assumption and are therefore permitted.
\begin{mytable}
    \centering
    \begin{tabular}{cccccccc}
        \toprule
        \multirow{2}{*}{Dataset} & \multirow{2}{*}{Directed} & \multicolumn{3}{c}{Indegree} & \multicolumn{3}{c}{Outdegree} \\
        \cmidrule(lr){3-5} \cmidrule(lr){6-8}
        && Min & Max & Avg & Min & Max & Avg \\
        \midrule
        Facebook & No & 1 & 1045 & 43.69 & 1 & 1045 & 43.69 \\
        Twitch & No & 1 & 720 & 9.91 & 1 & 720 & 9.91 \\
        Twitter & Yes & 0 & 3\,383 & 21.75 & 0 & 1\,205 & 21.75 \\
        Twitter* & Yes & 0 & 3\,383 & 24.38 & 1 & 1\,165 & 24.38 \\
        Wikipedia & Yes & 0 & 457 & 14.57 & 0 & 893 & 14.57 \\
        Wikipedia* & Yes & 0 & 457 & 13.75 & 1 & 596 & 13.75 \\
        \bottomrule
    \end{tabular}
    \caption{Degree information for the analysed social networks}
\end{mytable}
Twitter and Twitter* are \emph{almost strongly connected}; their five largest SCCs have sizes 68\,413, 73, 53, 20 and 18. Wikipedia has a unique non-trivial SCC of size 1\,300; all other SCCs consist of a single node. Thus, we expect them to behave like strongly connected graphs, and have similar convergence times.

There are two related variants of the \ER model for undirected graphs; in this thesis, an \ER random graph $G(n,p)$ is generated from two values $n\geq 0$ and $0\leq p\leq 1$, where $n$ is the number of vertices and each of the $\binom{n}{2}$ edges independently have a probability $p$ of being included. For example, to generate an \Erdos-0.5 random graph, we iterate through each edge and add it with $p=0.5$ probability. In fact, for $p=0.5$ specifically, this process can be shown to uniformly randomly select a graph from the set of $2^{\binom{n}{2}}$ possible labelled graphs with $n$ vertices. Thus, we will use $p=0.5$ for the experiments, and denote this as \Erdos-0.5.

For directed graphs, we now iterate through the $n(n-1)$ ordered pairs of vertices rather than the $\binom{n}{2}$ pairs of unordered ones, and include that edge with a probability of 0.5. This means there is a 0.25 probability that both edges between $u$ and $v$ exist, a 0.25 probability that neither does, and 0.25 probabilities that one connects to the other but not vice versa.

Similarly, a \BA random graph $G(n,m)$ is an undirected graph with $n$ nodes, and $m$ is a parameter which correlates with the graphs \emph{connectedness}. Each iteration, a new node is added and connected to a random sample of $m$ nodes from the existing graph. Instead of choosing these $m$ nodes uniformly randomly, they are weighted in exact accordance with their degree. This creates \emph{scale-free networks}, which model social networks (as people with many connections are often easier to connect with for newcomers), social media (new users are recommended celebrity accounts by the platform itself), and even the World Wide Web (search engines like Google are more likely to recommend pages with many links to other pages).

For the experiments, \BA graphs are randomly generated by the NetworkX package in Python. This generates a \BA graph with $m(n-m)$ edges, meaning if we want $n$ nodes and $k$ edges, we can solve $m(n-m)=k$ to find the appropriate value for the parameter $m$. Doing this for the social networks above, we get the following values for $m$.
\begin{mytable}
    \centering
    \begin{tabular}{ccccc}
        \toprule
        Graph & Directed & Nodes & Edges & $m$ \\
        \midrule
        Facebook & No & 4\,039 & 88\,234 & 22.0 \\
        Twitch & No & 7\,126 & 35\,324 & 5.0 \\
        Twitter* & Yes & 69\,572 & 1\,696\,476 & 24.4 \\
        Wikipedia* & Yes & 5\,158 & 70\,922 & 13.8 \\
        \bottomrule
    \end{tabular}
    \caption{Appropriate values of $m$ for the various social networks}
\end{mytable}
From this table, we can derive that a value somewhere around $m=10$ is appropriate. Since we require that $m<n$ for all graph sizes $n$ we test with, and the lowest of these is $n=10$, we set $m=9$ for all experiments.

Note that \ER and \BA random graphs are not guaranteed to satisfy \Cref{assNeighbour}, but they are extremely likely to. This means we can just repeat the generation process until we obtain graphs that do satisfy it.

The remaining graphs considered are motivated by theoretical analysis and will be discussed later in the thesis. The primary value of interest in the experimental analysis is the convergence time, but for some cases we are also interested in other questions, such as which values are converged to, or the probability that a particular value is converged to.                              
\chapter{Related Work}\label{chap:relatedWork}


As early as 1974, \citeauthor{degroot1974reaching} proposed a model where the opinions of a committee are numbers which update based on a weighted average of the opinions of the other members. \cite{goles1980periodic} showed that if the members update according to DeGroot's model, but map their opinions to either 0 or 1 based on whether they are greater than a threshold value, then they will either reach a stable state (not necessarily in agreement) and never update again, or they will flicker back and forth between two states. Later, \cite{fogelman1983transient} showed that the convergence time is $\bigO(n^2)$, and \cite{frischknecht2013convergence} showed that this bound is almost tight (up to a sub-logarithmic factor).

If this idea is combined with graph theory, we get the \emph{majority model}, where nodes are black or white, and update each round to the colour shared by the majority of their neighbours, failing to update in the case of a tie. This could be used to model things like democrats and republicans in a social network, where the assumption is that the more friends one makes who support a given party, the more likely one is to align and support that party as well. For the synchronous case, the result by \citeauthor{goles1980periodic} shows that the period is one or two, and a result by \cite{poljak1986pre} shows that the convergence time is $\Theta(m)$, where $m$ is the number of edges in the graph. If instead of failing to update in the case of a tie, it chooses black or white randomly, with probability 0.5 each, then it is possible to construct examples with exponential convergence times and the ability to reach exponentially many colourings \citep{zehmakan2023random}.

For the asynchronous case, \cite{bredereck2017manipulating} showed that the graph converges to a stable state with high probability, and presented an algorithm to efficiently calculate the minimum and maximum number of black nodes possible. Their paper also considers how an adversary might attempt to manipulate the election result. \cite{brill2016pairwise} proposed a similar model, but where each node has a ranking of $k$ candidates. If $k=2$, this simplifies to the above model, but if $k\geq 3$, a node may switch candidates adjacent in its ranking if it disagrees with its neighbours on the order of those candidates.

\cite{feige1990randomized} proposed a model where a \emph{rumour} is known initially by a single node, and on each iteration, all nodes aware of the rumour choose a neighbour uniformly at random, and inform them of the rumour; if they already know the rumour, then nothing happens. Eventually, the entire graph is made aware of the rumour, and the time taken for this to occur is the primary question. As well as rumour spreading, this can be used to model information being broadcast through a computer network, or viruses gradually infecting more and more people in a global network. \citeauthor{feige1990randomized} went on to show that the convergence time of the process is $\bigO(\Delta(D+\log n))$, where $\Delta$ is the maximum degree of any node, $D$ is the diameter and $n$ is the number of nodes. The Push-Pull model, a similar model proposed by \cite{demers1987epidemic}, includes in each round a `pull' step, where all uninfected nodes pick a neighbour at random and become infected if this neighbour is infected. Much later, \cite{giakkoupis2012rumor} showed that the convergence time for this model is $\bigO(\Phi^{-1}\log n)$, where $\Phi$ is the \emph{conductance} of the graph (which is outside the scope of this thesis, but is related to the graph's \emph{connectedness}).

\cite{kempe2003maximizing} published a seminal paper in the research surrounding rumour-spreading models. It analyses the \emph{Linear Threshold} and \emph{Independent Cascade} models, both of which model a scenario where some nodes are initially active, and activate their neighbours, cascading throughout the network and eventually stabilising, similar to the aforementioned Alice, Bob and Charlie story. In the \emph{Linear Threshold} model, a node becomes active if a weighted proportion of its neighbours become active, and in the \emph{Independent Cascade} model, when a node becomes active, it activates each of its neighbours with certain probabilities. The paper by \citeauthor{kempe2003maximizing} reformulated these models to be more mathematically elegant, then used the theory of submodular functions to find efficient approximation algorithms. This influenced a variety of other papers; \cite{fazli2014non} studied when two or more competing `rumours' spread throughout the network, and posed the problem in the context of companies putting out technologies and competing for market share. \cite{he2018stability} introduced uncertainty into the model to better reflect the uncertainties of data scraped from real world social networks.

\cite{wilder2018controlling} combined this with the previous work on election control to consider the problem of controlling an election by influencing a set of nodes, using the same technique of submodular functions. Further works include \cite{abouei2022election} who introduced uncertainty, \cite{castiglioni2021election} who allowed adversaries to add or remove edges and \cite{zehmakan2023rumors} who experimented with rumour spreading in graphs with strong and weak expansion properties and with various countermeasures, then the effectiveness of said countermeasures.

The litertature on opinion formation model and information spreading process is rather vast. The interested reader is recommended to check~\cite{musco2018minimizing} for averaging based model such as FJ model,~\cite{zehmakan2021majority} for other variants of majority model, and~\cite{zehmakan2019spread} for bootstrap percolation models.                             
\chapter{Load Balancing Model}\label{chap:load-balancing}

The load balancing model illustrates how interactions can cause the opinions of those in a social network to converge over time. This chapter deals exclusively with \textbf{undirected graphs}. In the symmetric updating model, when an edge is updated, it brings the values of its endpoints closer together. We begin with a motivating example (\Cref{sec:motivating}), then in \Cref{sec:convBehaviour}, we will see that, given enough time, the probability that the entire network is brought to two adjacent values eventually becomes certain, and these values are known (\Cref{divisionConnected}, \Cref{divisionAny}). Then, in \Cref{sec:convTime}, we study the convergence time of this process, which we can show has two \emph{dimensions}. We provide a bound for one dimension and prove its tightness with a cleverly-constructed example (\Cref{binValuationsThm}), then provide proof sketches that bound a candidate worst case valuation by reducing it to the gambler's ruin problem (\Cref{ohNcubed}, \Cref{omegaNcubed}).

\section{Motivating Example}\label{sec:motivating}
Before diving into technical analysis, it is important to orient ourselves and get an intuitive idea of the most important behaviours and characteristics that result from the rules we have defined. Typically, we want these examples to be large enough to allow the more complex and important behaviours to present themselves, whilst being small enough to calculate reasonably quickly, and sparse enough to not distract from their most important features. For this chapter and subsequent chapters, we present examples engineered to pique the curiosity of the reader and cause them to make observations that help in the study of the process. The goal is to make the reader feel that they could have discovered the proof strategies and techniques themselves.

The following graph serves well as the motivating example for this chapter.
\begin{center}
    \begin{tikzpicture}
        \node[main] (1) {4};
        \node[main] (2) [below = of 1] {2};
        \node[main] (3) [right = of 1] {6};
        \node[main] (4) [right = of 2] {6};
        \node[main] (5) [right = of 3] {5};
        \path   (1) edge (2) edge (3) edge (4)
                (2) edge (4)
                (3) edge (4) edge (5)
                (4) edge (5);
        \path (1) -- node (mid12) {} (2);
        \node [left = of mid12] {$t=0$};
    \end{tikzpicture}
\end{center}
Here, each vertex $v\in V$ is labelled with its initial value $f_0(v)$. The first edge to update might be:
\begin{center}
    \begin{tikzpicture}
        \node[main] (1) {4};
        \node[main, redd] (2) [below = of 1] {2};
        \node[main] (3) [right = of 1] {6};
        \node[main, redd] (4) [right = of 2] {6};
        \node[main] (5) [right = of 3] {5};
        \path   (1) edge (2) edge (3) edge (4)
                (2) edge[redd] (4)
                (3) edge (4) edge (5)
                (4) edge (5);
        \path (1) -- node (mid12) {} (2);
        \node [left = of mid12] {$t=0$};
    \end{tikzpicture}
\end{center}
If so, $2$ is brought one step closer to $6$, so its new value is $3$. Similarly, $6$ is brought one step closer to $2$, so its new value is $5$:
\begin{center}
    \begin{tikzpicture}
        \node[main] (1) {4};
        \node[main] (2) [below = of 1] {3};
        \node[main] (3) [right = of 1] {6};
        \node[main] (4) [right = of 2] {5};
        \node[main] (5) [right = of 3] {5};
        \path   (1) edge (2) edge (3) edge (4)
                (2) edge (4)
                (3) edge (4) edge (5)
                (4) edge (5);
        \path (1) -- node (mid12) {} (2);
        \node [left = of mid12] {$t=1$};
    \end{tikzpicture}
\end{center}
Suppose the following edge is chosen next, at $t=1$:
\begin{center}
    \begin{tikzpicture}
        \node[main] (1) {4};
        \node[main] (2) [below = of 1] {3};
        \node[main] (3) [right = of 1] {6};
        \node[main, redd] (4) [right = of 2] {5};
        \node[main, redd] (5) [right = of 3] {5};
       \path   (1) edge (2) edge (3) edge (4)
                (2) edge (4)
                (3) edge (4) edge (5)
                (4) edge[redd] (5);
        \path (1) -- node (mid12) {} (2);
        \node [left = of mid12] {$t=1$};
    \end{tikzpicture}
\end{center}
Then neither value changes as they both already have the same opinion:
\begin{center}
    \begin{tikzpicture}
        \node[main] (1) {4};
        \node[main] (2) [below = of 1] {3};
        \node[main] (3) [right = of 1] {6};
        \node[main] (4) [right = of 2] {5};
        \node[main] (5) [right = of 3] {5};
        \path   (1) edge (2) edge (3) edge (4)
                (2) edge (4)
                (3) edge (4) edge (5)
                (4) edge (5);
        \path (1) -- node (mid12) {} (2);
        \node [left = of mid12] {$t=2$};
    \end{tikzpicture}
\end{center}
Next, suppose the following edge is chosen at $t=2$:
\begin{center}
    \begin{tikzpicture}
        \node[main, redd] (1) {4};
        \node[main, redd] (2) [below = of 1] {3};
        \node[main] (3) [right = of 1] {6};
        \node[main] (4) [right = of 2] {5};
        \node[main] (5) [right = of 3] {5};
        \path   (1) edge[redd] (2) edge (3) edge (4)
                (2) edge (4)
                (3) edge (4) edge (5)
                (4) edge (5);
        \path (1) -- node (mid12) {} (2);
        \node [left = of mid12] {$t=2$};
    \end{tikzpicture}
\end{center}
Then, $3$ is brought one step closer to $4$, so it becomes $4$. Similarly, $4$ is brought one step closer to $3$, so it becomes $3$, and hence the values swap:
\begin{center}
    \begin{tikzpicture}
        \node[main] (1) {3};
        \node[main] (2) [below = of 1] {4};
        \node[main] (3) [right = of 1] {6};
        \node[main] (4) [right = of 2] {5};
        \node[main] (5) [right = of 3] {5};
        \path   (1) edge (2) edge (3) edge (4)
                (2) edge (4)
                (3) edge (4) edge (5)
                (4) edge (5);
        \path (1) -- node (mid12) {} (2);
        \node [left = of mid12] {$t=3$};
    \end{tikzpicture}
\end{center}
Suppose the process continues as follows:
\begin{center}
    \begin{tikzpicture}
        \node[main] (1) at (0,0) {3};
        \node[main] (2) [below = of 1] {4};
        \node[main] (3) [right = of 1] {6};
        \node[main] (4) [right = of 2] {5};
        \node[main] (5) [right = of 3] {5};
        \path   (1) edge (2) edge (3) edge (4)
                (2) edge (4)
                (3) edge (4) edge (5)
                (4) edge (5);
        \path (1) -- node (mid12) {} (2);
        \node [left = of mid12] {$t=3$};
    
        \node[main, redd] (1) at (14em,0) {3};
        \node[main] (2) [below = of 1] {4};
        \node[main] (3) [right = of 1] {6};
        \node[main, redd] (4) [right = of 2] {5};
        \node[main] (5) [right = of 3] {5};
        \path   (1) edge (2) edge (3) edge[redd] (4)
                (2) edge (4)
                (3) edge (4) edge (5)
                (4) edge (5);
    \end{tikzpicture} \\
    \begin{tikzpicture}
        \node[main] (1) at (0,0) {4};
        \node[main] (2) [below = of 1] {4};
        \node[main] (3) [right = of 1] {6};
        \node[main] (4) [right = of 2] {4};
        \node[main] (5) [right = of 3] {5};
        \path   (1) edge (2) edge (3) edge (4)
                (2) edge (4)
                (3) edge (4) edge (5)
                (4) edge (5)
                (1) -- node[pos=1] {} +(0, \picgap);
        \path (1) -- node (mid12) {} (2);
        \node [left = of mid12] {$t=4$};
        
        \node[main] (1) at (14em,0) {4};
        \node[main] (2) [below = of 1] {4};
        \node[main, redd] (3) [right = of 1] {6};
        \node[main, redd] (4) [right = of 2] {4};
        \node[main] (5) [right = of 3] {5};
        \path   (1) edge (2) edge (3) edge (4)
                (2) edge (4)
                (3) edge[redd] (4) edge (5)
                (4) edge (5);
        
        \node[main] (1) at (0,-7.8em) {4};
        \node[main] (2) [below = of 1] {4};
        \node[main] (3) [right = of 1] {5};
        \node[main] (4) [right = of 2] {5};
        \node[main] (5) [right = of 3] {5};
        \path   (1) edge (2) edge (3) edge (4)
                (2) edge (4)
                (3) edge (4) edge (5)
                (4) edge (5);
        \path (1) -- node (mid12) {} (2);
        \node [left = of mid12] {$t=5$};
    \end{tikzpicture}
\end{center}
Notice that some of the updates either did nothing or swapped the values, while other updates brought values closer together. Motivated by this observation, we are now ready to categorise the update types:
\begin{defn}[Types of updates]
    Let $G=(V,E)$ be an undirected graph, and suppose we are looking to update $f_t$ to get $f_{t+1}$. Then an update on $(u,v)\in E$, where $a\coloneqq f_t(u)$ and $b\coloneqq f_t(v)$, is
    \begin{itemize}
        \item a swap update if $|a-b|\leq 1$, and
        \item a shrink update if $|a-b|>1$.
    \end{itemize}
\end{defn}
Note that when $|a-b|=0$, i.e.\ $a=b$, we can imagine the values as having swapped even though there was no change.

Notice that for the example graph above at $t=5$, it is impossible for a shrink update to occur ever again, so it is no longer possible for the values of the nodes to change, up to reordering; we have entered an isomorphism class which cannot be left. In other words, there will be two $4$s and three $5$s for all $t\geq 5$ regardless of the update sequence. Recalling \Cref{absorb}, which said that finite Markov chains eventually reach absorbing states, we might come up with the idea that absorbing states are always like this, i.e. classes of isomorphic valuations from which further shrink updates are impossible. This turns out to be correct, but as of now, we are unsure if the Markov chain is even finite; there could be some update sequence that changes the graph in infinitely many ways, never reaching a stable state. We would also like to know what the possible final states look like, and how many there are. The next section introduces techniques that allow us to resolve these questions.

\vspace{3.5em}
\section{Convergence Behaviour}\label{sec:convBehaviour}
To develop an understanding of the structure of the Markov chain, and whether it is finite, we should consider what parts of the process stay the same, and what parts change. This brings us to \emph{invariants} and \emph{potential functions}. Invariants do not change in each individual update, meaning that they never change overall. Potential functions either always increase or always decrease in each update, meaning some property of the graph changes in a predictable way over time. We shall see that there are natural examples of both invariants and potential functions for this process, and that applying them gives us a detailed understanding of the SCCs and absorbing states of the Markov chain. From this, we can prove important results for understanding this model.

The next two lemmas are simple to prove, but taken together give us a powerful understanding of the convergent properties of this process.
\begin{lem}\label{sumInv}
    Let $G=(V,E)$ be a (not necessarily connected) graph. Then the sum $\sum_{v\in V}f_t(v)$ of its values is invariant of $t$.
\end{lem}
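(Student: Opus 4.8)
The plan is a one-step induction on $t$: it suffices to show that a single symmetric update does not change the total sum. Since an update on the chosen edge $(u,v)\in E$ modifies only the values of $u$ and $v$, leaving every other vertex untouched, the change in $\sum_{v\in V}f_t(v)$ equals $\bigl(f_{t+1}(u)+f_{t+1}(v)\bigr)-\bigl(f_t(u)+f_t(v)\bigr)$. Writing $a\coloneqq f_t(u)$ and $b\coloneqq f_t(v)$, the update rule gives $f_{t+1}(u)=g(a,b)$ and $f_{t+1}(v)=g(b,a)$, so the whole claim reduces to the arithmetic identity $g(a,b)+g(b,a)=a+b$ for all $a,b\in\R$ (in fact $\Z$, as these values stay integral).

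To verify that identity I would split into the three cases in the definition of $g$. If $a=b$, then $g(a,b)=a$ and $g(b,a)=b$, so the sum is $a+b$. If $a<b$, then $b>a$ forces $g(a,b)=a+1$, and $a<b$ forces $g(b,a)=b-1$, so the sum is $(a+1)+(b-1)=a+b$. The case $a>b$ is symmetric. Hence each update is sum-preserving, and the lemma follows by induction on $t$, with the (possibly non-connected) structure of $G$ playing no role, since the argument is local to a single edge.

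There is essentially no obstacle here; the only thing to be mildly careful about is the $a=b$ boundary, where $g$ does nothing at all to either endpoint, so that the "swap" picture of the update still trivially conserves the sum. One could alternatively phrase the whole argument by noting that a shrink update moves $a$ up by one and $b$ down by one (or vice versa) while a swap update either exchanges two adjacent values or leaves them fixed, and both operations manifestly preserve the multiset sum — this is the form that will be convenient to reuse when analysing the absorbing states.
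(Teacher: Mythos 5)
Your proof is correct and takes essentially the same approach as the paper: both reduce the claim to the fact that a single update only changes the two endpoint values and preserves their pairwise sum, verified by cases, followed by induction on $t$. (You case-split directly on the definition of $g$ rather than on the swap/shrink dichotomy the paper uses, but this is the same local computation.)
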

\begin{proof}
    Suppose we are at time $t$ and edge $(u,v)\in E$ is selected to be updated. Let $a\coloneqq f_t(u)$ and $b\coloneqq f_t(v)$.
    \begin{itemize}
        \item If we perform a swap update, then $a$ and $b$ switch places, and the values of all other vertices remain the same. Then $f_{t+1}(u)=b$ and $f_{t+1}(v)=a$, so:
        \begin{align*}
            \sum_{w\in V}f_{t+1}(w) &= b + a + \sum_{w\in V\setminus\{u,v\}}f_{t+1}(w) \\
            &= a + b + \sum_{w\in V\setminus\{u,v\}}f_t(w) \\
            &= \sum_{w\in V}f_t(w)
        \end{align*}

        \item If we perform a shrink update, then $|a-b|>1$. Without loss of generality (WLOG), assume $a+1<b$. Then $f_{t+1}(u)=a+1$ and $f_{t+1}(v)=b-1$, so:
        \begin{align*}
            \sum_{w\in V}f_{t+1}(w) &= (a+1) + (b-1) + \sum_{w\in V\setminus\{u,v\}}f_{t+1}(w) \\
            &= a + b + \sum_{w\in V\setminus\{u,v\}}f_t(w) \\
            &= \sum_{w\in V}f_t(w)
        \end{align*}
    \end{itemize}
    Thus, no matter the update, the sum remains the same, and hence by induction, is invariant with respect to $t$.
\end{proof}
Now, we want a potential function that describes how close the values are together. This will remain the same after swap updates, since the values don't get any closer, but should either strictly increase or strictly decrease after shrink updates, such as the following:
\[ (1,9) \to (2,8) \to (3,7) \to (4,6) \to (5,5) \]
A natural choice here is the sum of each value squared. This leads us to the following:
\begin{lem}
    Let $G=(V,E)$ be a graph at time $t$, and suppose a shrink update is performed on $(u,v)\in E$. Then if $a\coloneqq f_t(u)$ and $b\coloneqq f_t(v)$, then the square sum $\sum_{v\in V}f_t^2(v)$ decreases by $2(|a-b|-1)\geq 2$.
\end{lem}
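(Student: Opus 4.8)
The plan is to exploit the fact that a shrink update modifies only the two endpoints $u$ and $v$, so the change in $\sum_{v\in V} f_t^2(v)$ reduces to a change in the single quantity $a^2 + b^2$. First I would invoke the symmetry of the update rule to assume without loss of generality that $a < b$; since the update is a shrink update, $|a - b| > 1$ forces $b - a \geq 2$, equivalently $a + 1 \leq b - 1$. By the definition of the load-balancing rule $g$, we then have $f_{t+1}(u) = a+1$ and $f_{t+1}(v) = b-1$, while $f_{t+1}(w) = f_t(w)$ for every $w \in V \setminus \{u,v\}$.

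Next I would compute the difference directly. All terms outside $\{u,v\}$ cancel, so
\begin{align*}
    \sum_{w\in V} f_{t+1}^2(w) - \sum_{w\in V} f_t^2(w)
    &= (a+1)^2 + (b-1)^2 - a^2 - b^2 \\
    &= 2a + 1 - 2b + 1 \\
    &= -2\bigl((b-a) - 1\bigr) \\
    &= -2\bigl(|a-b| - 1\bigr).
\end{align*}
Finally, since $b - a \geq 2$ gives $|a-b| - 1 \geq 1$, the square sum decreases by $2(|a-b|-1) \geq 2$, as claimed.

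There is essentially no genuine obstacle here: the only things to be careful about are (i) recording that $|a-b|>1$ over the integers means $|a-b|\geq 2$, which is exactly what makes the decrease at least $2$ rather than merely positive, and (ii) checking that the WLOG reduction is legitimate, i.e.\ that swapping the roles of $u$ and $v$ leaves both $\sum f^2$ and the quantity $|a-b|$ unchanged, which is immediate. The lemma is therefore best proved by the short direct calculation above rather than by any structural argument.
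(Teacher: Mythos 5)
Your proof is correct and follows essentially the same route as the paper's: a WLOG reduction to one ordering of $a$ and $b$, followed by the direct two-term computation of the change in the square sum (the paper just writes the expansion of $(a-1)^2+(b+1)^2$ under the opposite WLOG). Your side remark that $|a-b|>1$ yields $|a-b|\geq 2$ only because the values are integers is a fair observation the paper leaves implicit, but it does not change the argument.
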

\begin{proof}
    Suppose WLOG that $b+1<a$, so $f_{t+1}(u)=a-1$ and $f_{t+1}(v)=b+1$. All other values remain the same, so:
    \begin{align*}
        \sum_{w\in V}f_{t+1}^2(w) &= (a-1)^2 + (b+1)^2 + \sum_{w\in V\setminus\{u,v\}}f_{t+1}^2(w) \\
        &= a^2-2a+1+b^2+2b+1 + \sum_{w\in V\setminus\{u,v\}}f_{t+1}^2(w) \\
        &= 2(b-a+1) + a^2 + b^2 + \sum_{w\in V\setminus\{u,v\}}f_t^2(w) \\
        &= 2((b+1)-a) + \sum_{w\in V}f_t^2(w)
    \end{align*}
    Since $(b+1)-a<0$, the square sum decreases by $2(a-b-1)=2(|a-b|-1)\geq 2$. Assuming $a+1<b$ instead yields the same result.
\end{proof}
Since the square sum is invariant after swap updates, we also get the following:
\begin{cor}
    If $G=(V,E)$ is a graph, then the square sum $\sum_{v\in V}f_t^2(v)$ is monotonically decreasing with respect to $t$. \qed
\end{cor}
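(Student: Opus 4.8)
The plan is simply to combine the two immediately preceding results with a one-line induction, so I expect no real obstacle. First I would note that the dichotomy in the definition of update types is exhaustive: the update performed at any time $t$ on the selected edge $(u,v)$, writing $a\coloneqq f_t(u)$ and $b\coloneqq f_t(v)$, is a swap update when $|a-b|\le 1$ and a shrink update when $|a-b|>1$.

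For a swap update I would observe that the load-balancing rule $g$ either leaves the two endpoint values untouched (if $a=b$) or interchanges them (if $|a-b|=1$, since then $a$ becomes $b$ and $b$ becomes $a$), while every other vertex is unchanged; hence the multiset of all values — and therefore the square sum $\sum_{w\in V}f_t^2(w)$, being a symmetric function of the values — is invariant. This is exactly the fact invoked just before the corollary. For a shrink update, the preceding lemma gives $\sum_{w\in V}f_{t+1}^2(w)=\sum_{w\in V}f_t^2(w)-2(|a-b|-1)$ with $2(|a-b|-1)\ge 2>0$, so the square sum strictly decreases. In either case $\sum_{w\in V}f_{t+1}^2(w)\le\sum_{w\in V}f_t^2(w)$.

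Finally I would run an induction on $t$: from $\sum_{w\in V}f_{t+1}^2(w)\le\sum_{w\in V}f_t^2(w)$ for every $t$ one gets $\sum_{w\in V}f_{t'}^2(w)\le\sum_{w\in V}f_t^2(w)$ whenever $t'\ge t$, which is the monotonicity claimed. The only point needing a moment's care is that the swap case must be seen to include the degenerate $a=b$ update, so that the case split is genuinely complete; also ``decreasing'' here is meant in the weak (non-strict) sense, since swap updates keep the square sum constant.
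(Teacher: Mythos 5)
Your proposal is correct and matches the paper's intended argument exactly: the corollary is stated with only the remark that the square sum is invariant under swap updates and combined with the preceding lemma on shrink updates, which is precisely the case split and one-line induction you give. Nothing further is needed.
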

The monotonically decreasing square sum potential function is the key to proving that we will reach an absorbing state with high probability. It means shrink updates are irreversible, as once the square sum is lowered, it cannot be raised again. That is, shrink updates must move us to a lower SCC in the hierarchy. If $Q$ is the initial square sum, then:
\[ Q=\sum_{v\in V}f_0^2(v)\geq\sum_{v\in V}f_1^2(v)\geq\cdots\geq 0 \]
Since each shrink update decreases the square sum by at least $2$, this means:
\begin{cor}\label{finShrink}
    Let $G=(V,E)$ be a graph and $f_0$ be an initial valuation. If the initial square sum is $Q=\sum_{v\in V}f_0^2(v)$ then at most $\lfloor Q/2\rfloor$ shrink updates are possible. \qed
\end{cor}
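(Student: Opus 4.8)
The plan is to read this bound straight off the two results immediately preceding it; the ``proof'' is essentially bookkeeping with the square-sum potential function. I would use three facts already established above: (a) every shrink update decreases $\sum_{v\in V}f_t^2(v)$ by at least $2$ (the preceding lemma); (b) a swap update leaves the square sum unchanged, since it only permutes two values or changes nothing (noted just above the monotonicity corollary); and (c) the square sum is a sum of squares, hence never drops below $0$.

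With these in hand, I would fix an arbitrary update sequence on $G$ starting from $f_0$ and consider its first $N$ steps, of which say $k$ are shrink updates and the remaining $N-k$ are swap updates. Processing the updates one at a time — that is, by induction on $N$, applying (a) at each shrink and (b) at each swap — the square sum after these $N$ steps is at most $Q-2k$. By (c) this quantity is $\geq 0$, so $2k\leq Q$, i.e. $k\leq Q/2$; since $k$ is a nonnegative integer this forces $k\leq\lfloor Q/2\rfloor$. As $N$ was arbitrary, no run of the process can ever perform more than $\lfloor Q/2\rfloor$ shrink updates, which is the claim.

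There is essentially no obstacle here, since the surrounding text has already done the substantive work; the only points worth a sentence are that shrink and swap updates may be interleaved in any pattern, so one must invoke the invariance of the \emph{square} sum under swaps (rather than the invariance of the plain sum in \Cref{sumInv}), and that the floor in the bound arises purely from the integrality of the count $k$, not from any assumption on the valuation values themselves. This same estimate is precisely what is used next to argue that the process reaches an absorbing state.
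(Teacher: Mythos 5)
Your argument is exactly the one the paper uses: the square sum starts at $Q$, is non-increasing (unchanged by swaps, decreased by at least $2$ per shrink by the preceding lemma), and is bounded below by $0$, so the number of shrink updates is at most $\lfloor Q/2\rfloor$ by integrality. The paper states this corollary without a separate proof precisely because the displayed chain $Q\geq\sum_v f_1^2(v)\geq\cdots\geq 0$ just before it already contains this bookkeeping, so your proposal matches it in substance and detail.
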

The final aggregations to consider are the minimum and maximum of the graph. For simplicity, we can just consider the most extreme opinion, which is the one with the greatest \emph{absolute value}. A key observation is it is impossible for the most extreme opinion in a network to become more extreme, since there is no one more extreme to move towards.
\begin{lem}
    If $G=(V,E)$ is a graph, then $\max_{v\in V}|f_t(v)|$ is monotonically decreasing with respect to $t$.
\end{lem}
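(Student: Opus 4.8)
The plan is to reduce to a single update step and then close by induction on $t$, exactly as in the proofs of the preceding lemmas: it suffices to show $\max_{v\in V}|f_{t+1}(v)|\le \max_{v\in V}|f_t(v)|$ whenever $f_{t+1}$ is obtained from $f_t$ by one symmetric update. Write $M\coloneqq\max_{v\in V}|f_t(v)|$, so every value at time $t$ lies in the closed interval $[-M,M]$. When the selected edge is $(u,v)\in E$, only $u$ and $v$ change their value; all other vertices keep values already bounded by $M$ in absolute value, so it is enough to prove that the two new values $f_{t+1}(u)$ and $f_{t+1}(v)$ still lie in $[-M,M]$.

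The key observation I would isolate is that an update never moves either endpoint's value outside the closed interval spanned by the two old values. Set $a\coloneqq f_t(u)$ and $b\coloneqq f_t(v)$, and assume WLOG $a\le b$, so $[a,b]\subseteq[-M,M]$. I would then case-split on the update type. If it is a swap update with $a=b$, nothing changes. If it is a swap update with $a\ne b$, the two values are adjacent, so $b=a+1$ (all valuations here are integer-valued, since the update rule only ever adds or subtracts $1$), and the update simply transposes them, giving $\{f_{t+1}(u),f_{t+1}(v)\}=\{a,b\}\subseteq[-M,M]$. If it is a shrink update, then $b-a>1$, hence $a<a+1<b$ and $a<b-1<b$; since $f_{t+1}(u)=a+1$ and $f_{t+1}(v)=b-1$, both new values lie strictly inside $[a,b]\subseteq[-M,M]$. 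In every case $|f_{t+1}(u)|\le M$ and $|f_{t+1}(v)|\le M$, hence $\max_{v\in V}|f_{t+1}(v)|\le M$, and the induction goes through.

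I do not expect a genuine obstacle here — the argument is routine — but the one place I would be explicit is the integrality of the values, which is what forces a swap update on a pair with $a\ne b$ to be an exact transposition rather than an overshoot past $b$: if one allowed, say, $b-a=\tfrac12$, then $a+1$ could exceed $b$ and the claim would fail. Since the whole chapter works with integer opinions and integrality is preserved by the update rule, this is not a real restriction. An equivalent, slightly more conceptual phrasing of the same proof is that the intervals $[-M_t,M_t]$ are nested decreasing in $t$, because each value at time $t+1$ lies in the convex hull of the two values that produced it, all of which sit inside $[-M_t,M_t]$.
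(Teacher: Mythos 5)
Your proof is correct and follows essentially the same route as the paper's, which argues in one line that the maximum can only increase if a maximum-valued node combines with a strictly larger opinion, which cannot exist. Your version is in fact more careful than the paper's: the convex-hull observation (each new value stays in $[a,b]$) plus the explicit remark about integrality covers the degenerate cases the paper glosses over, so no changes are needed.
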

\begin{proof}
    Suppose the maximum opinion is $M$ at time $t$. The only way for the maximum opinion to increase is for an $M$ to increase to an $M+1$. This is only possible if the $M$ combines with an opinion greater than it, which is impossible since $M$ is the maximum opinion. The same logic applies for the minimum.
\end{proof}
Since we have bounded the range of possible valuations on both sides, we can finally show that the Markov chain must be finite.
\begin{prop}\label{markovFiniteLB}
    Let $G=(V,E)$ be a graph and $f_0$ be an initial valuation, then let $M=\max_{v\in V}|f_0(v)|$. Then only finitely many valuations are reachable from $f_0$.
\end{prop}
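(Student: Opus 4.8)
The plan is to bound each coordinate $f_t(v)$ individually, so that the set of reachable valuations is contained in a finite product of finite sets. First I would invoke the lemma just proved: $\max_{v\in V}|f_t(v)|$ is monotonically decreasing in $t$, hence $\max_{v\in V}|f_t(v)| \le \max_{v\in V}|f_0(v)| = M$ for every $t \ge 0$. This means that for every reachable valuation $f_t$ and every vertex $v$, the value $f_t(v)$ is an integer in the range $\{-M, -M+1, \dots, M-1, M\}$, since values only ever change by $\pm 1$ from integer starting values (here I would note that if $f_0$ takes integer values then so does every $f_t$, which is immediate from the definition of $g$; if the paper intends arbitrary real initial values, then the reachable values for $v$ lie in $f_0(v) + \Z$ intersected with $[-M, M]$, still a finite set).

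Next I would count: there are $|V| = n$ vertices, and each can independently take one of at most $2M+1$ values in any reachable valuation. Therefore the number of valuations reachable from $f_0$ is at most $(2M+1)^n$, which is finite. This completes the argument.

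The main obstacle — really the only subtlety — is making precise the claim that all reachable values are integers (or more generally live in $f_0(v)+\Z$), which requires a one-line induction on $t$ using the form of the update rule $g$, since $g(a,b) \in \{a-1, a, a+1\}$ always. Everything else is a direct application of the preceding lemma on the monotonically decreasing maximum absolute value, combined with a finite counting argument. I would present this as a short proof: state the integrality observation, invoke the bound $|f_t(v)| \le M$, and conclude that at most $(2M+1)^n$ valuations are reachable.
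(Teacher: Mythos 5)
Your proposal is correct and follows essentially the same route as the paper: invoke the monotonically decreasing maximum absolute value to confine every reachable value to $[-M,M]$, then count at most $(2M+1)^n$ assignments. Your explicit remark that integrality (or membership in $f_0(v)+\Z$) is needed for the count to be finite is a small point the paper leaves implicit, but it does not change the argument.
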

\begin{proof}
    Since $\max_{v\in V}|f_t(v)|$ is monotonically decreasing, any reachable valuation has $-M\leq f_t(v)\leq M$ for all $v\in V$. This range covers $2M+1$ values, and there are $(2M+1)^n$ ways to assign each node a value in this range.
\end{proof}
Alternatively, this follows as a corollary from the fact that only finitely many shrink updates are possible, and swap updates can get you to at most $n!$ isomorphic valuations.

This means if we only include reachable valuations, the Markov chain is finite, so by \Cref{absorb}, w.h.p.\ we will end up in an absorbing state. Thus, we turn our attention towards analysing the properties of these absorbing states. First, let's show that shrink updates are impossible within SCCs.
\begin{prop}\label{swapSCC}
    Let $G=(V,E)$ be a graph. Valuations $\alpha$ and $\beta$ are in the same SCC (i.e.\ $\alpha S\beta$) if and only if $\alpha$ and $\beta$ can reach each other using only swap updates and all paths from $\alpha$ to $\beta$ (and vice versa) consist of only swap updates.
\end{prop}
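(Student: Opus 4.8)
The plan is to run everything through the square-sum potential $\Phi(f) \coloneqq \sum_{v\in V} f(v)^2$. We have already established three facts about it: it is monotonically non-increasing along every update sequence; it is left exactly unchanged by a swap update; and it is strictly decreased, by at least $2$, by every shrink update. These are the only ingredients needed.

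The reverse implication is immediate and essentially a matter of unwinding definitions: if $\alpha$ can reach $\beta$ (by any updates, in particular swap-only ones) and $\beta$ can reach $\alpha$, then by the definition of the relation $S$ we have $\alpha S\beta$, i.e.\ $\alpha$ and $\beta$ lie in the same SCC. So the content is all in the forward direction.

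For the forward direction, assume $\alpha S\beta$, so $\alpha$ reaches $\beta$ and $\beta$ reaches $\alpha$. Pick any path $\alpha\to\cdots\to\beta$; since $\Phi$ is non-increasing along it, $\Phi(\alpha)\geq\Phi(\beta)$, and by the symmetric argument on a path $\beta\to\cdots\to\alpha$ we get $\Phi(\beta)\geq\Phi(\alpha)$, hence $\Phi(\alpha)=\Phi(\beta)$. Now take an arbitrary path $\alpha=\gamma_0\to\gamma_1\to\cdots\to\gamma_k=\beta$. The chain $\Phi(\gamma_0)\geq\Phi(\gamma_1)\geq\cdots\geq\Phi(\gamma_k)$ has equal first and last terms, so it is constant; but a shrink update strictly lowers $\Phi$, so none of the steps $\gamma_{i-1}\to\gamma_i$ can be a shrink update, i.e.\ the whole path consists of swap updates only. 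The identical argument applies to every path from $\beta$ to $\alpha$. Since $\alpha S\beta$ guarantees at least one path in each direction exists, this shows both that $\alpha$ and $\beta$ can reach each other using only swap updates and that every path between them (either way) is swap-only.

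I do not expect a real obstacle here — the argument is bookkeeping once the potential-function lemmas are available. The one point that deserves care, and the reason the statement is phrased with an SCC rather than one-way reachability, is that we must use reachability in \emph{both} directions to force $\Phi(\alpha)=\Phi(\beta)$; it is exactly this equality that makes the monotone chain of potentials collapse to a constant and thereby excludes shrink updates from every connecting path.
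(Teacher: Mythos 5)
Your proof is correct and takes essentially the same route as the paper's: both directions rest on the square-sum potential being non-increasing, unchanged by swaps, and strictly decreased by shrinks, with mutual reachability forcing the potential to be equal at $\alpha$ and $\beta$ and hence constant along every connecting path. Your write-up is simply a slightly more explicit version of the paper's argument.
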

\begin{proof}
    If $\alpha$ and $\beta$ can reach each other using only swap updates then they can reach each other, so $\alpha S\beta$. Conversely, suppose $\alpha S\beta$. Then there is a path of updates from $\alpha$ to $\beta$. If there were a shrink update along this path, then the square sum would decrease:
    \[ \sum_{v\in V}\beta^2(v) < \sum_{v\in V}\alpha^2(v) \]
    But this would contradict that there is a path of updates from $\beta$ to $\alpha$, and thus all paths from $\alpha$ to $\beta$ consist only of swap updates. Similarly, all paths from $\beta$ to $\alpha$ consist only of swap updates.
\end{proof}
Now, as a quick lemma, we show that applying a swap update does not change the isomorphism class.
\begin{lem}\label{swapIso}
    Let $G=(V,E)$ be a graph and suppose a swap update is applied at time $t$. Then $f_t$ and $f_{t+1}$ are isomorphic.
\end{lem}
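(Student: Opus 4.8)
The plan is to exhibit the required bijection explicitly rather than argue abstractly. Write the swap update as acting on an edge $(u,v)\in E$, and set $a\coloneqq f_t(u)$, $b\coloneqq f_t(v)$, so that $|a-b|\le 1$. By the description of swap updates already used in the proof of \Cref{sumInv}, the effect of the update is exactly $f_{t+1}(u)=b$, $f_{t+1}(v)=a$, and $f_{t+1}(w)=f_t(w)$ for every $w\in V\setminus\{u,v\}$ (in the degenerate subcase $a=b$ this just says $f_{t+1}=f_t$). I would then take $I:V\to V$ to be the transposition swapping $u$ and $v$ and fixing everything else, i.e. $I(u)=v$, $I(v)=u$, and $I(w)=w$ for $w\notin\{u,v\}$; this is manifestly a bijection.

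The core of the proof is checking the two clauses of \Cref{defn:isomorphic} for this $I$. For the value condition $f_{t+1}(w)=f_t(I(w))$ I would split into three cases: if $w=u$ then $f_{t+1}(u)=b=f_t(v)=f_t(I(u))$; if $w=v$ then $f_{t+1}(v)=a=f_t(u)=f_t(I(v))$; and if $w\notin\{u,v\}$ then $f_{t+1}(w)=f_t(w)=f_t(I(w))$. For the component condition, vertices outside $\{u,v\}$ are fixed by $I$ and so trivially stay in their component, while for $w\in\{u,v\}$ the point is that $(u,v)\in E$, so $u$ and $v$ are adjacent and hence lie in the same connected component; thus $I(u)=v$ and $I(v)=u$ both respect the requirement.

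There is no genuinely hard step here: the only clause that needs a moment's attention is the connected-component requirement in the definition of isomorphic valuations, and it reduces immediately to the observation that the two endpoints of an edge belong to the same component. The $a=b$ subcase could also be dispatched by taking $I$ to be the identity, but since the transposition above already works (vacuously) in that case, a single choice of $I$ covers every swap update at once. I would present the argument as a short case analysis in exactly this order: state $I$, verify it is a bijection, verify the value identity, verify the component condition.
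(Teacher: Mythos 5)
Your proof is correct and takes essentially the same route as the paper, which also exhibits the transposition swapping $u$ and $v$ as the isomorphism; you simply spell out the verification of the value and connected-component conditions that the paper leaves implicit.
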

\begin{proof}
    A swap update swaps the values of $u,v\in E$ and keeps all other vertices the same. Then the following is an isomorphism between $f_t$ and $f_{t+1}$:
    \begin{align*}
        I &: V\to V \\
        I(w) &= \begin{cases}
            v, & w = u \\
            u, & w = v \\
            w, & \text{otherwise}
        \end{cases}
    \end{align*}
\end{proof}
The previous two results give us a key fact about absorbing states.
\begin{prop}\label{absorbSwap}
    $f_t$ is in an absorbing state if and only if shrink updates are no longer possible. \qed
\end{prop}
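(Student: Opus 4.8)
The plan is to obtain both directions directly from two facts already in hand: a shrink update strictly lowers the square sum, which is monotonically non-increasing, so shrink updates are irreversible (this is exactly the mechanism used in the proof of \Cref{swapSCC}); and a swap update is reversible, since re-applying the update to the same edge exchanges the two endpoint values back and these still differ by at most $1$. One preliminary remark: I read ``shrink updates are no longer possible'' as ``no shrink update occurs from $f_t$, nor from any valuation reachable from $f_t$'', since the strictly literal reading already fails on a path carrying values $0,1,2$ — no edge admits an immediate shrink update there, yet swapping along the edge labelled $1,2$ produces $0,2,1$, from which a shrink update is available.

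For the forward implication I would argue by contradiction. Suppose $f_t$ lies in an absorbing SCC $C$, but some valuation $\gamma$ reachable from $f_t$ admits a shrink update to $\delta$. Since $C$ has no outgoing meta-edge, every valuation reachable from $f_t$ lies in $C$; in particular $\gamma,\delta\in C$. But the shrink step strictly decreases the square sum, so $\delta$ cannot reach $\gamma$, whence by \Cref{swapSCC} the valuations $\gamma$ and $\delta$ lie in distinct SCCs, contradicting $\delta\in C$. Hence once $f_t$ is in an absorbing state no shrink update is ever possible.

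For the converse, suppose no shrink update ever occurs from $f_t$ onward; then every valuation reachable from $f_t$ is obtained by swap updates alone. Let $C$ be the SCC of $f_t$; I claim $C$ has no outgoing meta-edge (so $f_t$ is in an absorbing state). If it did, there would be $\alpha\in C$ with an update-edge to some $\beta$ lying in a different SCC. As $f_t$ reaches $\alpha$, it also reaches $\beta$, so by hypothesis the step $\alpha\to\beta$ is a swap update; reversibility of swap updates then gives a step $\beta\to\alpha$, so $\beta$ reaches $\alpha$ and hence $\beta\in C$ — a contradiction. (If no update at all is available from $f_t$, then trivially $C=\{f_t\}$ has no outgoing edge.) This also dovetails with \Cref{swapIso}: within $C$ every valuation is reached from $f_t$ by swaps, so $C$ sits inside a single isomorphism class, matching the ``fixed multiset of two adjacent values forever'' behaviour observed at $t=5$ in the motivating example.

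The only delicate points are bookkeeping ones: pinning down the correct quantifier (``no shrink update \emph{ever}'', ranging over all reachable valuations, not just $f_t$ itself) and verifying that the move which undoes a swap update is itself a legal swap update (it is, because after the exchange the two values still differ by at most $1$). Granting those, the proof is a short composition of \Cref{swapSCC}, \Cref{swapIso}, and square-sum monotonicity, with no substantial obstacle — consistent with the statement being offered with a bare \qed.
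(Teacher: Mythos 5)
Your proof is correct and follows essentially the same route as the paper's: the forward direction uses the irreversibility of shrink updates (via the square-sum potential underlying \Cref{swapSCC}) to contradict membership in a single SCC, and the converse uses the reversibility of swap updates to show the SCC cannot be exited. Your preliminary remark about the quantifier --- that ``no longer possible'' must range over all valuations reachable from $f_t$, not just $f_t$ itself --- is exactly the reading the paper relies on (compare the proof of \Cref{within1}), so the clarification is welcome but does not change the argument.
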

\begin{proof}
    If $f_t$ is in an absorbing state and a shrink update were possible, the path through this shrink update would contradict \Cref{swapSCC}. Conversely, if shrink updates are no longer possible then all future updates are swap updates. Any swap update can be reversed by simply swapping that edge again, so it is impossible to leave the SCC.
\end{proof}
Note that in general, it may not be possible to reach everything in an isomorphism class. For example, both of the following valuations are isomorphic, but it is impossible to reach one from the other without performing a shrink update.
\begin{center}
    \begin{tikzpicture}[node distance = 4em]
        \node[main] (1) at (0, 0) {1};
        \node[main] (3) [right = of 1] {3};
        \path (1) edge (3);

        \node[main] (3) at (16em, 0) {3};
        \node[main] (1) [right = of 3] {1};
        \path (3) edge (1);
    \end{tikzpicture}
\end{center}
However it turns out shrink updates are the only thing that can ruin this; in absorbing states where shrink updates are impossible, we can prove a useful equivalence:
\begin{prop}\label{absorbEquiv}
    Let $\alpha, \beta$ be valuations on $G=(V,E)$ in absorbing states, i.e.\ no further shrink updates are possible from either. Then the following are equivalent:
    \begin{enumerate}[label=(\roman*)]
        \item $\alpha S\beta$, i.e.\ $\alpha$ and $\beta$ are in the same SCC,
        \item $\alpha$ and $\beta$ can reach each other using only swap updates and all paths from $\alpha$ to $\beta$ (and vice versa) consist of only swap updates, and
        \item $\alpha$ and $\beta$ are isomorphic.
    \end{enumerate}
\end{prop}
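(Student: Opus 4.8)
The plan is to close the cycle (i) $\Leftrightarrow$ (ii), (ii) $\Rightarrow$ (iii), (iii) $\Rightarrow$ (i). The first equivalence is free: it is exactly \Cref{swapSCC}, which is stated for arbitrary valuations, and once we know $\alpha S\beta$ that same proposition also delivers the ``all paths'' clause of (ii). For (ii) $\Rightarrow$ (iii) I would take a chain of swap updates $\alpha=\gamma_0\to\gamma_1\to\cdots\to\gamma_k=\beta$, apply \Cref{swapIso} at each step to get $\gamma_i$ isomorphic to $\gamma_{i+1}$, and conclude $\alpha$ isomorphic to $\beta$ because ``is isomorphic to'' is an equivalence relation (the composite of the bijections $I_i$ is a bijection and still sends each vertex into its own connected component). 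Both directions are pure bookkeeping.

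The substance is (iii) $\Rightarrow$ (i), and it rests on a structural lemma: \emph{if a valuation $\gamma$ is in an absorbing state, then on every connected component $C$ the valuation $\gamma$ takes at most two distinct values, and if two, they are consecutive integers.} I would prove this by contradiction. Since no shrink update is possible from $\gamma$, every edge $(u,v)$ satisfies $|\gamma(u)-\gamma(v)|\le 1$. Suppose a component $C$ has $m\coloneqq\min_{v\in C}\gamma(v)$ but also contains a vertex of value $\ge m+2$; among such vertices pick one at minimum distance from the set of value-$m$ vertices of $C$, and let $p_0,p_1,\dots,p_\ell$ be a shortest path in $C$ from a value-$m$ vertex $p_0$ to that vertex $p_\ell$. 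By minimality each interior $p_1,\dots,p_{\ell-1}$ has value exactly $m+1$ (value $m$ would shorten the path; value $\ge m+2$ would contradict the choice of $p_\ell$), and the edge bound on $(p_{\ell-1},p_\ell)$ then forces $\gamma(p_\ell)=m+2$ and $\ell\ge 2$. Now ``slide the minimum down the path'': apply swap updates successively on $(p_0,p_1),(p_1,p_2),\dots,(p_{\ell-2},p_{\ell-1})$, each legitimate because it exchanges an $m$ with an $m+1$. The resulting valuation $\gamma'$ has $\gamma'(p_{\ell-1})=m$ and $\gamma'(p_\ell)=m+2$, so a shrink update on $(p_{\ell-1},p_\ell)$ is available; but $\gamma'$ is reachable from $\gamma$, hence in the same (absorbing) SCC, so by \Cref{absorbSwap} no shrink update can be possible from it -- contradiction.

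Granting the lemma, (iii) $\Rightarrow$ (i) is short. Let $I$ witness $\beta(v)=\alpha(I(v))$ with $I(v)$ in the component of $v$; then $I$ maps each component into itself, hence onto itself by finiteness, so $\alpha|_C$ and $\beta|_C$ have the same multiset of values, which by the lemma is a fixed number of copies of some $k$ and of $k+1$. It then suffices to note that on a connected graph any two assignments of values from $\{k,k+1\}$ with the same number of $(k+1)$'s are joined by a sequence of swap updates: route each $(k+1)$ to its target along a path in $C$, swapping one step at a time, every elementary move exchanging values differing by at most $1$ and hence being a genuine swap update. Doing this in each component turns $\alpha$ into $\beta$ using only swap updates, so $\alpha$ reaches $\beta$; symmetrically $\beta$ reaches $\alpha$; therefore $\alpha S\beta$, i.e.\ (i).

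The crux is the structural lemma, and inside it the point that ``absorbing'' must be used in its strong form -- no shrink update anywhere in the SCC, not merely none available at the present valuation -- since that is exactly what makes the slide-the-minimum move contradictory. The remaining ingredients (transitivity of isomorphism, $I$ restricting to a permutation of each component, and the token-sorting fact on connected graphs) are standard and I would dispatch them quickly.
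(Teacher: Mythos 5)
Your handling of (i) $\Leftrightarrow$ (ii) and (ii) $\Rightarrow$ (iii) matches the paper exactly, but for the substantive implication (iii) $\Rightarrow$ (i) you take a genuinely different route. The paper peels leaves off a spanning tree: it picks a leaf $\ell$, routes the value $\alpha(I(\ell))$ to $\ell$ along a path (invoking \Cref{absorbSwap} to know every update performed is a swap), deletes $\ell$, and inducts; this works for arbitrary values and never needs to know what the values in an absorbing state actually look like. You instead front-load a structural lemma — that an absorbing valuation takes at most two consecutive values on each component — which is precisely the $(\implies)$ direction of \Cref{within1}, a result the paper only proves \emph{after} this proposition, and your slide-the-minimum argument for it is essentially the paper's own proof of that direction. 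Granting the lemma, your reduction of (iii) $\Rightarrow$ (i) to sorting indistinguishable tokens of two adjacent values is cleaner in one respect (only the target \emph{set} of $(k{+}1)$-positions matters, not a specific bijection) and your observation that $I$ restricts to a permutation of each component is a detail the paper leaves implicit. The one point you gloss is the token-sorting step itself: ``route each $(k{+}1)$ to its target one path at a time'' can displace previously placed tokens, so to make it airtight you would need an ordering (e.g.\ the paper's leaf-first induction on a spanning tree, or the fact that edge-transpositions of a connected graph generate the full symmetric group). Since every such transposition is a legal swap update once all values lie in $\{k,k+1\}$, this is standard and closes without difficulty; the proof is correct, just organized around \Cref{within1} rather than around the absorbing-state hypothesis directly.
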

\begin{proof}
    By \Cref{swapSCC} we know \textit{(i)} and \textit{(ii)} are equivalent.

    \textit{(ii)$\implies$(iii)} \\
    Consider a path of swap updates from $\alpha$ to $\beta$. By \Cref{swapIso}, for each swap update there is an isomorphism. Then the composition of these isomorphisms is an overall isomorphism between $\alpha$ and $\beta$.
    
    \textit{(iii)$\implies$(i)} \\
    We first prove this fact by induction for the special case where $G$ is a tree. It is clearly true for trees of size $1$. Now, suppose the claim were true for trees of size $n-1\geq 1$ and consider any tree with $n$ nodes. Since $n\geq 2$ there exists a leaf $\ell\in V$ (i.e.\ a node of degree $1$). Let $I:V\to V$ be an isomorphism from $\alpha$ to $\beta$, so $\beta(\ell)=\alpha(I(\ell))$. Since trees are connected, let $I(\ell),w_1,\cdots,w_k,\ell$ be a path from $I(\ell)$ to $\ell$ in $G$. Then from $\alpha$, we can perform the following sequence of updates to get $\alpha'$:
    \begin{gather*}
        (I(\ell), w_1) \\
        (w_1, w_2) \\
        \vdots \\
        (w_{k-1}, w_k) \\
        (w_k, \ell)
    \end{gather*}
    Since we are in an absorbing state, by \Cref{absorbSwap} these are all swap updates. Then the value $\alpha(I(\ell))$ makes its way along the path until it reaches $\ell$, so $\alpha'(\ell)=\alpha(I(\ell))$. In other words, we have swapped the correct value (according to $\beta$) to $\ell$. Now, removing $\ell$ from the tree $G$ induces a new tree. Since $\alpha'$ was reached from $\alpha$ by swap updates, it is isomorphic to $\alpha$ and thus isomorphic to $\beta$. By the inductive hypothesis, we can reach the induced $\beta$ from the induced $\alpha'$ in this new tree using only swap updates. This doesn't change the value in $\ell$, so we get that for all $v\in V$:
    \[ \beta(v) = \alpha(I(v)) \]
    Thus, we have shown the fact for all trees. But this also shows the fact for connected graphs, since we can simply take a spanning tree and ignore the other edges. Finally, for general (not necessarily connected graphs), we can apply this argument to each connected component, and note that a path from $I(v)$ to $v$ still exists because $I(v)$ is guaranteed to be in the same connected component as $v$ for any isomorphism $I$.
\end{proof}
This means not only are all valuations in an absorbing state isomorphic, but also in an absorbing state, one can reach all isomorphic valuations eventually by applying enough swap updates.

Recall in our motivating example that we landed in an absorbing state, and the reason why further shrink updates were impossible was because all values were $4$ and $5$. It turns out that this is a general fact which applies to all absorbing states.
\begin{lem}\label{within1}
    Let $G=(V,E)$ and $f_t$ be a valuation. Then $f_t$ is in an absorbing state if and only if $|f_t(u)-f_t(v)|\leq 1$ for all $u,v\in V$ in the same connected component.
\end{lem}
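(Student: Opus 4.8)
The plan is to reduce everything to \Cref{absorbSwap}, which says that $f_t$ is in an absorbing state exactly when no shrink update is ever possible --- not merely from $f_t$, but from every valuation reachable from it. Since an update on an edge $(u,v)$ is a shrink update precisely when $|f_t(u)-f_t(v)|>1$, the statement ``no shrink update is available directly at $f_t$'' is literally ``$|f_t(u)-f_t(v)|\le 1$ for every \emph{edge} $(u,v)$''. So the genuine content of the lemma is twofold: (a) upgrading ``edge'' to ``every pair of vertices in a common connected component'', and (b) bridging the gap between ``no shrink at $f_t$'' and ``no shrink at anything reachable from $f_t$''.

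For the ($\Leftarrow$) direction, assume $|f_t(u)-f_t(v)|\le 1$ for all $u,v$ in a common component. Then every edge has endpoints within $1$, so every update available at $f_t$ is a swap update. A swap update on $(u,v)$ only permutes values among vertices of one component --- equivalently, by \Cref{swapIso}, $f_{t+1}$ is isomorphic to $f_t$ --- so it leaves the multiset of values on each component unchanged. But the hypothesis is a property of those multisets alone: it says the values on each component have range at most $1$. Hence $f_{t+1}$ again satisfies it, and by induction so does every valuation reachable from $f_t$; none of them admits a shrink update, so \Cref{absorbSwap} yields that $f_t$ is absorbing.

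The ($\Rightarrow$) direction is where the work lies, and I expect it to be the main obstacle. Assume $f_t$ is absorbing but, for contradiction, that some $u_0,v_0$ in a common component $C$ satisfy (WLOG) $f_t(u_0)\ge f_t(v_0)+2$. Pick a shortest --- hence simple --- path $u_0=w_0,w_1,\dots,w_k=v_0$ inside $C$, and perform, one after another, the updates on the edges $(w_0,w_1),(w_1,w_2),\dots$; each such step, when it is a swap, carries the value $f_t(u_0)$ one vertex further along the path while leaving $w_k=v_0$ untouched. Either at some step the chosen edge's endpoints differ by more than $1$, so that update is itself a shrink update available from a valuation reachable from $f_t$; or every step is a legitimate swap, and we arrive at a reachable valuation in which the edge $(w_{k-1},w_k)$ carries the values $f_t(u_0)$ and $f_t(v_0)$, differing by at least $2$ --- again a shrink update from a reachable valuation. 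Either way this contradicts \Cref{absorbSwap} (from an absorbing state no shrink update is ever reachable, by \Cref{swapSCC}), which finishes the proof.

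I would flag the subtle point that one cannot argue from $f_t$ in isolation: $f_t$ may admit no shrink update and still fail to be absorbing --- for instance the values $1,2,3$ along a path, where a single swap brings a pair at distance $2$ onto one edge --- which is exactly the mechanism the path argument exploits. An alternative route to the same contradiction is to invoke the remark following \Cref{absorbEquiv}: from an absorbing state every isomorphic valuation is reachable, so one may directly reach a valuation placing the two far-apart values on the endpoints of some edge of $C$. I would nonetheless prefer the explicit path argument, since it is self-contained and does not lean on that remark.
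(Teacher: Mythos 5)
Your proposal is correct and follows essentially the same route as the paper: the backward direction uses \Cref{swapIso} and induction to show the within-one property is preserved forever, and the forward direction (stated contrapositively in the paper) pushes the larger value along a path via swap updates until either some intermediate step is itself a shrink update or the final edge becomes one, contradicting \Cref{absorbSwap}. Your explicit flagging of the gap between ``no shrink available at $f_t$'' and ``absorbing'' is a nice clarification of what \Cref{absorbSwap} is really asserting, but it does not change the argument.
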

\begin{proof}
    \phantom{a}

    $(\impliedby)$ \\
    Suppose $|f_t(u)-f_t(v)|\leq 1$ for all $u,v\in V$ in the same connected component. Then any update is between two vertices with difference less than or equal to $1$, making it a swap update, so $f_t$ and $f_{t+1}$ are isomorphic by \Cref{swapIso}, i.e.\ there exists $I:V\to V$ such that $f_{t+1}(v)=f_t(I(v))$. Then for all $u,v\in V$ in the same connected component, we also know $I(u)$ and $I(v)$ are in the same connected component, so $|f_t(I(u))-f_t(I(v))| \leq 1$, and thus $|f_{t+1}(u)-f_{t+1}(v)| \leq 1$. Continuing by induction, this fact will hold forever, so all future updates must be swap updates and thus we are in an absorbing state by \Cref{absorbSwap}.

    $(\implies)$ \\
    Conversely, suppose there exists $u,v\in V$ in the same connected component such that $|f_t(u)-f_t(v)|>1$. Then there is a path $u,w_1,\cdots,w_k,v$. We want to show we can perform a shrink update, and this is achieved by performing the following sequence of updates:
    \begin{gather*}
        (u, w_1) \\
        (w_1, w_2) \\
        \vdots \\
        (w_{k-1}, w_k)
    \end{gather*}
    If any of these are shrink updates then we are done, so assume they are all swap updates. Then we swapped the value of $f_t(u)$ to $w_k$ (i.e.\ $f_{t+k}(w_k)=f_t(u)$). But then $(w_k, v)$ is a shrink update, since $f_{t+k}(v)=f_t(v)$ and $|f_t(u)-f_t(v)|>1$. Thus, we could perform a shrink update, so $f_t$ was not in an absorbing state by \Cref{absorbSwap}.
\end{proof}

With this, we know that any node in any connected component for an absorbing state must have one of two adjacent opinions, but we are still unsure what those two opinions could be. It turns out there is only one possibility, and we can use the sum invariant from \Cref{sumInv} to show this. We are now ready to bring all the pieces together to prove the key result which gives us a complete understanding of the convergence behaviour. For simplicity, we present the result for connected graphs first.
\begin{thm}\label{divisionConnected}
    Let $G=(V,E)$ be a connected graph and $f_0$ be an initial valuation. Let $S=\sum_{v\in V}f_0(v)$ be the initial sum of the nodes' values. Using the division algorithm, we can write $S=kn+r$ for some $0\leq r<n$. Then w.h.p.\ $G$ ends up in the unique absorbing state consisting of the isomorphism class of valuations with $r$ copies of $k+1$ and $n-r$ copies of $k$.
\end{thm}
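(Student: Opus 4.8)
The plan is to show that \emph{every} absorbing state reachable from $f_0$ coincides with the claimed isomorphism class; uniqueness, reachability and the ``w.h.p.'' statement then drop out of the earlier results. First I would note that by \Cref{markovFiniteLB} only finitely many valuations are reachable from $f_0$, so the induced Markov chain is finite and \Cref{absorb} applies: w.h.p.\ the process is eventually absorbed into some absorbing state. It therefore remains only to pin down what the absorbing states can look like.

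Next I would take an arbitrary valuation $f_t$ in an absorbing state and apply \Cref{within1}: because $G$ is connected, $|f_t(u)-f_t(v)|\leq 1$ for all $u,v\in V$, so every value lies in $\{k',k'+1\}$ where $k'\coloneqq\min_{v\in V}f_t(v)$. Writing $r'$ for the number of vertices carrying the value $k'+1$, minimality of $k'$ gives $0\leq r'<n$. The sum invariant \Cref{sumInv} then yields $S=\sum_{v\in V}f_t(v)=r'(k'+1)+(n-r')k'=nk'+r'$, and since $0\leq r'<n$ the uniqueness of the quotient and remainder in the division algorithm forces $k'=k$ and $r'=r$. Hence every absorbing state contains only valuations with exactly $r$ copies of $k+1$ and $n-r$ copies of $k$.

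To finish, I would invoke \Cref{absorbEquiv}: for valuations already in absorbing states, being isomorphic is equivalent to lying in the same SCC, and any two valuations with $r$ copies of $k+1$ and $n-r$ copies of $k$ on the connected graph $G$ are visibly isomorphic. So all of these valuations form a single absorbing SCC, which by the previous paragraph is the only one reachable from $f_0$; combining with \Cref{absorb}, the process ends up in this state w.h.p., and this is precisely the claimed isomorphism class.

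I do not expect a serious obstacle here — the ingredients are all in place — but the one point that needs care is the degenerate absorbing state in which all values are equal. There the ``two adjacent values'' description must be read as $r=0$ copies of $k+1$ together with $n$ copies of $k$, not as copies of $k$ and $k-1$; taking $k'$ to be the \emph{minimum} value (rather than, say, the maximum) both resolves this ambiguity and automatically delivers the bound $0\leq r'<n$ that makes the division-algorithm step legitimate.
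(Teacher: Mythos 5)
Your proposal is correct and follows essentially the same route as the paper's own proof: finiteness via \Cref{markovFiniteLB} and \Cref{absorb}, then \Cref{within1} plus the sum invariant \Cref{sumInv} and uniqueness in the division algorithm, closed off with \Cref{absorbEquiv}. Your choice of $k'$ as the minimum value is a slightly cleaner way to guarantee $0\leq r'<n$ than the paper's ad hoc adjustment of $k$ in the all-equal case, but the argument is the same.
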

\begin{proof}
    Start at $f_0$. By \Cref{markovFiniteLB}, the Markov chain of reachable valuations is finite, so by \Cref{absorb}, we eventually reach an absorbing state where no further shrink updates are possible. Let $t$ be a time that this happens. By \Cref{within1}, and noting that $G$ is connected, $f_t$ satisfies $|f_t(u)-f_t(v)|\leq 1$ for all $u,v\in V$, meaning there exists $k$ such that $f_t(u)\in\{k,k+1\}$ for all $u\in V$. Let $r$ be the number of copies of $k+1$, so there are $n-r$ copies of $k$. We can insist that $r\neq n$ since that corresponds to the case where all values in the graph are $k+1$, in which case we can increase $k$ and set $r=0$. Finally:
    \begin{align*}
        S &= \sum_{v\in V}f_0(v) \\
        &= \sum_{v\in V}f_t(v) \\
        &= r(k+1) + (n-r)k \\
        &= nk + r
    \end{align*}
    But the only values of $k,r$ such that $0\leq r<n$ and $S=kn+r$ are those obtained by the division algorithm. These are all isomorphic, and so by \Cref{absorbEquiv}, they must all be in one unique SCC.
\end{proof}
Now, for disconnected graphs, the same proof allows us to consider each connected component separately:
\begin{thm}\label{divisionAny}
    Let $G=(V,E)$ be any graph and $f_0$ be an initial valuation. By possibly reordering vertex labels, let $v_1,\cdots,v_\ell$ be representatives of each connected component. For $1\leq i\leq\ell$ let $[v_i]$ be the set of vertices connected to $v_i$, let $n_i=|[v_i]|$ and let $S_i=\sum_{v\in[v_i]}f_0(v)$. Using the division algorithm, write $S_i=k_in_i+r_i$ where $0\leq r_i\leq n_i$. Then w.h.p.\ $G$ ends up in the unique absorbing state consisting of the isomorphism class of valuations where for $1\leq i\leq\ell$, the connected component $[v_i]$ has $r_i$ copies of $k_i+1$ and $n_i-r_i$ copies of $k_i$. \qed
\end{thm}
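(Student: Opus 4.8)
The plan is to reduce everything to the connected case \Cref{divisionConnected}, using the fact that the load balancing dynamics decompose across connected components. Every update acts on an edge $(u,v)\in E$, whose endpoints lie in a common component, and it changes no vertex outside that component; so the process restricted to a component $[v_i]$ is precisely the load balancing process on the (connected) induced subgraph, and the sub-processes on distinct components run independently. First I would invoke the machinery already available at the global level: by \Cref{markovFiniteLB} the Markov chain of reachable valuations of $G$ is finite, so by \Cref{absorb} we reach an absorbing state with high probability. Fix a time $t$ at which no further shrink update is possible. \Cref{within1} is already phrased per connected component, so for each $i$ there is an integer $k_i$ with $f_t(v)\in\{k_i,k_i+1\}$ for every $v\in[v_i]$; write $r_i$ for the number of vertices of $[v_i]$ carrying $k_i+1$, normalised so that $0\le r_i<n_i$ (if every vertex of $[v_i]$ carries $k_i+1$, bump $k_i$ up by one and reset $r_i=0$).

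Next I would pin down the pair $(k_i,r_i)$ via invariance of the component sum. An update inside $[v_i]$ leaves all vertices outside $[v_i]$ fixed, so the computation in the proof of \Cref{sumInv}, carried out within $[v_i]$, shows that $S_i=\sum_{v\in[v_i]}f_0(v)$ is invariant; hence
\[
  S_i \;=\; \sum_{v\in[v_i]} f_t(v) \;=\; r_i(k_i+1) + (n_i-r_i)k_i \;=\; n_i k_i + r_i ,
\]
and the division algorithm makes this pair unique under $0\le r_i<n_i$. Thus the multiset of values on every component is determined, so the isomorphism class of $f_t$ is exactly the one claimed. For uniqueness of the absorbing SCC, note that any two valuations of the stated form assign each connected component the same multiset of values, hence are isomorphic in the sense of \Cref{defn:isomorphic}; both are absorbing by \Cref{within1}, so \Cref{absorbEquiv} places them in a single SCC.

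I do not expect a genuine obstacle here: all the substance lies in the connected case, already proved. The only points requiring care are bookkeeping claims — that a restricted sub-process is itself a bona fide load balancing process on the induced subgraph, that each $S_i$ is separately invariant, and that the high-probability conclusion holds for all components simultaneously. The last point is automatic rather than a union-bound argument, since the absorbing states of the finite global Markov chain on $G$ are exactly the tuples of absorbing states of the components, so \Cref{absorb} applies to $G$ directly.
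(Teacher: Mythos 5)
Your proposal is correct and matches the paper's intent exactly: the paper proves the connected case in \Cref{divisionConnected} and then asserts that "the same proof allows us to consider each connected component separately," which is precisely the componentwise reduction you carry out (per-component sum invariance, \Cref{within1} applied per component, and \Cref{absorbEquiv} for uniqueness). Your added care about the restricted sub-process being a genuine load balancing process and about the global absorbing states being tuples of componentwise absorbing states is exactly the bookkeeping the paper leaves implicit.
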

Therefore, regardless of the path taken through the Markov chain, there is only one possible absorbing state, giving us a complete understanding of the result of the process. For both results, note that w.h.p.\ all states in this unique final isomorphism class are reached an infinite number of times. But how many states are in this isomorphism class? Considering our motivating example, there are five nodes; two of them have to be $4$ and the other three have to be $5$, so there are $\binom{5}{3}=10$ ways to achieve this.
\begin{cor}
    The period of a connected graph $G=(V,E)$ with initial valuation $f_0$ is $p=\binom{n}{r}$ where $r$ is the value in \Cref{divisionConnected}. \qed
\end{cor}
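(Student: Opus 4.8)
Recall that the \emph{period} $p$ was defined as the size of the absorbing state (absorbing SCC) of the finite Markov chain of reachable valuations, which by \Cref{markovFiniteLB} and \Cref{absorb} the process enters w.h.p. The plan is therefore just to count the valuations in the particular absorbing SCC already identified in \Cref{divisionConnected}: that result tells us this SCC is the isomorphism class $C$ of valuations assigning $r$ copies of $k+1$ and $n-r$ copies of $k$ to the $n$ vertices of $G$. So everything reduces to showing $|C|=\binom{n}{r}$.

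To compute $|C|$ I would unfold \Cref{defn:isomorphic} in the case that $G$ is connected: since $G$ has a single connected component, the requirement that the bijection $I:V\to V$ send each $v$ into its own connected component is vacuous, so \emph{every} permutation of $V$ realizes an isomorphism of valuations. Hence two valuations of a connected $G$ are isomorphic exactly when they induce the same multiset of values on $V$, and $C$ is precisely the set of functions $f:V\to\{k,k+1\}$ with $|f^{-1}(k+1)|=r$; choosing which $r$ of the $n$ vertices receive $k+1$ shows there are $\binom{n}{r}$ of them. It then remains to confirm that this set $C$ is genuinely the absorbing SCC of the chain, i.e.\ that its members all lie in one SCC and are all reachable from $f_0$. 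Both follow from \Cref{absorbEquiv}: every member of $C$ is in an absorbing state (no shrink update is possible since all pairwise differences are at most $1$, by \Cref{within1}) and the members are mutually isomorphic, so they form a single SCC; and since \Cref{divisionConnected} guarantees we reach \emph{some} member of $C$, \Cref{absorbEquiv} lets us then reach every isomorphic valuation by swap updates. Conversely, any reachable absorbing valuation has values confined to $\{k,k+1\}$ with exactly $r$ copies of $k+1$ by the same uniqueness argument as in \Cref{divisionConnected} (using \Cref{within1} and the sum invariant \Cref{sumInv}), so nothing outside $C$ shares its SCC. Combining, the absorbing SCC has size exactly $\binom{n}{r}$.

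There is no serious obstacle: the statement is essentially a corollary of \Cref{divisionConnected} together with the equivalence \textit{(i)}$\Leftrightarrow$\textit{(iii)} of \Cref{absorbEquiv}. The only point needing a moment's care is the bookkeeping distinction between the abstract isomorphism class $C$ (a set of valuations) and the absorbing SCC (a set of \emph{reachable} states); one must observe that for a connected graph these coincide, which is exactly what the reachability argument via \Cref{absorbEquiv} above supplies.
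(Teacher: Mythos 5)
Your proposal is correct and follows essentially the same route as the paper, which states this corollary with only the informal count in the preceding paragraph (choose which $r$ of the $n$ vertices receive $k+1$, giving $\binom{n}{r}$). Your extra care in invoking \Cref{absorbEquiv} to identify the isomorphism class with the absorbing SCC — and to confirm every member is reachable — just makes explicit what the paper leaves implicit.
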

For disconnected graphs, we can consider each connected component separately. Each connected component will have $\binom{n}{r}$ reachable valuations for its own values of $n$ and $r$. Then for each of these valuations, any of the possible valuations for the second connected component are possible, so we multiply the results. Extending to all connected components, the result is the product.
\begin{cor}
    The period of any graph $G=(V,E)$ with initial valuation $f_0$ is $p=\prod_{i=1}^\ell\binom{n_i}{r_i}$, where $\ell,n_i,r_i$ are the values in \Cref{divisionAny}. \qed
\end{cor}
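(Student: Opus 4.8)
The plan is to identify the period $p$ with the number of distinct valuations in the unique final isomorphism class produced by \Cref{divisionAny}, and then to count that class one connected component at a time.

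First I would recall that the period is by definition the size of the absorbing state, i.e.\ of the absorbing SCC of the Markov chain. By \Cref{divisionAny}, w.h.p.\ the process ends in one specific isomorphism class; fix a representative $\beta^\star$ in which each component $[v_i]$ carries $r_i$ copies of $k_i+1$ and $n_i-r_i$ copies of $k_i$. Since no shrink update is possible from $\beta^\star$, \Cref{absorbEquiv} applies, so a valuation lies in the same SCC as $\beta^\star$ if and only if it is isomorphic to $\beta^\star$. Hence the absorbing SCC is exactly the set of valuations isomorphic to $\beta^\star$, and $p$ is its cardinality.

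Next I would count this set using \Cref{defn:isomorphic}. A valuation $\gamma$ of $G$ is isomorphic to $\beta^\star$ exactly when, for every $i$, the restriction of $\gamma$ to $[v_i]$ carries the same multiset of values as that of $\beta^\star$ — that is, precisely $r_i$ of the $n_i$ vertices of $[v_i]$ receive $k_i+1$ and the remaining $n_i-r_i$ receive $k_i$. For a single component this is the connected-graph count already established, namely $\binom{n_i}{r_i}$. Because a valuation of $G$ is exactly an independent choice of a valuation on each connected component, and the components partition $V$, the multiplication principle yields $p=\prod_{i=1}^{\ell}\binom{n_i}{r_i}$.

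The one step warranting care — and essentially the only place the argument could go wrong — is the factorization over components: one must use \Cref{defn:isomorphic} to see that isomorphism of valuations forces agreement of the value-multiset on each individual component (so no value can migrate between components), and then note that any combination of per-component arrangements is jointly realized inside the absorbing SCC, which is exactly what \Cref{absorbEquiv} guarantees. Granting that, the remainder is a routine application of the multiplication principle.
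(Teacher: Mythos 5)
Your proposal is correct and follows essentially the same route as the paper: both identify the period with the size of the absorbing SCC, characterize that SCC as the isomorphism class from \Cref{divisionAny} via \Cref{absorbEquiv}, count $\binom{n_i}{r_i}$ arrangements per connected component, and multiply. You merely spell out more explicitly the two points the paper leaves implicit — that \Cref{defn:isomorphic} confines each value-multiset to its own component, and that \Cref{absorbEquiv} guarantees every combination of per-component arrangements is jointly reachable.
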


Now that we know the convergence state, we can consider the convergence time, which is simply the first moment that this absorbing state is reached, that is, the first time that the minimum and maximum opinions in the valuation are at most one apart. In the next section, we turn our attention to bounding this convergence time in the average case, and presenting a particular `worst-case' example for deeper analysis.

\section{Convergence Time}\label{sec:convTime}
The last section showed that the convergence process consists of two things: bouncing around in an SCC by making swap updates, and decreasing the square sum by making shrink updates. We can imagine this as the \emph{two dimensions} of convergence, and visualise the process as moving to the right via swap updates, then moving downwards via shrink updates. Viewed like this, the process spirals downwards like a ball rolling around in a funnel, giving the convergence a `width' and `height' which can be studied separately, and later multiplied to give the overall bound.
\begin{center}
    \begin{tikzpicture}[node distance = 4em]
        \node (1) {$f_1$};
        \node (0) [left = of 1] {$f_0$};
        \node (2) [right = of 1] {$\cdots$};
        \node (3) [right = of 2] {$f_{k_1-1}$};
        \node (10) [below = of 0] {$f_{k_1}$};
        \node (11) [right = of 10] {$f_{k_1+1}$};
        \node (12) [right = of 11] {$\cdots$};
        \node (13) [right = of 12] {$f_{k_2-1}$};
        \node (20) [below = of 10] {$f_{k_2}$};
        \path (20) -- node[pos=1] {$\vdots$} +(0, -2em);
        \node (21) [below = of 11] {$\vdots$};
        \node (22) [below = of 12] {$\vdots$};
        \node (23) [below = of 13] {$\vdots$};
        \node (30) [below = of 20] {$f_{k_{\ell-1}}$};
        \node (31) [right = of 30] {$f_{k_{\ell-1}+1}$};
        \node (32) [right = of 31] {$\cdots$};
        \node (33) [right = of 32] {$f_{k_\ell-1}$};
        \node (40) [below = of 30] {$f_{k_\ell}$};
        \node (41) [right = of 40] {$f_{k_\ell+1}$};
        \node (42) [right = of 41] {$f_{k_\ell+2}$};
        \node (43) [right = of 42] {$f_{k_\ell+3}$};
        \node (44) [right = of 43] {$\cdots$};

        \path[->]   (0) edge node[above, sloped] {swap} (1)
                    (1) edge node[above, sloped] {swap} (2)
                    (2) edge node[above, sloped] {swap} (3)
                    (3) edge node[above, sloped] {shrink} (10)
                    (10) edge node[below, sloped] {swap} (11)
                    (11) edge node[above, sloped] {swap} (12)
                    (12) edge node[above, sloped] {swap} (13)
                    (13) edge node[above, sloped] {shrink} (20)
                    (23) edge node[above, sloped] {shrink} (30)
                    (30) edge node[below, sloped] {swap} (31)
                    (31) edge node[above, sloped] {swap} (32)
                    (32) edge node[above, sloped] {swap} (33)
                    (33) edge node[above, sloped] {shrink} (40)
                    (40) edge node[below, sloped] {swap} (41)
                    (41) edge node[above, sloped] {swap} (42)
                    (42) edge node[above, sloped] {swap} (43)
                    (43) edge node[above, sloped] {swap} (44);

        \def\hdist{4em};
        \def\vdist{2.5em};
        \path       (0) -- coordinate (L0) +(-\hdist, -\vdist)
                    (3) -- coordinate (R0) +(\hdist, \vdist)
                    (10) -- coordinate (L1) +(-\hdist, -\vdist)
                    (13) -- coordinate (R1) +(\hdist, \vdist)
                    (20) -- coordinate (L2) +(-\hdist, -\vdist)
                    (23) -- coordinate (R2) +(\hdist, \vdist)
                    (30) -- coordinate (L3) +(-\hdist, -\vdist)
                    (33) -- coordinate (R3) +(\hdist, \vdist)
                    (40) -- coordinate (L4) +(-\hdist, -\vdist)
                    (44) -- coordinate (R4) +(\hdist, \vdist);

        \draw[scc, rounded corners = 1.5em] (L0) rectangle (R0)
                                            (L1) rectangle (R1)
                                            (L3) rectangle (R3);
        \draw[sccred, rounded corners = 1.5em] (L4) rectangle (R4);
    \end{tikzpicture}
\end{center}
Note that the height is guaranteed to be finite by \Cref{finShrink}, and the final red state is the absorbing state. The convergence time in this example is $k_\ell$.

Bounding the width, the expected time taken to drop from one SCC to the next, turns out to be the bigger challenge, requiring a detour into the gambler's ruin problem and induction proofs with Markov chains. Instead, we begin our analysis with the height, the expected number of SCCs visited, i.e.\ the expected number of shrink updates.

\subsection{The Height}\label{sec:height}
We showed in \Cref{finShrink} that at most $\lfloor Q/2\rfloor$ shrink updates are possible, where $Q$ is the initial square sum. However, consider the following example:
\begin{center}
    \begin{tikzpicture}[bigg/.style={minimum size = 2.5em}]
        \node[main,bigg] (1) at (0,0) {4};
        \node[main,bigg] (2) [below = of 1] {2};
        \node[main,bigg] (3) [right = of 1] {6};
        \node[main,bigg] (4) [right = of 2] {6};
        \node[main,bigg] (5) [right = of 3] {5};
        \path   (1) edge (2) edge (3) edge (4)
                (2) edge (4)
                (3) edge (4) edge (5)
                (4) edge (5);
        \path (4) -- node[pos=1] {Initial Square Sum: 117} +(0, -2.5em);

        \node[main] (1) at (16em,0) {104};
        \node[main] (2) [below = of 1] {102};
        \node[main] (3) [right = of 1] {106};
        \node[main] (4) [right = of 2] {106};
        \node[main] (5) [right = of 3] {105};
        \path   (1) edge (2) edge (3) edge (4)
                (2) edge (4)
                (3) edge (4) edge (5)
                (4) edge (5);
        \path (4) -- node[pos=1] {Initial Square Sum: 54717} +(0, -2.5em);
    \end{tikzpicture}
\end{center}
Here, we have taken our motivating example and added 100 to all values. The process is clearly equivalent as the relative differences between the nodes remain unchanged, but the initial square sum is much higher. The problem here is that our $\lfloor Q/2\rfloor$ bound assumes that the square sum can be lowered to zero, which is not always the case. Instead, we should be considering the difference between the initial and final square sums. Fortunately, since we know the absorbing state, we know exactly what the final square sum will be.
\begin{defn}[Square sum gap]
    Let $G=(V,E)$ be a graph and $f_0$ its initial valuation. Let
    \begin{align*}
        Q_{init} &\coloneqq \sum_{v\in V}f_0^2(v) \\
        Q_{final} &\coloneqq \sum_{i=1}^\ell \left(r_i(k_i+1)^2+(n_i-r_i)k_i^2\right)
    \end{align*}
    where $\ell$, $n_i$, $k_i$ and $r_i$ are as in \Cref{divisionAny}. Then the \emph{square sum gap} of $G$ is:
    \[ q=\frac{Q_{init}-Q_{final}}{2} \]
\end{defn}
We can apply this to our example to calculate the final square sums:
\begin{center}
    \begin{tikzpicture}[bigg/.style={minimum size = 2.5em}]
        \node[main,bigg] (1) at (0,0) {4};
        \node[main,bigg] (2) [below = of 1] {4};
        \node[main,bigg] (3) [right = of 1] {5};
        \node[main,bigg] (4) [right = of 2] {5};
        \node[main,bigg] (5) [right = of 3] {5};
        \path   (1) edge (2) edge (3) edge (4)
                (2) edge (4)
                (3) edge (4) edge (5)
                (4) edge (5);
        \path (4) -- node[pos=1] {Final Square Sum: 107} +(0, -2.5em);

        \node[main] (1) at (16em,0) {104};
        \node[main] (2) [below = of 1] {104};
        \node[main] (3) [right = of 1] {105};
        \node[main] (4) [right = of 2] {105};
        \node[main] (5) [right = of 3] {105};
        \path   (1) edge (2) edge (3) edge (4)
                (2) edge (4)
                (3) edge (4) edge (5)
                (4) edge (5);
        \path (4) -- node[pos=1] {Final Square Sum: 54707} +(0, -2.5em);
    \end{tikzpicture}
\end{center}
Here, the square sum gap for both graphs is $q=5$, which is a much better bound for the maximum number of shrink updates applicable. Since applying a shrink update to $a$ and $b$ decreases the square sum by $2(|a-b|-1)$, this bound is reached when all shrink updates decrease the square sum by the minimal amount, which occurs when $|a-b|=2$. We didn't achieve this for our example since we updated an edge with endpoints $2$ and $6$ at the start, but we could achieve this by making the following sequence of updates:
\begin{center}
    \begin{tikzpicture}
        \node[main] (1) at (0,0) {4};
        \node[main] (2) [below = of 1] {2};
        \node[main] (3) [right = of 1] {6};
        \node[main] (4) [right = of 2] {6};
        \node[main] (5) [right = of 3] {5};
        \path   (1) edge (2) edge (3) edge (4)
                (2) edge (4)
                (3) edge (4) edge (5)
                (4) edge (5);
        \path (1) -- node (mid12) {} (2);
        \node [left = of mid12] {$t=0$};
        \path   (2) -- node (mid24) {} (4)
                (mid24) -- node[pos=1] {Square Sum: 117} +(0, -2.5em);
    
        \node[main, redd] (1) at (14em,0) {4};
        \node[main, redd] (2) [below = of 1] {2};
        \node[main] (3) [right = of 1] {6};
        \node[main] (4) [right = of 2] {6};
        \node[main] (5) [right = of 3] {5};
        \path   (1) edge[redd] (2) edge (3) edge (4)
                (2) edge (4)
                (3) edge (4) edge (5)
                (4) edge (5);
    \end{tikzpicture} \\
    \begin{tikzpicture}
        \node[main] (1) at (0,0) {3};
        \node[main] (2) [below = of 1] {3};
        \node[main] (3) [right = of 1] {6};
        \node[main] (4) [right = of 2] {6};
        \node[main] (5) [right = of 3] {5};
        \path   (1) edge (2) edge (3) edge (4)
                (2) edge (4)
                (3) edge (4) edge (5)
                (4) edge (5);
        \path (1) -- node (mid12) {} (2);
        \node [left = of mid12] {$t=1$};
        \path   (2) -- node (mid24) {} (4)
                (mid24) -- node[pos=1] {Square Sum: 115} +(0, -2.5em);
    
        \node[main] (1) at (14em,0) {3};
        \node[main] (2) [below = of 1] {3};
        \node[main, redd] (3) [right = of 1] {6};
        \node[main] (4) [right = of 2] {6};
        \node[main, redd] (5) [right = of 3] {5};
        \path   (1) edge (2) edge (3) edge (4)
                (2) edge (4)
                (3) edge (4) edge[redd] (5)
                (4) edge (5)
                (1) -- node[pos=1] {} +(0, \picgap);
    \end{tikzpicture} \\
    \begin{tikzpicture}
        \node[main] (1) at (0,0) {3};
        \node[main] (2) [below = of 1] {3};
        \node[main] (3) [right = of 1] {5};
        \node[main] (4) [right = of 2] {6};
        \node[main] (5) [right = of 3] {6};
        \path   (1) edge (2) edge (3) edge (4)
                (2) edge (4)
                (3) edge (4) edge (5)
                (4) edge (5);
        \path (1) -- node (mid12) {} (2);
        \node [left = of mid12] {$t=2$};
        \path   (2) -- node (mid24) {} (4)
                (mid24) -- node[pos=1] {Square Sum: 115} +(0, -2.5em);
    
        \node[main, redd] (1) at (14em,0) {3};
        \node[main] (2) [below = of 1] {3};
        \node[main, redd] (3) [right = of 1] {5};
        \node[main] (4) [right = of 2] {6};
        \node[main] (5) [right = of 3] {6};
        \path   (1) edge (2) edge[redd] (3) edge (4)
                (2) edge (4)
                (3) edge (4) edge (5)
                (4) edge (5)
                (1) -- node[pos=1] {} +(0, \picgap);
    \end{tikzpicture} \\
    \begin{tikzpicture}
        \node[main] (1) at (0,0) {4};
        \node[main] (2) [below = of 1] {3};
        \node[main] (3) [right = of 1] {4};
        \node[main] (4) [right = of 2] {6};
        \node[main] (5) [right = of 3] {6};
        \path   (1) edge (2) edge (3) edge (4)
                (2) edge (4)
                (3) edge (4) edge (5)
                (4) edge (5);
        \path (1) -- node (mid12) {} (2);
        \node [left = of mid12] {$t=3$};
        \path   (2) -- node (mid24) {} (4)
                (mid24) -- node[pos=1] {Square Sum: 113} +(0, -2.5em);
    
        \node[main] (1) at (14em,0) {4};
        \node[main] (2) [below = of 1] {3};
        \node[main, redd] (3) [right = of 1] {4};
        \node[main] (4) [right = of 2] {6};
        \node[main, redd] (5) [right = of 3] {6};
        \path   (1) edge (2) edge (3) edge (4)
                (2) edge (4)
                (3) edge (4) edge[redd] (5)
                (4) edge (5)
                (1) -- node[pos=1] {} +(0, \picgap);
    \end{tikzpicture} \\
    \begin{tikzpicture}
        \node[main] (1) at (0,0) {4};
        \node[main] (2) [below = of 1] {3};
        \node[main] (3) [right = of 1] {5};
        \node[main] (4) [right = of 2] {6};
        \node[main] (5) [right = of 3] {5};
        \path   (1) edge (2) edge (3) edge (4)
                (2) edge (4)
                (3) edge (4) edge (5)
                (4) edge (5);
        \path (1) -- node (mid12) {} (2);
        \node [left = of mid12] {$t=4$};
        \path   (2) -- node (mid24) {} (4)
                (mid24) -- node[pos=1] {Square Sum: 111} +(0, -2.5em);
    
        \node[main, redd] (1) at (14em,0) {4};
        \node[main] (2) [below = of 1] {3};
        \node[main] (3) [right = of 1] {5};
        \node[main, redd] (4) [right = of 2] {6};
        \node[main] (5) [right = of 3] {5};
        \path   (1) edge (2) edge (3) edge[redd] (4)
                (2) edge (4)
                (3) edge (4) edge (5)
                (4) edge (5)
                (1) -- node[pos=1] {} +(0, \picgap);
    \end{tikzpicture} \\
    \begin{tikzpicture}
        \node[main] (1) at (0,0) {5};
        \node[main] (2) [below = of 1] {3};
        \node[main] (3) [right = of 1] {5};
        \node[main] (4) [right = of 2] {5};
        \node[main] (5) [right = of 3] {5};
        \path   (1) edge (2) edge (3) edge (4)
                (2) edge (4)
                (3) edge (4) edge (5)
                (4) edge (5);
        \path (1) -- node (mid12) {} (2);
        \node [left = of mid12] {$t=5$};
        \path   (2) -- node (mid24) {} (4)
                (mid24) -- node[pos=1] {Square Sum: 109} +(0, -2.5em);
    
        \node[main, redd] (1) at (14em,0) {5};
        \node[main, redd] (2) [below = of 1] {3};
        \node[main] (3) [right = of 1] {5};
        \node[main] (4) [right = of 2] {5};
        \node[main] (5) [right = of 3] {5};
        \path   (1) edge[redd] (2) edge (3) edge (4)
                (2) edge (4)
                (3) edge (4) edge (5)
                (4) edge (5)
                (1) -- node[pos=1] {} +(0, \picgap);
                
        \node[main] (1) at (0,-10.4em) {4};
        \node[main] (2) [below = of 1] {4};
        \node[main] (3) [right = of 1] {5};
        \node[main] (4) [right = of 2] {5};
        \node[main] (5) [right = of 3] {5};
        \path   (1) edge (2) edge (3) edge (4)
                (2) edge (4)
                (3) edge (4) edge (5)
                (4) edge (5);
        \path (1) -- node (mid12) {} (2);
        \node [left = of mid12] {$t=6$};
        \path   (2) -- node (mid24) {} (4)
                (mid24) -- node[pos=1] {Square Sum: 107} +(0, -2.5em);
    \end{tikzpicture}
\end{center}
We know the square sum gap is an upper bound for the number of shrink updates, and have demonstrated an example of a graph with square sum gap $q=5$ that can actually take $5$ updates in the worst case. But what if $q$ is much larger; are there still graphs that can require $q$ updates in the worst case?

For simplicity, let's design graphs such that the sum is zero, so that the final square sum is also zero. The simplest case for $q=1$ is:
\begin{center}
    \begin{tikzpicture}[node distance = 4em, bigg/.style={minimum size = 2.5em}]
        \node[main,bigg] (1) at (0, 0) {$-1$};
        \node[main,bigg] (2) [right = of 1] {$1$};
        \path (1) edge (2);
    \end{tikzpicture}
\end{center}
Generalising by one level, we can make the $-1$ using a $-2$ and a $0$, and we can make the $1$ using a $0$ and a $2$. Thus, our new graph will have a $-2$, a $2$ and two copies of $0$. We might as well make it complete, so that all pairs can be swapped at any time.
\begin{center}
    \begin{tikzpicture}[node distance = 4em, bigg/.style={minimum size = 2.5em}]
        \node[main,bigg] (1) at (0, 0) {$-2$};
        \node[main,bigg] (2) [right = of 1] {$0$};
        \node[main,bigg] (3) [below = of 1] {$0$};
        \node[main,bigg] (4) [right = of 3] {$2$};
        \path   (1) edge (2) edge (3) edge (4)
                (2) edge (3) edge (4)
                (3) edge (4);
    \end{tikzpicture}
\end{center}
We can then perform shrink updates to get two copies of $-1$ and two copies of $1$, then shrink these with each other. Generalising one more level, we get the following graph of eight nodes:
\begin{center}
    \begin{tikzpicture}[bigg/.style={minimum size = 2.5em}]
        \def\dist{5.5em}
        \node[main,bigg] (0) at (90:\dist) {$-3$};
        \node[main,bigg] (1) at (45:\dist) {$-1$};
        \node[main,bigg] (2) at (0:\dist) {$-1$};
        \node[main,bigg] (3) at (-45:\dist) {$-1$};
        \node[main,bigg] (4) at (-90:\dist) {$3$};
        \node[main,bigg] (5) at (-135:\dist) {$1$};
        \node[main,bigg] (6) at (-180:\dist) {$1$};
        \node[main,bigg] (7) at (-225:\dist) {$1$};

        \path   (0) edge (1) edge (2) edge (3) edge (4) edge (5) edge (6) edge (7)
                (1) edge (2) edge (3) edge (4) edge (5) edge (6) edge (7)
                (2) edge (3) edge (4) edge (5) edge (6) edge (7)
                (3) edge (4) edge (5) edge (6) edge (7)
                (4) edge (5) edge (6) edge (7)
                (5) edge (6) edge (7)
                (6) edge (7);
    \end{tikzpicture}
\end{center}
There is a pattern here; if the nodes range from $-k$ to $k$, then we have $2^k$ nodes, and the counts of each node are the entries in the corresponding row in Pascal's Triangle. For the above example, we have \textbf{one} copy of $-3$, \textbf{three} copies of $-1$, \textbf{three} copies of $1$ and \textbf{one} copy of $3$, corresponding to the row \textbf{1 3 3 1} in Pascal's Triangle. This row can also be written $\binom{3}{0},\binom{3}{1},\binom{3}{2},\binom{3}{3}$. This is our general construction, and it can always be flattened to the absorbing state of all zeroes.
\begin{defn}[Binomial $n$-valuations]\label{defn:binVals}
    For $n\geq 0$, let $G$ be the complete graph on $2^n$ nodes. Then a \emph{binomial $n$-valuation} is a valuation with $\binom{n}{k}$ copies of $-n+2k$ for $0\leq k\leq n$.
\end{defn}
\begin{thm}\label{binValuationsThm}
    Any binomial $n$-valuation can reach an absorbing state using only shrink updates with $|a-b|=2$. The total number of shrink updates is $n2^{n-1}$.
\end{thm}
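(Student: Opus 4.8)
The plan is to prove the reachability statement by induction on $n$ and then read off the number of shrink updates from the square sum. For the base case, $n=0$ gives the single vertex valued $0$ (already an absorbing state, with $0$ updates), and $n=1$ gives $K_2$ with values $-1,1$, where one shrink update with $|a-b|=2$ produces the all-zero absorbing state, using $1=n2^{n-1}$ updates.

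The heart of the inductive step is a decomposition of the multiset of values via Pascal's identity $\binom{n}{k}=\binom{n-1}{k-1}+\binom{n-1}{k}$. I would first verify that the multiset of a binomial $n$-valuation is exactly the union of a binomial $(n-1)$-valuation with every value increased by $1$ (contributing $\binom{n-1}{k-1}$ copies of $-n+2k$) and a binomial $(n-1)$-valuation with every value decreased by $1$ (contributing $\binom{n-1}{k}$ copies of $-n+2k$). Since these two shifted multisets partition the multiset of the given valuation, one can split the vertex set of $G=K_{2^n}$ into halves $U_1,U_2$ of size $2^{n-1}$ so that the valuation restricted to $U_1$ is a shifted-up binomial $(n-1)$-valuation and restricted to $U_2$ is a shifted-down one.

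Next I would ``unshift'' the two halves simultaneously. Since $U_1$ and $U_2$ carry, up to the $\pm 1$ shift, the same multiset of base values, fix a perfect matching between them pairing a $U_1$-vertex of value $w+1$ with a $U_2$-vertex of value $w-1$, and perform the shrink update on each matched edge: the difference is exactly $2$, and both endpoints become $w$. These $2^{n-1}$ updates act on disjoint vertex pairs, so they commute; afterwards $U_1$ and $U_2$ each carry a genuine binomial $(n-1)$-valuation on its own induced complete graph $K_{2^{n-1}}$. The inductive hypothesis then applies to each half independently (disjoint vertex sets, no interference), driving the whole graph to all zeros, which is absorbing by \Cref{within1}, using only shrink updates with $|a-b|=2$.

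For the count, every update used satisfies $|a-b|=2$ and therefore lowers the square sum by exactly $2$. The final square sum is $0$, and a standard binomial identity gives $Q_{init}=\sum_{k=0}^{n}\binom{n}{k}(2k-n)^2=n2^n$, so the number of shrink updates is $n2^n/2=n2^{n-1}$; equivalently, the recursion yields $2^{n-1}+2(n-1)2^{n-2}=n2^{n-1}$. The step I expect to be the main obstacle is stating the vertex partition and the base-value matching precisely enough that the two recursive subproblems are visibly independent and that every first-phase update is a legal shrink update; once the decomposition of a binomial $n$-valuation into two shifted binomial $(n-1)$-valuations is in hand, the remainder is routine bookkeeping with binomial coefficients.
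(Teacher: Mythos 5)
Your proposal is correct and follows essentially the same route as the paper's proof: the Pascal's-identity decomposition into two shifted binomial $(n-1)$-valuations, a first round of $2^{n-1}$ shrink updates pairing values that differ by exactly $2$, recursion on the two induced complete halves, and the count $2^{n-1}+2(n-1)2^{n-2}=n2^{n-1}$ (the paper also notes your square-sum cross-check, just outside the proof). Your phrasing of the first round as an explicit perfect matching between the two halves is a slightly more careful presentation of the same step, but not a different argument.
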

\begin{proof}
    We prove this by induction. The unique binomial $0$-valuation consists only of $\binom{0}{0}=1$ copy of $0$, so it is already in an absorbing state. This means it has reached an absorbing state in $0\cdot 2^{0-1}$ shrink updates. Let $n\geq 1$. We show any binomial $n$-valuation can reach an absorbing state in $n2^{n-1}$ shrink updates by showing that it can reach two parallel $(n-1)$-valuations, then counting the number of shrink updates.

    For $0\leq k\leq n-1$, combine $\binom{n-1}{k}$ copies of $-n+2k$ with $\binom{n-1}{k}$ copies of $-n+2k+2$ to get $2\binom{n-1}{k}$ copies of $-n+2k+1$. This is possible since for $0\leq j\leq n$, the value $-n+2j$ is combined for $k=j$ and $k=j-1$, so it is combined a total of $\binom{n-1}{j} + \binom{n-1}{j-1}=\binom{n}{j}$ times, which is exactly how many times it appears. Because of this, we have removed all of the original values, and are left with $2\binom{n-1}{k}$ copies of $-n+2k+1=-(n-1)+2k$ for all $0\leq k\leq n-1$.
    
    But then we can split the graph into two induced complete subgraphs with a binomial $(n-1)$-valuation in each, which by the inductive hypothesis, can each be absorbed in $(n-1)2^{n-2}$ shrink updates, resulting in $(n-1)2^{n-1}$ shrink updates total. Since we initially performed $\sum_{k=0}^{n-1}\binom{n-1}{k}=2^{n-1}$ shrink updates, we therefore performed a total of $n2^{n-1}$ shrink updates.
\end{proof}
For clarity, here is what the process looks like for a binomial $3$-valuation:
\begin{center}
    \begin{tikzpicture}[bigg/.style={minimum size = 2.5em}]
        \def\dist{5.5em}
        \node[main,bigg] (0) at (90:\dist) {$-3$};
        \node[main,bigg] (1) at (45:\dist) {$-1$};
        \node[main,bigg] (2) at (0:\dist) {$-1$};
        \node[main,bigg] (3) at (-45:\dist) {$-1$};
        \node[main,bigg] (4) at (-90:\dist) {$3$};
        \node[main,bigg] (5) at (-135:\dist) {$1$};
        \node[main,bigg] (6) at (-180:\dist) {$1$};
        \node[main,bigg] (7) at (-225:\dist) {$1$};

        \path   (0) edge (1) edge (2) edge (3) edge (4) edge (5) edge (6) edge (7)
                (1) edge (2) edge (3) edge (4) edge (5) edge (6) edge (7)
                (2) edge (3) edge (4) edge (5) edge (6) edge (7)
                (3) edge (4) edge (5) edge (6) edge (7)
                (4) edge (5) edge (6) edge (7)
                (5) edge (6) edge (7)
                (6) edge (7);

        \path (16em, 0) -- node[pos=1,main,bigg,redd] (0) {$-3$} +(90:\dist)
                        -- node[pos=1,main,bigg,redd] (1) {$-1$} +(45:\dist)
                        -- node[pos=1,main,bigg,redd] (2) {$-1$} +(0:\dist)
                        -- node[pos=1,main,bigg,redd] (3) {$-1$} +(-45:\dist)
                        -- node[pos=1,main,bigg,redd] (4) {$3$} +(-90:\dist)
                        -- node[pos=1,main,bigg,redd] (5) {$1$} +(-135:\dist)
                        -- node[pos=1,main,bigg,redd] (6) {$1$} +(-180:\dist)
                        -- node[pos=1,main,bigg,redd] (7) {$1$} +(-225:\dist);

        \path   (0) edge[redd] (1) edge (2) edge (3) edge (4) edge (5) edge (6) edge (7)
                (1) edge (2) edge (3) edge (4) edge (5) edge (6) edge (7)
                (2) edge (3) edge (4) edge (5) edge (6) edge[redd] (7)
                (3) edge (4) edge (5) edge[redd] (6) edge (7)
                (4) edge[redd] (5) edge (6) edge (7)
                (5) edge (6) edge (7)
                (6) edge (7);
    \end{tikzpicture} \\
    \begin{tikzpicture}[bigg/.style={minimum size = 2.5em}]
        \def\dist{5.5em}
        \node[main,bigg] (0) at (90:\dist) {$-2$};
        \node[main,bigg] (1) at (45:\dist) {$-2$};
        \node[main,bigg] (2) at (0:\dist) {$0$};
        \node[main,bigg] (3) at (-45:\dist) {$0$};
        \node[main,bigg] (4) at (-90:\dist) {$2$};
        \node[main,bigg] (5) at (-135:\dist) {$2$};
        \node[main,bigg] (6) at (-180:\dist) {$0$};
        \node[main,bigg] (7) at (-225:\dist) {$0$};

        \path   (0) edge (1) edge (2) edge (3) edge (4) edge (5) edge (6) edge (7)
                (1) edge (2) edge (3) edge (4) edge (5) edge (6) edge (7)
                (2) edge (3) edge (4) edge (5) edge (6) edge (7)
                (3) edge (4) edge (5) edge (6) edge (7)
                (4) edge (5) edge (6) edge (7)
                (5) edge (6) edge (7)
                (6) edge (7);

        \path (16em, 0) -- node[pos=1,main,bigg,redd] (0) {$-2$} +(90:\dist)
                        -- node[pos=1,main,bigg,redd] (1) {$-2$} +(45:\dist)
                        -- node[pos=1,main,bigg,redd] (2) {$0$} +(0:\dist)
                        -- node[pos=1,main,bigg,redd] (3) {$0$} +(-45:\dist)
                        -- node[pos=1,main,bigg,redd] (4) {$2$} +(-90:\dist)
                        -- node[pos=1,main,bigg,redd] (5) {$2$} +(-135:\dist)
                        -- node[pos=1,main,bigg,redd] (6) {$0$} +(-180:\dist)
                        -- node[pos=1,main,bigg,redd] (7) {$0$} +(-225:\dist);
        \node at (0, \dist+\picgap) {};

        \path   (0) edge (1) edge (2) edge (3) edge (4) edge (5) edge (6) edge[redd] (7)
                (1) edge[redd] (2) edge (3) edge (4) edge (5) edge (6) edge (7)
                (2) edge (3) edge (4) edge (5) edge (6) edge (7)
                (3) edge[redd] (4) edge (5) edge (6) edge (7)
                (4) edge (5) edge (6) edge (7)
                (5) edge[redd] (6) edge (7)
                (6) edge (7);
    \end{tikzpicture} \\
    \begin{tikzpicture}[bigg/.style={minimum size = 2.5em}]
        \def\dist{5.5em}
        \node[main,bigg] (0) at (90:\dist) {$-1$};
        \node[main,bigg] (1) at (45:\dist) {$-1$};
        \node[main,bigg] (2) at (0:\dist) {$1$};
        \node[main,bigg] (3) at (-45:\dist) {$-1$};
        \node[main,bigg] (4) at (-90:\dist) {$1$};
        \node[main,bigg] (5) at (-135:\dist) {$1$};
        \node[main,bigg] (6) at (-180:\dist) {$1$};
        \node[main,bigg] (7) at (-225:\dist) {$-1$};

        \path   (0) edge (1) edge (2) edge (3) edge (4) edge (5) edge (6) edge (7)
                (1) edge (2) edge (3) edge (4) edge (5) edge (6) edge (7)
                (2) edge (3) edge (4) edge (5) edge (6) edge (7)
                (3) edge (4) edge (5) edge (6) edge (7)
                (4) edge (5) edge (6) edge (7)
                (5) edge (6) edge (7)
                (6) edge (7);

        \path (16em, 0) -- node[pos=1,main,bigg,redd] (0) {$-1$} +(90:\dist)
                        -- node[pos=1,main,bigg,redd] (1) {$-1$} +(45:\dist)
                        -- node[pos=1,main,bigg,redd] (2) {$1$} +(0:\dist)
                        -- node[pos=1,main,bigg,redd] (3) {$-1$} +(-45:\dist)
                        -- node[pos=1,main,bigg,redd] (4) {$1$} +(-90:\dist)
                        -- node[pos=1,main,bigg,redd] (5) {$1$} +(-135:\dist)
                        -- node[pos=1,main,bigg,redd] (6) {$1$} +(-180:\dist)
                        -- node[pos=1,main,bigg,redd] (7) {$-1$} +(-225:\dist);
        \node at (0, \dist+\picgap) {};

        \path   (0) edge (1) edge (2) edge (3) edge (4) edge[redd] (5) edge (6) edge (7)
                (1) edge (2) edge (3) edge[redd] (4) edge (5) edge (6) edge (7)
                (2) edge[redd] (3) edge (4) edge (5) edge (6) edge (7)
                (3) edge (4) edge (5) edge (6) edge (7)
                (4) edge (5) edge (6) edge (7)
                (5) edge (6) edge (7)
                (6) edge[redd] (7);
    \end{tikzpicture} \\
    \begin{tikzpicture}[bigg/.style={minimum size = 2.5em}]
        \def\dist{5.5em}
        \node[main,bigg] (0) at (90:\dist) {$0$};
        \node[main,bigg] (1) at (45:\dist) {$0$};
        \node[main,bigg] (2) at (0:\dist) {$0$};
        \node[main,bigg] (3) at (-45:\dist) {$0$};
        \node[main,bigg] (4) at (-90:\dist) {$0$};
        \node[main,bigg] (5) at (-135:\dist) {$0$};
        \node[main,bigg] (6) at (-180:\dist) {$0$};
        \node[main,bigg] (7) at (-225:\dist) {$0$};

        \path   (0) edge (1) edge (2) edge (3) edge (4) edge (5) edge (6) edge (7)
                (1) edge (2) edge (3) edge (4) edge (5) edge (6) edge (7)
                (2) edge (3) edge (4) edge (5) edge (6) edge (7)
                (3) edge (4) edge (5) edge (6) edge (7)
                (4) edge (5) edge (6) edge (7)
                (5) edge (6) edge (7)
                (6) edge (7);
    \end{tikzpicture}
\end{center}
Since the final square sum is $0$ and the square sum decreased by $2$ each update, the initial square sum must have been $n2^n$. This can also be double-checked by differentiating
\[ (x+1)^n = \sum_{k=0}^n\binom{n}{k}x^k \]
with respect to $x$ to obtain
\begin{align*}
    \sum_{k=0}^nk\binom{n}{k} &= n2^{n-1}\text{ and} \\
    \sum_{k=0}^nk^2\binom{n}{k} &= (n+n^2)2^{n-2},
\end{align*}
then doing some algebra on the initial square sum to show
\[ \sum_{k=0}^n\binom{n}{k}(-n+2k)^2 = n2^n. \]
The offshoot of this is that, if we want to find a graph with a square sum gap of $q=n2^{n-1}$ such that there is an update sequence which exhausts that bound (i.e.\ requires $q$ shrink updates), we can find such a graph with $2^n$ nodes. Redefining $n$ to be the number of nodes, we get that $q=\Theta(n\log n)$, meaning if we want a graph with a square sum gap of $q$ such that an update sequence exhausts that bound, we can find one with $n$ nodes such that $q=\Theta(n\log n)$. Regardless of the numbers of nodes, this family of specific examples justifies the square sum invariant, since it shows its tightness as a bound for the height of the convergence process.

\subsection{The Width}\label{sec:width}
The worst case of the width is simple; given a graph, if all possible updates are shrink updates, then the width is 1, since we are guaranteed to exit the SCC on the next update. Otherwise, there is an edge which, when picked, causes a swap update, so the worst case width is $\infty$, since we can simply pick this edge forever.

The more interesting question is to analyse the width in the average case, in other words, the expected value of the width when the process is run randomly. If the only nodes that the shrink update can be performed on are particularly far apart in the graph, it may take a long time for the right sequence of swaps to occur to align them next to each other, before they are combined. Analysing the expected value is significantly more challenging than analysing the worst case, as our tools for analysing randomness are more limited. For this problem, it helps to consider a motivating example. We will take inspiration from the above idea of having only one possible shrink update, to simplify the example and ensure that there are not multiple interdependent possibilities. The graph will be filled with zeroes, aside from a $-1$ node and a $1$ node.
\begin{center}
    \begin{tikzpicture}[bigg/.style = {minimum size = 2.5em}]
        \node[main,bigg] (1) {$-1$};
        \node[main,bigg] (2) [below = of 1] {$0$};
        \node[main,bigg] (3) [right = of 1] {$0$};
        \node[main,bigg] (4) [right = of 2] {$0$};
        \node[main,bigg] (5) [right = of 3] {$0$};
        \node[main,bigg] (6) [right = of 4] {$1$};
        \node[main,bigg] (7) [right = of 5] {$0$};

        \path   (1) edge (2)
                (2) edge (3) edge (4)
                (3) edge (4) edge (5)
                (4) edge (5)
                (5) edge (6) edge (7)
                (6) edge (7);
    \end{tikzpicture}
\end{center}
To further simplify things, let's organise the nodes of the graph into a line. This can be done in a number of ways, but it helps to pick one which aligns with a path from $-1$ to $1$.
\begin{center}
    \begin{tikzpicture}[bigg/.style = {minimum size = 2.5em}]
        \node[main,bigg] (1) {$-1$};
        \node[main,bigg] (2) [below = of 1] {$0$};
        \node[main,bigg] (3) [right = of 1] {$0$};
        \node[main,bigg] (4) [right = of 2] {$0$};
        \node[main,bigg] (5) [right = of 3] {$0$};
        \node[main,bigg] (6) [right = of 4] {$1$};
        \node[main,bigg] (7) [right = of 5] {$0$};

        \path   (1) edge[bluu] (2)
                (2) edge[bluu] (3) edge (4)
                (3) edge[bluu] (4) edge (5)
                (4) edge[bluu] (5)
                (5) edge[bluu] (6) edge (7)
                (6) edge[bluu] (7) -- +(0, -\picgap);
    \end{tikzpicture} \\
    \begin{tikzpicture}[bigg/.style = {minimum size = 2.5em}]
        \node[main,bigg] (1) {$-1$};
        \node[main,bigg] (2) [right = of 1] {$0$};
        \node[main,bigg] (3) [right = of 2] {$0$};
        \node[main,bigg] (4) [right = of 3] {$0$};
        \node[main,bigg] (5) [right = of 4] {$0$};
        \node[main,bigg] (6) [right = of 5] {$1$};
        \node[main,bigg] (7) [right = of 6] {$0$};

        \path   (1) edge[bluu] (2)
                (2) edge[bluu] (3) edge[bend left = 35] (4)
                (3) edge[bluu] (4) edge[bend right = 35] (5)
                (4) edge[bluu] (5)
                (5) edge[bluu] (6) edge[bend left = 35] (7)
                (6) edge[bluu] (7);
    \end{tikzpicture}
\end{center}
As swap updates are performed and $-1$ and $1$ shuffle along the line, their positions vary. They may be pulled closer together, or pushed further apart. Eventually, we expect them to be placed close to each other at least once, and if the edge between them is picked, the shrink update finally happens. The width is the amount of time until this occurs.

This visualisation suggests a \emph{worst case} of this process, where $-1$ and $1$ are on opposite sides of a line, as far apart as possible, and there are no `shortcut' edges that can be used to jump ahead. Note that shortcut edges are not necessarily always good, they can send the numbers further away from each other, but intuition leads us to believe the expected value is maximised by removing them; justification for this is provided at the end of this section. This would create a straight-line graph, and the random process on this graph would be similar to the \emph{gambler's ruin} game. In this game, a gambler starts with $i$ dollars and either makes or loses a dollar on each iteration with equal probability. The game ends when the gambler runs out of money (i.e.\ has 0 dollars), or reaches $n$ dollars. It can be shown using Markov chains that the expected number of steps until the game ends is $i(n-i)$, and the probability of success of the gambler is $i/n$. If we make the following definition, then the technique used to prove this fact can be applied to our graph problem.
\begin{defn}[Gambler's ruin valuations]\label{defn:grVals}
    For $n\geq 2$, let $G$ be a straight-line graph of $n$ nodes (i.e.\ a tree with maximum degree 2). Then a \emph{gambler's ruin valuation} is a valuation which assigns $-1$ to one leaf, $1$ to the other leaf, and zero to all non-leaves.
\end{defn}
Below is an example of a gambler's ruin valuation for $n=5$.
\begin{center}
    \begin{tikzpicture}[bigg/.style={minimum size=2.5em}];
        \node[main,bigg] (1) {$-1$};
        \node[main,bigg] (2) [right = of 1] {$0$};
        \node[main,bigg] (3) [right = of 2] {$0$};
        \node[main,bigg] (4) [right = of 3] {$0$};
        \node[main,bigg] (5) [right = of 4] {$1$};
    
        \path       (1) edge (2)
                    (2) edge (3)
                    (3) edge (4)
                    (4) edge (5);
    \end{tikzpicture}
\end{center}
Even for this simplified graph, finding the expected number of swaps before the shrink update is tricky because both the $-1$ and the $1$ can move. This means the Markov chain has $\Theta(n^2)$ states, corresponding to the positions $i$ and $j$ of the $-1$ and $1$ respectively, which can range over all values such that $1\leq i\leq j\leq n$. Instead, we will create simpler processes that are upper and lower bounds of this process, which are easier to analyse.
\begin{lem}\label{ohNcubed}
    Let $G$ be a straight-line graph of $n$ nodes and let $f_0$ be a gambler's ruin valuation on $G$. Then the expected time until the shrink update is $\bigO(n^3)$.
\end{lem}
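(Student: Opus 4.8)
The plan is to collapse the two moving particles into a single one-dimensional quantity, the gap between them, and to recognise that over the subsequence of time steps that actually do something, this gap performs a reflected random walk with a ``leaky'' absorbing barrier --- essentially the gambler's ruin chain. The $\bigO(n^3)$ then splits as ($\bigO(n^2)$ relevant events) $\times$ ($\bigO(n)$ time wasted per relevant event).

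First I would fix notation. Label the nodes $1,\dots,n$ along the line and, at time $t$, let $i_t<j_t$ be the positions of the $-1$ and the $1$; since $|{-1}-1|=2$ always forces a shrink, the two values never cross, so $i_t<j_t$ holds until the process stops, and the shrink occurs exactly the first time $D_t\coloneqq j_t-i_t$ equals $1$ and the edge joining the two particles is selected. Call a step \emph{relevant} if the chosen edge is incident to the $-1$ or the $1$ node; every other step leaves the valuation unchanged. Each particle has degree $1$ or $2$ and, for $n\ge 3$, there are always at least two relevant edges among the $n-1$ edges, so a step is relevant with probability at least $2/(n-1)$, whence the waiting time for the next relevant event is stochastically dominated by a geometric variable of mean at most $(n-1)/2$.

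Next I would analyse the gap. Conditioning on a relevant event, each relevant edge is equally likely, and a short case check gives: with both particles internal, $D_t$ moves $\pm 1$ with probability $\tfrac12$ each; with exactly one particle on a leaf, $D_t$ moves toward $1$ with probability $\tfrac23$; at $D_t=n-1$ it always decreases; and at $D_t=1$ the process terminates with probability at least $\tfrac13$, otherwise $D_t$ becomes $2$. Every boundary effect thus pushes $D_t$ toward the absorbing value $1$, so by a monotone coupling the number of relevant events before termination is stochastically at most that of a symmetric random walk on $\{1,\dots,n-1\}$ that reflects at $n-1$ and is absorbed at $1$ with probability $\tfrac13$. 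For that walk the bulk recurrence $h(d-1)-2h(d)+h(d+1)=-2$ forces $h$ to be concave with maximum value $\Theta(n^2)$ under pure absorption, and replacing pure absorption at $1$ by absorption with probability $\tfrac13$ only adds an $\bigO(n)$ term; hence the expected number of relevant events before the shrink is $\bigO(n^2)$. Finally, if $N$ is that number and $W_1,W_2,\dots$ are the waiting times for successive relevant events, then the total time is $\sum_{k=1}^{N}W_k$, and since $N$ is a stopping time for the relevant-event process with $\E[W_k\mid\mathcal F_{k-1}]\le (n-1)/2$ uniformly, the conditional form of Wald's identity gives $\E\!\big[\sum_{k=1}^{N}W_k\big]\le \tfrac{n-1}{2}\,\E[N]=\bigO(n)\cdot\bigO(n^2)=\bigO(n^3)$.

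I expect the main obstacle to be the gap-walk step. The number of relevant edges oscillates between $2$ and $4$ according to whether the particles touch the ends of the line, which at first glance could skew the law of $D_t$ unpredictably; conditioning on ``a relevant event occurred'' is precisely what cleans up the per-step distribution, and one then has to check that \emph{every} boundary phenomenon --- the forced decrease at $D=n-1$, the $\tfrac13$-absorption at $D=1$, and the $\tfrac23$ drift when one particle sits on a leaf --- nudges the chain toward termination, so that the comparison with the plain symmetric walk is legitimate. The $\bigO(n)$ waiting-time factor and the Wald step are then routine bookkeeping.
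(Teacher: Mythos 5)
Your argument is correct, but it takes a genuinely different route from the paper's. The paper fixes the entire infinite edge sequence in advance and compares pathwise against a modified process in which the $-1$ is frozen in place: for every fixed sequence, the shrink in the original process occurs no later than the arrival of the $1$ at the far end in the modified process, and that modified process is a one-dimensional lazy walk whose expected absorption time is computed exactly as $\frac{n(n-1)^2}{2}$. You instead track the gap $D_t$ between the two particles directly, reduce to the $\bigO(n^2)$ \emph{relevant} events by stochastic domination against a symmetric reflected walk with leaky absorption at $1$, and recover the remaining factor of $n$ from the geometric waiting times via conditional Wald. Both are sound. The paper's coupling is shorter and yields an exact formula for the dominating process; yours analyses the true two-particle dynamics, makes the $\bigO(n^2)\times\bigO(n)$ structure of the bound explicit, and is better positioned to attack the matching lower bound of \Cref{omegaNcubed}, which the paper only sketches. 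Two points in your "main obstacle" step deserve to be made explicit if this is written up: first, the gap process conditioned on relevant events is not a Markov chain in $D_t$ alone (its transition law depends on whether a particle sits on a leaf), so the domination must be phrased conditionally on the history -- at every relevant event the conditional probability that $D_t$ decreases is at least that of the comparison walk at the same state, which is exactly what a step-by-step coupling with common uniform variables requires; second, the coupled chains stay ordered after they separate because both move by $\pm 1$ and hence preserve the parity of their difference. With those details filled in, the argument is complete.
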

\begin{proof}
    Enumerate the edges of $G$ as $1,2,\cdots,n-1$. Instead of choosing an edge randomly at each timestep, let's fix the entire infinitely-long random sequence in advance. For example:
    \[ (1, 4, 2, 4, 4, 3, 1, 2, \cdots) \]
    Then, we can step through the sequence and update those edges in order. Then for each sequence of this form, the width $t$ is the point in the sequence at which the shrink update occurred. For $n=5$, the first three swaps in the example sequence above bring $-1$ and $1$ next to each other, the next values (4 and 4) accomplish nothing, and the 3 swaps them, meaning $t=6$.
    
    Let's consider a new process where only the $1$ is capable of moving, and the $-1$ is fixed in place. This means the $1$ must be swapped all the way across the graph until it reaches the $-1$. Let $M$ be defined similarly to $t$ -- it is the point in the sequence at which the shrink update occurs in this new process. Then for a fixed infinite sequence of edge updates, we know $t\leq M$. To see why, suppose we have reached point $M$, meaning the $1$ has made its way to the end. Then at some point it must have crossed over the $-1$ in the original process used to define $t$, meaning $t$ has already occurred.

    We now show that $M=\Theta(n^3)$ using the technique which solves the original gambler's ruin problem. Each of the possible positions of $1$ corresponds to a unique state in the Markov chain, and we can transition between the states by swapping $1$ to be closer to or further away from $-1$. Thus, the probabilities of transitioning between the states can be found:
    \begin{center}
        \begin{tikzpicture}[bigg/.style={minimum size=3em}];
            \node[main,bigg] (1) {$e_0$};
            \node[main,bigg] (2) [right = of 1] {$e_1$};
            \node[main,bigg] (3) [right = of 2] {$e_2$};
            \node[bigg] (4) [right = of 3] {$\cdots$};
            \node[main,bigg] (5) [right = of 4] {$e_{n-2}$};
            \node[main,bigg] (6) [right = of 5] {$e_{n-1}$};
        
            \path[->, bend left = 25]
                (2) edge node[below] {$\frac{1}{n-1}$} (1) edge node[above] {$\frac{1}{n-1}$} (3)
                (3) edge node[below] {$\frac{1}{n-1}$} (2) edge node[above] {$\frac{1}{n-1}$} (4)
                (4) edge node[below] {$\frac{1}{n-1}$} (3) edge node[above] {$\frac{1}{n-1}$} (5)
                (5) edge node[below] {$\frac{1}{n-1}$} (4) edge node[above] {$\frac{1}{n-1}$} (6)
                (6) edge node[below] {$\frac{1}{n-1}$} (5);
            \path[->, out=110, in=70, looseness=8]
                (2) edge node[above] {$\frac{n-3}{n-1}$} (2)
                (3) edge node[above] {$\frac{n-3}{n-1}$} (3)
                (5) edge node[above] {$\frac{n-3}{n-1}$} (5)
                (6) edge node[above] {$\frac{n-2}{n-1}$} (6);
        \end{tikzpicture}
    \end{center}
    Note that $e_0$ corresponds to $1$ reaching the end and merging with $-1$, causing a shrink update, and $e_{n-1}$ corresponds to the initial position where $1$ is as far away from $-1$ as possible.
    
    Let $e_k$ be the expected number of moves taken to reach the far left cell from cell $k$. We know $e_0=0$, and we wish to find $e_{n-1}$, which is the expected value of $M$. We will show the following by induction:
    \[ e_k=\frac{k(n-1)}{2}+\frac{k}{k+1}e_{k+1} \]
    This is clearly true for $k=0$, since $e_0=0$. For $1\leq k\leq n-2$, suppose it holds for $k-1$. We will construct a formula for $e_k$; after taking one step, there is a $\frac{1}{n-1}$ chance of moving to $e_{k-1}$, where there are $e_{k-1}$ expected extra steps until the end. Similarly, there is a $\frac{1}{n-1}$ chance of moving to $e_{k+1}$, where there are $e_{k+1}$ expected extra steps until the end. Finally, there is a $\frac{n-3}{n-1}$ chance of staying still, where there are $e_k$ expected extra steps until the end. Therefore:
    \begin{align*}
        e_k &= 1 + \frac{1}{n-1}e_{k-1} + \frac{1}{n-1}e_{k+1} + \frac{n-3}{n-1}e_k \\
        \frac{2}{n-1}e_k &= 1 + \frac{1}{n-1}e_{k-1} + \frac{1}{n-1}e_{k+1}
    \end{align*}
    By the inductive hypothesis:
    \begin{align*}
        \frac{2}{n-1}e_k &= 1 + \frac{1}{n-1}\left(\frac{(k-1)(n-1)}{2}+\frac{k-1}{k}e_k\right) + \frac{1}{n-1}e_{k+1} \\
        2\left(\frac{1}{n-1}e_k\right) &= 1 + \frac{k-1}{2}+\frac{k-1}{k}\left(\frac{1}{n-1}e_k\right) + \frac{1}{n-1}e_{k+1} \\
        \left(2-\frac{k-1}{k}\right)\left(\frac{1}{n-1}e_k\right) &= 1 + \frac{k-1}{2} + \frac{1}{n-1}e_{k+1} \\
        \frac{k+1}{k}\left(\frac{1}{n-1}e_k\right) &= 1 + \frac{k-1}{2} + \frac{1}{n-1}e_{k+1} \\
        \frac{1}{n-1}e_k &= \frac{k}{k+1} + \frac{k(k-1)}{2(k+1)} + \frac{k}{k+1}\cdot\frac{1}{n-1}e_{k+1} \\
        \frac{1}{n-1}e_k &= \frac{2k}{2(k+1)} + \frac{k^2-k}{2(k+1)} + \frac{k}{k+1}\cdot\frac{1}{n-1}e_{k+1} \\
        \frac{1}{n-1}e_k &= \frac{k(k+1)}{2(k+1)} + \frac{k}{k+1}\cdot\frac{1}{n-1}e_{k+1} \\
        \frac{1}{n-1}e_k &= \frac{k}{2} + \frac{k}{k+1}\cdot\frac{1}{n-1}e_{k+1} \\
        e_k &= \frac{k(n-1)}{2} + \frac{k}{k+1}e_{k+1}
    \end{align*}
    Thus, the result holds for all $1\leq k\leq n-2$ by induction.
    
    Now, by substituting $k=n-2$, we get the following relation between $e_{k-1}$ and $e_{k-2}$:
    \[ e_{n-2} = \frac{(n-1)(n-2)}{2} + \frac{n-2}{n-1}e_{n-1} \]
    We can get another relation between $e_{k-1}$ and $e_{k-2}$ by constructing an explicit formula for $e_{n-1}$, similar to above. There is a $\frac{1}{n-1}$ chance of requiring an extra $e_{n-2}$ moves, and a $\frac{n-2}{n-1}$ chance of staying still. Therefore:
    \begin{align*}
        e_{n-1} &= 1 + \frac{1}{n-1}e_{n-2} + \frac{n-2}{n-1}e_{n-1} \\
        \frac{1}{n-1}e_{n-1} &= 1 + \frac{1}{n-1}e_{n-2} \\
        e_{n-1} &= n-1 + e_{n-2}
    \end{align*}
    Using the previous relation:
    \begin{align*}
        e_{n-1} &= n-1 + \frac{(n-1)(n-2)}{2} + \frac{n-2}{n-1}e_{n-1} \\
        \frac{1}{n-1}e_{n-1} &= \frac{2(n-1)}{2} + \frac{(n-1)(n-2)}{2} \\
        &= \frac{n(n-1)}{2} \\
        e_{n-1} &= \frac{n(n-1)^2}{2} \\
        &= \Theta(n^3)
    \end{align*}
    Since $e_{n-1}$ was the expected value of $M$, and $t\leq M$, we know $t=\bigO(n^3)$. But this was the expected width of the gambler's ruin valuation of $n$ nodes, so we are done.
\end{proof}
Since this \emph{feels} like the worst case valuation, this gives a strong idea that the average case for the width is $\bigO(n^3)$ on all graphs. We now provide a proof sketch using a similar technique to show that the worst case for the gambler's ruin graph is $\Omega(n^3)$; combining this with the previous result would yield that the worst case is $\Theta(n^3)$.
\begin{lem}\label{omegaNcubed}
    Let $G$ be a straight-line graph of $n$ nodes and let $f_0$ be a gambler's ruin valuation on $G$. Then the expected time until the shrink update is $\Omega(n^3)$.
\end{lem}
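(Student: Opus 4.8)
The plan is to reduce the process to a pair of one‑dimensional random walks and apply standard random‑walk anti‑concentration, in the spirit of (but opposite in direction to) the proof of \Cref{ohNcubed}. As in that proof, fix in advance the infinite sequence of chosen edges, and let $i_t$ and $j_t$ be the positions along the line of the value $-1$ and the value $1$ at time $t$, so that $i_0=1$, $j_0=n$, and $1\le i_t<j_t\le n$ until the shrink update occurs. Since the shrink update can only be performed when the two extreme values sit on adjacent vertices, it suffices to lower bound the first time $\tau'$ at which $j_t-i_t=1$; writing $\tau$ for the time of the shrink update we have $\tau\ge\tau'$, and we will show $\Pr[\tau'>cn^3]\ge\tfrac12$ for a small enough absolute constant $c$, whence $\E[\tau]\ge\E[\tau']\ge\tfrac{c}{2}\,n^3$.

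Two elementary observations drive the argument. First, the gap $g_t:=j_t-i_t$ equals $n-1$ at time $0$, changes by at most $1$ per step, and must decrease by a total of $n-2$ before time $\tau'$; each unit of decrease is caused either by the $-1$ moving one step to the right or by the $1$ moving one step to the left, and since $i_t\ge 1=i_0$ and $j_t\le n=j_0$ throughout, at time $\tau'$ at least one of the two extreme values has been displaced by at least $(n-2)/2$ from where it started. Second, while $g_t\ge 2$ the two extreme values have disjoint vertex–neighbourhoods, so the update of $i_t$ does not consult $j_t$ and vice versa: throughout $[0,\tau']$ each extreme value performs an \emph{autonomous} lazy simple random walk on the line — at each step the edge immediately to its left is chosen with probability $\tfrac1{n-1}$ (moving it one step that way), the edge immediately to its right with probability $\tfrac1{n-1}$ (this edge being absent, so that the walk is reflected, when the value is at the corresponding end of the line), and otherwise the value stays put. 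Thus on $[0,\tau']$ the walk $i_t$ starts at the left end and is reflected there, and $j_t$ starts at the right end and is reflected there.

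The core estimate is now routine. Over $T:=cn^3$ steps the number of moves made by a fixed one of these walks is stochastically dominated by a $\mathrm{Binom}\!\big(T,\tfrac2{n-1}\big)$ random variable, hence is at most $4cn^2$ except with probability $e^{-\Omega(n^2)}$ (Chernoff). Conditioned on at most $B\le 4cn^2$ moves, the reflection principle together with Hoeffding's inequality for the signed $\pm1$ increments gives that the walk's maximal displacement from its starting end over those moves exceeds $(n-2)/2$ with probability at most $4\exp\!\big(-(n-2)^2/(32cn^2)\big)$, which is below $\tfrac14$ once $c$ is a small enough absolute constant (and $n$ is not too small). Applying this to $i_t$ and to $j_t$ and taking a union bound, with probability at least $\tfrac12$ neither extreme value is displaced by $(n-2)/2$ within $T$ steps, and then by the first observation $\tau'>T$. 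Hence $\E[\tau]\ge\E[\tau']\ge\tfrac12 T=\Omega(n^3)$, which combined with \Cref{ohNcubed} gives expected convergence time $\Theta(n^3)$.

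The step I expect to require the most care — and why this is only a sketch — is the second observation: one must check that coupling each extreme value to a genuinely autonomous lazy reflected walk is legitimate up to the stopping time $\tau'$ (this is exactly where ``$g_t\ge 2$'' is used, so that the moves of one value never depend on the position of the other), and that reflection at the far end of the line does not increase the order of the displacement — it stays $O(\sqrt{B})$, e.g.\ because the distance from that end is distributed as the absolute value of an unreflected symmetric walk, so only a constant factor is lost and can be absorbed into $c$. A more self‑contained but considerably more computational alternative would be to exhibit an explicit sub‑solution $\phi(i,j)\ge 0$ of $\E[\phi(i_{t+1},j_{t+1})-\phi(i_t,j_t)\mid\mathcal F_t]\ge -1$ with $\phi$ vanishing on the absorbed state and $\phi(1,n)=\Omega(n^3)$, and apply optional stopping; a $\phi$ depending only on the gap $j-i$ fails here, since near the ends of the line the gap walk acquires a drift of order $1/n$ that the potential cannot tolerate, so such a $\phi$ would need carefully tuned boundary corrections.
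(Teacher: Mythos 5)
Your argument is correct in outline, but it is a genuinely different route from the one the paper takes, and it is worth noting that it appears to repair a gap the paper explicitly leaves open. The paper's proof sketch stops the process at the first time either extreme value crosses the centre of the line, splits the line into two independent half-length gambler's-ruin games, and asserts that the minimum of two such games run ``almost simultaneously'' still has expectation $\Theta(n^3)$; the paper concedes this last step was never formalised (``attempts with Markov's inequality and double summations did not appear to help''). You instead lower-bound the first time $\tau'$ at which the gap reaches $1$ via the identity $(i_{\tau'}-1)+(n-j_{\tau'})=n-2$, which forces one extreme value to be displaced by $(n-2)/2$ by time $\tau'$, and then kill that event by a union bound over the two values: each is coupled, up to $\tau'$, to an autonomous lazy walk reflected at its own end of the line, whose number of non-lazy steps in $cn^3$ rounds is $O(cn^2)$ by Chernoff and whose maximal displacement over that many moves is $O(\sqrt{c}\,n)$ by the $|S_k|$-coupling for the reflected embedded chain plus the reflection principle and Hoeffding. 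All of the ingredients here are standard and the coupling is legitimate exactly where you say it is (disjoint incident edge sets while the gap is at least $2$), so this is essentially a complete proof modulo writing out the tail bounds, whereas the paper's version remains a heuristic. What the paper's decomposition buys is intuition — it explains \emph{why} the answer is $n^3$ by reduction to the classical gambler's ruin — but your anti-concentration argument is the one that actually delivers the $\Omega(n^3)$ bound, and it would be a worthwhile contribution to the open item listed in \Cref{chap:conclusion}.
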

\begin{proofsketch}
    Consider the same process, but let $m$ be the point in the sequence at which either $-1$ or $1$ crosses the center for the first time. Then $m\leq t$, since at the point $t$ we know $-1$ and $1$ have combined, so at least one of them has crossed the center and thus we have passed point $m$.
    \begin{center}
        \begin{tikzpicture}[bigg/.style = {minimum size = 2.5em}]
            \node[main,bigg] (1) {$-1$};
            \node[main,bigg] (2) [right = of 1] {$0$};
            \node[main,bigg] (3) [right = of 2] {$0$};
            \node[main,bigg] (4) [right = of 3] {$0$};
            \node[main,bigg] (5) [right = of 4] {$0$};
            \node[main,bigg] (6) [right = of 5] {$1$};

            \path   (1) edge (2)
                    (2) edge (3)
                    (3) edge node (3p5) {} (4)
                    (4) edge (5)
                    (5) edge (6);

            \def\dist{5em}
            \path   (3p5) -- ++(0,-\dist) edge[redd] +(0,2*\dist) -- node[pos=1] {Center} +(0, -2em);
        \end{tikzpicture}
    \end{center}
    We can now split across the center and view the process as two independent processes:
    \begin{center}
        \begin{tikzpicture}[bigg/.style = {minimum size = 2.5em}]
            \node[main,bigg] (1) {$-1$};
            \node[main,bigg] (2) [right = of 1] {$0$};
            \node[main,bigg] (3) [right = of 2] {$0$};
            \node[bigg] (4) [right = of 3] {$\cdots$};
            \node[main,bigg] (5) [below = of 1] {$1$};
            \node[main,bigg] (6) [right = of 5] {$0$};
            \node[main,bigg] (7) [right = of 6] {$0$};
            \node[bigg] (8) [right = of 7] {$\cdots$};

            \path   (1) edge (2)
                    (2) edge (3)
                    (3) edge node (35) {} (4)
                    (5) edge (6)
                    (6) edge (7)
                    (7) edge node (78) {} (8)
                    
                    (35) -- node (mid) {} (78);

            \def\dist{5em}
            \path
                    (mid) -- ++(0,-\dist) edge[redd] +(0,2*\dist) -- node[pos=1] {Center} +(0, -2em);
        \end{tikzpicture}
    \end{center}
    Now, we are playing two independent games \emph{almost simultaneously}; all edges are still uniformly randomly likely to be picked, but only one game is played at a time, and we stop when either game ends. But the individual games are each just copies of the game from \Cref{ohNcubed}, and since the expected value of each of them was shown to be in $\Theta\left((n/2)^3\right)=\Theta(n^3)$, playing two of them in parallel with these rules and stopping when one of them finishes should different only by a constant, and thus also be $\Theta(n^3)$. Since $m=\Theta(n^3)$ and $m\leq t$, this would imply that $t=\Omega(n^3)$.
\end{proofsketch}
Note that this is not a formal proof; attempts with Markov's inequality and double summations did not appear to help tie everything up at the end. However, in \Cref{sec:expAnalysis1}, we show with experiments that the gambler's ruin valuation is likely to converge in $\Theta(n^3)$ time, verifying the lemma statements above.

Combining \Cref{ohNcubed} with \Cref{omegaNcubed} would show that the expected width of the gambler's ruin valuation is $\Theta(n^3)$. If the gambler's ruin valuation was indeed the worst case for expected width, i.e.\ if the width of any valuation was $\bigO(n^3)$, then combining this with the analysis of the height in \Cref{sec:height} would yield an overall upper bound of $\bigO(n^3q)$ for the expected convergence time of any valuation. But the gambler's ruin valuation has a square sum gap of only $1$; proving the tightness of the $\bigO(n^3q)$ bound for general $q$ would be a challenge, and would require finding a graph with a width that is consistently as bad as the gambler's ruin valuation, and proving this to be the case.

We finish off this section by tying up a few loose ends:
\begin{itemize}
    \item What is the intuition for the gambler's ruin valuation being the worst case? Why would adding \emph{shortcut edges} to it make the expected width lower?
    \item How could we use this intuition to formally show that the expected width of any graph is upper bounded by the gambler's ruin valuation?
\end{itemize}
Then, we close the chapter with experimental analysis of the load balancing model on a few different classes of graph.

To answer the first question, let's simplify the process by supposing that one value is fixed, and we are trying to move the other to it. A random walk on the integer coordinate plane is expected to essentially breadth-first search out from the center, spiralling out and reaching many lattice points close to the center before moving on to lattice points further away. Similarly, we can expect that if we want to maximise the time taken for one value to reach the other, then we want those values to be as far away from each other as possible. Shortcut edges may be traversed forwards, backwards, or both many times, but the most important thing that they do is they shorten the distance between the values. Since the process is random, this means over time, the probability that they do not eventually run into each other becomes more and more unlikely. In other words, we do not want to create opportunities for the start to reach the end by introducing shortcut edges, since the process will just spiral out of both endpoints of the shortcut edge, and thus eventually reach the end faster.

Another way to see this is with some examples. Suppose for simplicity that at each step, the moving value picks a neighbour uniformly at random and moves to it. We can apply the same technique used in the proof of \Cref{ohNcubed} to analyse the following graph where $n=5$, and calculate the expected number of steps until we reach the far right (assuming the walk ends once we do so).
\begin{center}
    \begin{tikzpicture}[bigg/.style = {minimum size = 2.5em}]
        \node[main,bigg] (1) {$e_1$};
        \node[main,bigg] (2) [right = of 1] {$e_2$};
        \node[main,bigg] (3) [right = of 2] {$e_3$};
        \node[main,bigg] (4) [right = of 3] {$e_4$};
        \node[main,bigg] (5) [right = of 4] {$e_5$};

        \def\dist{2em};
        \path   (1) edge (2)    -- node[pos=1, redd] {16} +(0, -\dist)
                (2) edge (3)    -- node[pos=1, redd] {15} +(0, -\dist)
                (3) edge (4)    -- node[pos=1, redd] {12} +(0, -\dist)
                (4) edge (5)    -- node[pos=1, redd] {7} +(0, -\dist)
                (5)             -- node[pos=1, redd] {0} +(0, -\dist);
    \end{tikzpicture}
\end{center}
Note that this means placing the two values at opposite endpoints is optimal. We can then perform the same analysis on the edge with a shortcut edge added, which requires solving a system of three equations.
\begin{center}
    \begin{tikzpicture}[bigg/.style = {minimum size = 2.5em}]
        \node[main,bigg] (1) {$e_1$};
        \node[main,bigg] (2) [right = of 1] {$e_2$};
        \node[main,bigg] (3) [right = of 2] {$e_3$};
        \node[main,bigg] (4) [right = of 3] {$e_4$};
        \node[main,bigg] (5) [right = of 4] {$e_5$};

        \def\dist{2em};
        \path   (1) edge (2) edge[bend left = 30] (4) -- node[pos=1, redd] {12} +(0, -\dist)
                (2) edge (3)    -- node[pos=1, redd] {13} +(0, -\dist)
                (3) edge (4)    -- node[pos=1, redd] {12} +(0, -\dist)
                (4) edge (5)    -- node[pos=1, redd] {9} +(0, -\dist)
                (5)             -- node[pos=1, redd] {0} +(0, -\dist);
    \end{tikzpicture}
\end{center}
Since there is now a possibility of falling back to the beginning, the value of $e_4$ has increased. The value of $e_3$ stayed the same, but the values of $e_1$ and $e_2$ have both decreased, and $e_1$ is no longer the optimal value to place the moving endpoint at. Since all of these values are now below 16, the graph with no shortcut edges was better than the graph with one shortcut edge; it seems likely that the graph with no shortcut edges is in fact optimal.

Now, if we were able to show this formally, we could take any graph and organise the nodes in any order, like we did earlier by drawing the blue path. Then this graph would be like a gambler's ruin game with shortcut edges, and would thus be upper bounded by the worst case analysed earlier, making it $\bigO(n^3)$ by \Cref{ohNcubed}. Turning these sketches into a complete proof that the convergence time is upper bounded by $\bigO(n^3q)$ could be a subject of future research into the load balancing model.

\section{Experimental Analysis}\label{sec:expAnalysis1}
The bounds shown above are theoretical, meaning they are guaranteed to hold in the worst case. However, in practice, on random graphs or on social networks, there may be additional properties which result in different behaviour. For example, \emph{scale free networks}, such as social networks, have a vast majority of nodes with few connections, and a few important nodes (\emph{hubs}) with many connections. Thus, it is worthwhile to analyse such graphs through experimentation.

\subsection{\ER graphs and social networks}
The methodology behind generating random graphs was covered in the background (\Cref{analysedGraphs}), but we are now presented with the challenge of generating random valuations. The width is in terms of $n$, the number of nodes, which is a parameter of $G(n,p)$ and is fixed by the social network datasets. But the height is in terms of $q$, the square sum gap, and it is difficult to generate a `random valuation with square sum gap $q$'. Fortunately however, it is possible.

Let's try to make the sum $0$ for convenience. For a given node, if we give it a random value from $-b$ to $b$ (inclusive), for some $b\geq 0$, then its expected square sum can be calculated using the formula for the sum of the first $b$ squares:
\begin{align*}
    \E[X] &= \frac{1}{2b+1}\sum_{x=-b}^bx^2 \\
    &= \frac{2}{2b+1}\sum_{x=1}^bx^2 \\
    &= \frac{2}{2b+1}\cdot\frac{b(b+1)(2b+1)}{6} \\
    &= \frac{1}{3}b(b+1)
\end{align*}
Thus, if we assign all $n$ nodes a random value from $-b$ to $b$, we can get expected square sum gaps close to $\frac{\E[X]-0}{2}=\frac{1}{6}nb(b+1)$. However, with just one parameter $b$, this is not helpful unless we want to test for some very strange square sum gap values. The tables below test for square sum gaps 10, 100, 1\,000 and 10\,000, which given a fixed $n$, are unlikely to all be writable in the form $\frac{1}{6}nb(b+1)$.

To remedy this, instead of assigning a random value to all $n$ nodes, we assign a random value to $0\leq k\leq n$ of them, giving us an expected square sum gap close to $\frac{1}{6}kb(b+1)$. By choosing $k$ and $b$ such that this value is close to (or equal to) our desired $q$, but $k$ is as large as possible (as we want a large number of nodes to have random values), then for small values of $q$ (e.g.\ values $\leq 10\,000$) it only takes a few attempts until we generate a valuation with square sum gap $q$. For larger $q$ (e.g.\ 100\,000), we simply allow for a $0.01\%$ error bound. That means the results may be the average of valuations with square sum gap 99\,994 or 100\,006, but this is acceptable as the convergence time (particularly the width) has a high variance between trials anyway.
\begin{mytable}
    \centering
    \begin{tabular}{ccccccccccccc}
        \toprule
        \multirow{2}{*}{\shortstack{Desired\vspace{0.3em}\\sqsum gap\vspace{-0.8em}}} & \multicolumn{2}{c}{$n=10$} & \multicolumn{2}{c}{$n=100$} & \multicolumn{2}{c}{$n=1\,000$} & \multicolumn{2}{c}{$n=4\,039$} & \multicolumn{2}{c}{$n=7\,126$} & \multicolumn{2}{c}{$n=10\,000$} \\
        \cmidrule(lr){2-3} \cmidrule(lr){4-5} \cmidrule(lr){6-7} \cmidrule(lr){8-9} \cmidrule(lr){10-11} \cmidrule(lr){12-13}
        & $k$ & $b$ & $k$ & $b$ & $k$ & $b$ & $k$ & $b$ & $k$ & $b$ & $k$ & $b$ \\
        \midrule
        10 & 10 & 2 & 30 & 1 & 30 & 1 & 30 & 1 & 30 & 1 & 30 & 1 \\
        100 & 10 & 7 & 100 & 2 & 300 & 1 & 300 & 1 & 300 & 1 & 300 & 1 \\
        1\,000 & 10 & 24 & 100 & 7 & 1\,000 & 2 & 3\,000 & 1 & 3\,000 & 1 & 3\,000 & 1 \\
        10\,000 & 10 & 77 & 100 & 24 & 833 & 8 & 3\,000 & 4 & 5\,000 & 3 & 10\,000 & 2 \\
        100\,000 & 10 & 245 & 100 & 77 & 1\,000 & 24 & 3\,846 & 12 & 6\,666 & 9 & 8\,333 & 8 \\
        \bottomrule
    \end{tabular}
    \caption{Choices of $k$ and $b$ for various desired square sum gaps}
\end{mytable}

For example, if we have $n=10\,000$ and want a square sum gap of $q=10\,000$, we cannot let $b=1$ since then solving $\frac{1}{6}kb(b+1)=q$ would give us $k=30\,000$, but we need $k\leq n$. However, letting $b=2$ gives us $k=10\,000$, which is the largest possible $k$.

Note that all values presented in the tables are integers as they have been rounded, meaning they may not exactly satisfy $\frac{1}{6}kb(b+1)=q$, but they will be close enough that random generation will produce a correct one in only a couple hundred attempts at worst.

The below experiments were run for 100 trials each. Note that $p=0.5$ for \ER graphs and $m=9$ for \BA graphs.
\begin{mytable}
    \centering
    \begin{tabular}{clccccc}
        \toprule
        \multicolumn{2}{c}{\multirow{2}{*}{Graph}} & \multicolumn{5}{c}{Square sum gap} \\
        \cmidrule{3-7}
        && 10 & 100 & 1\,000 & 10\,000 & 100\,000 \\
        \midrule
        \multirow{4}{*}{\rotatebox[origin=c]{90}{\Erdos}} & 10 & 41.86 & 98.34 & 216.36 & 608.68 & 1872.68 \\
        & 100 & 4\,338.59 & 2\,159.98 & 1\,746.22 & 2\,558.52 & 5\,743.85 \\
        & 1\,000 & 410\,991.19 & 310\,077.53 & 107\,514.51 & 48\,554.61 & 48\,503.97 \\
        & 10\,000 & 39\,871\,750.57 & 31\,101\,818.94 & 14\,740\,931.59 & 5\,206\,490.84 & 1\,767\,497.17 \\
        \multirow{4}{*}{\rotatebox[origin=c]{90}{\;\Barabasi}} & 10 & 42.20 & 97.79 & 235.03 & 678.82 & 2\,063.18 \\
        & 100 & 4\,372.88 & 2\,079.02 & 1\,935.90 & 3\,300.98 & 10\,186.92 \\
        & 1\,000 & 414\,683.06 & 301\,538.37 & 107\,003.94 & 58\,842.71 & 51\,125.03 \\
        & 10\,000 & 40\,124\,384.78 & 31\,291\,793.44 & 16\,397\,386.54 & 7\,232\,802.91 & 2\,790\,517.71 \\
        \multicolumn{2}{c}{Facebook} & 11\,154\,471.65 & 9\,798\,811.93 & 9\,718\,185.96 & 7\,172\,158.85 & 7\,153\,170.34 \\
        \multicolumn{2}{c}{Twitch} & 23\,005\,902.49 & 13\,414\,254.35 & 7\,083\,177.81 & 2\,750\,397.61 & 1\,404\,999.78 \\
        \bottomrule
    \end{tabular}
    \caption{Convergence time of various graphs}
\end{mytable}
As expected, an increase in the number of nodes implies an increase in convergence time. This seemed to consistently be the case both for sparse graphs (e.g.\ Twitch), denser graphs (e.g.\ Facebook, which has half the number of nodes but double the number of edges) and very dense graphs (e.g.\ \ER graphs, for which the expected number of edges is $n(n-1)/4$). It also did not seem to depend on whether the graph was scale-free (e.g.\ \BA graphs and social networks versus \ER graphs); in fact, the results were remarkably consistent among \ER and \BA graphs, with \BA graphs taking longer to converge only for larger values of $q$.

One might also expect an increase in the square sum gap to cause an increase in the convergence time, since the initial valuation is further from the absorbing state. The results show this for the random graphs of 10 nodes, but for 100 nodes, the convergence time actually appears to \emph{decrease} with respect to the square sum up until $q=1\,000$, when it begins to increase again. This suggests the existence of an optimal square sum value for which the expected convergence time is minimised, which is counter-intuitive, and would have been difficult to foresee without experimental analysis.

One possible explanation as to why increasing the square sum gap does not make a large impact on the convergence time is that the vast majority of the time taken occurs after the values are within 2 of each other (i.e.\ taking three adjacent values). Due to the design of the experiment, we almost always end up with a vast quantity of zeroes and a few copies of $1$ and $-1$. The only way to make progress is for these few copies to run into each other and collide, becoming zeroes. For example, for one trial (on a \BA graph with $n=10\,000$ and $q=100\,000$), the process took 1\,149\,218 moves and 36\,060 shrink updates. However, all values became within 2 after only 175\,723 moves, after which 35\,678 shrink updates had been made; the remaining 1.06\% of the shrink updates took the remaining 84.7\% of the time.

This means including some large values does not make a big difference, since almost every time they are swapped with anything, they will decrease until they become close to zero. As the square sum increases, the number of nonzero values in the graph increases also. It is quite possible that this causes these values to destroy each other, getting closer to zero. In contrast, with a low square sum, the graph immediately starts in a position with a few copies of $1$ and $-1$ floating around, but there may be more of these, and thus, the long phase may take longer.

To further investigate this hypothesis, further testing was performed on the \BA graph of 1\,000 nodes, since the optimal square sum gap value appears to be large enough as to have not come up in the original experiments.
\begin{mytable}
    \centering
    \begin{tabular}{ccc}
        \toprule
        \multirow{2}{*}{\shortstack{Desired\vspace{0.3em}\\sqsum gap\vspace{-0.8em}}} & \multicolumn{2}{c}{$n=1\,000$} \\
        \cmidrule(lr){2-3}
        & $k$ & $b$ \\
        \midrule
        1\,000\,000 & 999 & 77 \\
        10\,000\,000 & 1\,000 & 244 \\
        100\,000\,000 & 1\,000 & 774 \\
        \bottomrule
    \end{tabular}
    \caption{Choices of $k$ and $b$ for additional square sum gaps}
\end{mytable}
These experiments were also run for 100 trials.
\begin{mytable}
    \centering
    \begin{tabular}{cccccc}
        \toprule
        \multirow{2}{*}{Graph} & \multicolumn{5}{c}{Square sum gap} \\
        \cmidrule{2-6}
        & 10\,000 & 100\,000 & 1\,000\,000 & 10\,000\,000 & 100\,000\,000 \\
        \midrule
        \Erdos & \multirow{2}{*}{58\,842.71} & \multirow{2}{*}{51\,125.03} & \multirow{2}{*}{119\,931.8} & \multirow{2}{*}{272\,085.98} & \multirow{2}{*}{849\,721.74} \\
        $n=1\,000$ \\
        \bottomrule
    \end{tabular}
    \caption{Convergence time of the \ER graph of 1\,000 nodes}
\end{mytable}
It is worth noting that over half of the updates for the test with $q=100\,000\,000$ were shrink updates. These results reinforce the above conjecture about an optimal value, and suggest that for $n=1\,000$, this optimal value is likely to be between $100\,000$ and $1\,000\,000$ (and almost definitely no smaller than $10\,000$).

The code also recorded the number of shrink updates required until convergence, i.e.\ the height. Within the 100 trials, this varied much less than the convergence time, mostly due to the aforementioned fault in the methodology; by setting $k$ as high as possible, we set $b$ as low as possible, making it more likely that the graph either starts off with, or eventually ends up in, the position where all values are within 2 of each other.
\begin{mytable}
    \centering
    \begin{tabular}{clccccc}
        \toprule
        \multicolumn{2}{c}{\multirow{2}{*}{Graph}} & \multicolumn{5}{c}{Square sum gap} \\
        \cmidrule{3-7}
        && 10 & 100 & 1\,000 & 10\,000 & 100\,000 \\
        \midrule
        \multirow{4}{*}{\rotatebox[origin=c]{90}{\Erdos}} & 10 & 7.52 & 32.19 & 112.96 & 373.61 & 1\,218.56 \\
        & 100 & 10.00 & 78.25 & 329.23 & 1\,171.99 & 3\,814.90 \\
        & 1\,000 & 10.00 & 100.00 & 794.47 & 3\,326.18 & 11\,674.16 \\
        & 10\,000 & 10.00 & 100.00 & 1\,000.00 & 7\,978.60 & 33\,515.83 \\
        \multirow{4}{*}{\rotatebox[origin=c]{90}{\;\Barabasi}} & 10 & 7.77 & 34.66 & 122.96 & 411.86 & 1\,312.57 \\
        & 100 & 10.00 & 78.86 & 352.03 & 1\,265.33 & 4\,226.42 \\
        & 1\,000 & 10.00 & 100.00 & 812.24 & 3\,561.42 & 13\,182.14 \\
        & 10\,000 & 10.00 & 100.00 & 1\,000.00 & 8\,172.87 & 35\,989.43 \\
        \multicolumn{2}{c}{Facebook} & 10.00 & 100.00 & 1\,000.00 & 6\,102.37 & 25\,473.41 \\
        \multicolumn{2}{c}{Twitch} & 10.00 & 100.00 & 1\,000.00 & 6\,789.42 & 35\,041.03 \\
        \bottomrule
    \end{tabular}
    \caption{Height of various graphs}
\end{mytable}
Note that when $q\leq n/10$, we set $b=1$ and thus the range of values is from $-1$ to $1$ inclusive, meaning all shrink updates are optimal and we hit the worst case of $q$; this is why many of the table values are exact.

While the theoretical worst case bound for the number of shrink updates depended only on $q$, the average case in practice appears to depend on both $n$ and $q$. This is likely because in smaller graphs, the further apart values are more likely to run into each other and skip many shrink updates in the worst-case sequence. With a small number of nodes, there is not much room to have the values decrease by two each time, whereas on large graphs, they can swap past each other to locate the values close to them, and thus not decrease the square sum by as much.

In terms of the relation to $n$, it appears that for a fixed $q$, the number of shrink updates converges towards $q$, getting marginal returns as $n$ increases to $10\,000$. It is possible that this is simply caused by the fact that $b$ decreases as $n$ increases, meaning that the range becomes tighter and the possible shrink update window decreases. The relation between the expected number of shrink updates and the worst case (i.e.\ $q$) as $n$ increases could be a topic of interest for future research.

\subsection{Binomial valuations}
We can also consider the special valuations that arose earlier from analysing the height and width. For the height, we considered the $n$-valuation (\Cref{defn:binVals}), which has a square sum gap of $q=n2^{n-1}$, and showed it had a worst case of $q$ shrink updates. The below experiments were run for 1\,000 trials.
\begin{mytable}
    \centering
    \begin{tabular}{ccccccc}
        \toprule
        $n$-val.\ & Nodes & Sqsum Gap & Conv Time & Height & Width & Worst case (\%) \\
        \midrule
        2 & 4 & 4 & 9.51 & 3.09 & 3.08 & 77.25 \\
        3 & 8 & 12 & 48.40 & 8.18 & 5.92 & 68.17 \\
        4 & 16 & 32 & 205.42 & 19.95 & 10.30 & 62.34 \\
        5 & 32 & 80 & 854.65 & 46.46 & 18.40 & 58.08 \\
        6 & 64 & 192 & 3\,421.41 & 104.54 & 32.73 & 54.45 \\
        7 & 128 & 448 & 13\,745.82 & 230.54 & 59.62 & 51.46 \\
        8 & 256 & 1\,024 & 52\,307.78 & 499.30 & 104.76 & 48.76 \\
        9 & 512 & 2\,304 & 214\,564.3 & 1\,072.57 & 200.06 & 46.55 \\
        \bottomrule
    \end{tabular}
    \caption{Convergence time breakdown for binomial $n$-valuations}
\end{mytable}
The `worst case \%' column shows the height obtained from the experiments as a percentage of the worst case height, which is the square sum gap. This decreases over time, since the larger the graph, the more values make up the complete graph, and thus the more likely we are to skip steps in the shrink update sequence. For example, at $n=3$, the value $-3$ can be swapped with either $-1$, $1$ or $3$, and for the worst case we want it to be swapped with $-1$. However, at $n=9$, the value $-9$ can be swapped with anything from $-7, -5, \cdots, 7, 9$, which is a much larger range of values.

The width was calculated as the convergence time divided by the height. We can plot this with respect to $n$, since we expect the width to depend only on the number of nodes.
\begin{myfigure}
    \centering
    \includegraphics[width=0.8\linewidth]{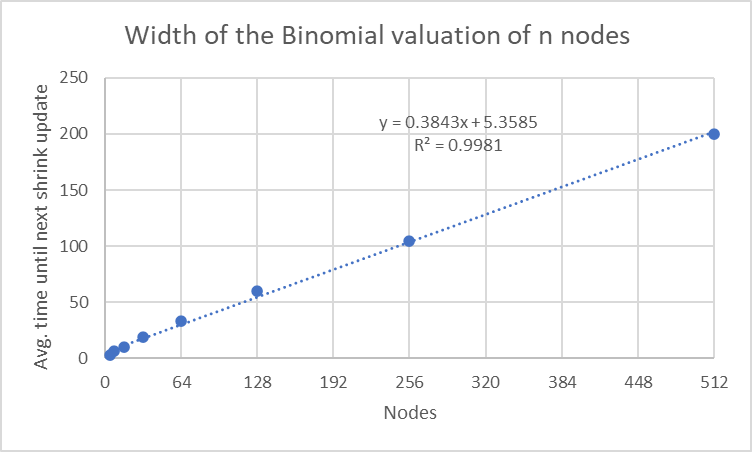}
\end{myfigure}
Then, we can plot the height with respect to $q$, since we expect the height to depend only on the square sum gap.
\begin{myfigure}
    \centering
    \includegraphics[width=0.8\linewidth]{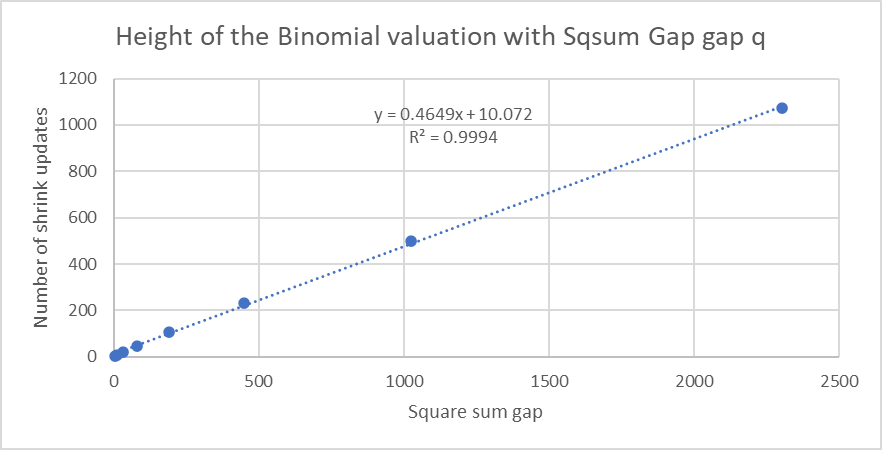}
\end{myfigure}
The high R squared values support the hypothesis that the width depends on $n$ and the height depends on $q$, making the convergence time, i.e.\ the product of the width and height, proportional to $nq$. Since the binomial $k$-valuation has $n=2^k$ nodes and a square sum gap of $q=k2^{k-1}$, this means $q$ is proportional to $n\log n$, and thus the convergence time appears to be $\Theta(n^2\log n)$.

Note that the complete graph is dense with $\Theta(n^2)$ edges, which made the random graphs more likely to bounce around without performing a shrink update. However, for binomial valuations, this appears to proportional to $n$, rather than $\Theta(n^3)$ as shown by the theoretical bound. A potential research question could be to see if there are any dense graphs that also obtain a $\Theta(n^3)$ worst case bound like the Gambler's ruin $n$-valuations, and to find a link between the density of a graph and the convergence time.

\subsection{Gambler's ruin valuations}\label{sec:grVals}
For the width, we considered the gambler's ruin valuation (\Cref{defn:grVals}), and showed the expected convergence time is $\Theta(n^3)$. The below experiments were run for 1\,000 trials.
\begin{mytable}
    \centering
    \begin{tabular}{cc}
        \toprule
        Nodes ($n$) & Conv Time \\
        \midrule
        5 & 19.61 \\
        10 & 156.58 \\
        15 & 529.50 \\
        20 & 1\,230.18 \\
        25 & 2\,377.34 \\
        30 & 4\,110.76 \\
        35 & 6\,472.44 \\
        40 & 9\,428.93 \\
        45 & 13\,752.96 \\
        50 & 19\,517.74 \\
        55 & 24\,766.43 \\
        60 & 31\,817.11 \\
        \bottomrule
    \end{tabular}
    \caption{Convergence Time of Gambler's Ruin $n$-valuations}
\end{mytable}
Plotting these points, a polynomial trendline of degree 2 starts off decreasing, and does not grow fast enough on the right-hand side. A trendline of degree 3, however, fits the points with $R^2=0.9994$. For degree 4 (shown below), the coefficient of $x^4$ is $-0.0015$; the low magnitude indicates that this term is unimportant, and the sign indicates a further mistake, as it will cause the graph to slope downwards in the long term.
\begin{center}
    \includegraphics[width=0.8\linewidth]{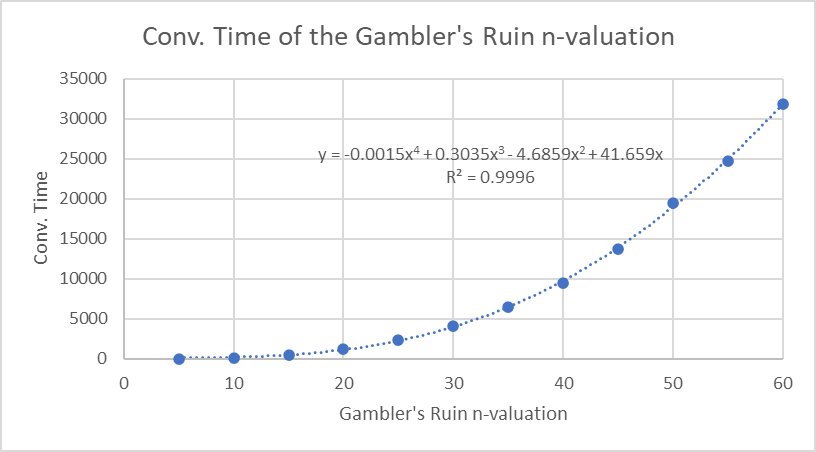}    
\end{center}
Thus, the experiment backs up the theoretical proof sketch that the width is $\Theta(n^3)$ in this particular example. It remains to show that there are no other examples with higher convergence times, but this appears unlikely.                          
\chapter{Maximum Model}\label{chap:max-model}

In the synchronous maximum model, each node takes the maximum of its neighbours in each iteration, overriding its own value. We first consider the undirected case, presenting a view of the process that makes the convergence behaviour and convergence time clear (\Cref{undPeriod1or2}). Then, we proceed to the directed case where the behaviour is much more complex. We start with strongly connected graphs where we can place a restriction on the period (\Cref{pDividesG}) and provide descriptions of the absorbing state (\Cref{maximumSpreadsStrong}, \Cref{unintuitiveMaximumSpreadsStrong}). We can then show that all possible values $p$ for the period that obey the restriction can arise from a possible graph (\Cref{constructivePeriodP}), showing that the restriction is tight. Finally, in the weakly connected case, we present an example where the period is exponential (\Cref{ahadExample}).

\section{Undirected Graphs}
Similar to \Cref{chap:load-balancing}, we begin with a motivating example to see how the process plays out.
\begin{center}
    \begin{tikzpicture}
        \node[main] (1) {7};
        \node[main] (2) [right = of 1] {3};
        \node[main] (3) [right = of 2] {1};
        \node[main] (4) [below = of 1] {5};
        \node[main] (5) [right = of 4] {4};
        \node[main] (6) [right = of 5] {6};
        \path   (1) -- node (14) {} (4)
                (3) -- node (36) {} (6);
        \node [left = of 14] {$t=0$};
        \node[main] (7) [right = of 36] {2};

        \path   (1) edge (2) edge (4)
                (2) edge (3) edge (5)
                (3) edge (6) edge (7)
                (4) edge (5)
                (5) edge (6)
                (6) edge (7);
    \end{tikzpicture}
\end{center}
On the first iteration, the $6$ node sets itself to $4$, the maximum of its three neighbours $4$, $1$ and $2$. All other nodes undergo the same transformation. The first few iterations proceed as follows:
\begin{center}
    \begin{tikzpicture}
        \node[main] (1) {5};
        \node[main] (2) [right = of 1] {7};
        \node[main] (3) [right = of 2] {6};
        \node[main] (4) [below = of 1] {7};
        \node[main] (5) [right = of 4] {6};
        \node[main] (6) [right = of 5] {4};
        \path   (1) -- node (14) {} (4)
                (3) -- node (36) {} (6);
        \node [left = of 14] {$t=1$};
        \node[main] (7) [right = of 36] {6};

        \path   (1) edge (2) edge (4)
                (2) edge (3) edge (5)
                (3) edge (6) edge (7)
                (4) edge (5)
                (5) edge (6)
                (6) edge (7);
    \end{tikzpicture} \\
    \begin{tikzpicture}
        \node[main] (1) {7};
        \node[main] (2) [right = of 1] {6};
        \node[main] (3) [right = of 2] {7};
        \node[main] (4) [below = of 1] {6};
        \node[main] (5) [right = of 4] {7};
        \node[main] (6) [right = of 5] {6};
        \path   (1) -- node (14) {} (4)
                (3) -- node (36) {} (6);
        \node [left = of 14] {$t=2$};
        \node[main] (7) [right = of 36] {6};

        \path   (1) edge (2) edge (4) -- node[pos=1] {} +(0, \picgap)
                (2) edge (3) edge (5)
                (3) edge (6) edge (7)
                (4) edge (5)
                (5) edge (6)
                (6) edge (7);
    \end{tikzpicture} \\
    \begin{tikzpicture}
        \node[main] (1) {6};
        \node[main] (2) [right = of 1] {7};
        \node[main] (3) [right = of 2] {6};
        \node[main] (4) [below = of 1] {7};
        \node[main] (5) [right = of 4] {6};
        \node[main] (6) [right = of 5] {7};
        \path   (1) -- node (14) {} (4)
                (3) -- node (36) {} (6);
        \node [left = of 14] {$t=3$};
        \node[main] (7) [right = of 36] {7};

        \path   (1) edge (2) edge (4) -- node[pos=1] {} +(0, \picgap)
                (2) edge (3) edge (5)
                (3) edge (6) edge (7)
                (4) edge (5)
                (5) edge (6)
                (6) edge (7);
    \end{tikzpicture}
\end{center}
Observe that the 7 and 6 nodes have each spread to every second node, creating a checkerboard pattern. Initially, it looks like the pattern will oscillate, but observe:
\begin{center}
    \begin{tikzpicture}
        \node[main] (1) {7};
        \node[main] (2) [right = of 1] {6};
        \node[main] (3) [right = of 2] {7};
        \node[main] (4) [below = of 1] {6};
        \node[main] (5) [right = of 4] {7};
        \node[main] (6) [right = of 5] {7};
        \path   (1) -- node (14) {} (4)
                (3) -- node (36) {} (6);
        \node [left = of 14] {$t=4$};
        \node[main] (7) [right = of 36] {7};

        \path   (1) edge (2) edge (4)
                (2) edge (3) edge (5)
                (3) edge (6) edge (7)
                (4) edge (5)
                (5) edge (6)
                (6) edge (7);
    \end{tikzpicture} \\
    \begin{tikzpicture}
        \node[main] (1) {6};
        \node[main] (2) [right = of 1] {7};
        \node[main] (3) [right = of 2] {7};
        \node[main] (4) [below = of 1] {7};
        \node[main] (5) [right = of 4] {7};
        \node[main] (6) [right = of 5] {7};
        \path   (1) -- node (14) {} (4)
                (3) -- node (36) {} (6);
        \node [left = of 14] {$t=5$};
        \node[main] (7) [right = of 36] {7};

        \path   (1) edge (2) edge (4) -- node[pos=1] {} +(0, \picgap)
                (2) edge (3) edge (5)
                (3) edge (6) edge (7)
                (4) edge (5)
                (5) edge (6)
                (6) edge (7);
    \end{tikzpicture} \\
    \begin{tikzpicture}
        \node[main] (1) {7};
        \node[main] (2) [right = of 1] {7};
        \node[main] (3) [right = of 2] {7};
        \node[main] (4) [below = of 1] {7};
        \node[main] (5) [right = of 4] {7};
        \node[main] (6) [right = of 5] {7};
        \path   (1) -- node (14) {} (4)
                (3) -- node (36) {} (6);
        \node [left = of 14] {$t=6$};
        \node[main] (7) [right = of 36] {7};

        \path   (1) edge (2) edge (4) -- node[pos=1] {} +(0, \picgap)
                (2) edge (3) edge (5)
                (3) edge (6) edge (7)
                (4) edge (5)
                (5) edge (6)
                (6) edge (7);
    \end{tikzpicture}
\end{center}
Eventually, the maximum 7 node managed to leverage the 3-cycle on the right of the graph to spread to every node. If there was no such 3-cycle, it seems like the checkerboard pattern would have oscillated forever, which prompts us to consider bipartite graphs separately. We can also observe that the $7$ spread from the left to the right through every second node, then spread back to the left again through the nodes it missed. This is reminiscent of a breadth-first search (BFS), so let's consider what happens when we run a BFS algorithm from a maximum node and rebuild the graph accordingly.
\begin{center}
    \begin{tikzpicture}
        \node[main] (1) {7};
        \node[main] (2) [right = of 1] {3};
        \node[main] (3) [right = of 2] {1};
        \node[main] (4) [below = of 1] {5};
        \node[main] (5) [right = of 4] {4};
        \node[main] (6) [right = of 5] {6};
        \path   (1) -- node (14) {} (4)
                (3) -- node (36) {} (6);
        \node [left = of 14] {$t=0$};
        \node[main] (7) [right = of 36] {2};

        \path   (1) edge (2) edge (4)
                (2) edge (3) edge (5)
                (3) edge (6) edge (7)
                (4) edge (5)
                (5) edge (6)
                (6) edge (7);
    \end{tikzpicture}
\end{center}
Let $u$ be a maximum node. In this case $f_0(u)=7$.
\begin{center}
    \begin{tikzpicture}
        \node[main] (7) {7};
        \def\off{2.2em};
        \def\labeloff{3em};
        \path   (7) -- node (above7) {} +(-\off, \off)
                (7) -- node (below7) {} +(\off, -\off)
                (7) -- node[pos=1, bfslabel] {\huge\textbf{0}} +(0, \labeloff);
        \draw[bfs, rounded corners] (above7) rectangle (below7);
    \end{tikzpicture}
\end{center}
On the first step, we add the nodes in $\Gamma^1(u)$ that have not yet been added:
\begin{center}
    \begin{tikzpicture}
        \node[main] (7) {7};
        \node[main] (3) [above right = of 7] {3};
        \node[main] (5) [below right = of 7] {5};

        \def\off{2.2em};
        \def\labeloff{3em};
        \path   (7) edge (3) edge (5)
                    -- node (above7) {} +(-\off, \off)
                (7) -- node (below7) {} +(\off, -\off)
                (7) -- node[pos=1, bfslabel] {\huge\textbf{0}} +(0, \labeloff)
                (3) -- node (off3) {} +(-\off, \off)
                (3) -- node[pos=1, bfslabel] {\huge\textbf{1}} +(0, \labeloff)
                (5) -- node (off5) {} +(\off, -\off);
        \draw[bfs, rounded corners] (above7) rectangle (below7)
                                    (off3) rectangle (off5);
    \end{tikzpicture}
\end{center}
Then, we add the nodes in $\Gamma^2(u)$ that have not yet been added:
\begin{center}
    \begin{tikzpicture}
        \node[main] (7) {7};
        \node[main] (3) [above right = of 7] {3};
        \node[main] (5) [below right = of 7] {5};
        \node[main] (1) [right = of 3] {1};
        \node[main] (4) [right = of 5] {4};

        \def\off{2.2em};
        \def\labeloff{3em};
        \path   (7) edge (3) edge (5)
                    -- node (above7) {} +(-\off, \off)
                (7) -- node (below7) {} +(\off, -\off)
                (7) -- node[pos=1, bfslabel] {\huge\textbf{0}} +(0, \labeloff)
                (3) edge (1) edge (4)
                    -- node (off3) {} +(-\off, \off)
                (3) -- node[pos=1, bfslabel] {\huge\textbf{1}} +(0, \labeloff)
                (5) edge (4)
                    -- node (off5) {} +(\off, -\off)
                (1) -- node (off1) {} +(-\off, \off)
                (1) -- node[pos=1, bfslabel] {\huge\textbf{2}} +(0, \labeloff)
                (4) -- node (off4) {} +(\off, -\off);
        \draw[bfs, rounded corners] (above7) rectangle (below7)
                                    (off3) rectangle (off5)
                                    (off1) rectangle (off4);
    \end{tikzpicture}
\end{center}
Finally, we add the nodes in $\Gamma^3(u)$ that have not yet been added, which adds all remaining nodes:
\begin{center}
    \begin{tikzpicture}
        \node[main] (7) {7};
        \node[main] (3) [above right = of 7] {3};
        \node[main] (5) [below right = of 7] {5};
        \node[main] (1) [right = of 3] {1};
        \node[main] (4) [right = of 5] {4};
        \node[main] (2) [right = of 1] {2};
        \node[main] (6) [right = of 4] {6};

        \def\off{2.2em};
        \def\labeloff{3em};
        \path   (7) edge (3) edge (5)
                    -- node (above7) {} +(-\off, \off)
                (7) -- node (below7) {} +(\off, -\off)
                (7) -- node[pos=1, bfslabel] {\huge\textbf{0}} +(0, \labeloff)
                (3) edge (1) edge (4)
                    -- node (off3) {} +(-\off, \off)
                (3) -- node[pos=1, bfslabel] {\huge\textbf{1}} +(0, \labeloff)
                (5) edge (4)
                    -- node (off5) {} +(\off, -\off)
                (1) edge (2) edge (6)
                    -- node (off1) {} +(-\off, \off)
                (1) -- node[pos=1, bfslabel] {\huge\textbf{2}} +(0, \labeloff)
                (4) edge (6)
                    -- node (off4) {} +(\off, -\off)
                (2) edge (6)
                    -- node (off2) {} +(-\off, \off)
                (2) -- node[pos=1, bfslabel] {\huge\textbf{3}} +(0, \labeloff)
                (6) -- node (off6) {} +(\off, -\off);
        \draw[bfs, rounded corners] (above7) rectangle (below7)
                                    (off3) rectangle (off5)
                                    (off1) rectangle (off4)
                                    (off2) rectangle (off6);
    \end{tikzpicture}
\end{center}
Now, at $t=0$ the entire zeroth component has the maximum value. At $t=1$ it spreads to the first component, but leaves the zeroth component. At $t=2$ it spreads to the second component and leaves the first component, but also returns to the zeroth component again. In general, at time $t$ all components $t'\leq t$ with the same parity as $t$ (modulo 2) will be guaranteed to have the maximum. But on top of this, if a component has an edge to itself, such as the third component above, then the maximum will not leave that component, meaning it can propagate out to the rest of the graph.

This gives us a complete understanding of the convergence process; for bipartite graphs, the nodes an even distance away from the maximum will have the maximum at the opposite times to the nodes an odd distance away, and for non-bipartite graphs, we first spread the maximum to the component with an edge to itself, then the maximum propagates out from that component. For the rest of the analysis, we will assume that $G$ is a connected graph, since otherwise we could simply consider each connected component separately. Since they run in parallel, the answer is the maximum of the times for each connected component.
\begin{defn}[The $k$th component $\Lambda_k$]
    Let $G=(V,E)$ be a connected undirected graph. Let $s$ be the source, i.e.\ any node such that $f_0(s)\geq f_0(v)$ for all $v\in V$. Then let $\Lambda_k(s)$ be the set of vertices reached on the $k$th step of the BFS from $s$. That is, for $k\geq 0$:
    \[ \Lambda_k(s) \coloneqq \Gamma^k(s) \setminus \bigcup_{i=0}^{k-1}\Gamma^i(s) \]
\end{defn}
Note that $\Lambda_0(s)=\Gamma^0(s) = \{s\}$. In other words, $\Lambda_k(s)$ is the set of vertices a distance of $k$ away from $s$ ($\Gamma^k(s)$) excluding the ones seen already ($\bigcup_{i=0}^{k-1}\Gamma^i(s)$). Note that this partitions the graph into distinct components.
\begin{prop}
    For all $v\in V$ there is a unique $k$ such that $v\in\Lambda_k(s)$. \qed
\end{prop}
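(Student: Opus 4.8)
The plan is to show that the sets $\{\Lambda_k(s)\}_{k\ge 0}$ form a partition of $V$, which yields both existence (every $v$ lies in some $\Lambda_k(s)$) and uniqueness (no $v$ lies in two of them). The key observation is that membership of $v$ in $\Lambda_k(s)$ is governed entirely by the shortest distance $d(s,v)$: I claim that $v\in\Lambda_k(s)$ if and only if $k=d(s,v)$, and proving this bi-implication is the whole of the argument.

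For existence, I would invoke connectedness of $G$. Since $s$ and $v$ lie in the same connected component, $d(s,v)$ is a finite non-negative integer; set $k\coloneqq d(s,v)$. By the definition of $d$ as $\min\{j\ge 0 : v\in\Gamma^j(s)\}$, we have $v\in\Gamma^k(s)$, and moreover $v\notin\Gamma^i(s)$ for every $i<k$, since otherwise the minimum defining $d(s,v)$ would be at most $i<k$, a contradiction. Hence $v\in\Gamma^k(s)\setminus\bigcup_{i=0}^{k-1}\Gamma^i(s)=\Lambda_k(s)$, so a suitable $k$ exists.

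For uniqueness, suppose $v\in\Lambda_j(s)$ and $v\in\Lambda_k(s)$ with, without loss of generality, $j\le k$. From $v\in\Lambda_j(s)$ we get $v\in\Gamma^j(s)$. If $j<k$ then $j\le k-1$, so $\Gamma^j(s)\subseteq\bigcup_{i=0}^{k-1}\Gamma^i(s)$, which contradicts $v\in\Lambda_k(s)=\Gamma^k(s)\setminus\bigcup_{i=0}^{k-1}\Gamma^i(s)$. Therefore $j=k$, completing the proof.

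The argument is essentially bookkeeping, so I do not anticipate a genuine obstacle; the only subtlety deserving care is that the neighbourhoods $\Gamma^i(s)$ are \emph{not} nested in general (a vertex can leave and re-enter as $i$ increases, for instance $s\notin\Gamma^1(s)$ but $s\in\Gamma^2(s)$ when $s$ has a neighbour), which is precisely why $\Lambda_k(s)$ subtracts the \emph{union} of all earlier neighbourhoods rather than just $\Gamma^{k-1}(s)$, and why connectedness — i.e.\ finiteness of $d(s,v)$ — is needed for existence rather than being automatic.
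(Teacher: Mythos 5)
Your proof is correct and is exactly the bookkeeping argument the paper treats as immediate (the proposition is stated with no proof): existence follows from connectedness by taking $k=d(s,v)$ and invoking minimality, and uniqueness follows because $v\in\Lambda_j(s)$ with $j<k$ places $v$ in the union subtracted from $\Gamma^k(s)$. Your remark that the sets $\Gamma^i(s)$ are not nested, so the full union must be subtracted, is a correct and worthwhile observation.
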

We first show that all edges in this BFS graph are within a component, or between two adjacent components.
\begin{lem}\label{edgesWithin1}
    Let $(u,v)\in E$ be an edge, and let $u\in\Lambda_k(s)$ and $v\in\Lambda_\ell(s)$. Then $|k-\ell|\leq 1$.
\end{lem}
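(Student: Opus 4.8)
The plan is to reduce the statement to the triangle inequality for the shortest-path metric $d(\cdot,\cdot)$. First I would establish that the layers $\Lambda_k(s)$ are exactly the distance spheres around $s$, i.e.\ $v\in\Lambda_k(s)$ if and only if $d(s,v)=k$. This follows from unwinding \Cref{defn:Gamma}: since $G$ is undirected, any vertex reachable in $i$ steps from $s$ is also reachable in $i+2$ steps (walk out along an incident edge and back), so the cumulative union $\bigcup_{i=0}^{k}\Gamma^i(s)$ is precisely the closed ball $\{v : d(s,v)\le k\}$ — a shortest path of length $d(s,v)\le k$ witnesses membership, and conversely no walk can be shorter than a shortest path. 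Subtracting the ball of radius $k-1$ from the ball of radius $k$ leaves exactly $\{v : d(s,v)=k\}$, which by definition is $\Lambda_k(s)$.

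Given that identification, the lemma is immediate: writing $k=d(s,u)$ and $\ell=d(s,v)$, the edge $(u,v)\in E$ gives $d(u,v)=1$, and the triangle inequality $d(s,v)\le d(s,u)+d(u,v)$ together with its symmetric counterpart $d(s,u)\le d(s,v)+d(u,v)$ yields $|k-\ell|\le d(u,v)=1$.

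The only step requiring any care is the first one — pinning down the exact correspondence between the recursively-defined neighbourhoods $\Gamma^k$ and graph distance — but this is routine for undirected graphs and amounts to a short induction; there is no real obstacle. If one prefers to avoid invoking the metric altogether, one can argue directly: if $u\in\Lambda_k(s)$ then there is a walk of length $k$ from $s$ to $u$, which extends by the edge $(u,v)$ to a walk of length $k+1$, so $v\in\Gamma^{k+1}(s)\subseteq\bigcup_{i=0}^{k+1}\Gamma^i(s)$ and hence $\ell\le k+1$; by the symmetric argument $k\le\ell+1$, giving $|k-\ell|\le 1$.
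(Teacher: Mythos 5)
Your proposal is correct, and the ``direct'' argument you give at the end is essentially the paper's own proof: the paper shows $u\in\Gamma^k(s)$ and $(u,v)\in E$ force $v\in\Gamma^{k+1}(s)$, which contradicts $v\notin\bigcup_{i=0}^{\ell-1}\Gamma^i(s)$ whenever $\ell\geq k+2$, i.e.\ exactly your $\ell\leq k+1$ bound phrased as a contradiction. The preliminary detour through identifying $\Lambda_k(s)$ with the distance sphere and invoking the triangle inequality is a valid repackaging but not needed; the paper skips it.
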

\begin{proof}
    Suppose $\ell\geq k+2$. Then there is a path to $v$ of length $k+1$ through $u$, contradicting that $v$ is in a component $\ell\geq k+2$. Formally, since $u\in\Lambda_k(s)$ we know $u\in\Gamma^k(s)$, so since $(u,v)\in E$, we know $v\in\Gamma^{k+1}(s)$. Then since $v\in\Lambda_\ell(s)$ we know $v\notin\bigcup_{i=0}^{\ell-1}\Gamma^i(s)$. But $k+1\leq\ell-1$, so this contradicts that $v\in\Gamma^{k+1}(s)$. A symmetric argument works for when $k\geq\ell+2$.
\end{proof}
Now we can show the connection between bipartite graphs and our BFS graph.
\begin{lem}\label{selfEdges}
    Let $G=(V,E)$ be an undirected graph. Then $G$ is bipartite if and only if no edges are between two vertices in the same component $\Lambda_k$, including self-edges.
\end{lem}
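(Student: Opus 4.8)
The plan is to prove both directions by linking the BFS layer index of a vertex to its side of a bipartition, using \Cref{edgesWithin1} to control which layers an edge can join.

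For $(\impliedby)$, suppose no edge joins two vertices of the same component. By \Cref{edgesWithin1}, every edge $(u,v)\in E$ has its endpoints in components $\Lambda_k$ and $\Lambda_\ell$ with $|k-\ell|\le 1$, and by hypothesis $k\neq\ell$, so $\ell=k\pm1$. Set $A=\bigcup_{k\text{ even}}\Lambda_k$ and $B=\bigcup_{k\text{ odd}}\Lambda_k$. Since the $\Lambda_k$ partition $V$ (already noted above), $(A,B)$ is a partition of $V$, and every edge joins a vertex of $A$ to a vertex of $B$, because adjacent layer indices have opposite parity. Hence $G$ is bipartite.

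For $(\implies)$, suppose $G$ is bipartite with parts $A,B$, and WLOG $s\in A$. The key claim is that every vertex of $\Lambda_k$ lies in $A$ when $k$ is even and in $B$ when $k$ is odd. Granting this, an edge with both endpoints in a single $\Lambda_k$ would join two vertices on the same side of the bipartition, which is impossible; this also rules out a loop at any vertex (loops are in any case excluded by the simplicity of $G$, so the ``including self-edges'' clause holds automatically — worth stating explicitly). To prove the claim, take any $v\in\Lambda_k$ and a shortest $s$–$v$ path $P$, which has length $k=d(s,v)$. Since $G$ is bipartite, consecutive vertices of $P$ alternate between $A$ and $B$, starting from $s\in A$; hence the last vertex $v$ lies in $A$ exactly when $k$ is even, which is the claim.

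The only non-bookkeeping point — hence the main obstacle — is the alternation fact used in the forward direction: along any path in a bipartite graph, consecutive vertices lie in opposite parts, so the two endpoints lie in the same part precisely when the path has even length. I would dispatch this by a one-line induction on the path length, using that each edge crosses the bipartition. Everything else is routine: verifying that $(A,B)$ is a genuine partition, and invoking \Cref{edgesWithin1} together with the already-established fact that the $\Lambda_k$ partition $V$.
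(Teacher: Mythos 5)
Your proof is correct, but it takes a genuinely different route from the paper. The paper reduces the lemma to the classical characterisation ``bipartite $\iff$ no odd cycle'' and then shows that an intra-layer edge is equivalent to the existence of an odd closed walk: given $u,v\in\Lambda_k(s)$ adjacent, it concatenates two length-$k$ walks from $s$ with the edge $(u,v)$ to get a closed walk of length $2k+1$; conversely, it walks around an odd cycle and argues that the layer parity cannot alternate at every step, so some edge stays within a layer (this is where it, too, invokes \Cref{edgesWithin1}). You instead exhibit the bipartition directly: even layers versus odd layers, with \Cref{edgesWithin1} guaranteeing that every edge crosses between adjacent layers once intra-layer edges are forbidden, and the shortest-path alternation argument handling the converse. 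Your version is more self-contained --- it does not lean on the odd-cycle theorem, and it sidesteps a small subtlety in the paper's argument, namely that the constructed odd closed walk may repeat vertices and one must still extract an odd \emph{cycle} from it (the paper acknowledges but does not fully discharge this). What the paper's route buys is that it makes explicit the connection to odd cycles, which is reused informally later in the chapter (e.g.\ the discussion of why non-bipartite graphs have period $1$ via an odd simple cycle). One point worth stating in your write-up, which both you and the paper gloss over: the layers $\Lambda_k(s)$ only partition $V$ when $G$ is connected, as in the definition; the lemma as stated drops the word ``connected'', so strictly one should either restore it or apply the argument per connected component. Your parenthetical about self-edges being vacuous under the paper's simplicity convention is also correct and worth keeping.
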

\begin{proof}
    It is well known that a graph is bipartite if and only if it has no odd-length cycles, so we will show that a graph has an odd-length cycle if and only if there is an edge between two vertices in the same component.

    $(\impliedby)$ \\
    Suppose there is an edge $(u,v)\in E$ such that $u,v\in\Lambda_k(s)$. Then there are paths $s,w_1,\cdots,w_{k-1},u$ and $s,x_1,\cdots,x_{k-1},v$ from $s$ to each of $u$ and $v$, each of which is length $k$. Then the following is an odd cycle of length $2k+1$:
    \[ s,w_1,\cdots,w_{k-1},u,v,x_{k-1},\cdots,x_1,s \]
    Note that this cycle may visit the same nodes multiple times, and that the proof is valid even if $u=v$.

    $(\implies)$ \\
    Now, suppose there is a cycle $w_1,\cdots,w_{2k+1},w_1$ of odd length. Consider the parity mod 2 of the component that we are in. Since the cycle has odd length, this parity cannot alternate every step, so there is a step in which the parity remains the same. If $k$ and $\ell$ are the components at this step, then $k-\ell$ is even and $|k-\ell|\leq 1$ by \Cref{edgesWithin1}, meaning $k=\ell$.
\end{proof}
With this knowledge, for bipartite graphs we can call the two components $B$ and $B'$ and denote them as follows:
\begin{defn}[Bipartite components]\label{bipartDefn}
    Let $G=(V,E)$ be a bipartite undirected graph. Then the two bipartite components $B$ and $B'$ are:
    \begin{align*}
        B &= \bigcup_{k=0}^\infty\Lambda_{2k}(s) \\
        B' &= \bigcup_{k=0}^\infty\Lambda_{2k+1}(s)
    \end{align*}
\end{defn}
Now that we understand the internal structure of our BFS graph, we can ask ourselves, how many components are there? We can bound this by $n$, the number of nodes in the graph, since at least one node is introduced per component. However, certain classes of graphs, such as the `star graph' below, produce BFS graphs with a small number of components.
\begin{center}
    \begin{tikzpicture}[bigg/.style={minimum size = 2em}]
        \node[main,bigg] (0) {0};
        \def\n{10};
        \def\lessn{9}; 
        \foreach \angle in {1, ..., \n} {
            \node[main,bigg] (\angle) at (90-\angle*360/\n:6em) {\angle};
            \path (0) edge (\angle);
        }

        \def\gap{12em};
        \def\off{2.2em};
        \def\labeloff{3em};
        \node[main,bigg] (10) at (\gap, 0) {10};
        \node[main,bigg] (0) at (\gap+6em, 0) {0};
        \path   (10) edge (0) -- node (above10) {} +(-\off, \off)
                (10) -- node (below10) {} +(\off, -\off)
                (10) -- node[pos=1, bfslabel] {\huge\textbf{0}} +(0, \labeloff)
                (0) -- node (above0) {} +(-\off, \off)
                (0) -- node (below0) {} +(\off, -\off)
                (0) -- node[pos=1, bfslabel] {\huge\textbf{1}} +(0, \labeloff);
        \foreach \angle in {1, 3, ..., \n} {
            \node[main,bigg] (\angle) at (\gap+12em, {2em*(5-\angle)}) {\angle};
            \path (0) edge (\angle);
        }
        \foreach \angle in {2, 4, ..., \lessn} {
            \node[main,bigg] (\angle) at (\gap+14em, {2em*(5-\angle)}) {\angle};
        }
        \path   (0) edge[out=43, in=180] (2)
                    edge[out=16, in=180] (4)
                    edge[out=-16, in=180] (6)
                    edge[out=-43, in=180] (8)
                (1) -- node (above1) {} +(-\off, \off)
                (1) -- node (mid12) {} (2)
                (mid12) -- node[pos=1, bfslabel] {\huge\textbf{2}} +(0, \labeloff+1em)
                (9) -- node (below9) {} +(\off + 3.7em, -\off);

        \draw[bfs, rounded corners] (above10) rectangle (below10)
                                    (above0) rectangle (below0)
                                    (above1) rectangle (below9);
    \end{tikzpicture}
\end{center}
Since each node is included in the component corresponding to its distance from the source, the maximum component value is the maximum distance of any node to the source. If we do not know what the source is in advance, we can bound this by the maximum distance between any two nodes, which is the diameter from \Cref{defn:diameter}. For example, the diameter of the star graph is $2$ (by taking a path from one leaf, through the center and to another leaf), regardless of the number $n$ of nodes.
\begin{lem}
    The number of components, i.e.\ the smallest $k$ such that $\Lambda_k(s)=\emptyset$, is bounded by the diameter $D$.
\end{lem}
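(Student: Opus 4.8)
The plan is to pin down each component $\Lambda_k(s)$ as precisely the set of vertices at distance exactly $k$ from the source, and then count how many nonempty BFS layers a connected graph can have. The whole statement then becomes a short consequence of the definition of eccentricity and \Cref{defn:diameter}.

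First I would rewrite the union occurring in the definition of $\Lambda_k$: $\bigcup_{i=0}^{k}\Gamma^i(s)=\{v\in V:d(s,v)\le k\}$. One inclusion holds because a walk of length $i\le k$ from $s$ to $v$ witnesses $d(s,v)\le k$, and the reverse holds because a shortest path of length $j\le k$ places $v$ in $\Gamma^j(s)$. Subtracting this from $\Gamma^k(s)$, and using once more that membership in $\Gamma^k(s)$ forces $d(s,v)\le k$ while every distance-$k$ vertex lies in $\Gamma^k(s)$ via a shortest path, gives $\Lambda_k(s)=\{v\in V:d(s,v)=k\}$. This is also exactly what is needed to justify the preceding Proposition that each $v$ lies in a unique component.

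Next I would record the ``no gaps'' observation: if $\Lambda_k(s)\ne\emptyset$ and $0\le j\le k$, then picking $v$ with $d(s,v)=k$ and walking $j$ steps along a shortest $s$--$v$ path produces a vertex at distance exactly $j$, so $\Lambda_j(s)\ne\emptyset$. Hence the nonempty components are precisely $\Lambda_0(s),\dots,\Lambda_e(s)$, where $e=\max_{v\in V}d(s,v)$ is the eccentricity of $s$ (finite since $G$ is connected), and the smallest $k$ with $\Lambda_k(s)=\emptyset$ is $e+1$. Finally, by \Cref{defn:diameter} and connectedness of $G$, $e=\max_{v\in V}d(s,v)\le\max_{u,v\in V}d(u,v)=D$, so $\Lambda_k(s)=\emptyset$ for every $k>D$; in particular the number of components is $e+1\le D+1$, which yields the claimed bound (the harmless discrepancy with a literal ``$\le D$'' being the trivial component $\Lambda_0(s)=\{s\}$).

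I do not expect a real obstacle here: the argument is essentially unwinding \Cref{defn:Gamma}, \Cref{eqn:Gamma1} and \Cref{defn:diameter}. The only genuinely non-bookkeeping step is the ``no gaps'' fact, which relies on a shortest path realizing every intermediate distance; and the only subtlety is the boundary at $k=0$, where one must remember that $\Lambda_0(s)=\{s\}$ is always nonempty, so the quantity being bounded is $e+1$ rather than $e$.
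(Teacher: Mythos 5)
Your proof is correct and follows the same route as the paper's one-line argument: identify $\Lambda_k(s)$ with the set of vertices at distance exactly $k$ from $s$ (together with the ``no gaps'' fact along shortest paths) and bound the eccentricity of $s$ by $D$. You are also right to flag the off-by-one: the smallest $k$ with $\Lambda_k(s)=\emptyset$ is really $e+1\le D+1$, a slip that the paper's statement and its terse proof both gloss over.
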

\begin{proof}
    If $u\in\Lambda_k(s)$ then the shortest distance from $s$ to $u$ is $k$, and the largest of these shortest distances is less than or equal to the diameter $D$.
\end{proof}
With this visualisation in place, and the formalisation in terms of $\Lambda_k$, as well as the previous propositions and lemmas, we are now ready to prove the following key theorem.
\begin{thm}\label{undPeriod1or2}
    Let $G=(V,E)$ be a connected undirected graph. Then if $G$ is not bipartite then the period is $1$. Otherwise, let $B$ and $B'$ be the bipartite components as defined in \Cref{bipartDefn}, and let $m=\max_{v\in B}f_0(v)$ and $m'=\max_{v\in B'}f_0(v)$. Then if $m=m'$ the period is $1$, otherwise the period is $2$. Additionally, the convergence time is at most $2D$, where $D$ is defined in \Cref{defn:diameter}.
\end{thm}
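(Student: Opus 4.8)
The plan is to reduce everything to understanding the sets $\Gamma^t(\cdot)$ of \Cref{defn:Gamma}. First I would record the workhorse identity $f_t(v)=\max\{f_0(w):w\in\Gamma^t(v)\}$, which follows by a one-line induction on $t$ once one observes that $\Gamma^t(v)=\bigcup_{u\in\Gamma(v)}\Gamma^{t-1}(u)$ (both sides are the set of endpoints of length-$t$ walks from $v$). Fix a source $s$ with $f_0(s)=M\coloneqq\max_{v\in V}f_0(v)$. Since $G$ is undirected, $w\in\Gamma^t(v)\iff v\in\Gamma^t(w)$, so $f_t(v)=M$ for \emph{every} $v$ exactly when $\Gamma^t(s)=V$; moreover the all-$M$ valuation is a fixed point (each vertex has a neighbour by \Cref{assNeighbour}, so it maps to $M$). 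As the process is deterministic on a finite reachable state space it is eventually periodic (\Cref{direcPeriod}), so it suffices to identify the absorbing set; the remarks above show it is the all-$M$ state precisely when $\Gamma^t(s)=V$ eventually holds, and the first such $t$ is the convergence time. Everything then comes down to controlling $\Gamma^t(s)$.

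For the bipartite case I would show $\Gamma^t(v)$ stabilises fast. Since $G$ is bipartite, a walk of even length starting in $B$ stays in $B$ and an odd one lands in $B'$ (and symmetrically from $B'$); together with shortest-path-plus-padding this gives $\Gamma^t(v)=B$ or $B'$ — the bipartition class of $v$, respectively the other — for every $t\ge D$, using $D\ge\mathrm{ecc}(v)$. Plugging this into $f_t(v)=\max\{f_0(w):w\in\Gamma^t(v)\}$: for $t\ge D$ the vertices of $B$ alternate between $m$ and $m'$ and those of $B'$ between $m'$ and $m$. If $m=m'$ this is the constant state $M=m=m'$, hence period $1$; if $m\ne m'$ the two alternating valuations are distinct and form the absorbing $2$-cycle, hence period $2$. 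In both subcases the absorbing set is entered by time $D\le 2D$.

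For the non-bipartite case the idea is that once the maximum reaches the BFS layer of $s$ containing an internal edge, it can be routed to any vertex with \emph{either} parity of walk-length. By \Cref{selfEdges} there is an edge $(a,b)$ with $a,b\in\Lambda_j(s)$ for some $j$, and $j=d(s,a)\le D$. For an arbitrary $w$, put $\rho=\min(d(a,w),d(b,w))$. The walks $s\to a\to w$ and $s\to b\to w$ have lengths $j+d(a,w)$ and $j+d(b,w)$, which differ by at most $1$; if these two lengths coincide, $s\to a\to b\to w$ provides a walk one longer. So from $s$ to $w$ there are walks of two consecutive lengths, the smaller being $j+\rho$, and padding with back-and-forth steps at $w$ gives $w\in\Gamma^t(s)$ for all $t\ge j+\rho$. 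Since $\rho\le\mathrm{ecc}(a)\le D$, we obtain $\Gamma^t(s)=V$ for every $t\ge j+D$, in particular for $t=2D$; hence $f_{2D}$ is the all-$M$ state, the period is $1$, and convergence occurred by time $j+D\le 2D$.

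The crux — and the only place anything delicate happens — is this last parity-flip step: the cheap route (an odd closed walk at $s$ of length $2j+1$ followed by a shortest path to $w$) only yields a bound of about $3D$, whereas exploiting that $a$ and $b$ are both at distance $j$ from $s$ and one step apart buys the parity flip essentially for free and makes $2D$ sharp, e.g.\ on an odd cycle $C_{2D+1}$, where $j=D$ and the two endpoints of the internal edge realise $\rho=D$. I would also note the routine reductions: disconnected graphs split componentwise (take the maximum of the per-component times), and throughout we use $|V|\ge 2$, which holds because \Cref{assNeighbour} forbids isolated vertices — this is what makes every back-and-forth padding above legitimate.
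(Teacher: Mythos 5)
Your proposal is correct, and it shares the paper's structural skeleton (BFS layers $\Lambda_k(s)$, the intra-layer edge supplied by \Cref{selfEdges} as the source of the parity flip, and the alternation of the two bipartition classes), but it is executed through a different key lemma. The paper's proof of \Cref{undPeriod1or2} never states the identity $f_t(v)=\max_{w\in\Gamma^t(v)}f_0(w)$; instead it runs a direct induction on the process, tracking at each time $t$ which layers $\Lambda_\ell(s)$ of matching parity are guaranteed to hold the maximum, and then restarts a second BFS from an endpoint of the intra-layer edge once that endpoint holds the maximum permanently. You instead prove the identity up front (it is exactly the paper's \Cref{inducto}, which the paper only introduces and uses in the directed chapter), which reduces the entire dynamics to a static question about $\Gamma^t(s)$, and then settle that question by exhibiting, for each target $w$, two walks of consecutive lengths $j+\rho$ and $j+\rho+1$ from $s$ through the intra-layer edge, padded by back-and-forth steps at $w$. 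What your route buys is a cleaner separation of the dynamics from the combinatorics, a uniform treatment of both the bipartite and non-bipartite cases through the single object $\Gamma^t(s)$, and the marginally sharper bound $j+D\le 2D$; what the paper's route buys is self-containment at that point in the development (the undirected chapter precedes \Cref{inducto}) and an argument that visibly tracks how the maximum propagates step by step, which the author reuses informally for the convergence-time discussion. The only imprecision in your write-up is the claim that the all-$M$ absorbing state is reached \emph{precisely} when $\Gamma^t(s)=V$ for your chosen maximum node $s$ — with multiple maxima only the forward implication holds — but nothing in the theorem depends on the converse.
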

\begin{proof}
    Suppose $G$ is not bipartite, so there is an edge $(u,v)\in E$ and component $k\geq 0$ such that $u,v\in\Lambda_k(s)$. We can show that at time $t$ the maximum spreads to $\Lambda_\ell(s)$ for all $0\leq \ell\leq t$ such that $\ell$ and $t$ have same parity modulo 2 (i.e.\ $\ell-t$ is even). Note that if there are multiple maxima then other components (including the ones with opposite parity) may have the maximum value too.

    The above statement is clearly true for $t=0$ as the maximum value has spread to all vertices in $\Lambda_0(s)=\{s\}$. Consider $t\geq 1$ and let $0\leq\ell\leq t$ have the same parity as $t$. If $\Lambda_\ell(s)$ is empty then the maximum has trivially spread to all vertices in it; otherwise, let $w\in\Lambda_\ell(s)$ be arbitrary -- we will show that $w$ has acquired the maximum value at time $t$. If $\ell=0$ then $w=s$ and $t\geq 2$ is even. By \Cref{assNeighbour}, $s$ has at least one neighbour $x\in\Lambda_1(s)$. Then $1$ satisfies $0\leq 1\leq t-1$ and has the same parity as $t-1$, so the maximum value spreads to $s$. In the other case, $\ell\geq 1$, so $\ell-1$ satisfies $0\leq \ell-1\leq t-1$ and has the same parity as $t-1$. In either case, the maximum value spreads to $w$, and so our statement is true by induction.
    
    Thus, at $t=k$ steps, the maximum has spread to $\Lambda_k(s)$, so $f_k(u)$ and $f_k(v)$ are maximum. Since $k\leq D$, this means $f_D(u)$ and $f_D(v)$ are maximum. But then $u$ and $v$ can pull from each other in all future updates, so $f_t(u)$ and $f_t(v)$ are maximum for all $t\geq D$. With a similar inductive argument as above, but running BFS from $u$ instead of $s$ and without the condition about parity, the maximum spreads to all components $\Lambda_\ell(u)$. Since there are at most $D$ of these, then all components are maximum by time $t=2D$. Clearly the period ends up as $1$ in this case.

    Now, suppose $G$ is bipartite. Let $m$ and $m'$ be as defined in \Cref{bipartDefn}, and let $u$ and $u'$ be vertices that achieve those respective values in the corresponding bipartite components. By running a similar inductive argument to above, we know after $D$ steps that $m$ has spread to all components $\Lambda_0(u),\cdots,\Lambda_D(u)$ and $m'$ has spread to all components $\Lambda_0(u'),\cdots,\Lambda_D(u')$. If $m=m'$ then all vertices are maximum, so the period is $1$. Otherwise, at any time $t\geq D$, either all vertices in $B$ have value $m$ and all vertices in $B'$ have value $m'$, or vice versa, all vertices in $B$ have value $m'$ and all vertices in $B'$ have value $m$. Either way, since there are no edges from $B$ to itself or from $B'$ to itself, the maxima $m$ and $m'$ spread to their corresponding bipartite components, and thus we reach an absorbing state with a period of $2$.
\end{proof}
Considering connected components separately, we also get the following:
\begin{cor}
    Let $G=(V,E)$ be any undirected graph. Then the period is 2 if $G$ has a connected component which is
    \begin{itemize}
        \item bipartite with bipartite components $B$ and $B'$ as defined in \Cref{bipartDefn}, and
        \item $m\neq m'$, where $m=\max_{v\in B}f_0(v)$ and $m'=\max_{v\in B'}f_0(v)$.
    \end{itemize}
    If no connected component satisfies both of these criteria, then the period is 1. Additionally, the convergence time is at most $2D$, where $D$ is defined in \Cref{defn:diameter}.
\end{cor}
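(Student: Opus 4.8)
The plan is to deduce this directly from \Cref{undPeriod1or2} by observing that the synchronous maximum model runs independently on each connected component. Concretely, for any $v$ the set $\Gamma(v)$ is contained in the connected component of $v$, so the update $f_t(v)=\max_{w\in\Gamma(v)}f_{t-1}(w)$ never references a vertex outside that component. Hence, writing $C_1,\dots,C_\ell$ for the connected components of $G$, the restricted trajectory $(f_t|_{C_i})_{t\ge 0}$ is exactly the trajectory of the synchronous maximum model run on the induced subgraph $G[C_i]$ with initial valuation $f_0|_{C_i}$. By \Cref{assNeighbour} each $G[C_i]$ again has all outdegrees at least $1$, so \Cref{undPeriod1or2} applies to it verbatim.

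First I would invoke \Cref{undPeriod1or2} on each $G[C_i]$ separately. Letting $D_i$ denote the diameter of $C_i$, this gives: $G[C_i]$ reaches an absorbing state by time $2D_i$; its period $p_i$ is $2$ when $C_i$ is bipartite (with bipartite parts $B_i,B_i'$ as in \Cref{bipartDefn}) and $\max_{v\in B_i}f_0(v)\neq\max_{v\in B_i'}f_0(v)$, and $p_i=1$ in every other case. This is precisely the per-component dichotomy appearing in the statement.

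Next I would reassemble the global behaviour. The state $f_t$ is determined by the tuple $(f_t|_{C_1},\dots,f_t|_{C_\ell})$. Since $p_i\mid 2$ for every $i$, each coordinate satisfies $f_t|_{C_i}=f_{t+2}|_{C_i}$ for all $t\ge 2D_i$, hence $f_t=f_{t+2}$ for all $t\ge\max_i 2D_i$. By \Cref{defn:diameter} the diameter $D$ of $G$ equals $\max_i D_i$, so $\max_i 2D_i\le 2D$, which yields the convergence bound $2D$. For the period itself: if no $C_i$ satisfies the two bulleted conditions then every $p_i=1$, so each coordinate is eventually constant and the global period is $1$; if some $C_j$ does satisfy them then $p_j=2$, so $f_t|_{C_j}\neq f_{t+1}|_{C_j}$ for all sufficiently large $t$, forcing $f_t\neq f_{t+1}$ and hence a global period of at least $2$, which combined with $f_t=f_{t+2}$ makes it exactly $2$.

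There is no real obstacle; the argument is a routine ``work component by component'' reduction. The only points needing a line of care are (i) confirming that each induced component subgraph still satisfies \Cref{assNeighbour} so \Cref{undPeriod1or2} is applicable, (ii) noting that a single period-$2$ component cannot be cancelled by period-$1$ components, since disagreement on one coordinate is disagreement of the whole tuple, so the global period does not drop below $2$, and (iii) the elementary inequality $D_i\le D$ coming from \Cref{defn:diameter}. Everything else is bookkeeping.
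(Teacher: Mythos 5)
Your proposal is correct and follows essentially the same route as the paper: apply \Cref{undPeriod1or2} to each connected component, take the maximum of the per-component convergence times (bounded by $2D$ since $D=\max_i D_i$), and observe that a single period-$2$ component forces global period exactly $2$ while it cannot exceed $2$ because every component's period divides $2$. The extra care you take with \Cref{assNeighbour} on induced subgraphs and the out-of-sync alternation is exactly the bookkeeping the paper's own proof relies on.
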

\begin{proof}
    We can apply \Cref{undPeriod1or2} to each connected component to get its convergence behaviour. The overall convergence time is the maximum of the convergence times for each component. After each component has converged, those with period 1 have their valuations remain fixed in each iteration, and the components with period 2 alternate between two valuations. Even if this alternation is out-of-sync with other components, the overall period will still be 2. Thus, the overall period is 2 if and only if there exists a component with period 2.
\end{proof}
Note that this would not be the case for values larger than 2. For example, if a component alternated every 6 iterations, and another alternated every 10 iterations, then the overall period would be the least common multiple (LCM) of these, which is 30.

If we were to go into a bit more detail, it would be possible to lower the upper bound to $D-1$ for connected bipartite graphs and $2D-1$ for connected non-bipartite graphs. Below are examples of graph families which hit these bounds:
\begin{center}
    \begin{tikzpicture}[bigg/.style = {minimum size = 3.2em}]
        \node[main,bigg] (n) {$n$};
        \node[main,bigg] (n-1) [right = of n] {$n-1$};
        \node (dots) [right = of n-1] {$\cdots$};
        \node[main,bigg] (2) [right = of dots] {$2$};
        \node[main,bigg] (1) [right = of 2] {$1$};

        \path   (n) edge (n-1)
                (n-1) edge (dots)
                (dots) edge (2)
                (2) edge (1);
    \end{tikzpicture} \\
    \begin{tikzpicture}[bigg/.style = {minimum size = 3.2em}]
        \node[main,bigg] (n) {$n$};
        \node[main,bigg] (n-1) [right = of n] {$n-1$};
        \node (dots) [right = of n-1] {$\cdots$};
        \node[main,bigg] (4) [right = of dots] {$4$};
        \node[main,bigg] (3) [right = of 4] {$3$};
        \def\dist{5.8em};
        \path   (3) -- node[pos=1,main,bigg] (2) {$2$} +(30:\dist)
                (3) -- node[pos=1,main,bigg] (1) {$1$} +(-30:\dist);

        \path   (n) edge (n-1)
                (n-1) edge (dots)
                (dots) edge (4)
                (4) edge (3)
                (3) edge (2) edge (1)
                (2) edge (1) -- node[pos=1] {} +(0, \picgap+1em);
    \end{tikzpicture}
\end{center}
Rewriting in terms of $n$, the worst case convergence time is $n-2$ for bipartite graphs and $2n-5$ for non-bipartite graphs, since a straight line is not bipartite, so $D\leq n-2$ for non-bipartite graphs.

\section{Strongly Connected Graphs}
We are now ready to tackle the directed case, starting off with strongly connected graphs, where for any two nodes $u,v\in V$ it is possible for $u$ to reach $v$ and vice versa.

\subsection{Motivating Example}
Let's see how the process plays out on a strongly connected graph composed of two cycles, one of length 3 and one of length 6.
\begin{center}
    \begin{tikzpicture}
        \def\dist{5em};
        \node[main] (1) at (120:\dist) {3};
        \node[main] (2) at (60:\dist) {5};
        \node[main] (3) at (0:\dist) {2};
        \node[main] (4) at (-60:\dist) {6};
        \node[main] (5) at (-120:\dist) {5};
        \node[main,redd] (6) at (180:\dist) {1};
        \path (5) -- node[pos=1,main] (7) {4} +(-\dist, 0);
        \path (6) -- node[pos=1] {$t=0$} +(-\dist-2em, 0);
        \foreach \from/\to in {1/2, 2/3, 3/4, 4/5, 5/6, 6/1}
            \path[->] (\from) edge[bend left = 20] (\to);
        \foreach \from/\to in {6/7, 7/5}
            \path[->] (\from) edge[bend right = 20] (\to);
    \end{tikzpicture}
\end{center}
For this graph, each node has one outgoing edge to pull from with the exception of the node highlighted in red, which takes the maximum of its two outgoing edges. As a result, most of the values simply shift backwards along their corresponding cycle, but the $3$ is replaced with the $4$, which is higher.
\begin{center}
    \begin{tikzpicture}
        \def\dist{5em};
        \node[main] (1) at (120:\dist) {5};
        \node[main] (2) at (60:\dist) {2};
        \node[main] (3) at (0:\dist) {6};
        \node[main] (4) at (-60:\dist) {5};
        \node[main] (5) at (-120:\dist) {1};
        \node[main] (6) at (180:\dist) {4};
        \path (5) -- node[pos=1,main] (7) {5} +(-\dist, 0);
        \path (6) -- node[pos=1] {$t=1$} +(-7em, 0);
        \foreach \from/\to in {1/2, 2/3, 3/4, 4/5, 5/6, 6/1}
            \path[->] (\from) edge[bend left = 20] (\to);
        \foreach \from/\to in {6/7, 7/5}
            \path[->] (\from) edge[bend right = 20] (\to);
    \end{tikzpicture}
\end{center}
Continuing this process:
\begin{center}
    \begin{tikzpicture}
        \def\dist{5em};
        \node[main] (1) at (120:\dist) {2};
        \node[main] (2) at (60:\dist) {6};
        \node[main] (3) at (0:\dist) {5};
        \node[main] (4) at (-60:\dist) {1};
        \node[main] (5) at (-120:\dist) {4};
        \node[main] (6) at (180:\dist) {5};
        \path (5) -- node[pos=1,main] (7) {1} +(-\dist, 0);
        \path (6) -- node[pos=1] {$t=2$} +(-7em, 0);
        \foreach \from/\to in {1/2, 2/3, 3/4, 4/5, 5/6, 6/1}
            \path[->] (\from) edge[bend left = 20] (\to);
        \foreach \from/\to in {6/7, 7/5}
            \path[->] (\from) edge[bend right = 20] (\to);
    \end{tikzpicture} \\
    \begin{tikzpicture}
        \def\dist{5em};
        \node[main] (1) at (120:\dist) {6};
        \node[main] (2) at (60:\dist) {5};
        \node[main] (3) at (0:\dist) {1};
        \node[main] (4) at (-60:\dist) {4};
        \node[main] (5) at (-120:\dist) {5};
        \node[main] (6) at (180:\dist) {2};
        \path (5) -- node[pos=1,main] (7) {4} +(-\dist, 0);
        \path (6) -- node[pos=1] {$t=3$} +(-7em, 0);
        \path (1) -- node[pos=1] {} +(0, \picgap);
        \foreach \from/\to in {1/2, 2/3, 3/4, 4/5, 5/6, 6/1}
            \path[->] (\from) edge[bend left = 20] (\to);
        \foreach \from/\to in {6/7, 7/5}
            \path[->] (\from) edge[bend right = 20] (\to);
    \end{tikzpicture} \\
    \begin{tikzpicture}
        \def\dist{5em};
        \node[main] (1) at (120:\dist) {5};
        \node[main] (2) at (60:\dist) {1};
        \node[main] (3) at (0:\dist) {4};
        \node[main] (4) at (-60:\dist) {5};
        \node[main] (5) at (-120:\dist) {2};
        \node[main] (6) at (180:\dist) {6};
        \path (5) -- node[pos=1,main] (7) {5} +(-\dist, 0);
        \path (6) -- node[pos=1] {$t=4$} +(-7em, 0);
        \path (1) -- node[pos=1] {} +(0, \picgap);
        \foreach \from/\to in {1/2, 2/3, 3/4, 4/5, 5/6, 6/1}
            \path[->] (\from) edge[bend left = 20] (\to);
        \foreach \from/\to in {6/7, 7/5}
            \path[->] (\from) edge[bend right = 20] (\to);
    \end{tikzpicture} \\
    \begin{tikzpicture}
        \def\dist{5em};
        \node[main] (1) at (120:\dist) {1};
        \node[main] (2) at (60:\dist) {4};
        \node[main] (3) at (0:\dist) {5};
        \node[main] (4) at (-60:\dist) {2};
        \node[main] (5) at (-120:\dist) {6};
        \node[main] (6) at (180:\dist) {5};
        \path (5) -- node[pos=1,main] (7) {2} +(-\dist, 0);
        \path (6) -- node[pos=1] {$t=5$} +(-7em, 0);
        \path (1) -- node[pos=1] {} +(0, \picgap);
        \foreach \from/\to in {1/2, 2/3, 3/4, 4/5, 5/6, 6/1}
            \path[->] (\from) edge[bend left = 20] (\to);
        \foreach \from/\to in {6/7, 7/5}
            \path[->] (\from) edge[bend right = 20] (\to);
    \end{tikzpicture} \\
    \begin{tikzpicture}
        \def\dist{5em};
        \node[main] (1) at (120:\dist) {4};
        \node[main] (2) at (60:\dist) {5};
        \node[main] (3) at (0:\dist) {2};
        \node[main] (4) at (-60:\dist) {6};
        \node[main] (5) at (-120:\dist) {5};
        \node[main] (6) at (180:\dist) {2};
        \path (5) -- node[pos=1,main] (7) {6} +(-\dist, 0);
        \path (6) -- node[pos=1] {$t=6$} +(-7em, 0);
        \path (1) -- node[pos=1] {} +(0, \picgap);
        \foreach \from/\to in {1/2, 2/3, 3/4, 4/5, 5/6, 6/1}
            \path[->] (\from) edge[bend left = 20] (\to);
        \foreach \from/\to in {6/7, 7/5}
            \path[->] (\from) edge[bend right = 20] (\to);
    \end{tikzpicture} \\
    \begin{tikzpicture}
        \def\dist{5em};
        \node[main] (1) at (120:\dist) {5};
        \node[main] (2) at (60:\dist) {2};
        \node[main] (3) at (0:\dist) {6};
        \node[main] (4) at (-60:\dist) {5};
        \node[main] (5) at (-120:\dist) {2};
        \node[main] (6) at (180:\dist) {6};
        \path (5) -- node[pos=1,main] (7) {5} +(-\dist, 0);
        \path (6) -- node[pos=1] {$t=7$} +(-7em, 0);
        \path (1) -- node[pos=1] {} +(0, \picgap);
        \foreach \from/\to in {1/2, 2/3, 3/4, 4/5, 5/6, 6/1}
            \path[->] (\from) edge[bend left = 20] (\to);
        \foreach \from/\to in {6/7, 7/5}
            \path[->] (\from) edge[bend right = 20] (\to);
    \end{tikzpicture} \\
    \begin{tikzpicture}
        \def\dist{5em};
        \node[main] (1) at (120:\dist) {2};
        \node[main] (2) at (60:\dist) {6};
        \node[main] (3) at (0:\dist) {5};
        \node[main] (4) at (-60:\dist) {2};
        \node[main] (5) at (-120:\dist) {6};
        \node[main] (6) at (180:\dist) {5};
        \path (5) -- node[pos=1,main] (7) {2} +(-\dist, 0);
        \path (6) -- node[pos=1] {$t=8$} +(-7em, 0);
        \path (1) -- node[pos=1] {} +(0, \picgap);
        \foreach \from/\to in {1/2, 2/3, 3/4, 4/5, 5/6, 6/1}
            \path[->] (\from) edge[bend left = 20] (\to);
        \foreach \from/\to in {6/7, 7/5}
            \path[->] (\from) edge[bend right = 20] (\to);
    \end{tikzpicture} \\
    \begin{tikzpicture}
        \def\dist{5em};
        \node[main] (1) at (120:\dist) {6};
        \node[main] (2) at (60:\dist) {5};
        \node[main] (3) at (0:\dist) {2};
        \node[main] (4) at (-60:\dist) {6};
        \node[main] (5) at (-120:\dist) {5};
        \node[main] (6) at (180:\dist) {2};
        \path (5) -- node[pos=1,main] (7) {6} +(-\dist, 0);
        \path (6) -- node[pos=1] {$t=9$} +(-7em, 0);
        \path (1) -- node[pos=1] {} +(0, \picgap);
        \foreach \from/\to in {1/2, 2/3, 3/4, 4/5, 5/6, 6/1}
            \path[->] (\from) edge[bend left = 20] (\to);
        \foreach \from/\to in {6/7, 7/5}
            \path[->] (\from) edge[bend right = 20] (\to);
    \end{tikzpicture} \\
    \begin{tikzpicture}
        \def\dist{5em};
        \node[main] (1) at (120:\dist) {5};
        \node[main] (2) at (60:\dist) {2};
        \node[main] (3) at (0:\dist) {6};
        \node[main] (4) at (-60:\dist) {5};
        \node[main] (5) at (-120:\dist) {2};
        \node[main] (6) at (180:\dist) {6};
        \path (5) -- node[pos=1,main] (7) {5} +(-\dist, 0);
        \path (6) -- node[pos=1] {$t=10$} +(-7em, 0);
        \path (1) -- node[pos=1] {} +(0, \picgap);
        \foreach \from/\to in {1/2, 2/3, 3/4, 4/5, 5/6, 6/1}
            \path[->] (\from) edge[bend left = 20] (\to);
        \foreach \from/\to in {6/7, 7/5}
            \path[->] (\from) edge[bend right = 20] (\to);
    \end{tikzpicture}
\end{center}
We can stop here as the graph for $t=10$ looks the same as $t=7$, meaning that we will cycle around those three states forever. This means the convergence time for this particular example is 7 and the period is 3.

Examining the graph after stabilisation, notice that any path we take alternates 2, 6, 5, 2, 6, 5, $\cdots$. It looks like we have divided the graph into three separate interlocking \emph{classes}, and that each class has a single value that all its members have assumed:
\begin{center}
    \begin{tikzpicture}
        \def\dist{5em};
        \node[main,redd] (1) at (120:\dist) {5};
        \node[main,bluu] (2) at (60:\dist) {2};
        \node[main,grnn] (3) at (0:\dist) {6};
        \node[main,redd] (4) at (-60:\dist) {5};
        \node[main,bluu] (5) at (-120:\dist) {2};
        \node[main,grnn] (6) at (180:\dist) {6};
        \path (5) -- node[pos=1,main,redd] (7) {5} +(-\dist, 0);
        \path (6) -- node[pos=1] {$t=10$} +(-7em, 0);
        \path[->]   (1) edge[redd,bend left = 20] (2)
                    (2) edge[bluu,bend left = 20] (3)
                    (3) edge[grnn,bend left = 20] (4)
                    (4) edge[redd,bend left = 20] (5)
                    (5) edge[bluu,bend left = 20] (6)
                    (6) edge[grnn,bend left = 20] (1)
                        edge[grnn,bend right = 20] (7)
                    (7) edge[redd,bend right = 20] (5);
    \end{tikzpicture}
\end{center}
The period can thus be determined by the number of classes. The fact that there are 3 classes makes sense; if there were 2 classes for example, we would be able to colour the 6-cycle red, blue, red, blue, red, blue, but we would not be able to colour the 3-cycle as there would be two adjacent nodes of the same colour. This gives us the idea that \emph{the period must divide all cycle lengths}, a fact we show in the subsequent section.

Afterwards, we will use \emph{equivalence relations} to define \emph{equivalence classes}, giving us a formal notion of the colour of each node depicted above. This will enable us to gain a complete understanding of the convergence of this process on strongly connected graphs, and create examples of graphs with arbitrary periods.

\subsection{The Period and Cycles}
So far, we have thought of the update process as multiple iterations of the same idea, which is for each node to pull from its immediate neighbours. This does not give us much freedom to look ahead and consider the value of a node many iterations down the line, but fortunately we can show that this is possible. If we are at time $t$ and wish to know the value of $u$ in $k$ steps time, we only need to check the nodes a distance of $k$ away from $u$, as these are precisely the vertices that will affect $u$ in $k$ steps time.
\begin{lem}\label{inducto}
    Let $G=(V,E)$ be a graph at time $t\geq 0$. Then the following holds for all $u\in V$ and $k\geq 0$:
    \[ f_{t+k}(u) = \max_{v\in\Gamma^k(u)}f_t(v) \]
\end{lem}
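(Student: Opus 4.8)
The plan is to prove the identity by a straightforward induction on $k$, keeping the time $t$ universally quantified so that the inductive hypothesis can be invoked at the shifted start time $t+1$.

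First I would dispatch the base case $k=0$: by \Cref{defn:Gamma}, $\Gamma^0(u)=\{u\}$, so $\max_{v\in\Gamma^0(u)}f_t(v)=f_t(u)=f_{t+0}(u)$. I would also note once and for all that, by \Cref{assNeighbour}, every vertex has outdegree at least $1$, hence $\Gamma^k(u)\neq\emptyset$ for every $k$, so all the maxima appearing below are over nonempty finite sets and are well-defined.

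For the inductive step I would assume the statement for some $k\geq 0$ (and for every starting time) and compute, for a fixed $u\in V$ and $t\geq 0$,
\[ f_{t+k+1}(u) = f_{(t+1)+k}(u) = \max_{v\in\Gamma^k(u)}f_{t+1}(v) = \max_{v\in\Gamma^k(u)}\ \max_{w\in\Gamma(v)}f_t(w), \]
using the hypothesis at start time $t+1$ and then the maximum update rule $f_{t+1}(v)=\max_{w\in\Gamma(v)}f_t(w)$. Rewriting the nested maxima as a single maximum over $\bigcup_{v\in\Gamma^k(u)}\Gamma(v)$, and recognising this union as $\Gamma^{k+1}(u)$---which is immediate from \Cref{defn:Gamma}---closes the induction.

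The argument is essentially routine; the only points needing a little care are (i) keeping $t$ general throughout the induction so that the hypothesis legitimately applies at time $t+1$ (alternatively, one could peel the \emph{first} step off the front, which instead requires the one-line auxiliary identity $\Gamma^{k+1}(u)=\bigcup_{w\in\Gamma(u)}\Gamma^k(w)$), and (ii) the appeal to \Cref{assNeighbour} to guarantee the maxima are well-defined. I do not expect any genuine obstacle here; the value of the lemma lies in its use as a lookahead device for the later results on strongly connected graphs rather than in the difficulty of its proof.
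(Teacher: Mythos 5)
Your proposal is correct and follows essentially the same route as the paper: induction on $k$, applying the inductive hypothesis at the shifted start time $t+1$, then the one-step update rule, and finally collapsing the nested maxima via the recursive definition of $\Gamma^{k+1}$. The extra remark about \Cref{assNeighbour} ensuring the maxima are over nonempty sets is a sensible point of care that the paper leaves implicit.
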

\begin{proof}
    We can show this by induction. Since $\Gamma^0(u)=\{u\}$, it clearly holds for $k=0$. Suppose it holds for $k-1$. Then for $k$, we can apply the inductive hypothesis, followed by the update rule:
    \begin{align*}
        f_{t+k}(u)  &= f_{t+1+k-1}(u) \\
                    &= \max_{v\in\Gamma^{k-1}(u)}f_{t+1}(v) \\
                    &= \max_{v\in\Gamma^{k-1}(u)}\left(\max_{w\in\Gamma(v)}f_t(w)\right)
    \end{align*}
    But then the set of $w$ being considered is:
    \[ \{w\in V : w\in\Gamma(v)\text{ for some }v\in\Gamma^{k-1}(u)\} \]
    By \Cref{eqn:Gamma1} (the equivalence definition of $\Gamma$), we get:
    \[ \{w\in V : (v,w)\in E\text{ for some }v\in\Gamma^{k-1}(u)\} \]
    But this is just the definition of $\Gamma^k(u)$, and thus:
    \[ f_{t+k}(u) = \max_{w\in\Gamma^k(u)}f_t(w)\]
\end{proof}
This fact will prove itself useful in future proofs, but the first thing to note is that we only ever pull values from vertices in $V$, meaning that the range of possible values is bounded. Using a similar argument to \Cref{markovFiniteLB} in the previous chapter, we can see that the Markov chain for this process is finite too, so by \Cref{absorb}, we eventually converge.
\begin{prop}\label{markovFiniteSM}
    Let $G=(V,E)$ be a graph and $f_0$ be an initial valuation. Then only finitely many valuations are reachable from $f_0$.
\end{prop}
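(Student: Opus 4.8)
The plan is to reduce everything to \Cref{inducto}. Applying that lemma at time $t=0$ with step count $k=t$ gives, for every vertex $u\in V$ and every $t\geq 0$,
\[
  f_t(u) = \max_{v\in\Gamma^t(u)} f_0(v).
\]
By \Cref{assNeighbour} every vertex has outdegree at least $1$, so $\Gamma^t(u)$ is nonempty and this maximum is well defined. The crucial observation is then purely set-theoretic: a maximum taken over a nonempty subset $S\subseteq V$ of the numbers $\{f_0(v):v\in S\}$ is attained, hence equals $f_0(v^\ast)$ for some $v^\ast\in S\subseteq V$. Therefore $f_t(u)$ always lies in the finite set $\mathcal{R}\coloneqq\{f_0(v):v\in V\}$, which has at most $n$ elements.

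From here the counting is immediate, mirroring the argument of \Cref{markovFiniteLB}: any reachable valuation $f_t$ is a function $V\to\mathcal{R}$, and there are at most $|\mathcal{R}|^{n}\leq n^{n}$ such functions. Hence only finitely many valuations are reachable from $f_0$. I would phrase the write-up in exactly two moves — first quote \Cref{inducto} to pin each node's value into $\mathcal{R}$, then bound the number of labelings — so the whole proof is a couple of lines.

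There is essentially no hard step here; the only point requiring a moment's care is making sure the maxima involved are over nonempty sets, which is precisely what \Cref{assNeighbour} guarantees (and which is why that assumption was introduced for update rules of \Cref{typeTwo} in the first place). If one wanted to avoid invoking \Cref{inducto}, an alternative route is a direct induction on $t$ showing $f_t(V)\subseteq\mathcal{R}$: the base case is trivial, and the inductive step is just that $f_{t+1}(u)=\max_{w\in\Gamma(u)}f_t(w)$ is one of the values $f_t(w)\in\mathcal{R}$ — but reusing \Cref{inducto} is cleaner since it is already available.
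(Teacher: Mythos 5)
Your proposal is correct and follows essentially the same route as the paper: apply \Cref{inducto} with $t=0$ to confine every future value to the finite set $\{f_0(v):v\in V\}$, then count the finitely many labelings of $V$ by that set. The extra remarks about nonemptiness via \Cref{assNeighbour} and the explicit $|\mathcal{R}|^n$ bound are fine elaborations of the same argument.
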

\begin{proof}
    By setting $t=0$ in \Cref{inducto}, for all $v\in V$ any future value $f_k(v)$ is in the finite set $\{f_0(u) : u\in V\}$, and thus there are only finitely many possible future valuations.
\end{proof}
Now, we wish to analyse the convergence behaviour of the Markov chain, which describes the entire graph. This is a complex task; it is possible for different areas of the graph to converge into different-length repetitive cycles with distinct periods from each other, and it is not immediately clear how these interact over time. To make things easier, let's consider the convergence behaviours of one vertex $u\in V$ at a time, i.e.\ the sequence $f_0(u),f_1(u),f_2(u),\cdots$.
\begin{center}
    \begin{tikzpicture}[]
        \def\dist{5em};
        \node[main] (1) at (120:\dist) {3};
        \node[main] (2) at (60:\dist) {5};
        \node[main] (3) at (0:\dist) {2};
        \node[main] (4) at (-60:\dist) {6};
        \node[main] (5) at (-120:\dist) {5};
        \node[main] (6) at (180:\dist) {1};
        \path (5) -- node[pos=1,main] (7) {4} +(-\dist, 0);
        \foreach \from/\to in {1/2, 2/3, 3/4, 4/5, 5/6, 6/1}
            \path[->] (\from) edge[bend left = 20] (\to);
        \foreach \from/\to in {6/7, 7/5}
            \path[->] (\from) edge[bend right = 20] (\to);

        \path (2) -- node[pos=1, align=center, text=red] {\textbf{5}\\\textbf{2}\\\textbf{6}\\\textbf{$\vdots$}} ++(2.1em, 2em) -- node[pos=1, text=red] {\Forward} +(-1em, 2em);
    \end{tikzpicture}
\end{center}
We can similarly consider the convergence of all vertices:
\begin{center}
    \begin{tikzpicture}[]
        \def\dist{5em};
        \node[main] (1) at (120:\dist) {3};
        \node[main] (2) at (60:\dist) {5};
        \node[main] (3) at (0:\dist) {2};
        \node[main] (4) at (-60:\dist) {6};
        \node[main] (5) at (-120:\dist) {5};
        \node[main] (6) at (180:\dist) {1};
        \path (5) -- node[pos=1,main] (7) {4} +(-\dist, 0);
        \foreach \from/\to in {1/2, 2/3, 3/4, 4/5, 5/6, 6/1}
            \path[->] (\from) edge[bend left = 20] (\to);
        \foreach \from/\to in {6/7, 7/5}
            \path[->] (\from) edge[bend right = 20] (\to);

        \path   (1) -- node[pos=1, align=center, text=red] {\textbf{3}\\\textbf{5}\\\textbf{2}\\\textbf{$\vdots$}} ++(2em, 2em) -- node[pos=1, text=red] {\Forward} +(-1em, 2em)
                (2) -- node[pos=1, align=center, text=red] {\textbf{5}\\\textbf{2}\\\textbf{6}\\\textbf{$\vdots$}} ++(2.1em, 2em) -- node[pos=1, text=red] {\Forward} +(-1em, 2em)
                (3) -- node[pos=1, align=center, text=red] {\textbf{2}\\\textbf{6}\\\textbf{5}\\\textbf{$\vdots$}} ++(2.1em, -0.6em) -- node[pos=1, text=red] {\Forward} +(-1em, 2em)
                (4) -- node[pos=1, align=center, text=red] {\textbf{6}\\\textbf{5}\\\textbf{1}\\\textbf{$\vdots$}} ++(2.1em, -0.6em) -- node[pos=1, text=red] {\Forward} +(-1em, 2em)
                (5) -- node[pos=1, align=center, text=red] {\textbf{5}\\\textbf{1}\\\textbf{4}\\\textbf{$\vdots$}} ++(2.1em, -0.6em) -- node[pos=1, text=red] {\Forward} +(-1em, 2em)
                (6) -- node[pos=1, align=center, text=red] {\textbf{1}\\\textbf{4}\\\textbf{5}\\\textbf{$\vdots$}} ++(-2.1em, 2em) -- node[pos=1, text=red] {\Forward} +(-1em, 2em)
                (7) -- node[pos=1, align=center, text=red] {\textbf{4}\\\textbf{5}\\\textbf{1}\\\textbf{$\vdots$}} ++(-2.1em, -2em) -- node[pos=1, text=red] {\Forward} +(-1em, 2em);
    \end{tikzpicture}
\end{center}
Since the overall graph eventually converges, any individual vertex will eventually converge and have its own \emph{local period}. For example, if a vertex alternates 3, 4, 3, 4, $\cdots$ forever, then its local period is 2, regardless of the periods of other vertices or the overall \emph{global period} of the graph.
\begin{defn}[Local period]
    Let $G=(V,E)$ be a graph and $f_0$ an initial valuation, and let $u\in V$. Since the Markov chain of $G$ is finite, by \Cref{direcPeriod}, the sequence $f_0,f_1,f_2,\cdots$ eventually enters a cycle, so there exists $T\geq 0$ and $p\geq 1$ such that $f_{t+p}=f_t$ for all $t\geq T$. The \textbf{local period} is the minimum such $p$.
\end{defn}
Now that we have defined local periods, the primary objects of interest are cycles, as they can help us determine the local periods of vertices on them. Throughout the next section, we will denote cycles of length $k$ as $v_1\to v_2\to\cdots\to v_k\to v_1$. The subscripts wrap around, so $v_{k+1}\coloneqq v_1$, and $v_{2k+5}=v_5$, and so on.

It turns out that the sum of any cycle is a potential function which can only increase over time, and cannot decrease.
\begin{lem}\label{cycleMonotonic}
    Let $v_1\to v_2\to\cdots\to v_k\to v_1$ be a cycle of length $k$. Then for all $t\geq 0$:
    \[ \sum_{i=1}^kf_t(v_i) \leq \sum_{i=1}^kf_{t+1}(v_i) \]
\end{lem}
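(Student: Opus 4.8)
The plan is to exploit Lemma~\ref{inducto} to rewrite each $f_{t+1}(v_i)$ explicitly, and then to build an injection between the index sets of the two sums. Recall that in the maximum model, $f_{t+1}(v_i) = \max_{w \in \Gamma(v_i)} f_t(w)$. Since $v_{i-1} \to v_i$ is an edge of the cycle, we have $v_i \in \Gamma(v_{i-1})$, and hence $f_{t+1}(v_{i-1}) = \max_{w \in \Gamma(v_{i-1})} f_t(w) \geq f_t(v_i)$. Summing this inequality over $i = 1, \dots, k$ (with indices read cyclically, so that $v_0 = v_k$), the left-hand side is $\sum_{i=1}^k f_{t+1}(v_{i-1}) = \sum_{i=1}^k f_{t+1}(v_i)$ after reindexing, and the right-hand side is $\sum_{i=1}^k f_t(v_i)$. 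This immediately gives the claimed inequality.

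First I would state precisely the key pointwise fact: for every $i$, $f_{t+1}(v_{i-1}) \geq f_t(v_i)$. This is the crux, and it follows directly from the update rule together with the fact that the cycle edge $(v_{i-1}, v_i)$ lies in $E$, so $v_i$ is among the out-neighbours of $v_{i-1}$ over which the maximum is taken. Next I would sum over all $k$ cycle vertices. Then I would perform the reindexing: the map $i \mapsto i-1 \pmod k$ is a bijection of $\{1, \dots, k\}$ onto itself, so $\sum_{i=1}^k f_{t+1}(v_{i-1}) = \sum_{j=1}^k f_{t+1}(v_j)$. Combining these three observations yields
\[ \sum_{i=1}^k f_{t+1}(v_i) = \sum_{i=1}^k f_{t+1}(v_{i-1}) \geq \sum_{i=1}^k f_t(v_i), \]
which is exactly the statement.

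Honestly, there is no serious obstacle here; the lemma is a one-line consequence of the update rule plus a cyclic shift of indices. The only point requiring minor care is the bookkeeping of the cyclic index convention ($v_0 := v_k$, or equivalently $v_{k+1} := v_1$), which the excerpt has already set up just before the lemma statement, and making sure that the vertices $v_1, \dots, v_k$ of a cycle are used consistently — the argument does not even require the $v_i$ to be distinct, since repeated vertices simply contribute their value multiple times on both sides and the pointwise inequality still holds edge-by-edge. One could alternatively phrase the whole thing as a telescoping/matching argument — match the term $f_{t+1}(v_{i-1})$ on the new cycle-sum with the term $f_t(v_i)$ on the old one — but the reindexing phrasing above is cleanest. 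If desired, I would also remark that Assumption~\ref{assNeighbour} is not even needed here, since every $v_i$ on the cycle automatically has out-degree at least one.
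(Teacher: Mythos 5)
Your proof is correct and is essentially identical to the paper's: the paper uses the pointwise inequality $f_{t+1}(v_i)\geq f_t(v_{i+1})$ (the same fact as your $f_{t+1}(v_{i-1})\geq f_t(v_i)$ up to an index shift), sums over the cycle, and reindexes. No differences worth noting.
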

\begin{proof}
    After one update, for all $i$, the new value of $v_i$ is at least the old value of $v_{i+1}$ (possibly more), since vertices update off of their neighbours and there is an edge $v_i\to v_{i+1}$. Formally, we know $f_{t+1}(v_i) \geq f_t(v_{i+1})$ since $(v_i,v_{i+1})\in E$, so:
    \begin{align*}
        \sum_{i=1}^kf_{t+1}(v_i) &\geq \sum_{i=1}^kf_t(v_{i+1}) \\
        &= \sum_{i=1}^kf_t(v_i)
    \end{align*}
\end{proof}
Since we eventually reach a cycle in the Markov chain (i.e.\ an absorbing state where no further changes are possible), there must come a point where the sum of any cycle can increase no further. At this point, we can predict the value of any node along the cycle by simply looking further along the cycle.
\begin{lem}\label{cyclePredict}
    Let $v_1\to v_2\to\cdots\to v_k\to v_1$ be a cycle of length $k$, and suppose $f_t$ is in an absorbing state (i.e.\ the process has converged). Then $f_{t+\ell}(v_i) = f_t(v_{i+\ell})$ for all $\ell\geq 0$ and all $i$.
\end{lem}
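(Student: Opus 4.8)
The plan is to first establish the case $\ell = 1$, namely that a single update simply shifts the valuation one step backwards along the cycle, i.e.\ $f_{t+1}(v_i) = f_t(v_{i+1})$ for every $i$; the general statement then follows by an easy induction on $\ell$.

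For the base case, recall from the proof of \Cref{cycleMonotonic} that the termwise inequality $f_{t+1}(v_i) \ge f_t(v_{i+1})$ holds for all $i$ because $(v_i, v_{i+1}) \in E$. Summing over $i$ and reindexing (using that subscripts wrap, so $v_{k+1} = v_1$) gives $\sum_{i=1}^k f_{t+1}(v_i) \ge \sum_{i=1}^k f_t(v_{i+1}) = \sum_{i=1}^k f_t(v_i)$. The key point is that in an absorbing state this is in fact an equality: since $f_t$ is absorbing, by \Cref{direcPeriod} (or simply because the absorbing component of a deterministic Markov chain is a single cycle) there is a $p \ge 1$ with $f_{t+p} = f_t$, and applying \Cref{cycleMonotonic} along the chain $f_t, f_{t+1}, \dots, f_{t+p}$ yields $\sum_i f_t(v_i) \le \sum_i f_{t+1}(v_i) \le \cdots \le \sum_i f_{t+p}(v_i) = \sum_i f_t(v_i)$, so all of these cycle sums coincide. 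In particular $\sum_i f_{t+1}(v_i) = \sum_i f_t(v_{i+1})$, so the sum of the nonnegative quantities $f_{t+1}(v_i) - f_t(v_{i+1})$ is zero, forcing each summand to vanish: $f_{t+1}(v_i) = f_t(v_{i+1})$ for all $i$.

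With the base case in hand, note that every $f_s$ with $s \ge t$ is again in an absorbing state (once absorbed, the process stays absorbed), so the shift relation applies at every such time. Inducting on $\ell$: assuming $f_{t+\ell-1}(v_i) = f_t(v_{i+\ell-1})$ for all $i$, we get $f_{t+\ell}(v_i) = f_{t+\ell-1}(v_{i+1}) = f_t(v_{i+1+\ell-1}) = f_t(v_{i+\ell})$, using the base-case shift at time $t+\ell-1$ and then the inductive hypothesis at index $i+1$. The case $\ell = 0$ is trivial, completing the induction.

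The main obstacle is the base case, specifically upgrading the monotonicity inequality of \Cref{cycleMonotonic} to an equality; everything else is bookkeeping. One has to invoke that the process is genuinely in an absorbing (periodic) state, since otherwise the cycle sum could still be strictly increasing and the shift relation would fail. Once the cycle sum is pinned down as constant, passing from ``equal sums plus termwise $\ge$'' to ``termwise $=$'' is immediate, and the rest is a routine cyclic induction. (Alternatively one could run the argument through \Cref{inducto} by writing $f_{t+\ell}(v_i) = \max_{w \in \Gamma^\ell(v_i)} f_t(w)$ and showing the maximum is attained at $v_{i+\ell}$ in the absorbing state, but the shift-and-induct route above seems cleaner.)
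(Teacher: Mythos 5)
Your proof is correct and uses essentially the same mechanism as the paper's: the termwise inequality $f_{t+\ell}(v_i)\geq f_t(v_{i+\ell})$ combined with the fact that the cycle sum can no longer increase in the absorbing state forces termwise equality; the paper argues this directly for general $\ell$ by contradiction, whereas you prove the one-step shift and then induct, which is an immaterial difference. If anything, your version is slightly more careful, since you justify the constancy of the cycle sum explicitly via the periodicity $f_{t+p}=f_t$ of the absorbed state, a point the paper's ``contradicts \Cref{cycleMonotonic}'' glosses over.
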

\begin{proof}
    Since $v_{i+\ell}\in\Gamma^\ell(v_i)$, by \Cref{inducto} $v_i$ pulls from $v_{i+\ell}$ and is thus at least as big, that is, $f_{t+\ell}(v_i) \geq f_t(v_{i+\ell})$. Suppose for the sake of contradiction that it was strictly larger for some $\ell\geq 0$ and $i$, that is, $f_{t+\ell}(v_i) > f_t(v_{i+\ell})$. Note that for all $j$ we have $f_{t+\ell}(v_j) \geq f_t(v_{j+\ell})$. Applying the same argument as \Cref{cycleMonotonic}:
    \begin{align*}
        \sum_{j=1}^kf_{t+\ell}(v_j) &> \sum_{j=1}^kf_t(v_{j+\ell}) \\
        &= \sum_{j=1}^kf_t(v_j)
    \end{align*}
    But this contradicts \Cref{cycleMonotonic}, and thus we are done.
\end{proof}
Thus, after convergence, values will just be pushed around cycles, and any given cycle will shuffle the same values around forever in sequence, without the sum growing any larger. Two important observations follow this fact, which, when combined, paint a clear picture of the convergence process. First, if a cycle of length $k$ is pushing some values around, then the length of the sequence of values being pushed around (i.e.\ the period) had better divide $k$, otherwise it will not fit neatly in the cycle.
\begin{prop}\label{cyclePeriodDivide}
    Let $v_1\to v_2\to\cdots\to v_k\to v_1$ be a cycle of length $k$. Then for all $i$, the local period of $v_i$ divides $k$.
\end{prop}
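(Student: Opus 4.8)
The plan is to lean entirely on \Cref{cyclePredict}, which already describes the dynamics along the cycle once the process has converged, and then to extract divisibility from the division algorithm applied to the local period.

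First I would fix a time $T$ at which the process has reached its absorbing state: by \Cref{markovFiniteSM} the Markov chain is finite, so by \Cref{direcPeriod} there is a $T\geq 0$ with $f_T$ (and hence every $f_t$ with $t\geq T$) lying in the absorbing state. For each such $t$, \Cref{cyclePredict} applied to our cycle $v_1\to v_2\to\cdots\to v_k\to v_1$ gives $f_{t+\ell}(v_i)=f_t(v_{i+\ell})$ for all $\ell\geq 0$ and all $i$. Taking $\ell=k$ and using that subscripts wrap around the cycle (so $v_{i+k}=v_i$) yields $f_{t+k}(v_i)=f_t(v_i)$ for every $t\geq T$. In other words, the eventually-periodic sequence $f_0(v_i),f_1(v_i),f_2(v_i),\dots$ admits $k$ as a period of its tail.

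Next I would show that the local period $p$ of $v_i$, i.e.\ the minimal tail period of this sequence, divides $k$. Write $k=qp+r$ by the division algorithm with $0\leq r<p$. Since both $p$ and $k$ are tail periods (valid for all times $\geq T$), for any $t\geq T$ I can peel off $q$ copies of $p$:
\[ f_t(v_i)=f_{t+k}(v_i)=f_{t+k-p}(v_i)=\cdots=f_{t+k-qp}(v_i)=f_{t+r}(v_i), \]
where every intermediate index $t+k-jp$ with $0\leq j\leq q$ is at least $t\geq T$ because $k-jp\geq k-qp=r\geq 0$. Hence $r$ is also a tail period of the sequence; minimality of $p$ together with $0\leq r<p$ forces $r=0$, so $p\mid k$, which is exactly the claim.

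I do not expect a genuine obstacle: the argument is essentially the classical fact that the minimal period of an eventually periodic sequence divides every period, with \Cref{cyclePredict} supplying all the dynamical content. The only point needing mild care is the bookkeeping with the threshold $T$ — keeping the telescoping inside the range where \Cref{cyclePredict} is valid — which the inequality $k-jp\geq r\geq 0$ takes care of. One could alternatively route the divisibility step through $\gcd(p,k)$ being a period, but the division-algorithm phrasing is the most self-contained.
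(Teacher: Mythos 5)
Your proposal is correct and follows essentially the same route as the paper: apply \Cref{cyclePredict} with $\ell=k$ in the absorbing state to see that $k$ is a tail period of $f_t(v_i)$, then use the division algorithm and minimality of the local period to force the remainder to be zero. The only difference is your slightly more explicit bookkeeping that the telescoped indices stay above the convergence threshold $T$, which the paper leaves implicit.
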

\begin{proof}
    Let $T\geq 0$ such that $f_T$ is in an absorbing state, so the process has converged, then fix $i$. Then for all $t\geq T$, by \Cref{cyclePredict} but with $\ell=k$, we know $f_{t+k}(v_i)=f_t(v_{i+k})=f_t(v_i)$. Thus, the value at $v_i$ is the same every $k$ iterations, so its local period is at most $k$. Below is an elementary argument that shows this period divides $k$.

    Let $p$ be the local period of $v_i$, i.e.\ the minimum $p\geq 1$ such that $f_{t+p}(v_i)=f_t(v_i)$ for all $t\geq T$. Since $k$ satisfies this property, $p\leq k$, so using the division algorithm we can write $k=mp+r$ for some $m\geq 0$ and $0\leq r<p$. Then for all $t\geq T$:
    \begin{align*}
        f_{t+k}(v_i) &= f_t(v_i) \\
        f_{t+mp+r}(v_i) &= f_t(v_i) \\
        f_{t+(m-1)p+r}(v_i) &= f_t(v_i) \\
        \vdots \\
        f_{t+r}(v_i) &= f_t(v_i)
    \end{align*}
    But $r<p$ and $p$ is supposed to be the minimum $x\geq 1$ which satisfies this property. Thus, the only possibility is $r=0$, meaning $k=mp$ and thus $p$ divides $k$.
\end{proof}
This means we can determine information about a vertex by looking at the cycles it lies on. For example, if $v$ is on a cycle of length 6 and another cycle of length 15, then the local period of $v$ must divide both 6 and 15, and is thus either 1 or 3.

The second important observation is that if the values of a node in the cycle are simply the values of a node further down the cycle, but delayed, then the local periods of these two nodes must be identical.
\begin{prop}\label{localsIdenticalWeak}
    Let $v_1\to v_2\to\cdots\to v_k\to v_1$ be a cycle of length $k$. Then the local periods of $v_1,v_2,\cdots,v_k$ are identical.
\end{prop}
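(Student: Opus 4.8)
The plan is to reduce the whole statement to a single observation: once the process has converged, the values seen at every vertex of the cycle are just the entries of one fixed cyclic sequence of length $k$, so every vertex on the cycle reads off the same periodic pattern and hence has the same local period.

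Concretely, I would first invoke convergence. Since the process is synchronous and deterministic, by \Cref{markovFiniteSM} and \Cref{direcPeriod} there is a time $T\geq 0$ such that $f_T$ is in an absorbing state; fix such a $T$ and define the cyclic sequence $a_m \coloneqq f_T(v_m)$ for $m\in\Z$, with indices read modulo $k$. The second step is to apply \Cref{cyclePredict} with base time $T$: for every index $i$ and every $t\geq T$, taking $\ell = t-T$ gives
\[ f_t(v_i) = f_{T+\ell}(v_i) = f_T(v_{i+\ell}) = a_{\,i + t - T}. \]
Thus after time $T$ the entire evolution of the cycle is completely described by the single sequence $a$.

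Next I would unwind the definition of local period. The local period of $v_i$ is the least $p\geq 1$ with $f_{t+p}(v_i)=f_t(v_i)$ for all large $t$ (say all $t \geq T$); by the displayed identity this is equivalent to $a_{\,i+t+p-T}=a_{\,i+t-T}$ for all $t\geq T$. As $t$ ranges over the integers $\geq T$, the shift $i+t-T$ ranges over every residue class modulo $k$, so the condition is simply that $a_{m+p}=a_m$ for all $m$, i.e.\ that $p$ is a period of the cyclic sequence $a$. This condition does not mention $i$ at all, so the local period of each $v_i$ equals the minimal period of $a$; in particular all the local periods coincide, which is the claim.

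Regarding the main obstacle: there is no genuinely hard step once \Cref{cyclePredict} is available — the proof is essentially a change of viewpoint. The two points that need a little care are (i) checking that the index $i+t-T$ genuinely sweeps all of $\Z/k\Z$, which holds because $t$ is unbounded above, and (ii) reading \Cref{cyclePredict} with the convergence time itself as the base point rather than an arbitrary later time. If one preferred a more hands-on argument, the same conclusion follows by showing $p_i \mid p_{i+1}$ and $p_{i+1}\mid p_i$ via the substitution $t\mapsto t+1$ in $f_{t+p_i}(v_i)=f_t(v_i)$ combined with $f_{t+1}(v_j)=f_t(v_{j+1})$ from \Cref{cyclePredict}, but the cyclic-sequence formulation avoids repeating that bookkeeping around the whole cycle.
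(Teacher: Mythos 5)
Your proposal is correct and rests on exactly the same key ingredient as the paper's proof, namely \Cref{cyclePredict} applied from the convergence time onward; the only difference is packaging, since the paper chains the inequalities $p_i \le p_{i+1}$ around the cycle while you identify every local period with the minimal period of the single cyclic word $a$. Both are valid one-step consequences of that lemma, and your handling of the base time $T$ and of the index sweeping all residues modulo $k$ is sound.
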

\begin{proof}
    Let $T\geq 0$ such that $f_T$ is in an absorbing state, and fix $i$. Let $p$ be the local period of $v_{i+1}$. Then by repeatedly applying \Cref{cyclePredict}, we know for all $t\geq T+1$:
    \begin{align*}
        f_{t+p}(v_i) &= f_{t+p-1}(v_{i+1}) \\
        &= f_{t-1}(v_{i+1}) \\
        &= f_t(v_i)
    \end{align*}
    Thus, $p$ satisfies the local period property for $v_i$, meaning the real local period of $v_i$ is less than or equal to $p$, the local period of $v_{i+1}$. But since the choice of $i$ was arbitrary, this holds for all vertices in the cycle. The only possibility is that they all have the same local period.
\end{proof}
When we combine this fact with the assumption that the graph is strongly connected, meaning any two nodes can be placed on a cycle that goes through both of them, we arrive at an important observation that justifies the detour into analysing local periods.
\begin{prop}\label{localsIdenticalStrong}
    If $G=(V,E)$ is strongly connected, then the local periods of all vertices are identical and equal to the global period $p$.
\end{prop}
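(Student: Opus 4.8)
The plan is to prove the statement in two stages: first, that all vertices share a single common local period $\ell$; second, that this common value $\ell$ equals the global period $p$.

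For the first stage, the engine is strong connectivity together with Lemmas \ref{cycleMonotonic}, \ref{cyclePredict} and \ref{localsIdenticalWeak}. These are stated for cycles $v_1\to\cdots\to v_k\to v_1$, but I would first observe that their proofs remain valid verbatim when "cycle" is replaced by "closed walk" (allowing repeated vertices): they use only that consecutive vertices are joined by an edge and that the index set is cyclic, so $\sum_i f_t(v_{i+1})=\sum_i f_t(v_i)$ still holds and $v_{i+\ell}\in\Gamma^\ell(v_i)$ still holds. Granting this, fix $u,v\in V$ with $u\neq v$ (the case $u=v$ being trivial). By strong connectivity there is a path from $u$ to $v$ and a path from $v$ to $u$; concatenating them produces a closed walk through both $u$ and $v$, and the closed-walk version of \ref{localsIdenticalWeak} yields that $u$ and $v$ have equal local periods. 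Since $u,v$ were arbitrary, all local periods coincide; write $\ell$ for this common value, noting $\ell\geq 1$.

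For the second stage, let $T$ be large enough that $f_t$ lies in the absorbing state for all $t\geq T$ (such $T$ exists by \ref{markovFiniteSM} and \ref{absorb}), and enlarge $T$ further so that in addition $f_{t+\ell}(v)=f_t(v)$ for every $v\in V$ and every $t\geq T$ (possible since $V$ is finite and $\ell$ is the local period of each $v$). Then $f_{t+\ell}=f_t$ for all $t\geq T$, so $\ell$ is a valid period witness for the whole graph; since the global period $p$ is the size of the absorbing cycle, hence the least positive integer with $f_{t+p}=f_t$ eventually, we get $p\leq\ell$. Conversely $f_{t+p}=f_t$ forces $f_{t+p}(v)=f_t(v)$ for each fixed $v$, so $p$ is a period witness for that $v$'s local sequence, giving $\ell\leq p$. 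Therefore $\ell=p$, which is the claim.

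The main obstacle is the step that puts two arbitrary vertices of a strongly connected graph onto a common cyclic structure: this fails if one insists on a \emph{simple} cycle (consider a "figure eight"), so one really does need the closed-walk strengthening of Lemmas \ref{cycleMonotonic}--\ref{localsIdenticalWeak}. If one wishes to avoid closed walks, the alternative is to decompose the $u$–$v$ closed walk into simple cycles, note that the traced subgraph is weakly connected so these cycles form a chain $C_1,\dots,C_m$ with consecutive cycles sharing a vertex and $u\in C_1$, $v\in C_m$, and then apply \ref{localsIdenticalWeak} to each $C_i$ and transport equality of local periods along the shared vertices. Everything after that is routine bookkeeping with the minimal-period definitions.
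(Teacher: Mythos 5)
Your proposal is correct and follows essentially the same route as the paper: concatenate a $u\to v$ path with a $v\to u$ path to obtain a (possibly non-simple) cycle through both vertices, apply \Cref{localsIdenticalWeak} to equate local periods, then show $\ell\leq p$ and $p\leq\ell$ from the definitions. The paper handles the repeated-vertices issue with the one-line remark that duplicates do not matter, which is exactly the closed-walk observation you justify more explicitly.
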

\begin{proof}
    Let $u,v\in V$ be arbitrary. Since $G$ is strongly connected, there is a path from $u$ to $v$ and a path from $v$ to $u$, forming a cycle that passes through both $u$ and $v$. Note that it does not matter if this cycle contains duplicate vertices. By \Cref{localsIdenticalWeak}, the local periods of all vertices on this cycle must be equivalent, and thus $u$ and $v$ have the same local period. Therefore, all vertices in $G$ have the same local period $\ell$.

    Let $f_T$ be in an absorbing state and suppose $p$ is the global period, so $f_{t+p}=f_t$ (as functions) for all $t\geq T$. Then for any fixed $v\in V$ we have $f_{t+p}(v)=f_t(v)$, so the local period of $v$ is at most $p$ and thus $\ell\leq p$. But conversely, for all $v\in V$ we have $f_{t+\ell}(v)=f_t(v)$, and thus $f_{t+\ell}=f_t$ as functions, so the global period is at most $\ell$ and thus $p\leq\ell$. Thus, the local periods $\ell$ are equal to the global period $p$.
\end{proof}
Thus, we have established that all periods, local or global, are in fact one singular number tied to the graph and valuation, consistent across all vertices. This means information we learn by focussing on a specific cycle buried somewhere in the graph will carry across to all other cycles, and the global period itself. Recall from earlier that the local period of a vertex must divide any cycle that it is on. When viewing all periods as the same, this brings us to the first key result of this section. Note that a \emph{simple cycle} refers to a cycle with no repeating vertices.
\begin{thm}\label{pDividesG}
    Let $G=(V,E)$ be a graph with initial valuation $f_0$. Let $g$ be the greatest common divisor of all simple cycle lengths in $G$:
    \[ g=\gcd_{\substack{C\subseteq V \text{ is a} \\ \text{simple cycle}}}|C| \]
    Then the period $p$ of the maximum model divides $g$.
\end{thm}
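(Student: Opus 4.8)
The plan is to combine the two immediately preceding results, \Cref{cyclePeriodDivide} and \Cref{localsIdenticalStrong}, essentially word for word. Since the ambient graph $G$ is strongly connected, \Cref{localsIdenticalStrong} tells us that the global period $p$ coincides with the local period of every vertex. So it suffices to show that $p$ --- now read as the local period of an arbitrary vertex --- divides $|C|$ for each simple cycle $C$, and then conclude that $p$ divides the greatest common divisor $g$ of all such lengths.

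First I would fix an arbitrary simple cycle $C = v_1 \to v_2 \to \cdots \to v_k \to v_1$ of length $k = |C|$. Applying \Cref{cyclePeriodDivide} to this cycle --- noting that the proof of that proposition never uses distinctness of the $v_i$, only the cyclic edge structure --- the local period of $v_1$ divides $k$. By \Cref{localsIdenticalStrong} the local period of $v_1$ equals the global period $p$, so $p \mid k = |C|$. Since $C$ was arbitrary, $p$ is a common divisor of every simple cycle length in $G$, hence $p$ divides their greatest common divisor $g$. (Here $g$ is well defined: by \Cref{assNeighbour} every vertex has outdegree at least one, so following edges from any vertex eventually repeats a vertex, producing at least one simple cycle, and in the strongly connected case cycles always exist unless $G$ is trivial.)

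I do not expect a genuine obstacle, since all of the real work has already been invested in \Cref{cyclePredict}, \Cref{cyclePeriodDivide}, and \Cref{localsIdenticalStrong}. The only points deserving a word of care are (i) confirming that \Cref{cyclePeriodDivide} is applicable to simple cycles --- it is, as that lemma is stated and proved for arbitrary cycles --- and (ii) checking that at least one simple cycle exists so that $g$ makes sense, which follows from the outdegree assumption. If one wished, one could additionally remark that restricting the gcd to simple cycles rather than to all closed walks does not change $g$, because any closed walk decomposes into simple cycles whose lengths sum to its length; but this observation is not needed for the divisibility claim itself.
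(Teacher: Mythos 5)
Your proposal is correct and follows essentially the same route as the paper: invoke \Cref{localsIdenticalStrong} to identify the global period with the local period of any vertex, apply \Cref{cyclePeriodDivide} to each simple cycle, and conclude that $p$ divides every simple cycle length and hence their gcd. The paper additionally cites \Cref{markovFiniteSM} to justify that the period exists and remarks that the gcd is over a finite set, but these are minor bookkeeping points your argument implicitly covers.
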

\begin{proof}
    By \Cref{markovFiniteSM} the period exists, and by \Cref{localsIdenticalStrong} it is equal to the local period of any vertex. If there is a simple cycle $C$ of length $C\geq 1$, then $C$ contains at least one vertex $v\in C$. By \Cref{cyclePeriodDivide} the local period of $v$ divides $|C|$, and thus the global period $p$ divides $|C|$. This means $p$ divides all simple cycle lengths. Since simple cycles cannot repeat vertices, there are at most $n!$ of them, i.e.\ we are taking the GCD of finitely many values. An inductive argument from elementary number theory shows that if $p$ divides a finite set of values then it also divides the GCD of that finite set of values, and thus we are done.
\end{proof}
Note that the purpose of insisting on simple cycles was to enforce that the set were finite; but what if $p$ divides other, non-simple cycles that introduce more information? This turns out not to be the case, since we can decompose non-simple cycles into simple ones.
\begin{prop}\label{decomposeSimple}
    The GCD of all simple cycle lengths $g$ as defined in \Cref{pDividesG} divides the length of any cycle.
\end{prop}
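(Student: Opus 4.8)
The plan is to prove that every cycle of $G$ decomposes into simple cycles whose lengths sum to the length of the original cycle; since $g$ is by construction a common divisor of all simple cycle lengths, it then divides the sum. I will carry this out by strong induction on the length $\ell$ of a cycle $C : v_1 \to v_2 \to \cdots \to v_\ell \to v_1$, where indices are read cyclically so that $v_{\ell+1} = v_1$. (If $G$ has no cycle at all the statement is vacuous; and if $G$ has a cycle it has a simple one, since any shortest cycle is simple, so $g$ is a well-defined positive integer.)

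For the base step, if $C$ is already a simple cycle then $g \mid \ell$ holds immediately from the definition of $g$ in \Cref{pDividesG}. For the inductive step, suppose $C$ is not simple and let $j$ be the least index for which $v_j = v_i$ for some $i < j$. By minimality of $j$ the vertices $v_1, \dots, v_{j-1}$ are pairwise distinct, hence
\[ v_i \to v_{i+1} \to \cdots \to v_j = v_i \]
is a \emph{simple} cycle of length $j-i$, so $g \mid (j-i)$. Deleting this segment and using the edge $(v_j, v_{j+1}) = (v_i, v_{j+1}) \in E$ to reconnect, we obtain the closed walk $v_1 \to \cdots \to v_i \to v_{j+1} \to \cdots \to v_\ell \to v_1$, which is a cycle of length $\ell - (j-i)$. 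This is strictly shorter than $\ell$ (indeed $i \geq 1$ and $j \leq \ell$ force $j-i \leq \ell-1$) and has positive length, since $G$ has no self-loops forces $j - i \geq 2$. The inductive hypothesis therefore gives $g \mid \ell - (j-i)$, and combining the two divisibilities yields $g \mid (j-i) + (\ell - (j-i)) = \ell$, which closes the induction.

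I do not expect a serious obstacle here: the argument is really just careful bookkeeping. The two points that need attention are (i) ensuring the spliced-out segment is genuinely simple — which is exactly why $j$ is chosen as the first repeated index — and (ii) confirming that the residual walk is again a bona fide cycle of strictly smaller (and nonzero) length, so that the inductive hypothesis applies; the loopless-simple structure of $G$ (from the definition $E \subseteq \{(v_i,v_j): i \neq j\}$) handles the degenerate cases. An alternative, induction-free framing is to note that the edge multiset traversed by $C$ has equal in- and out-degree at every vertex and hence splits into edge-disjoint simple cycles, but for a closed walk that may repeat edges the inductive version above is cleaner to state.
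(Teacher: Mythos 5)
Your proof is correct and is essentially the paper's own argument: both proceed by induction, locate the first repeated vertex, splice out the resulting simple cycle of length $j-i$, apply the inductive hypothesis (or simplicity directly) to both pieces, and add the two divisibilities. Your version is just a touch more careful about the degenerate cases (loopless-ness forcing $j-i\geq 2$ and the residual being a genuine shorter cycle), but the route is the same.
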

\begin{proof}
    This is shown by induction. It is clearly true for all cycles of length $1$ as these are self-loops which are necessarily simple. Suppose it were true for all (not necessarily simple) cycles of length less than $k$, and let $v_1\to v_2\to\cdots\to v_k\to v_1$ be an arbitrary cycle that is also not necessarily simple, i.e.\ duplicate vertices are permitted. Start at $v_1$ and iterate right until we find a vertex we have already seen before. If this does not happen then the cycle is simple, so $g$ divides its length. Otherwise, let this duplicate be $v_i=v_j$ where $i<j$. Then $v_i\to v_{i+1}\to\cdots\to v_j$ is a cycle of length $j-i<k$, so $g$ divides $j-i$ by the inductive hypothesis (alternatively, $g$ divides $j-i$ because it is a simple cycle). Then since $v_i=v_j$, the remaining vertices $v_1\to v_2\to\cdots\to v_i\to v_{j+1}\to v_{j+2}\to\cdots\to v_k\to v_1$ form a cycle of length $k-(j-i)<k$ (since $j-i>0$), so $g$ also divides $k-(j-i)$ by the inductive hypothesis. Then $g$ divides $j-i$ and $k-(j-i)$, so $g$ divides their sum $k$.
\end{proof}
Let's use our results to analyse the example from earlier.
\begin{center}
    \begin{tikzpicture}[bigg/.style = {scale = 1.2}]
        \def\dist{6.5em};
        \node[main,bigg,redd] (1) at (120:\dist) {5};
        \node[main,bigg,bluu] (2) at (60:\dist) {2};
        \node[main,bigg,grnn] (3) at (0:\dist) {6};
        \node[main,bigg,redd] (4) at (-60:\dist) {5};
        \node[main,bigg,bluu] (5) at (-120:\dist) {2};
        \node[main,bigg,grnn] (6) at (180:\dist) {6};
        \path (5) -- node[pos=1,main,bigg,redd] (7) {5} +(-\dist, 0);
        \path[->]   (1) edge[redd,bend left = 20] (2)
                    (2) edge[bluu,bend left = 20] (3)
                    (3) edge[grnn,bend left = 20] (4)
                    (4) edge[redd,bend left = 20] (5)
                    (5) edge[bluu,bend left = 20] (6)
                    (6) edge[grnn,bend left = 20] (1)
                        edge[grnn,bend right = 20] (7)
                    (7) edge[redd,bend right = 20] (5);
        \node[scale=4.5] {$\circlearrowright$};
        \node[scale=1.8] {\textbf{6}};
        \coordinate (A) at (-\dist-0.6em, -\dist/1.732-0.6em);
        \node[scale=3.25] at (A) {$\circlearrowleft$};
        \node[scale=1.3] at (A) {\textbf{3}};
    \end{tikzpicture}
\end{center}
Notice that this graph has two simple cycles, one of length 6 and one of length 3. The greatest common divisor of these is 3, meaning that the 3-cycle causes the values in the 6-cycle to merge into just three classes. In our example, after stabilisation, the 3-cycle eventually rotates 2, 6, 5, and the 6-cycle eventually rotates 2, 6, 5, 2, 6, 5. Since the GCD is 3, the only possible periods for this graph are 3 and 1; the latter occurs when the graph converges and the red, blue and green classes end up with the same value.

Note that there are infinitely many non-simple cycles in this graph, including cycles of length 9, 12, 15 and so on. However, all cycle lengths are multiples of $3$ by \Cref{decomposeSimple}, so we need not consider these.

Consider another graph:
\begin{center}
    \begin{tikzpicture}
        \def\dist{5em};
        \node[main] (1) {$v_1$};
        \node[main] (2) at (-0.866*\dist, 1.5*\dist) {$v_2$};
        \node[main] (3) at (-0.866*2*\dist, 0) {$v_3$};
        \node[main] (4) at (0.866*\dist, 1.5*\dist) {$v_4$};
        \node[main] (5) at (0.866*3*\dist, 1.5*\dist) {$v_5$};
        \node[main] (6) at (0.866*2*\dist, 0) {$v_6$};

        \path[->]   (1) edge (2) edge (4)
                    (2) edge (3)
                    (3) edge (1)
                    (4) edge (5)
                    (5) edge (6)
                    (6) edge (1);

        \coordinate (A) at (-0.866*\dist, 0.6*\dist);
        \coordinate (B) at (0.866*1.5*\dist, 0.75*\dist);
        \node[scale=4] at (A) {$\circlearrowleft$};
        \node[scale=1.6] at (A) {\textbf{3}};
        \node[scale=4] at (B) {$\circlearrowright$};
        \node[scale=1.6] at (B) {\textbf{4}};
    \end{tikzpicture}
\end{center}
The simple cycles of this graph have lengths 3 and 4, meaning the greatest common divisor is 1. An inductive argument shows that non-simple cycles of length $k$ exist for all $k\geq 6$. This means regardless of the initial valuation, the period of the Markov chain will be one, meaning all vertices will converge to a value which will never change. In fact, this value will be the same across all vertices, since if any two were different, there would be an edge between them, causing a change in the state. Note that this only applies to strongly connected graphs; otherwise we can have examples such as the following:
\begin{center}
    \begin{tikzpicture}
        \def\dist{5em}; 
        
        \node[main] (1) {5};
        \node[main] (2) at (-0.866*\dist, 1.5*\dist) {5};
        \node[main] (3) at (-0.866*2*\dist, 0) {5};
        \node[main] (4) at (0.866*\dist, 1.5*\dist) {4};
        \node[main] (5) at (0.866*3*\dist, 1.5*\dist) {4};
        \node[main] (6) at (0.866*2*\dist, 0) {4};

        \path[->]   (1) edge (2) edge (4) edge (6)
                    (2) edge (3) edge (4)
                    (3) edge (1)
                    (4) edge (5)
                    (5) edge (6)
                    (6) edge (4);

        \coordinate (cA) at (-0.866*\dist, 0.5*\dist);
        \path   (cA)    -- coordinate[pos=2] (c1) (1)
                (cA)    -- coordinate[pos=2] (c2) (2)
                (cA)    -- coordinate[pos=2] (c3) (3);
        \draw[scc, rounded corners = 2.5em] (c1) -- (c2) -- (c3) -- cycle;

        \coordinate (cB) at (0.866*2*\dist, \dist);
        \path   (cB)    -- coordinate[pos=2] (c4) (4)
                (cB)    -- coordinate[pos=2] (c5) (5)
                (cB)    -- coordinate[pos=2] (c6) (6);
        \draw[scc, rounded corners = 2.5em] (c4) -- (c5) -- (c6) -- cycle;
    \end{tikzpicture}
\end{center}
This graph has converged with a period of 1, but not all values are identical.

Finally, recall that a connected undirected graph can be viewed as a strongly connected directed graph with the property that $(u,v)\in E\iff(v,u)\in E$. Any non-trivial undirected graph must have at least one edge by \Cref{assNeighbour}, and thus has at least two edges. These two edges form a simple cycle of length 2, meaning that the GCD of all simple cycle lengths is either 1 or 2. A non-bipartite graph necessarily has an odd length simple cycle, and since 2 does not divide the length of such a cycle, the GCD and thus the period must be 1. Otherwise, for bipartite graphs, the GCD is 2, so the period can be either 1 or 2; this all co-incides with the result we showed in the last section.

However, in the last section we showed that for bipartite graphs, the period is 1 if and only if $m=m'$, where $m$ and $m'$ are the maxima of the respective bipartite components. These maxima eventually spread to their entire bipartite component, and if they are equal, then the entire graph assumes the same value. In the next section, we develop the tools to show that this extends to strongly connected graphs as well. That is, each node will have a certain colour, and eventually the entire colour will assume the value of the maximum of that colour. This means we can predict the convergence state from the initial valuation by looking at the maximum value within each colour.

\subsection{The Period and Equivalence Classes}
Consider the multi-coloured example from before:
\begin{center}
    \begin{tikzpicture}[bigg/.style = {scale = 1.2}]
        \def\dist{6.5em};
        \node[main,bigg,redd] (1) at (120:\dist) {5};
        \node[main,bigg,bluu] (2) at (60:\dist) {2};
        \node[main,bigg,grnn] (3) at (0:\dist) {6};
        \node[main,bigg,redd] (4) at (-60:\dist) {5};
        \node[main,bigg,bluu] (5) at (-120:\dist) {2};
        \node[main,bigg,grnn] (6) at (180:\dist) {6};
        \path (5) -- node[pos=1,main,bigg,redd] (7) {5} +(-\dist, 0);
        \path[->]   (1) edge[redd,bend left = 20] (2)
                    (2) edge[bluu,bend left = 20] (3)
                    (3) edge[grnn,bend left = 20] (4)
                    (4) edge[redd,bend left = 20] (5)
                    (5) edge[bluu,bend left = 20] (6)
                    (6) edge[grnn,bend left = 20] (1)
                        edge[grnn,bend right = 20] (7)
                    (7) edge[redd,bend right = 20] (5);
        \node[scale=4.5] {$\circlearrowright$};
        \node[scale=1.8] {\textbf{6}};
        \coordinate (A) at (-\dist-0.6em, -\dist/1.732-0.6em);
        \node[scale=3.25] at (A) {$\circlearrowleft$};
        \node[scale=1.3] at (A) {\textbf{3}};
    \end{tikzpicture}
\end{center}
We can see that the nodes have separated themselves into three interlocking classes, but how can we define these? More specifically, let's say $u$ is the topmost red node. How can we locate other nodes that should be coloured red? We can see that we have to take three steps to get to the next red node, then six to get to the one after. This gives us the idea that we can define these equivalence classes based on the neighbourhoods $\Gamma$ defined in \Cref{chap:background}.

Note that the next section on equivalence relations are properties of the graph, not any particular valuation, and not the period. Even if all values in the above example were identical and therefore never changed, resulting in a period of 1, we would still colour with three colours, as that is the GCD of all simple cycle lengths. To define our equivalence classes, we will define an equivalence relation $uPv$, noting that we already defined $uRv$ as `$v$ is reachable from $u$' and $uSv$ as `$u$ and $v$ are in the same SCC' in \Cref{chap:background}.
\begin{defn}[$g$-path relation]\label{pDefn}
    Let $G=(V,E)$ be a strongly connected directed graph, let $g$ be the GCD of all simple cycles as defined in \Cref{pDividesG} and let $u,v\in V$ be nodes. Then $uPv$ if and only if there exists a path of length $k$ from $u$ to $v$ where $g$ divides $k$.
\end{defn}
Note that this path does not have to be simple. We can now show that this is an equivalence relation.
\begin{prop}
    $P$ is an equivalence relation, meaning it is reflexive, symmetric and transitive.
\end{prop}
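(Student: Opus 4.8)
The plan is to verify the three defining properties of an equivalence relation directly from \Cref{pDefn}, using only strong connectivity of $G$ and \Cref{decomposeSimple} (the fact that $g$ divides the length of \emph{any} cycle, simple or not).

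Reflexivity is immediate: the empty walk from $u$ to itself has length $0$, and $g \mid 0$, so $uPu$ for every $u \in V$. (If one prefers a non-trivial witness, strong connectivity puts $u$ on some closed walk, whose length is divisible by $g$ by \Cref{decomposeSimple}.) Transitivity is a one-line counting argument: if $uPv$ is witnessed by a walk of length $k$ with $g \mid k$ and $vPw$ by a walk of length $\ell$ with $g \mid \ell$, then concatenating the two walks gives a walk from $u$ to $w$ of length $k+\ell$, and $g \mid k$, $g \mid \ell$ together give $g \mid (k+\ell)$, so $uPw$.

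Symmetry is where the real content lies. One cannot simply reverse the witnessing walk, since $G$ is directed and the reversed walk need not exist. Instead, suppose $uPv$ via a walk $W_1$ from $u$ to $v$ of length $k$ with $g\mid k$. By strong connectivity there is some walk $W_2$ from $v$ back to $u$, of some length $\ell$. Concatenating $W_1$ with $W_2$ yields a closed walk based at $u$ of length $k+\ell$ --- this is a ``cycle'' in the not-necessarily-simple sense of \Cref{decomposeSimple} --- so $g \mid (k+\ell)$. Combined with $g\mid k$ this forces $g \mid \ell$, hence $W_2$ is itself a witness for $vPu$.

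I expect the only genuine obstacle to be recognising that symmetry must be obtained by manufacturing a return walk from strong connectivity and then invoking the arbitrary-cycle divisibility of \Cref{decomposeSimple}, rather than by any reversal; once that is seen, everything reduces to elementary modular arithmetic, and no new combinatorial input beyond \Cref{decomposeSimple} is needed.
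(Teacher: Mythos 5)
Your proposal is correct and follows essentially the same route as the paper: reflexivity via the length-$0$ path, transitivity by concatenation, and symmetry by using strong connectivity to produce a return path and then invoking \Cref{decomposeSimple} on the resulting closed walk to deduce that $g$ divides the return length. No substantive differences.
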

\begin{proof}
    \phantom{a}

    \textbf{Reflexive:} \\
    Clearly $uPu$ as there is a path of length $0$ from $u$ to itself, and $g$ divides $0$.

    \textbf{Symmetric:} \\
    Suppose $uPv$, so there is a path of length $kg$ from $u$ to $v$, for some $k\geq 0$. Then since $G$ is strongly connected, there is a path from $v$ to $u$; let $\ell$ be its length. Then we have a cycle from $u$ to itself of length $kg+\ell$, so $g$ divides $kg+\ell$ by \Cref{decomposeSimple}, and thus $g$ divides $\ell$. Thus, $g$ divides the length of a path from $v$ to $u$, so $vRu$.
    \begin{center}
        \begin{tikzpicture}[node distance = 8em]
            \node[main] (u) {$u$};
            \node[main] (v) [right = of u] {$v$};
            \path[->]   (u) edge[bend left = 25] node[fill=white] {$\cdots$} node[above] {$kg$} (v)
                        (v) edge[bend left = 25] node[fill=white] {$\cdots$} node[below] {$\ell$} (u);
        \end{tikzpicture}
    \end{center}

    \textbf{Transitive:} \\
    Suppose $uPv$ and $vPw$, so there is a path of length $kg$ from $u$ to $v$ and a path of length $\ell g$ from $v$ to $w$. Then we have a path of length $kg+\ell g=(k+\ell)g$ from $u$ to $w$, so $uPw$.
    \begin{center}
        \begin{tikzpicture}[node distance = 8em]
            \node[main] (u) {$u$};
            \node[main] (v) [right = of u] {$v$};
            \node[main] (w) [right = of v] {$w$};
            \path[->]   (u) edge node[fill=white] {$\cdots$} node[above] {$kg$} (v)
                        (v) edge node[fill=white] {$\cdots$} node[above] {$\ell g$} (w);
        \end{tikzpicture}
    \end{center}
\end{proof}
Note that the reflexive and transitive proofs did not use the fact that $g$ was the GCD of all simple cycle lengths, or that the graph was strongly connected, but the symmetric proof required both.

The relation holding only guarantees the existence of a path whose length is a multiple of $g$; it does not say anything about whether the other paths' lengths are multiples of $g$. It would be challenging to work with a relation like this, where there are multiple paths from $u$ to $v$, each with different lengths modulo $g$. Fortunately, the next result shows that this is impossible, and demonstrates that we can exchange `there exists' with `for all' without changing anything.
\begin{prop}\label{existsForall}
    $uPv$ if and only if $g$ divides the lengths of all paths from $u$ to $v$.
\end{prop}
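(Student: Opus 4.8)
The plan is to reduce both directions to \Cref{decomposeSimple}, which tells us that $g$ divides the length of \emph{any} cycle (not just simple ones). The ``if'' direction is immediate: since $G$ is strongly connected there is at least one path from $u$ to $v$, and if $g$ divides the lengths of all such paths then in particular it divides the length of one of them, which is exactly the definition of $uPv$.

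For the ``only if'' direction, suppose $uPv$, so there is a path $\pi_1$ from $u$ to $v$ with $|\pi_1| = kg$ for some $k \geq 0$. Let $\pi_2$ be an arbitrary path from $u$ to $v$; I want $g \mid |\pi_2|$. First I would invoke strong connectivity to fix some path $\sigma$ from $v$ back to $u$, of length $m$. Concatenating $\pi_1$ with $\sigma$ gives a closed walk (a cycle in the sense used here, possibly with repeated vertices) from $u$ to itself of length $kg + m$, so \Cref{decomposeSimple} gives $g \mid kg + m$, hence $g \mid m$. Then concatenating $\pi_2$ with the same $\sigma$ gives a cycle of length $|\pi_2| + m$, so again $g \mid |\pi_2| + m$; combined with $g \mid m$ this yields $g \mid |\pi_2|$, as required.

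There is really no main obstacle here; the only thing to be careful about is that the concatenations produce non-simple cycles, which is precisely why \Cref{decomposeSimple} (as opposed to just the definition of $g$ as a gcd over simple cycles) is the tool to cite. I would also note explicitly, as the surrounding text emphasises, that this argument uses both hypotheses — strong connectivity (to get the return path $\sigma$) and the fact that $g$ divides all cycle lengths — mirroring the structure of the symmetry argument in the preceding proposition.
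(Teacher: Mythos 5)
Your proof is correct and follows essentially the same route as the paper: both directions reduce to \Cref{decomposeSimple} via a return path from $v$ to $u$. The only cosmetic difference is that the paper invokes the already-established symmetry of $P$ to obtain a return path of length $\ell g$ directly, whereas you re-derive that fact inline from the path $\pi_1$ of length $kg$; the substance is identical.
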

\begin{proof}
    \phantom{a}

    $(\impliedby)$ \\
    Suppose $g$ divides the lengths of all paths from $u$ to $v$. Since $G$ is strongly connected, there is a path from $u$ to $v$, which $g$ divides the length of. Thus, $uPv$.

    $(\implies)$ \\
    Suppose $uPv$, so there exists a path from $u$ to $v$ of length $kg$ for some $k\geq 0$. Since $P$ is symmetric, $vPu$, so there is a path from $v$ to $u$ of length $\ell g$ for some $\ell\geq 0$. Now, any path from $u$ to $v$ of length $m$ induces a cycle from $u$ to itself of length $m+\ell g$. Since $g$ divides the length of any cycle by \Cref{decomposeSimple}, we get that $g$ divides $m+\ell g$ and thus $g$ divides $m$.
    \begin{center}
        \begin{tikzpicture}[node distance = 12em]
            \node[main] (u) {$u$};
            \node[main] (v) [right = of u] {$v$};
            \path[->]   (u) edge[bend left = 35] node[fill=white] {$\cdots$} node[above] {$kg$} (v) edge node[fill=white] {$\cdots$} node[above] {$m$} (v)
                        (v) edge[bend left = 35] node[fill=white] {$\cdots$} node[below] {$\ell g$} (u);
        \end{tikzpicture}
    \end{center}
\end{proof}
Since $P$ is an equivalence relation, it divides the vertices into equivalence classes. That is, for $u\in V$ we denote the class of $u$ as:
\[ [u] \coloneqq \{v\in V : uPv\} \]
In other words, $[u]$ is the set of vertices such that paths from $u$ have lengths which are multiples of $g$. We can first show an equivalent definition which proves to be useful.
\begin{lem}\label{bigcupEquiv}
    \[ [u] = \bigcup_{i=0}^\infty\Gamma^{ig}(u) \]
\end{lem}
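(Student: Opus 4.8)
The plan is to prove both inclusions by unwinding the definitions, the key observation being that $\Gamma^k(u)$ is exactly the set of vertices reachable from $u$ by a walk of length $k$. First I would record this fact: by a straightforward induction on $k$ using \Cref{defn:Gamma}, a vertex $x$ lies in $\Gamma^k(u)$ if and only if there is a sequence $u = w_0, w_1, \dots, w_k = x$ with $(w_{i-1}, w_i) \in E$ for all $1 \le i \le k$. The base case is $\Gamma^0(u) = \{u\}$; for the inductive step, $x \in \Gamma^k(u)$ iff $(w, x) \in E$ for some $w \in \Gamma^{k-1}(u)$, and by the inductive hypothesis such a $w$ is the endpoint of a length-$(k-1)$ walk from $u$, which we extend by the edge $(w,x)$. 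Note that this walk need not be a simple path, which matches the convention in \Cref{pDefn} that the paths witnessing $P$ are not required to be simple.

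Given this characterisation, the $(\supseteq)$ inclusion is immediate: if $v \in \Gamma^{ig}(u)$ for some $i \ge 0$, then there is a walk of length $ig$ from $u$ to $v$, and since $g \mid ig$, this witnesses $uPv$, i.e.\ $v \in [u]$. Conversely, for $(\subseteq)$, suppose $v \in [u]$, so $uPv$; by \Cref{pDefn} there is a walk of length $k$ from $u$ to $v$ with $g \mid k$, say $k = ig$ for some $i \ge 0$. Then by the recorded fact $v \in \Gamma^{ig}(u)$, hence $v \in \bigcup_{i=0}^{\infty} \Gamma^{ig}(u)$. Combining the two inclusions gives the claimed equality.

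The proof has no real obstacle — it is pure definition-chasing — so the one point I would take care over is making the equivalence ``$x \in \Gamma^k(u)$'' $\iff$ ``there is a walk of length exactly $k$ from $u$ to $x$'' explicit as an auxiliary claim, rather than leaving it implicit, since the whole argument rests on it. One could instead route the argument through the composition identity $\Gamma^{a+b}(u) = \bigcup_{w \in \Gamma^a(u)} \Gamma^b(w)$, but the walk-characterisation is the most transparent. (As a minor aside, not needed for the set equality: by \Cref{assNeighbour} every $\Gamma^{ig}(u)$ is nonempty, since from any vertex one can always take a further step.)
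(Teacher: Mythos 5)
Your proof is correct and follows essentially the same route as the paper's: both directions are handled by the correspondence between membership in $\Gamma^{k}(u)$ and the existence of a (not necessarily simple) walk of length exactly $k$ from $u$, which the paper invokes with a brief appeal to "an inductive argument" and which you simply make explicit. No substantive difference.
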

\begin{proof}
    \phantom{a}

    $(\subseteq)$ \\
    Let $v\in[u]$, so $uPv$. Then there is a path of length $ig$ from $u$ to $v$ for some $i\geq 0$, so $v\in\Gamma^{ig}(u)$ and thus $v\in\bigcup_{i=0}^\infty\Gamma^{ig}(u)$.

    $(\supseteq)$ \\
    Let $v\in\bigcup_{i=0}^\infty\Gamma^{ig}(u)$. Then for some $i\geq 0$ we have $v\in\Gamma^{ig}(u)$. We can apply an inductive argument to show there is a path from $u$ to $v$ of length $ig$, which $g$ divides, so we are done.
\end{proof}
Now that we have these equivalence classes, it is about time to put them to use. We will show a couple of lemmas and build up to a key result that the maximum initial value of an equivalence class eventually spreads to the entire equivalence class. But there is one major problem. Observe what happens after one iteration of the process on our multi-coloured graph:
\begin{center}
    \begin{tikzpicture}
        \def\dist{5em};
        \node[main,redd] (1) at (120:\dist) {5};
        \node[main,bluu] (2) at (60:\dist) {2};
        \node[main,grnn] (3) at (0:\dist) {6};
        \node[main,redd] (4) at (-60:\dist) {5};
        \node[main,bluu] (5) at (-120:\dist) {2};
        \node[main,grnn] (6) at (180:\dist) {6};
        \path (5) -- node[pos=1,main,redd] (7) {5} +(-\dist, 0);
        \path (6) -- node[pos=1] {$t=7$} +(-7em, 0);
        \path[->]   (1) edge[redd,bend left = 20] (2)
                    (2) edge[bluu,bend left = 20] (3)
                    (3) edge[grnn,bend left = 20] (4)
                    (4) edge[redd,bend left = 20] (5)
                    (5) edge[bluu,bend left = 20] (6)
                    (6) edge[grnn,bend left = 20] (1)
                        edge[grnn,bend right = 20] (7)
                    (7) edge[redd,bend right = 20] (5);
    \end{tikzpicture} \\
    \begin{tikzpicture}
        \def\dist{5em};
        \node[main,redd] (1) at (120:\dist) {2};
        \node[main,bluu] (2) at (60:\dist) {6};
        \node[main,grnn] (3) at (0:\dist) {5};
        \node[main,redd] (4) at (-60:\dist) {2};
        \node[main,bluu] (5) at (-120:\dist) {6};
        \node[main,grnn] (6) at (180:\dist) {5};
        \path (5) -- node[pos=1,main,redd] (7) {2} +(-\dist, 0);
        \path (6) -- node[pos=1] {$t=8$} +(-7em, 0);
        \path (1) -- node[pos=1] {} +(0, \picgap);
        \path[->]   (1) edge[redd,bend left = 20] (2)
                    (2) edge[bluu,bend left = 20] (3)
                    (3) edge[grnn,bend left = 20] (4)
                    (4) edge[redd,bend left = 20] (5)
                    (5) edge[bluu,bend left = 20] (6)
                    (6) edge[grnn,bend left = 20] (1)
                        edge[grnn,bend right = 20] (7)
                    (7) edge[redd,bend right = 20] (5);
    \end{tikzpicture}
\end{center}
If we fix the red, blue and green colouring, then the values within each colouring misalign. In fact, they are only aligned every $g$ iterations. Thus, up until we prove the aforementioned key result, we will only consider the process at times $0,g,2g,\cdots$, so that the values are in their correct locations. After that, we will define an operator that allows us to view the process at intermediate times.
\begin{lem}\label{easyLem}
    For all $t\geq 0$ and $u\in V$, after $gt$ iterations $u$ only pulls from values in its equivalence class, i.e.\ $\Gamma^{gt}(u)\subseteq[u]$.
\end{lem}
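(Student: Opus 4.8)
The plan is to derive this almost immediately from Lemma~\ref{bigcupEquiv}, which already characterises the equivalence class as $[u] = \bigcup_{i=0}^{\infty}\Gamma^{ig}(u)$. The set $\Gamma^{gt}(u)$ is precisely the $i=t$ term of that union, so the inclusion $\Gamma^{gt}(u)\subseteq[u]$ holds with essentially nothing to prove; I would state it as a one-line corollary of Lemma~\ref{bigcupEquiv}. The only point that merits a sentence of justification is the bridge between the recursive neighbourhood $\Gamma^{gt}(u)$ of Definition~\ref{defn:Gamma} and the walk-based phrasing of the $g$-path relation in Definition~\ref{pDefn}.

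For a self-contained direct argument I would instead unfold the definitions. Fix $t\geq 0$ and $u\in V$, and let $v\in\Gamma^{gt}(u)$. By the same easy induction on $k$ that underlies the proof of Lemma~\ref{bigcupEquiv} --- namely that $v\in\Gamma^{k}(u)$ exactly when there is a walk from $u$ to $v$ of length $k$ --- there is a walk from $u$ to $v$ of length $gt$. Since $g$ divides $gt$, the relation $uPv$ holds (recalling that the path in Definition~\ref{pDefn} need not be simple), hence $v\in[u]$. As $v$ was arbitrary, $\Gamma^{gt}(u)\subseteq[u]$.

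To tie the graph-theoretic statement back to the ``pulls from'' wording one applies Lemma~\ref{inducto} with an arbitrary base time $t'$: $f_{t'+gt}(u)=\max_{v\in\Gamma^{gt}(u)}f_{t'}(v)$, so every vertex whose value influences $f_{t'+gt}(u)$ lies in $[u]$. There is no genuine obstacle here; the lemma is a bookkeeping step that packages the $gt$-fold composition of the update rule, so that when we later restrict to the sub-process observed at times $0,g,2g,\dots$ each equivalence class evolves by pulling only from within itself. The one thing to be careful about is not to conflate ``walk'' with ``simple path'': the relation $P$ is defined via arbitrary walks, which is exactly what makes the inclusion go through cleanly.
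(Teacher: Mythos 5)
Your proposal is correct and matches the paper's proof exactly: the paper also deduces the inclusion in one line as $\Gamma^{gt}(u)\subseteq\bigcup_{i=0}^{\infty}\Gamma^{ig}(u)=[u]$ by \Cref{bigcupEquiv}. The additional self-contained unfolding and the remark via \Cref{inducto} are fine but not needed.
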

\begin{proof}
    This immediately follows from \Cref{bigcupEquiv}:
    \[ \Gamma^{gt}(u)\subseteq\bigcup_{i=0}^\infty\Gamma^{ig}(u)=[u] \]
\end{proof}
The fact that after $gt$ steps $u$ can only reach values in its equivalence class followed quickly from the equivalent definition of $[u]$. We also know that $u$ can reach all values in its equivalence class, since the graph is strongly connected, but we would like to show that eventually the maximum spreads to the entire class; by \emph{eventually}, we mean that for \emph{all} large enough $t$ we have $\Gamma^{gt}(u)=[u]$, so we pull from the entire class on every iteration. At this point, every $g$ steps we will query the entire class, not just a subset of it. If all nodes do this, then the entire class will display the maximum value on every iteration.

This is not immediately clear; considering the above example, the topmost red node can reach the bottom-right red node in 3 steps, but it is not possible in 6. However, for all multiples of three greater than or equal to 9, it is possible, since we can use the little cycle of length 3 to adjust ourselves to the correct parity.

As another example, consider the following graph:
\begin{center}
    \begin{tikzpicture}
        \node[main] (u) {u};
        \path[->]   (u) edge[loop, out=120, in=60, looseness=15] node[sloped, fill=white] {$\cdots$} node[above] {6} (u) edge[loop, out=0, in=-60, looseness=15] node[sloped, fill=white] {$\cdots$} node[sloped, below] {10} (u) edge[loop, out=-120, in=180, looseness=15] node[sloped, fill=white] {$\cdots$} node[sloped, below] {15} (u);
    \end{tikzpicture}
\end{center}
Here, the edges labelled 6, 10 and 15 represent cycle of lengths 6, 10 and 15. Then $u$ can reach itself in 6, 10 and 15. It can reach itself in 12 by taking the cycle of length 6 twice, but it cannot reach itself in 13, since that would require taking an odd cycle, but 15 is too many. Calculating this manually, we can see that $u$ can reach itself in the following amounts:
\[ [0, 6, 10, 12, 15, 16, 18, 20, 21, 22, 24, 25, 26, 27, 28, \geq 30] \]
The smallest integer $T$ for which $u$ can reach itself for all $t\geq T$ is 30, which happens to be the LCM of 6, 10 and 15. We can reach $31,\cdots,34$ by taking the ways to reach $25,\cdots,28$ and simply appending an additional visit to the 6-cycle. To achieve 35, we simply take the 10-cycle twice and the 15-cycle once. Finally, to achieve 36 or higher, we take the known ways to achieve $30,\cdots,35$ and append many visits to the 6-cycle (this can be shown by induction).

These ideas serve as motivation for the following challenging lemma.
\begin{lem}\label{hardLem}
    There exists $T\geq 0$ such that $\Gamma^{gt}(u)=[u]$ for all $t\geq T$ and $u\in V$.
\end{lem}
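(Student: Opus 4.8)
The plan is to reduce \Cref{hardLem} to a number-theoretic statement about walk-lengths and then glue the pieces together using that $V$ is finite. Throughout, recall that by \Cref{defn:Gamma} the set $\Gamma^k(u)$ consists of the endpoints of walks of length exactly $k$ starting at $u$ (repeated vertices and edges allowed), and that by \Cref{easyLem} we already have $\Gamma^{gt}(u)\subseteq[u]$; only the reverse inclusion, for all large $t$, remains to be shown.

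Fix $u\in V$ and, for each $v$ with $uPv$, put $A(u,v)=\{k\in\Z_{\ge 0} : v\in\Gamma^{kg}(u)\}$, the set of $k$ such that there is a walk of length exactly $kg$ from $u$ to $v$. I first analyse $A(u,u)$. Concatenating two closed walks at $u$ shows $A(u,u)$ is closed under addition, and it contains $0$, so it is a sub-semigroup of $\Z_{\ge 0}$. (It is a subset of $\Z_{\ge 0}$ to begin with because, by \Cref{decomposeSimple}, whose proof already covers non-simple closed walks, the length of every closed walk at $u$ is divisible by $g$.) The crucial claim is that $\gcd A(u,u)=1$, equivalently that $g$ is exactly the gcd of the lengths of all closed walks based at $u$. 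That $g$ divides all of them has just been noted. Conversely, given any simple cycle $C$ and a vertex $w\in C$, strong connectivity supplies a walk from $u$ to $w$ of some length $a$ and a walk from $w$ to $u$ of some length $b$; then $u\rightsquigarrow w\rightsquigarrow u$ and $u\rightsquigarrow w\to(\text{once around }C)\to w\rightsquigarrow u$ are closed walks at $u$ of lengths $a+b$ and $a+b+|C|$, so the gcd of closed-walk lengths at $u$ divides their difference $|C|$. Hence that gcd divides $g=\gcd_C|C|$ as well, proving equality, so $\gcd A(u,u)=g/g=1$.

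By the elementary Chicken McNugget / Frobenius fact (a sub-semigroup of $\Z_{\ge 0}$ containing $0$ with gcd $1$ contains every sufficiently large integer --- Bézout yields a relation and closure under addition propagates it), there is $T_u$ with $A(u,u)\supseteq\{k\in\Z : k\ge T_u\}$. Now take any $v\in[u]$: by \Cref{pDefn} there is a walk from $u$ to $v$ of length $k_0 g$ for some $k_0\ge 0$, i.e.\ $k_0\in A(u,v)$; prepending a closed walk at $u$ of length $kg$ shows $k+k_0\in A(u,v)$ whenever $k\ge T_u$, so $A(u,v)\supseteq\{k\in\Z : k\ge T_u+k_0\}$. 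Set $T_{u,v}=T_u+k_0$. Since $V$ is finite there are only finitely many pairs $(u,v)$ with $uPv$; let $T=\max_{u,\,uPv}T_{u,v}$. For every $t\ge T$ and every $u\in V$, each $v\in[u]$ satisfies $t\in A(u,v)$, i.e.\ $v\in\Gamma^{gt}(u)$, so $[u]\subseteq\Gamma^{gt}(u)$; combined with \Cref{easyLem} this gives $\Gamma^{gt}(u)=[u]$, as required.

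I expect the main obstacle to be the middle step --- pinning down that the gcd of the closed-walk lengths through a fixed vertex is not merely a multiple of $g$ but equals $g$. The ``two nested closed walks, then subtract'' device, which is exactly where strong connectivity is used, is the heart of the argument; once that is in place, the semigroup/Frobenius reasoning and the finiteness argument for the uniform $T$ are routine.
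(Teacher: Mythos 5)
Your proof is correct, but it is organised quite differently from the one in the paper. The paper's argument is explicitly constructive: it picks $k$ simple cycles whose lengths $\ell_1,\dots,\ell_k$ realise the gcd $g$, routes a walk from $u$ through a representative vertex of each cycle to $v$, writes a B\'ezout relation $\sum a_i\ell_i=g$, massages the coefficients so that only $a_k$ can be negative, and then pads the walk with a large fixed number $K$ of laps around cycle $k$ so that the loop counts stay non-negative while the total length is incremented by $g$ at a time. You instead isolate a cleaner intermediate fact --- that the gcd of the lengths of \emph{all} closed walks based at any fixed $u$ equals $g$ (one direction from \Cref{decomposeSimple}, the other from the ``two nested closed walks through a simple cycle, then subtract'' trick, which is where strong connectivity enters) --- and then apply the standard numerical-semigroup/Frobenius fact to the additively closed set $A(u,u)$, after which translating by a single walk of length $k_0g$ to each $v\in[u]$ and maximising over the finitely many pairs is routine. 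Both proofs rest on B\'ezout and on the finiteness of $V$ for the uniform $T$; what your route buys is conceptual economy (it matches the standard treatment of period in Markov chain theory and avoids the bookkeeping with the sign-adjusted coefficients and the constant $K$), while the paper's route buys an explicit, if crude, formula for $T_{u,v}$, which the thesis later alludes to when discussing upper bounds on the convergence time. The only cosmetic quibble is your remark that $A(u,u)\subseteq\Z_{\ge 0}$ ``because'' of \Cref{decomposeSimple}: the containment holds by definition, and what \Cref{decomposeSimple} actually guarantees is that $\{kg:k\in A(u,u)\}$ exhausts all closed-walk lengths at $u$, which is what your gcd computation needs.
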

\begin{proof}
    We know $\Gamma^{gt}(u)\subseteq[u]$ for all $t\geq 0$ by \Cref{easyLem}, so we just need to find $T\geq 0$ such that $[u]\subseteq\Gamma^{gt}(u)$ for all $t\geq T$ and $u\in V$. Let $v\in[u]$; we will show that eventually $v\in\Gamma^{gt}(u)$. We actually need to show that this `eventually' is earlier than some fixed value $T$ which does not depend on $u$ or $v$, but we can ignore this problem for now and resolve it quickly at the end.

    Since $g$ is the GCD of finitely many elements, we can find $k$ cycles of lengths $\ell_1,\cdots,\ell_k$ such that $g=\gcd(\ell_1,\cdots,\ell_k)$. WLOG, assume $\ell_1\leq\cdots\leq\ell_k$. Let $v_i$ be any vertex on the $i$th cycle. Since $G$ is strongly connected, there is a path from $u$ to $v_1$, then from $v_1$ to $v_2$, and so on up to $v_k$, then to $v$.
    \begin{center}
        \begin{tikzpicture}[node distance = 5em, lp/.style = {out=120, in=60, looseness=12}]
            \node[main] (u) {$u$};
            \node[main] (v1) [right = of u] {$v_1$};
            \node[main] (v2) [right = of v1] {$v_2$};
            \node[main] (vk) [right = of v2] {$v_k$};
            \node[main] (v) [right = of vk] {$v$};

            \path   (u) -- node[pos=1] (uu) {} +(0, -2em)
                    (vk) -- node[pos=1] (vkvk) {} +(0, -2em);

            \path[->]   (u) edge[red] node[fill=white] {$\cdots$} (v1)
                        (v1) edge[red] node[fill=white] {$\cdots$} (v2) edge[lp] node[fill=white] {$\cdots$} node[above] {$\ell_1$} (v1)
                        (v2) edge[red] node[fill=white] {$\cdots$} (vk) edge[lp] node[fill=white] {$\cdots$} node[above] {$\ell_2$} (v2)
                        (vk) edge[blue] node[fill=white] {$\cdots$} node[above] {$\beta$} (v) edge[lp] node[fill=white] {$\cdots$} node[above] {$\ell_k$} (vk)
                        (uu) edge[red] node[below] {$\alpha$} (vkvk);
        \end{tikzpicture}
    \end{center}
    Let $\alpha$ be the sum of the lengths of the edges indicated in red and $\beta$ be the length of the blue path.
    
    An elementary number theory argument tells us that there are $a_1,\cdots,a_k\in\Z$ such that:
    \[ a_1\ell_1 + a_2\ell_2 + \cdots + a_{k-1}\ell_{k-1} + a_k\ell_k = g \]
    However, some of $a_1,\cdots,a_k$ may be negative, which will annoy us later on. Fortunately, we can perform a trick; we can add $\ell_k>0$ to $a_1$ and balance it out by subtracting $\ell_1>0$ from $a_k$:
    \[ (a_1+\ell_k)\ell_1 + a_2\ell_2 + \cdots + a_{k-1}\ell_{k-1} + (a_k-\ell_1)\ell_k = g \]
    Expanding this gives us $+\ell_k\ell_1-\ell_1\ell_k$ which cancels out. By doing this multiple times, and to all values $a_1,\cdots,a_{k-1}$ which are negative, we can `pool' all of the negativity into $a_k$, ensuring that $a_1,\cdots,a_{k-1}\geq 0$, and that $a_k$ is the only possible negative value.
    
    The key idea is that we can head straight from $u$ to $v$ in $\alpha+\beta$ steps, but we can also loop around the intermediate cycles a bunch of times to modify our parity. Since $a_1\ell_1+\cdots+a_k\ell_k=g$, this adds $g$ to our journey, and thus allows us to reach $v$ in $\alpha+\beta$ steps, $\alpha+\beta+g$ steps, $\alpha+\beta+2g$ steps and so on. The catch is that $a_k$ is negative, so we cannot `go $a_k$ times around cycle $k$' -- this is why we made $a_1,\cdots,a_{k-1}$ non-negative. To fix this, rather than start off with $\alpha+\beta$, we start off with $\alpha+\beta+K\ell_k$ for a large value of $K$. To increment by $g$, we increase the number of trips around cycles $1,\cdots,k-1$ and decrease the number of trips around cycle $k$, and since $K$ is large, this takes a while to dip below zero. In fact, we can get this to work $\ell_1/g$ times, and past that we can simply take our existing solutions and go around cycle $1$ an extra time. Let's now show this formally.
    
    It turns out the following value of $K$ is enough:
    \[ K \coloneqq \left(\frac{\ell_1}{g}-1\right)(-a_k) \]
    Note that $g$ divides $\ell_1$ and that $a_k\leq 0$, so $K$ is a non-negative integer.
    
    Now, since $v\in[u]$ we know $uPv$, so $g$ divides the length of any path from $u$ to $v$. This means $\alpha+\beta$ is a multiple of $g$. Now, observe that by going around $K$ times around cycle $k$, and not going around the other cycles, we get a path from $u$ to $v$ of length $\alpha+\beta+K\ell_k$. More generally, for all $0\leq j<\ell_1/g$, let's go $ja_i$ times around cycle $i$ for $1\leq i\leq k-1$ and $K+ja_k$ times around cycle $k$. First, we will make sure we are not going around a cycle a negative number of times, which is not allowed. We made $a_1,\cdots,a_{k-1}\geq 0$, so the only cycle which could be a problem is cycle $k$, which we go around $K+ja_k$ times. However, recall that $K\coloneqq(\ell_1/g-1)(-a_k)$, so:
    \begin{align*}
        K - \left(\frac{\ell_1}{g}-1\right)(-a_k) &\geq 0 \\
        K - j(-a_k) &\geq 0 \\
        K + ja_k &\geq 0
    \end{align*}
    Therefore, cycle $k$ is not a problem either. Now, observe what happens if we enact this plan for $0\leq j<\ell_1/g$:
    \begin{align*}
        \alpha+\beta+\left(\sum_{i=1}^{k-1}ja_i\ell_i\right)+(K+ja_k)\ell_k &= \alpha+\beta+j\left(\sum_{i=1}^ka_i\ell_i\right)+K\ell_k \\
        &= \alpha+\beta+K\ell_k+jg
    \end{align*}
    This gives us $\ell_1/g$ distinct paths from $u$ to $v$ with the following lengths:
    \begin{align*}
        &\alpha+\beta+K\ell_k \\
        &\alpha+\beta+K\ell_k+g \\
        &\alpha+\beta+K\ell_k+2g \\
        &\vdots \\
        &\alpha+\beta+K\ell_k+\left(\frac{\ell_1}{g}-1\right)g
    \end{align*}
    We can show by induction that for all $t\geq 0$ there is a path of this type (i.e.\ one which loops around the cycles like this) from $u$ to $v$ with length $\alpha+\beta+K\ell_k+gt$. We just showed the base case, which was for $0\leq t<\ell_1/g$. Suppose $t\geq\ell_1/g$. Then we can simply take the path for $t-\ell_1/g\geq 0$ and go around cycle $1$ an extra time. The length of this new path is:
    \[ \alpha+\beta+K\ell_k+g\left(t-\frac{\ell_1}{g}\right) + \ell_1 = \alpha+\beta+K\ell_k+gt \]
    Therefore, the result is true for all $t\geq 0$ by induction. Thus, if we let
    \[ T_{u,v} \coloneqq \left\lceil\frac{\alpha+\beta+K\ell_k}{g}\right\rceil \]
    then we have succeeded in showing that we can eventually reach $v$ from $u$ in $gt$ steps for all $t\geq T_{u,v}$. However, we want to show it for a $T$ independent of both $u$ and $v$. Fortunately, $|V|$ is finite, so there are only finitely many $u\in V$ and only finitely many $v\in[u]$; thus, we can define:
    \[ T \coloneqq \max_{u\in V}\left(\max_{v\in[u]}T_{u,v}\right) \]
    Then for all $t\geq T$ we know $t\geq T_{u,v}$, and thus $v\in[u]$ implies $v\in\Gamma^{gt}(u)$ for all $u\in V$ and $v\in[u]$. This means $[u]\subseteq\Gamma^{gt}(u)$ for all $t\geq T$ as desired.
\end{proof}
Fortunately, as a reward for proving that difficult lemma, the simplfied version of a key result for this chapter follows reasonably effortlessly:
\begin{prop}\label{maximumSpreadsWeak}
    Let $G=(V,E)$ be a strongly connected directed graph. Let $g$ be the GCD of all simple cycles as defined in \Cref{pDividesG}, let $P$ be the equivalence relation defined in \Cref{pDefn}, and let its equivalence classes be $[u]$ for $u\in V$. Then there exists $T\geq 0$ such that
    \[ f_{gt}(u)=\max_{v\in[u]}f_0(v) \]
    for all $t\geq T$ and $u\in V$.
\end{prop}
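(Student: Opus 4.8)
The plan is to obtain \Cref{maximumSpreadsWeak} as an essentially immediate consequence of \Cref{inducto} combined with \Cref{hardLem}, since all of the combinatorial difficulty has already been absorbed into the latter. The statement asks for a time after which querying $\Gamma^{gt}(u)$ is the same as querying the entire equivalence class $[u]$, and \Cref{hardLem} says precisely that the reachable set $\Gamma^{gt}(u)$ stabilises to $[u]$; so the remaining task is just to translate a statement about neighbourhoods into a statement about the valuation.

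First I would apply \Cref{inducto} with base time $t=0$ and step size $gt$, which gives $f_{gt}(u)=\max_{v\in\Gamma^{gt}(u)}f_0(v)$ for every $u\in V$ and every $t\geq 0$. Next I would invoke \Cref{hardLem} to produce a single $T\geq 0$ — crucially independent of $u$, as that lemma already delivers a uniform bound — such that $\Gamma^{gt}(u)=[u]$ for all $t\geq T$ and all $u\in V$. Substituting this set equality into the previous display yields $f_{gt}(u)=\max_{v\in[u]}f_0(v)$ for all $t\geq T$ and all $u\in V$, which is exactly the claim.

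There is no genuine obstacle left at this level: the real work was proving \Cref{hardLem} (controlling parities via the cycles of lengths $\ell_1,\dots,\ell_k$ and pooling the negativity into $a_k$), and, before that, establishing through \Cref{easyLem} and \Cref{bigcupEquiv} that $\Gamma^{gt}(u)$ never escapes $[u]$. The only points worth a sentence of care in the write-up are: (i) that the inclusion $\Gamma^{gt}(u)\subseteq[u]$ holds for \emph{all} $t$, so the maximum on the right is never an overestimate and the two sides truly coincide once equality of the sets takes effect; and (ii) that the $T$ we quote can be taken the same for every vertex, which is already guaranteed since \Cref{hardLem} was stated with a uniform $T$. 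With those remarks in place, the proof reduces to a two-line substitution.
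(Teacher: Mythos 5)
Your proposal is correct and follows exactly the paper's own argument: apply \Cref{inducto} at base time $0$ with step $gt$ to get $f_{gt}(u)=\max_{v\in\Gamma^{gt}(u)}f_0(v)$, then substitute $\Gamma^{gt}(u)=[u]$ from \Cref{hardLem} for $t\geq T$. Nothing is missing; the extra remarks about uniformity of $T$ and the inclusion $\Gamma^{gt}(u)\subseteq[u]$ are sound but not needed beyond what the two cited lemmas already provide.
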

\begin{proof}
    By \Cref{hardLem} there exists $T\geq 0$ such that $\Gamma^{gt}(u)=[u]$ for all $t\geq T$. Then by \Cref{inducto}:
    \begin{align*}
        f_{0+gt}(u) &= \max_{v\in\Gamma^{gt}(u)}f_0(v) \\
        f_{gt}(u) &= \max_{v\in[u]}f_0(v)
    \end{align*}
\end{proof}
This result is powerful because it grants us the values of the entire graph at states which are multiples of $g$ after convergence. Intuitively, we know what the states between these multiples of $g$ after convergence look like too; the values simply cycle around forever, before aligning with the equivalence classes again at the next multiple of $g$. The final few results of this section formalise this notion and clean up the corresponding loose ends.
\begin{center}
    \begin{tikzpicture}[bigg/.style = {scale = 1.2}]
        \def\dist{6.5em};
        \node[main,bigg,redd] (1) at (120:\dist) {5};
        \node[main,bigg,bluu] (2) at (60:\dist) {2};
        \node[main,bigg,grnn] (3) at (0:\dist) {6};
        \node[main,bigg,redd] (4) at (-60:\dist) {5};
        \node[main,bigg,bluu] (5) at (-120:\dist) {2};
        \node[main,bigg,grnn] (6) at (180:\dist) {6};
        \path (5) -- node[pos=1,main,bigg,redd] (7) {5} +(-\dist, 0);
        \path[->]   (1) edge[redd,bend left = 20] (2)
                    (2) edge[bluu,bend left = 20] (3)
                    (3) edge[grnn,bend left = 20] (4)
                    (4) edge[redd,bend left = 20] (5)
                    (5) edge[bluu,bend left = 20] (6)
                    (6) edge[grnn,bend left = 20] (1)
                        edge[grnn,bend right = 20] (7)
                    (7) edge[redd,bend right = 20] (5);
        \node[scale=4.5] {$\circlearrowright$};
        \node[scale=1.8] {\textbf{6}};
        \coordinate (A) at (-\dist-0.6em, -\dist/1.732-0.6em);
        \node[scale=3.25] at (A) {$\circlearrowleft$};
        \node[scale=1.3] at (A) {\textbf{3}};
    \end{tikzpicture}
\end{center}
First, we know that the blue equivalence class follows the red one, the green one follows the blue one, and the red one follows the green one. Let's formalise this by introducing a \emph{successor} function $S$, which takes in an equivalence class and outputs the next one along the chain. In this example $S(\text{red})=\text{blue}$, $S(\text{blue})=\text{green}$ and $S(\text{green})=\text{red}$. We can extend this for $r\geq 0$ to get a successor function $S^r$ which looks $r$ steps ahead in the chain; for example, $S^2(\text{red})=\text{green}$ and $S^6(\text{blue})=\text{blue}$. This will build off the alternate definition of $[u]$ from \Cref{bigcupEquiv}.
\begin{defn}[Successor function]\label{defn:succ}
    Let $G=(V,E)$ be a strongly connected directed graph, let $g$ be the GCD of all simple cycles as defined in \Cref{pDividesG} and let $u\in V$ be any node. Then the $r$th successor of $[u]$ is the equivalence class $S^r[u]$, where:
    \[ S^r[u] = \bigcup_{i=0}^\infty\Gamma^{ig+r}(u) \]
\end{defn}
Note that for $r=0$ we get $S^0[u]=[u]$ by \Cref{bigcupEquiv}. For general $r\geq 0$, it is not immediately clear that this will be equal to one of the equivalence classes of $V$, i.e.\ $[v]$ for some vertex $v\in V$. Fortunately, we can show this.
\begin{prop}\label{srIsEquivalenceClass}
    Let $u\in V$ be any vertex. Then $S^r[u]=[v]$ for all $v\in\Gamma^r(u)$.
\end{prop}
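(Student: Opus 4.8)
The plan is to prove the set equality $S^r[u]=[v]$ by double inclusion, using the alternate description $[v]=\bigcup_{j\ge 0}\Gamma^{jg}(v)$ from \Cref{bigcupEquiv}, the ``for all paths'' characterisation of $P$ from \Cref{existsForall}, and the fact from \Cref{decomposeSimple} that $g$ divides the length of every cycle. Fix $v\in\Gamma^r(u)$; by definition this means there is a path $\pi$ from $u$ to $v$ of length exactly $r$, and I keep $\pi$ fixed throughout the argument. The containment $S^r[u]\subseteq[v]$ will be the substantive direction; $[v]\subseteq S^r[u]$ is a one-line concatenation.

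First I would dispatch $[v]\subseteq S^r[u]$. Let $w\in[v]$; by \Cref{bigcupEquiv} there is $j\ge 0$ with $w\in\Gamma^{jg}(v)$, i.e.\ a path of length $jg$ from $v$ to $w$. Concatenating $\pi$ (length $r$) with this path gives a path from $u$ to $w$ of length $jg+r$, so $w\in\Gamma^{jg+r}(u)\subseteq\bigcup_{i\ge 0}\Gamma^{ig+r}(u)=S^r[u]$ by \Cref{defn:succ}.

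For $S^r[u]\subseteq[v]$ I would run a modular bookkeeping argument built from strong connectedness. Take $w\in S^r[u]$, so there is $i\ge 0$ and a path of length $ig+r$ from $u$ to $w$. Since $G$ is strongly connected, pick any path $\sigma$ from $v$ to $u$, of length $q$, say. Following $\sigma$ then $\pi$ produces a cycle at $v$ of length $q+r$, so $g\mid q+r$ by \Cref{decomposeSimple}, i.e.\ $q\equiv -r\pmod g$. Now follow $\sigma$ (length $q$) then the length-$(ig+r)$ path from $u$ to $w$: this is a path from $v$ to $w$ of length $q+ig+r\equiv -r+r\equiv 0\pmod g$. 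Hence $g$ divides the length of some path from $v$ to $w$, which by \Cref{existsForall} (or directly by the definition of $P$) gives $vPw$, i.e.\ $w\in[v]$. Combining the two inclusions yields $S^r[u]=[v]$, and since $v\in\Gamma^r(u)$ was arbitrary this proves the claim.

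The main obstacle is precisely the second inclusion: the naive attempt, concatenating $\pi$ with a path from $v$ to $w$, goes the wrong way, and one must instead route \emph{through} $u$ using a return path from $v$ to $u$ and then cancel its length modulo $g$ against $r$ via the cycle identity in \Cref{decomposeSimple}. Once that modular cancellation is in place the rest is routine path concatenation. A minor point worth stating explicitly in the write-up is that \Cref{existsForall} is what lets us conclude membership in $[v]$ from the existence of a single suitably-long path, so no genuinely new argument is needed there.
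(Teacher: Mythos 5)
Your proof is correct and follows essentially the same route as the paper: the easy inclusion is the same concatenation through $v$, and for the harder inclusion both arguments pick a return path from $v$ to $u$, invoke \Cref{decomposeSimple} on the resulting cycle to get $g\mid q+r$, and then route $v\to u\to w$ to produce a path of length divisible by $g$. No gaps; the only cosmetic difference is that you phrase the auxiliary cycle as based at $v$ rather than at $u$.
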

\begin{proof}
    Let $v\in\Gamma^r(u)$ be arbitrary. Then there is a path of length $r$ from $u$ to $v$. Expanding out with \Cref{defn:succ}, we need to show:
    \[ \bigcup_{i=0}^\infty\Gamma^{ig+r}(u) = [v] \]
    $(\supseteq)$ \\
    Let $w\in V$ and suppose $vPw$. Then $g$ divides the length of a path from $v$ to $w$; let the length be $ig$ for $i\geq 0$. Then there is a path of length $r+ig$ from $u$ to $w$ through $v$, so $w\in\Gamma^{ig+r}(u)$.
    \begin{center}
        \begin{tikzpicture}[node distance = 5em, bigg/.style={minimum size = 2.3em}]
            \node[main,bigg] (u) {$u$};
            \node[main,bigg] (v) [right = of u] {$v$};
            \node[main,bigg] (w) [right = of v] {$w$};

            \path[->]   (u) edge node[fill=white] {$\cdots$} node[above] {$r$} (v)
                        (v) edge node[fill=white] {$\cdots$} node[above] {$ig$} (w);
        \end{tikzpicture}
    \end{center}
    $(\subseteq)$ \\
    Let $w\in\Gamma^{ig+r}(u)$ for some $i\geq 0$, so there is a path of length $ig+r$ from $u$ to $w$. Since $G$ is strongly connected, there is a path from $v$ to $u$ of length $\ell$. Then there is a cycle from $u$ to itself via $v$ of length $r+\ell$, meaning $g$ divides $r+\ell$ by \Cref{decomposeSimple}. But there is a path of length $\ell+ig+r$ from $v$ to $w$ through $u$, and $g$ divides $ig+r+\ell$, so $vRw$.
    \begin{center}
        \begin{tikzpicture}[node distance = 5em, bigg/.style={minimum size = 2.3em}]
            \node[main,bigg] (u) {$u$};
            \node[main,bigg] (v) [right = of u] {$v$};
            \node[main,bigg] (w) [below = of u] {$w$};

            \path[->]   (u) edge[bend left=20] node[fill=white] {$\cdots$} node[above] {$r$} (v)
                            edge node[fill=white, sloped] {$\cdots$} node[right] {$ig+r$} (w)
                        (v) edge[bend left=20] node[fill=white] {$\cdots$} node[below] {$\ell$} (u);
        \end{tikzpicture}
    \end{center}
\end{proof}
Note that there exists $v\in\Gamma^r(u)$ for all $r\geq 0$ since $G$ is strongly connected, since $u$ lies on some cycle. We can traverse this cycle arbitrarily many times to get nodes of any arbitrarily high distance away.

We have been using $r$ as our notation, which due to our earlier analysis using the division algorithm, implicitly states that we can assume $0\leq r<g$. Indeed, writing $S^k$ for $k\geq g$ does not make much sense, as we would loop around the cycle multiple times. Formally, we can say that $S^g[u]=[u]$, or more generally:
\begin{lem}\label{canAddG}
    Let $u\in V$ be any vertex. Then $S^k[u] = S^{k+g}[u]$ for all $k\geq 0$.
\end{lem}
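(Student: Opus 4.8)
The plan is to unwind \Cref{defn:succ} and reduce everything to a single containment. Substituting $j=i+1$ in
$S^{k+g}[u]=\bigcup_{i=0}^\infty\Gamma^{ig+k+g}(u)$ gives $S^{k+g}[u]=\bigcup_{j=1}^\infty\Gamma^{jg+k}(u)$, while splitting off the $i=0$ term of $S^k[u]=\bigcup_{i=0}^\infty\Gamma^{ig+k}(u)$ gives $S^k[u]=\Gamma^k(u)\cup\bigcup_{j=1}^\infty\Gamma^{jg+k}(u)=\Gamma^k(u)\cup S^{k+g}[u]$. Thus $S^{k+g}[u]\subseteq S^k[u]$ holds automatically, and the whole lemma comes down to proving the reverse, which amounts to $\Gamma^k(u)\subseteq S^{k+g}[u]$.

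For that containment I would take an arbitrary $w\in\Gamma^k(u)$ and exhibit a walk from $u$ to $w$ whose length is $jg+k$ for some $j\geq 1$. Since $G$ is strongly connected, $u$ lies on some closed walk (go from $u$ to an out-neighbour, which exists by \Cref{assNeighbour}, and back); call its length $L\geq 1$. By \Cref{decomposeSimple}, $g\mid L$, so $L=mg$ with $m\geq 1$ (recall $g\geq 1$ as every simple cycle has length at least $1$). Concatenating this closed walk at $u$ with a fixed walk of length $k$ from $u$ to $w$ produces a walk of length $mg+k$ from $u$ to $w$, hence $w\in\Gamma^{mg+k}(u)\subseteq\bigcup_{j=1}^\infty\Gamma^{jg+k}(u)=S^{k+g}[u]$. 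Therefore $\Gamma^k(u)\subseteq S^{k+g}[u]$ and so $S^k[u]=S^{k+g}[u]$.

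The only mild obstacle is ensuring a positive-length closed walk through $u$ exists and applying \Cref{decomposeSimple} to it, but this is immediate from strong connectivity together with \Cref{assNeighbour}; the rest is just reindexing the unions. An alternative packaging, if one prefers to avoid explicit walks, is to invoke \Cref{srIsEquivalenceClass}: it gives $S^k[u]=[w]$ for any $w\in\Gamma^k(u)$ and $S^{k+g}[u]=[v]$ for any $v\in\Gamma^{k+g}(u)$, after which one checks $wPv$ for a suitable $v$ to conclude $[w]=[v]$. I would present the walk-concatenation version since it is the most self-contained, and I expect no step to require more than a line.
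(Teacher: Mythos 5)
Your proposal is correct and follows essentially the same route as the paper: the same reindexing of the unions reduces the lemma to $\Gamma^k(u)\subseteq\bigcup_{i\geq 1}\Gamma^{ig+k}(u)$, and then strong connectivity plus \Cref{decomposeSimple} supplies a closed walk of length a positive multiple of $g$ to pad a length-$k$ path. The only cosmetic difference is that you insert the closed walk at $u$ before travelling to $w$, whereas the paper builds the padded path as $u\to v\to u\to v$; both yield the same bound.
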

\begin{proof}
    Expanding out the left-hand side:
    \begin{align*}
        S^k[u] &= \bigcup_{i=0}^\infty\Gamma^{ig+k}(u) \\
        &= \Gamma^k(u) \cup \bigcup_{i=1}^\infty\Gamma^{ig+k}(u)
    \end{align*}
    Expanding out the right-hand side:
    \begin{align*}
        S^{k+g}[u] &= \bigcup_{i=0}^\infty\Gamma^{ig+k+g}(u) \\
        &= \bigcup_{i=0}^\infty\Gamma^{(i+1)g+k}(u) \\
        &= \bigcup_{i=1}^\infty\Gamma^{ig+k}(u)
    \end{align*}
    Thus, we need only show the following:
    \[ \Gamma^k(u) \subseteq \bigcup_{i=1}^\infty\Gamma^{ig+k}(u) \]
    Let $v\in\Gamma^k(u)$ be arbitrary, so there is a path from $u$ to $v$ of length $k$. Since $G$ is strongly connected, there is a path from $v$ back to $u$ of length $\ell$. Then there is a cycle from $u$ to itself via $v$ of length $k+\ell$, so $g$ divides $k+\ell$ by \Cref{decomposeSimple}, and thus $k+\ell=ig$ for some $i\geq 1$. But then there is a path from $u$ to $v$ of length $k+\ell+k=ig+k$.
    \begin{center}
        \begin{tikzpicture}[node distance = 5em, bigg/.style={minimum size = 2.3em}]
            \node[main,bigg] (u) {$u$};
            \node[main,bigg] (v) [right = of u] {$v$};

            \path[->]   (u) edge[bend left=20] node[fill=white] {$\cdots$} node[above] {$k$} (v)
                        (v) edge[bend left=20] node[fill=white] {$\cdots$} node[below] {$\ell$} (u);
        \end{tikzpicture}
    \end{center}
\end{proof}

This is a useful result, because it tells us that once we go past the first $g$ classes, we will just loop back to the start again. Inductively, we can reason that we can consider the exponent of $S$ modulo $g$ whenever working with it. The next result finally confirms our intuition that this sequence of equivalence classes separates our graph into distinct partitions.
\begin{lem}\label{sPartitions}
    Let $u\in V$ be any vertex. Then $[u]$, $S[u]$, $S^2[u]$, $\cdots$, $S^{g-1}[u]$ partitions $V$.
\end{lem}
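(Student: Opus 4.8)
The plan is to check the two defining properties of a partition separately. First I would show that $[u], S[u], \dots, S^{g-1}[u]$ cover $V$, and then that they are pairwise disjoint; since each of these sets is an equivalence class of $P$, the second point reduces to showing that no two of them coincide. Note that $g\geq 1$ is well defined (the graph is strongly connected and, by \Cref{assNeighbour}, $u$ lies on a cycle of positive length, so the set of simple cycle lengths is nonempty).

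For the covering property I would fix an arbitrary $v\in V$ and use strong connectivity to get a path from $u$ to $v$ of some length $m\geq 0$. Writing $m = ig + r$ by the division algorithm with $0\leq r<g$, we obtain $v\in \Gamma^{m}(u)=\Gamma^{ig+r}(u)\subseteq\bigcup_{i\geq 0}\Gamma^{ig+r}(u)=S^{r}[u]$ by \Cref{defn:succ}. Hence every vertex lies in some $S^{r}[u]$ with $0\leq r\leq g-1$, i.e.\ $V=\bigcup_{r=0}^{g-1}S^{r}[u]$; this part is pure bookkeeping. For disjointness, I would first record that each $S^{r}[u]$ is genuinely one of the equivalence classes: since $u$ lies on a cycle, $\Gamma^{r}(u)\neq\emptyset$ for every $r\geq 0$, so by \Cref{srIsEquivalenceClass} we may pick $v_{r}\in\Gamma^{r}(u)$ and write $S^{r}[u]=[v_{r}]$. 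Two classes of an equivalence relation are automatically either equal or disjoint, so it suffices to show $S^{i}[u]\neq S^{j}[u]$ whenever $0\leq i<j\leq g-1$. Suppose otherwise; then $[v_{i}]=[v_{j}]$, i.e.\ $v_{i}Pv_{j}$. Take, by strong connectivity, a path $Q$ from $v_{i}$ to $u$ of some length $\ell\geq 0$. Concatenating the length-$i$ path $u\to v_{i}$ with $Q$ yields a cycle at $u$ of length $i+\ell$, so $g\mid i+\ell$ by \Cref{decomposeSimple}; concatenating $Q$ with the length-$j$ path $u\to v_{j}$ yields a path from $v_{i}$ to $v_{j}$ of length $\ell+j$, and since $v_{i}Pv_{j}$, \Cref{existsForall} forces $g\mid \ell+j$. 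Subtracting gives $g\mid (i+\ell)-(\ell+j)=i-j$, which is impossible since $0<j-i<g$.

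The main obstacle is this last disjointness step, specifically manufacturing the contradiction from the assumption $S^{i}[u]=S^{j}[u]$. Everything hinges on the ``there exists $\Leftrightarrow$ for all'' principle of \Cref{existsForall} combined with the cycle-length divisibility of \Cref{decomposeSimple}: together they make the two divisibility relations $g\mid i+\ell$ and $g\mid \ell+j$ fall out immediately from the single auxiliary path $Q$, and then one subtraction closes the argument. The covering step and the remark that each $S^{r}[u]$ is a legitimate $P$-class are essentially routine consequences of strong connectivity and \Cref{srIsEquivalenceClass}.
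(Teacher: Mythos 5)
Your proof is correct and follows essentially the same cover-plus-disjointness strategy as the paper: the covering step is the same use of strong connectivity plus the division algorithm, and your disjointness step relies on the same mechanism of forming a cycle through $u$, invoking \Cref{decomposeSimple}, and subtracting two divisibility relations to force $g\mid i-j$ with $|i-j|<g$. The only difference is cosmetic — the paper takes a common element $v$ of $S^k[u]$ and $S^\ell[u]$ and works with two $u\to v$ paths directly, whereas you route the same computation through \Cref{srIsEquivalenceClass} and \Cref{existsForall} via class representatives.
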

\begin{proof}
    We first show that the above sets cover $V$, then show that they are disjoint.

    Let $v\in V$. Since $G$ is strongly connected, there is a path from $u$ to $v$. Using the division algorithm, we can write its length as $kg+r$ for some $k\geq 0$ and $0\leq r<g$. Then $v\in\Gamma^{kg+r}(u)$, so $v\in S^{kg+r}[u]$. By repeatedly applying \Cref{canAddG} we can show that:
    \[ S^r[u]=S^{r+g}[u]=S^{r+2g}[u]=\cdots=S^{kg+r}[u] \]
    Thus, $v\in S^r[u]$ for some $0\leq r<g$, so the sets cover $V$.

    Now, suppose we have $v\in V$ such that $v\in S^k[u]$ and $v\in S^\ell[u]$ for $0\leq k,\ell<g$. Then there are paths from $u$ to $v$ of lengths $ig+k$ and $jg+\ell$. Since $G$ is strongly connected, there is a path from $v$ back to $u$ of length $m\geq 0$. This can be used to form cycles from $u$ to itself through $v$ of lengths $ig+k+m$ and $jg+\ell+m$, so $g$ divides $ig+k+m$ and $jg+\ell+m$ by \Cref{decomposeSimple}, and therefore divides $k+m$ and $\ell+m$. Thus, $g$ divides their difference $(k+m)-(\ell+m)=k-\ell$. But since $-(g-1)\leq k-\ell\leq(g-1)$, this means $k-\ell=0$, so $k=\ell$, and thus the above sets are disjoint.
    \begin{center}
        \begin{tikzpicture}[node distance = 12em]
            \node[main] (u) {$u$};
            \node[main] (v) [right = of u] {$v$};
            \path[->]   (u) edge[bend left = 35] node[fill=white] {$\cdots$} node[above] {$ig+k$} (v) edge node[fill=white] {$\cdots$} node[above] {$jg+\ell$} (v)
                        (v) edge[bend left = 35] node[fill=white] {$\cdots$} node[below] {$m$} (u);
        \end{tikzpicture}
    \end{center}
\end{proof}
With these useful lemmas, we are now ready to conclude the chapter with some key results.
\begin{thm}\label{maximumSpreadsStrong}
    Let $G=(V,E)$ be a strongly connected directed graph. Let $g$ be the GCD of all simple cycles as defined in \Cref{pDividesG}, let $P$ be the equivalence relation defined in \Cref{pDefn}, and let its equivalence classes be $[u]$ for $u\in V$. Let $S^k$ denote the successor function on equivalence classes. Then there exists $T\geq 0$ such that
    \[ f_{gt+r}(u)=\max_{v\in S^r[u]}f_0(v) \]
    for all $t\geq T$, $0\leq r<g$ and $u\in V$.
\end{thm}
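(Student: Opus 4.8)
The plan is to reduce everything to two facts already in hand: \Cref{inducto}, which reads off $f_{gt+r}(u)$ as a maximum over an iterated neighbourhood, and \Cref{hardLem}, which pins down that neighbourhood once enough steps have elapsed. First I would apply \Cref{inducto} with base time $0$ and exponent $gt+r$, obtaining $f_{gt+r}(u)=\max_{v\in\Gamma^{gt+r}(u)}f_0(v)$. The theorem then collapses to the purely combinatorial claim that $\Gamma^{gt+r}(u)=S^r[u]$ for all sufficiently large $t$, all $0\le r<g$, and all $u\in V$.

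To prove that claim I would peel the last $gt$ steps off the first $r$ steps using the semigroup property of neighbourhoods, $\Gamma^{a+b}(u)=\bigcup_{w\in\Gamma^a(u)}\Gamma^b(w)$, which follows from \Cref{defn:Gamma} by an easy induction. This gives $\Gamma^{gt+r}(u)=\bigcup_{w\in\Gamma^r(u)}\Gamma^{gt}(w)$. Now take $T$ to be the uniform threshold from \Cref{hardLem}; for every $t\ge T$ each inner set satisfies $\Gamma^{gt}(w)=[w]$, so $\Gamma^{gt+r}(u)=\bigcup_{w\in\Gamma^r(u)}[w]$. Finally \Cref{srIsEquivalenceClass} tells us $[w]=S^r[u]$ for every $w\in\Gamma^r(u)$, so this union is just $S^r[u]$, and combining with the first step yields $f_{gt+r}(u)=\max_{v\in S^r[u]}f_0(v)$. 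The same $T$ works for all $u$ and all $r$ at once: \Cref{hardLem} already delivers it uniformly in $u$, and there are only finitely many residues $r<g$.

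The one loose end to verify is that $\Gamma^r(u)\neq\emptyset$, so the union above is not vacuous; this holds because in a strongly connected graph every vertex lies on a cycle, which can be traversed enough times to produce a walk of length exactly $r$ out of $u$ — precisely the remark made after \Cref{srIsEquivalenceClass}. I do not expect any genuine obstacle here: all the difficulty of this section was front-loaded into \Cref{hardLem}, and what remains is bookkeeping with $\Gamma$, the semigroup identity, and the definition of $S^r$.
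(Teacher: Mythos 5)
Your proposal is correct and follows essentially the same route as the paper: both decompose the $gt+r$ steps into $r$ steps followed by $gt$ steps, apply \Cref{hardLem} (the paper does so via \Cref{maximumSpreadsWeak}) to the $gt$ part, and use \Cref{srIsEquivalenceClass} to identify the result with $S^r[u]$. The only difference is cosmetic — you carry out the decomposition at the level of the neighbourhood sets $\Gamma^{gt+r}(u)=\bigcup_{w\in\Gamma^r(u)}\Gamma^{gt}(w)$, whereas the paper works directly with the valuations via $f_{gt+r}(u)=\max_{v\in\Gamma^r(u)}f_{gt}(v)$ — and your explicit check that $\Gamma^r(u)\neq\emptyset$ matches the remark the paper makes after \Cref{srIsEquivalenceClass}.
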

\begin{proof}
    By applying \Cref{inducto} and then \Cref{maximumSpreadsWeak}:
    \begin{align*}
        f_{gt+r}(u) &= \max_{v\in\Gamma^r(u)}f_{gt}(v) \\
        &= \max_{v\in\Gamma^r(u)}\left(\max_{w\in[v]}f_0(w)\right)
    \end{align*}
    But $[v]=S^r[u]$ by \Cref{srIsEquivalenceClass} since $v\in\Gamma^r(u)$. Thus:
    \begin{align*}
        f_{gt+r}(u) &= \max_{v\in\Gamma^r(u)}\left(\max_{w\in S^r[u]}f_0(w)\right) \\
        &= \max_{w\in S^r[u]}f_0(w)
    \end{align*}
\end{proof}
An alternative, simpler (but arguably less intuitive) statement is the following:
\begin{thm}\label{unintuitiveMaximumSpreadsStrong}
    There exists $T\geq 0$ such that:
    \[ f_t(u)=\max_{v\in S^t[u]}f_0(v) \]
    for all $t\geq T$ and $u\in V$.
\end{thm}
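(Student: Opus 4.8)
The plan is to obtain this as an immediate repackaging of \Cref{maximumSpreadsStrong}, which already describes $f_t(u)$ when $t$ is written in the form $gt'+r$ with $0\le r<g$; all that is missing is the translation between the exponent $t$ of the successor operator and the residue $r$. So first I would take $T'$ to be the threshold supplied by \Cref{maximumSpreadsStrong}, so that $f_{gt'+r}(u)=\max_{v\in S^r[u]}f_0(v)$ holds for all $t'\ge T'$, all $0\le r<g$ and all $u\in V$, and I would define $T\coloneqq gT'$.

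Next, given any $t\ge T$, I would use the division algorithm to write $t=gt'+r$ with $0\le r<g$, and check that $t'\ge T'$: if instead $t'\le T'-1$ then $t=gt'+r\le g(T'-1)+(g-1)=gT'-1<gT'$, contradicting $t\ge gT'$. Hence \Cref{maximumSpreadsStrong} applies to the decomposition $t=gt'+r$ and gives $f_t(u)=\max_{v\in S^r[u]}f_0(v)$. It then remains to note that $S^t[u]=S^r[u]$: since $t=gt'+r$, repeatedly applying \Cref{canAddG} (which says $S^k[u]=S^{k+g}[u]$) peels off copies of $g$ from the exponent,
\[ S^t[u]=S^{gt'+r}[u]=S^{g(t'-1)+r}[u]=\cdots=S^{r}[u], \]
and combining this with the previous identity yields $f_t(u)=\max_{v\in S^t[u]}f_0(v)$ for all $t\ge T$ and $u\in V$, as claimed.

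Since the argument is purely bookkeeping, I do not expect a genuine obstacle; the only spot warranting a line of care is confirming that $S^t[u]$ is a meaningful object for arbitrary exponents $t$ and not just for $0\le t<g$. This is already secured by \Cref{defn:succ} together with \Cref{srIsEquivalenceClass} and the remark following it, which guarantee that $\Gamma^t(u)\neq\emptyset$ (as $u$ lies on a cycle in the strongly connected graph) and that $S^t[u]=[v]$ for any $v\in\Gamma^t(u)$; alternatively one can simply observe that \Cref{canAddG} itself is stated for all $k\ge 0$ and so presupposes this well-definedness. If one wished to avoid even mentioning \Cref{maximumSpreadsStrong}, the same proof could be run directly from \Cref{maximumSpreadsWeak} and \Cref{inducto}, but routing through \Cref{maximumSpreadsStrong} is the shortest path.
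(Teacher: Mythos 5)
Your proposal is correct and follows essentially the same route as the paper: write $t=gk+r$ by the division algorithm, apply \Cref{maximumSpreadsStrong}, and use \Cref{canAddG} to identify $S^t[u]$ with $S^r[u]$. Your explicit choice $T=gT'$ and the check that $t'\geq T'$ merely make precise a bookkeeping detail the paper leaves implicit.
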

\begin{proof}
    Using the division algorithm, write $t=gk+r$ for $0\leq r<g$. By repeatedly applying \Cref{canAddG} we can show that $S^{gk+r}[u]=S^r[u]$. Then, by \Cref{maximumSpreadsStrong}:
    \begin{align*}
        f_{gk+r}(u) &= \max_{v\in S^r[u]}f_0(v) \\
        &= \max_{v\in S^{gk+r}[u]}f_0(v)
    \end{align*}
\end{proof}
These results allow us to predict the convergent states of any graph by choosing any $u$ and finding the $g$ equivalence classes $[u]$, $S[u]$, $S^2[u]$, $\cdots$, $S^{g-1}[u]$, which can be achieved using a BFS, similar to the undirected case, and getting the shortest distance $d$ (i.e.\ component $\Lambda_d$) from $u$ for all other vertices $v$. Then, $v\in S^r[u]$ where $d=kg+r$ (i.e.\ we take the distance modulo $g$).

We can get the maxima of these equivalence classes and consider the sequence, then check for repeats in order to find the period. We will show this with some examples; consider the following graph:
\begin{center}
    \begin{tikzpicture}
        \def\dist{6.5em};
        \node[main] (0) {2};
        \node[main] (1) at (105:\dist) {5};
        \node[main] (2) at (75:\dist) {2};
        \node[main] (3) at (45:\dist) {4};
        \node[main] (4) at (15:\dist) {3};
        \node[main] (5) at (-15:\dist) {3};
        \node[main] (6) at (-45:\dist) {1};
        \node[main] (7) at (-75:\dist) {5};
        \node[main] (8) at (-105:\dist) {2};
        \node[main] (9) at (-135:\dist) {4};
        \node[main] (10) at (-165:\dist) {6};
        \node[main] (11) at (165:\dist) {1};
        \node[main] (12) at (135:\dist) {3};
        \foreach \from/\to in {1/2, 2/3, 3/4, 4/5, 5/6, 6/7, 7/8, 8/9, 9/10, 10/11, 11/12, 12/1, 7/0, 0/1}
            \path[->] (\from) edge (\to);
    \end{tikzpicture}
\end{center}
First, notice that there is a simple cycle of length 12 around the outside, and a simple cycle of length 8 on the right side. There are no simple cycles on the left side, so the GCD is 4. Thus, we can let $u$ be the top-left node (with value $5$), and paint every node in its class $[u]$ red. Then, we can paint $S[u]$ blue, $S^2[u]$ green and $S^3[u]$ yellow:
\begin{center}
    \begin{tikzpicture}
        \def\dist{6.5em};
        \node[main,yelw] (0) {2};
        \node[main,redd] (1) at (105:\dist) {5};
        \node[main,bluu] (2) at (75:\dist) {2};
        \node[main,grnn] (3) at (45:\dist) {4};
        \node[main,yelw] (4) at (15:\dist) {3};
        \node[main,redd] (5) at (-15:\dist) {3};
        \node[main,bluu] (6) at (-45:\dist) {1};
        \node[main,grnn] (7) at (-75:\dist) {5};
        \node[main,yelw] (8) at (-105:\dist) {2};
        \node[main,redd] (9) at (-135:\dist) {4};
        \node[main,bluu] (10) at (-165:\dist) {6};
        \node[main,grnn] (11) at (165:\dist) {1};
        \node[main,yelw] (12) at (135:\dist) {3};
        \foreach \from/\to in {1/2, 5/6, 9/10}
            \path[->] (\from) edge[redd] (\to);
        \foreach \from/\to in {2/3, 6/7, 10/11}
            \path[->] (\from) edge[bluu] (\to);
        \foreach \from/\to in {3/4, 7/8, 11/12, 7/0}
            \path[->] (\from) edge[grnn] (\to);
        \foreach \from/\to in {4/5, 8/9, 12/1, 0/1}
            \path[->] (\from) edge[yelw] (\to);
    \end{tikzpicture}
\end{center}
Now we can get the maximum of each equivalence class:
\begin{center}
    \begin{tabular}{cc}
        \toprule
        Class & Maximum \\
        \midrule
        Red & 5 \\
        Blue & 6 \\
        Green & 5 \\
        Yellow & 3 \\
        \bottomrule
    \end{tabular}
\end{center}
Now, \Cref{maximumSpreadsStrong} tells us that, after convergence, at all times which are multiples of 4, all red nodes will be 5, all blue will be 6, all green will be 5 and all yellow will be 3. For times $4t+1$, nodes will instead be the maximum of the class after them in the cycle, so all red will be 6, all blue will be 5, all green will be 3 and all yellow will be 5.
\begin{center}
    \begin{tikzpicture}
        \def\dist{6.5em};
        \node[main,yelw] (0) {3};
        \node[main,redd] (1) at (105:\dist) {5};
        \node[main,bluu] (2) at (75:\dist) {6};
        \node[main,grnn] (3) at (45:\dist) {5};
        \node[main,yelw] (4) at (15:\dist) {3};
        \node[main,redd] (5) at (-15:\dist) {5};
        \node[main,bluu] (6) at (-45:\dist) {6};
        \node[main,grnn] (7) at (-75:\dist) {5};
        \node[main,yelw] (8) at (-105:\dist) {3};
        \node[main,redd] (9) at (-135:\dist) {5};
        \node[main,bluu] (10) at (-165:\dist) {6};
        \node[main,grnn] (11) at (165:\dist) {5};
        \node[main,yelw] (12) at (135:\dist) {3};
        \foreach \from/\to in {1/2, 5/6, 9/10}
            \path[->] (\from) edge[redd] (\to);
        \foreach \from/\to in {2/3, 6/7, 10/11}
            \path[->] (\from) edge[bluu] (\to);
        \foreach \from/\to in {3/4, 7/8, 11/12, 7/0}
            \path[->] (\from) edge[grnn] (\to);
        \foreach \from/\to in {4/5, 8/9, 12/1, 0/1}
            \path[->] (\from) edge[yelw] (\to);
        \node at (0, -\dist-2em) {Time: $4t$};
    \end{tikzpicture}
    \begin{tikzpicture}
        \def\dist{6.5em};
        \node[main,yelw] (0) {5};
        \node[main,redd] (1) at (105:\dist) {6};
        \node[main,bluu] (2) at (75:\dist) {5};
        \node[main,grnn] (3) at (45:\dist) {3};
        \node[main,yelw] (4) at (15:\dist) {5};
        \node[main,redd] (5) at (-15:\dist) {6};
        \node[main,bluu] (6) at (-45:\dist) {5};
        \node[main,grnn] (7) at (-75:\dist) {3};
        \node[main,yelw] (8) at (-105:\dist) {5};
        \node[main,redd] (9) at (-135:\dist) {6};
        \node[main,bluu] (10) at (-165:\dist) {5};
        \node[main,grnn] (11) at (165:\dist) {3};
        \node[main,yelw] (12) at (135:\dist) {5};
        \foreach \from/\to in {1/2, 5/6, 9/10}
            \path[->] (\from) edge[redd] (\to);
        \foreach \from/\to in {2/3, 6/7, 10/11}
            \path[->] (\from) edge[bluu] (\to);
        \foreach \from/\to in {3/4, 7/8, 11/12, 7/0}
            \path[->] (\from) edge[grnn] (\to);
        \foreach \from/\to in {4/5, 8/9, 12/1, 0/1}
            \path[->] (\from) edge[yelw] (\to);
        \path (10) -- node[pos=1] {} +(-\picgap, 0);
        \node at (0, -\dist-2em) {Time: $4t+1$};
    \end{tikzpicture} \\
    \begin{tikzpicture}
        \def\dist{6.5em};
        \node[main,yelw] (0) {6};
        \node[main,redd] (1) at (105:\dist) {5};
        \node[main,bluu] (2) at (75:\dist) {3};
        \node[main,grnn] (3) at (45:\dist) {5};
        \node[main,yelw] (4) at (15:\dist) {6};
        \node[main,redd] (5) at (-15:\dist) {5};
        \node[main,bluu] (6) at (-45:\dist) {3};
        \node[main,grnn] (7) at (-75:\dist) {5};
        \node[main,yelw] (8) at (-105:\dist) {6};
        \node[main,redd] (9) at (-135:\dist) {5};
        \node[main,bluu] (10) at (-165:\dist) {3};
        \node[main,grnn] (11) at (165:\dist) {5};
        \node[main,yelw] (12) at (135:\dist) {6};
        \foreach \from/\to in {1/2, 5/6, 9/10}
            \path[->] (\from) edge[redd] (\to);
        \foreach \from/\to in {2/3, 6/7, 10/11}
            \path[->] (\from) edge[bluu] (\to);
        \foreach \from/\to in {3/4, 7/8, 11/12, 7/0}
            \path[->] (\from) edge[grnn] (\to);
        \foreach \from/\to in {4/5, 8/9, 12/1, 0/1}
            \path[->] (\from) edge[yelw] (\to);
        \path (1) -- node[pos=1] {} +(0, \picgap);
        \node at (0, -\dist-2em) {Time: $4t+2$};
    \end{tikzpicture}
    \begin{tikzpicture}
        \def\dist{6.5em};
        \node[main,yelw] (0) {5};
        \node[main,redd] (1) at (105:\dist) {3};
        \node[main,bluu] (2) at (75:\dist) {5};
        \node[main,grnn] (3) at (45:\dist) {6};
        \node[main,yelw] (4) at (15:\dist) {5};
        \node[main,redd] (5) at (-15:\dist) {3};
        \node[main,bluu] (6) at (-45:\dist) {5};
        \node[main,grnn] (7) at (-75:\dist) {6};
        \node[main,yelw] (8) at (-105:\dist) {5};
        \node[main,redd] (9) at (-135:\dist) {3};
        \node[main,bluu] (10) at (-165:\dist) {5};
        \node[main,grnn] (11) at (165:\dist) {6};
        \node[main,yelw] (12) at (135:\dist) {5};
        \foreach \from/\to in {1/2, 5/6, 9/10}
            \path[->] (\from) edge[redd] (\to);
        \foreach \from/\to in {2/3, 6/7, 10/11}
            \path[->] (\from) edge[bluu] (\to);
        \foreach \from/\to in {3/4, 7/8, 11/12, 7/0}
            \path[->] (\from) edge[grnn] (\to);
        \foreach \from/\to in {4/5, 8/9, 12/1, 0/1}
            \path[->] (\from) edge[yelw] (\to);
        \path (10) -- node[pos=1] {} +(-\picgap, 0);
        \node at (0, -\dist-2em) {Time: $4t+3$};
    \end{tikzpicture}
\end{center}
From this we can gather that the period is 4. If the sequence were 5, 6, 5, 6 instead, then the period would be 2, since cycling this sequence by two gets us back to 5, 6, 5, 6. If the sequence were 5, 5, 5, 5 instead, then the period would be 1 and all values would become 5.

With these ideas in mind, suppose we are given a graph of nodes and edges, but no values for the nodes. If $g$ is the GCD of all simple cycle lengths, then the factors of $g$ are the possible values for the period. Suppose $p$ is a divisor of $g$, and we want to find an initial valuation that eventually ends up in an absorbing state with period $p$. All we need to do is pick any $u\in V$ and ensure that the sequence in the table is precisely $p$ long, which can be achieved as follows:
\begin{center}
    \begin{tabular}{ccc}
        \toprule
        Class && Maximum \\
        \midrule
        $[u]$ && $0$ \\
        $S[u]$ && $1$ \\
        $\vdots$ && $\vdots$ \\
        $S^{p-1}[u]$ && $p-1$ \\
        $S^p[u]$ && $0$ \\
        $S^{p+1}[u]$ && $1$ \\
        $\vdots$ && $\vdots$ \\
        $S^{2p-1}[u]$ && $p-1$ \\
        &$\vdots$& \\
        &$\vdots$& \\
        $S^{g-p}[u]$ && $0$ \\
        $S^{g-p+1}[u]$ && $1$ \\
        $\vdots$ && $\vdots$ \\
        $S^{g-1}[u]$ && $p-1$ \\
        \bottomrule
    \end{tabular}
\end{center}
To construct this, we could BFS from a node to find the distance $\text{dist}_u$ to all other nodes. Then we can set $f_0(v)=\text{dist}_v\mod p$ for all $v\in V$.
\begin{thm}\label{constructivePeriodP}
    Let $G=(V,E)$ be a strongly connected directed graph, and let $g$ be the GCD of all simple cycle lengths as defined in \Cref{pDividesG}. Then for all $p\geq 1$ that divide $g$, there is an initial valuation $f_0$ that is immediately in an absorbing state, and has period $p$.
\end{thm}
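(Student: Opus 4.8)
The plan is to take the valuation described just above the theorem and verify it directly. First I would fix any $u\in V$ and use \Cref{sPartitions} to write $V$ as the disjoint union of the classes $[u]=S^0[u],\,S^1[u],\dots,S^{g-1}[u]$, so that each $v\in V$ belongs to a unique $S^{r(v)}[u]$ with $0\le r(v)<g$ (equivalently, $r(v)$ is the common residue modulo $g$ of all path lengths from $u$ to $v$). Since $p\mid g$, I then define the initial valuation by $f_0(v):=r(v)\bmod p$.

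Next I would establish the one structural fact the argument needs: if $v\in S^{r}[u]$ and $w\in\Gamma(v)$, then $w\in S^{(r+1)\bmod g}[u]$. This follows because $v\in\Gamma^{ig+r}(u)$ for some $i\ge0$ by the formula in \Cref{defn:succ}, hence $w\in\Gamma^{ig+r+1}(u)\subseteq S^{r+1}[u]$, and $S^{r+1}[u]=S^{(r+1)\bmod g}[u]$ by \Cref{canAddG}. Because the classes partition $V$, \emph{every} out-neighbour of $v$ lies in the same class $S^{(r+1)\bmod g}[u]$, and therefore carries the same $f_0$-value, namely $((r+1)\bmod g)\bmod p=(r+1)\bmod p$ — and this is exactly where the hypothesis $p\mid g$ enters. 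I would also note that $\Gamma(v)\ne\emptyset$ for every $v$, since strong connectivity (together with \Cref{assNeighbour} in the degenerate one-vertex case) forces every vertex to have an out-edge, so the maximum defining $f_1(v)$ is taken over a nonempty set.

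With that in hand the rest is a short induction: $f_1(v)=\max_{w\in\Gamma(v)}f_0(w)=(r(v)+1)\bmod p$, and in general $f_t(v)=(r(v)+t)\bmod p$ for all $t\ge0$. Hence $f_{t+p}=f_t$ for every $t\ge0$; in particular $f_0$ lies on a cycle of the (deterministic) Markov chain, which is precisely an absorbing state, so the convergence time is $0$. Finally, to see the period is exactly $p$ rather than a proper divisor, I would observe that the period is the least $q\ge1$ with $f_q=f_0$, and evaluating at $v=u$ (where $r(u)=0$) forces $q\equiv0\pmod p$, so $q=p$.

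I do not expect a genuine obstacle here: the combinatorial heavy lifting has already been done in \Cref{sPartitions}, \Cref{defn:succ} and \Cref{canAddG}. The only points that need care are the bookkeeping between the ``$\bmod g$'' class labels and the ``$\bmod p$'' node values (the single place where $p\mid g$ is used), the appeal to nonemptiness of $\Gamma(v)$, and the sharpness check at the end — none of which is deep.
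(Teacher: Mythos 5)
Your proposal is correct and follows essentially the same route as the paper: the same valuation $f_0(v)=r(v)\bmod p$ built from the partition of \Cref{sPartitions}, the same use of the successor structure (you track it one step at a time via induction where the paper jumps $p$ steps at once with \Cref{inducto}), and the same minimality check at $u$ using $p\mid g$. No gaps.
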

\begin{proof}
    Choose any $u\in V$. Then $[u],S[u],\cdots,S^{g-1}[u]$ partition $V$ by \Cref{sPartitions}, so for all $v\in V$ there is a unique $0\leq k<g$ such that $v\in S^k[g]$; let $f_0(v)=k\mod p$. By \Cref{inducto} we know $f_p(v)=\max_{w\in\Gamma^p(v)}f_0(w)$. Since there is a path from $u$ to $v$ of length $k$ and a path from $v$ to $w$ of length $p$, there is a path from $u$ to $w$ of length $k+p$, and thus $w\in S^{k+p}[u]$. Even if we have to subtract $g$ from $k+p$ to get it into the range $[0,g)$, it will still become $k\mod p$ when we take it modulo $p$, since $p$ divides $g$. Thus, $f_p(v)=k\mod g=f_0(v)$. Since this is true for all vertices, $f_p=f_0$ as valuations, meaning the period is at most $p$.
    
    Now, suppose $f_k=f_0$ for some $1\leq k<p\leq g$. Then $f_k(u)=f_0(u)=0$ (since $u\in[u]$), so there exists $v\in\Gamma^k(u)$ such that $f_0(v)=0$. Such a $v$ is in $S^k[u]$ for the partition, since $k<g$. But since $f_0(v)=0$, we know $k\mod p=0$, and thus $p$ divides $k$. This means $p$ is the smallest integer $k\geq 1$ that satisfies $f_k=f_0$, and thus $p$ is the period.
\end{proof}
Thus, we have found that the period must divide the GCD of all simple cycle lengths, and built examples of initial valuations that achieve all such periods. This was the main challenge for this chapter; in the next section we briefly touch upon the case of general directed graphs, then analyse this model through various experiments.

\section{General Directed Graphs}
Now that we have seen what happens for strongly connected graphs, consider the process of dividing a general graph into its SCCs and forming a topologically sorted meta-graph, described in \Cref{sec:SCCs}. 
\begin{center}
    \begin{tikzpicture}
        \node[main, scclabel] (A) {\huge\textbf{A}};
        \node[main, scclabel] (B) [right = of A] {\huge\textbf{B}};
        \node[main, scclabel] (C) [right = of B] {\huge\textbf{C}};
        \node[main, scclabel] (D) [right = of C] {\huge\textbf{D}};

        \path[->, scclabel] (A) edge (B)
                            (B) edge (C) edge[bend left = 35] (D)
                            (C) edge (D);
    \end{tikzpicture}
\end{center}
Suppose a suffix of the SCCs have each converged into predictable states with known periods. Then it may be possible to show that the rightmost SCC which has not yet converged will also converge, since the behaviour of all SCCs it points to are known. However, the period of this SCC may depend on the LCM of its children, and may thus be exponential. In fact, we can construct an example where the local period of a particular node is exponentially large in the number of nodes. To do this, we will have one node point to all other SCCs.
\begin{center}
    \begin{tikzpicture}
        \node[main, scclabel] (A) {\huge\textbf{A}};
        \node[main, scclabel] (B) [right = of A] {\huge\textbf{B}};
        \node[main, scclabel] (C) [right = of B] {\huge\textbf{C}};
        \node[main, scclabel] (D) [right = of C] {\huge\textbf{D}};

        \path[->, scclabel] (A) edge (B) edge[bend left=35] (C) edge[out=45] (D);
    \end{tikzpicture}
\end{center}
We can make the LCM of the periods of these SCCs as large as possible by having the periods be the first $k$ prime numbers. Thus, for $k\geq 2$, if we let $p_1,\cdots,p_k$ be the first $k$ prime numbers, so notably $p_1=2$ and $p_2=3$, then we get the following construction:
\begin{center}
    \begin{tikzpicture}
        \node[main] (u) {$p_k$};
        \node[main] (v11) at (12em, 5.5em) {$0$};
        \node[main] (v12) [right = of v11] {$2$};
        \node[main] (v21) at (12em, 2em) {$0$};
        \node[main] (v22) [right = of v21] {$0$};
        \node[main] (v23) [right = of v22] {$3$};
        \node       (v31) at (12em, -1.5em) {$\cdots$};
        \node[main] (vk1) at (12em, -5.5em) {$0$};
        \node[main] (vk2) [right = of vk1] {$0$};
        \node (vk3) [right = of vk2] {$\cdots$};
        \node[main] (vk4) [right = of vk3] {$p_k$};

        \def\bend{25}
        \path[->]   (u) edge (v11) edge (v21) edge (vk1) edge (v31)
                    (v11) edge (v12)
                    (v12) edge[bend right=\bend] (v11)
                    (v21) edge (v22)
                    (v22) edge (v23)
                    (v23) edge[bend right=\bend] (v21)
                    (vk1) edge (vk2) -- node(vk2p5) {} (vk4)
                    (vk2) edge (vk3)
                    (vk3) edge (vk4)
                    (vk4) edge[bend right=\bend] (vk1);
        
        \path (vk2p5)   -- node[pos=1] {$\vdots$} +(0, 4.5em)
                        -- node[pos=1] {$\underbrace{\phantom{wwwwwwwwwwwwwwwwwwwwww}}$} ++(0, -1.5em) -- node[pos=1] {$p_k$} +(0, -1.5em);
    \end{tikzpicture}
\end{center}
Let $u$ be the node with no incoming edges on the far left of this graph, and recall from \Cref{inducto} that $f_t(u)=\max_{v\in\Gamma^t(u)}f_0(v)$. The key feature of this graph reveals itself when considering $\Gamma^t(u)$ for various values of $t$. Any path from $u$ must choose an SCC to move to, and once this choice is made, there is only one possibility for how the path develops.
\begin{lem}\label{piEqualsPi}
    For $k\geq 2$, let $G$ be the above graph, and let $f_0$ be the above valuation. Then $f_{p_i}(u)=p_i$ for all $1\leq i\leq k$.
\end{lem}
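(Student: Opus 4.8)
The plan is to apply Lemma~\ref{inducto} with $t=0$, which gives
$f_{p_i}(u)=\max_{v\in\Gamma^{p_i}(u)}f_0(v)$, and then to pin down $\Gamma^{p_i}(u)$ exactly. The key structural feature of this graph is that $u$ has no incoming edge, so no walk can ever return to it, and every cycle node has out-degree $1$; hence any walk of length $p_i$ starting at $u$ is forced to take one step into the entry node of some cycle $j$ (of length $p_j$) and then follow the unique walk of length $p_i-1$ around that cycle. I will index the cycle-$j$ nodes by position, with position $0$ being the node reached in one step from $u$ and the distinguished node carrying value $p_j$ sitting at position $p_j-1$ (the last node before the wrap-around edge back to position~$0$). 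With this bookkeeping, a length-$p_i$ walk entering cycle $j$ ends at the node at position $(p_i-1)\bmod p_j$, so $\Gamma^{p_i}(u)$ consists of exactly one node from each cycle, namely the position-$\big((p_i-1)\bmod p_j\big)$ node of cycle $j$, for $1\le j\le k$; note $u\notin\Gamma^{p_i}(u)$, so the value $p_k$ sitting at $u$ itself is irrelevant.

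Next I would read off the $f_0$-values. By construction the only nonzero node of cycle $j$ is the distinguished one, at position $p_j-1$. So the cycle-$j$ representative in $\Gamma^{p_i}(u)$ has value $p_j$ precisely when $(p_i-1)\bmod p_j=p_j-1$, i.e.\ when $p_i\equiv 0\pmod{p_j}$, i.e.\ when $p_j\mid p_i$; otherwise its value is $0$. Since $p_1,\dots,p_k$ are distinct primes, $p_j\mid p_i$ holds iff $i=j$. Thus among the $k$ elements of $\Gamma^{p_i}(u)$ exactly one --- the representative of cycle $i$ --- has value $p_i$, and all others have value $0$; since $p_i\ge 2>0$, the maximum is $p_i$, giving $f_{p_i}(u)=p_i$.

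I do not expect a genuine obstacle here: the proof is essentially a one-line invocation of Lemma~\ref{inducto} together with the elementary fact that distinct primes do not divide one another. The only part requiring care is making the walk-structure claim for $\Gamma^{p_i}(u)$ fully rigorous --- i.e.\ arguing cleanly that every length-$p_i$ walk from $u$ has the ``choose a cycle, then loop'' shape --- which follows from $u$ being a source and every cycle node having out-degree~$1$, and that the position-$0$/position-$(p_j-1)$ indexing is consistent with the ``$-1$'' appearing in the exponent because the step $u\to(\text{position }0)$ consumes one of the $p_i$ steps.
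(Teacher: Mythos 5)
Your proposal is correct and follows essentially the same route as the paper: invoke Lemma~\ref{inducto} at $t=0$ so that $f_{p_i}(u)=\max_{v\in\Gamma^{p_i}(u)}f_0(v)$, observe that the node labelled $p_i$ is reachable in exactly $p_i$ steps, and rule out larger contributions. You actually prove slightly more than the paper does --- pinning down $\Gamma^{p_i}(u)$ exactly and using primality to show only cycle $i$ contributes a nonzero value, whereas the paper merely notes that the only larger labels $p_j$ ($j>i$) sit at distance $p_j>p_i$ and are therefore unreachable --- but the underlying argument is the same.
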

\begin{proof}
    By \Cref{inducto}, we know $f_{p_i}(u)$ is the maximum label in the initial valuation that can be reached in $p_i$ steps. It is possible to reach $p_i$ in $p_i$ steps by construction of the graph. The only values greater than $p_i$ are $p_j$ for $j>i$, but these are unreachable in only $p_i$ steps.
\end{proof}
With this lemma, we can prove that the local period of $u$ is exponential in the number of nodes.
\begin{thm}\label{ahadExample}
    For $k\geq 2$, let $G$ be the above graph, and let $f_0$ be the above valuation. Then the local period of $u$ is $P\coloneqq p_1p_2\cdots p_k$.
\end{thm}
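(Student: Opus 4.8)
The plan is to compute the orbit $f_0(u),f_1(u),f_2(u),\ldots$ of the source $u$ completely explicitly and then simply read off its minimal period. The starting point is Lemma~\ref{inducto}, which gives $f_t(u)=\max_{v\in\Gamma^t(u)}f_0(v)$, so the whole problem reduces to understanding which vertices lie in $\Gamma^t(u)$.

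First I would pin down $\Gamma^t(u)$ for $t\ge 1$. Since $u$ has no self-loop and its only outgoing edges lead to the entry vertices of the $k$ cycles, and since each cycle vertex has a unique outgoing edge, a directed walk of length $t$ from $u$ is forced: its first step selects a cycle $i$, and the remaining $t-1$ steps march around that cycle. Labelling the entry vertex of cycle $i$ as position $0$ and the vertex carrying value $p_i$ as position $p_i-1$, this shows $\Gamma^t(u)=\{\,x_{i,(t-1)\bmod p_i}:1\le i\le k\,\}$, where $x_{i,j}$ denotes the position-$j$ vertex of cycle $i$. The position-$(p_i-1)$ vertex is hit exactly when $(t-1)\equiv p_i-1$, i.e.\ when $p_i\mid t$. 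Since the only vertices carrying value $p_i$ are this designated vertex of cycle $i$ and, when $i=k$, the source $u$ itself --- which is never in $\Gamma^t(u)$ for $t\ge 1$ --- and since no value exceeds $p_k$, this yields
\[ f_t(u)=\max\bigl(\{0\}\cup\{\,p_i:1\le i\le k,\ p_i\mid t\,\}\bigr)\qquad(t\ge 1), \]
and at $t=0$ we have $f_0(u)=p_k$, which matches the same formula because every $p_i$ divides $0$. (Lemma~\ref{piEqualsPi} is precisely the case $t=p_i$ of this.) Write $g(t)$ for the right-hand side. Because each $p_i$ divides $P\coloneqq p_1p_2\cdots p_k$, we have $p_i\mid t\iff p_i\mid(t+P)$, hence $g(t+P)=g(t)$ for all $t$; so the orbit of $u$ is purely periodic with period dividing $P$, and in particular the local period $\pi$ of $u$ divides $P$ by the elementary division-algorithm argument used in the proof of Proposition~\ref{cyclePeriodDivide}.

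It then remains to rule out that $\pi$ is a proper divisor of $P$. Since $P$ is squarefree, a proper divisor omits some prime $p_i$, so $p_i\nmid\pi$. Fix such an $i$ and note $g(p_i)=p_i$, as the only prime dividing $p_i$ is $p_i$. If $\pi$ were a period of the orbit we would have $g(p_i+\pi)=g(p_i)=p_i$; but $g(s)=p_i$ forces $p_i\mid s$, so $p_i\mid p_i+\pi$ and hence $p_i\mid\pi$, a contradiction. Therefore every $p_i$ divides $\pi$, and since the $p_i$ are distinct primes, $P\mid\pi$; combined with $\pi\mid P$ this gives $\pi=P$, as claimed.

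The only genuinely fiddly step I expect is the bookkeeping in the second paragraph: identifying exactly which vertex of each cycle sits in $\Gamma^t(u)$, and checking that $f_t(u)$ really equals $0$ when no $p_i$ divides $t$ --- this needs just one value-$0$ vertex to be reachable in $t$ steps, which is guaranteed by the length-$3$ cycle and uses the hypothesis $k\ge 2$. Everything after the explicit formula for $f_t(u)$ is routine elementary number theory.
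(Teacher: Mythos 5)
Your proof is correct and follows essentially the same route as the paper's: both reduce to \Cref{inducto} and exploit the fact that a walk from $u$ is forced once it chooses a cycle, so that the value $p_i$ is visible from $u$ exactly at times divisible by $p_i$; the divisibility argument forcing $p_i\mid\pi$ for every $i$ is identical to the paper's use of \Cref{piEqualsPi}. Your closed-form expression $f_t(u)=\max(\{0\}\cup\{p_i:p_i\mid t\})$ is a slightly cleaner packaging of what the paper proves as the set identity $\Gamma^{P+t}(u)=\Gamma^t(u)$, but the underlying argument is the same.
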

\begin{proof}
    First, we show that $f_{P+t}(u)=f_t(u)$ for all $t\geq 0$. When $t=0$, by going to $p_k$ in $p_k$ steps, then taking the loop of size $p_k$ another $p_1\cdots p_{k-1} - 1$ times, we arrive at $p_k$ with a total path length of $P$. By \Cref{inducto}, we get that $f_P(u)=p_k=f_0(u)$. To show that $f_{P+t}(u)=f_t(u)$ for $t>0$, we can write it using \Cref{inducto}:
    \[ \max_{v\in\Gamma^{P+t}(u)}f_0(v) = \max_{v\in\Gamma^t(u)}f_0(v) \]
    Thus, it suffices to show that $\Gamma^{P+t}(u) = \Gamma^t(u)$.

    $(\subseteq)$ \\
    Let $v\in\Gamma^t(u)$, so there is a path of length $t$ from $u$ to $v$. Since $t>0$, this path enters an SCC, which is a cycle of size $p_i$. Since $P$ is a multiple of $p_i$, we can loop around this cycle more to get a path to $v$ of length $P+t$, and thus $v\in\Gamma^{P+t}(u)$.

    $(\supseteq)$ \\
    Let $v\in\Gamma^{P+t}(u)$, so there is a path of length $P+t$ from $u$ to $v$. Since $t>0$, such a path reaches $v$ after only $t$ steps, then loops around some number of times. By not looping around, we get a path of length $t$ from $u$ to $v$, and thus $v\in\Gamma^t(u)$.

    This shows that the local period is at least $P$; we will show that the local period is at most $P$. Suppose the local period is $\ell$. Then by \Cref{piEqualsPi}, we know $f_{\ell+p_i}(u)=f_{p_i}(u)=p_i$ for all $1\leq i\leq k$. Then by \Cref{inducto} there is a path of length $\ell+p_i$ from $u$ to a node $v$ such that $f_0(v)=p_i$. But since $\ell+p_i>0$, we cannot have $v\neq u$, and thus $v$ is uniquely determined. The only way for such a path to exist is if $\ell+p_i$ is a multiple of $p_i$, meaning $p_i$ divides $\ell$. Since this is true for all $1\leq i\leq k$, and $p_1,\cdots,p_k$ are pairwise coprime, we conclude that $P=p_1\cdots p_k$ divides $\ell$ as desired.
\end{proof}
Since the local period of a particular node is $P$, the global period of the entire graph must be at least $P$, as the global period is the least value $\ell$ for which $f_\ell(v)=f_0(v)$ for all $v\in V$, whereas local periods need only hold for a fixed $v$. In fact, a similar argument can show that the global period is also $P$.

The number of nodes in the above example is upper bounded by $k^3$, since the $k$th prime is at most $k^2-1$ for $k\geq 2$ by the prime number theorem, which says that the bound grows asymptotically to $k\log k$. Then since the smallest prime is $2$, the period $P\geq 2^k$, so the result is exponential. Thus, the dominating factor in determining the period of a graph is the relationship between the periods of subsequent components, which can be exponential in the worst case.

For this graph, the convergence time is actually zero -- the initial state presented above is eventually reached again. It remains an open problem whether it is possible to construct an example with exponential convergence time, but one possibility might be to have two such exponential processes, one of which eventually overtakes the other, permanently altering the state of the graph. It's not immediately clear how this would be possible, so it may also be possible to show that the convergence time cannot be exponential, even for graphs which are not strongly connected; this would be an interesting topic for future research.

\section{Experimental Analysis}
Prior to this section, we found the convergent states of the graph in both the directed and undirected cases. This led to interesting theoretical results involving equivalence classes and colouring, but in practice, almost all social networks or randomly generated graphs will converge with period 1. This is because in social networks, there are almost certainly friends mutually following each other, or the graph is undirected; either way, there is almost certainly a cycle of length 2. Then it is highly unlikely that the graph is bipartite, since there is almost certainly a group of friends all following each other (creating a 3-clique). In randomly generated graphs with a large number of nodes, these structures almost always come up at least once, since the probability that they do not is exponential in the number of nodes. 

Thus, there is almost always an even and an odd cycle. Since the GCD of simple cycles is almost certainly 1, assuming the graph is strongly connected, \Cref{unintuitiveMaximumSpreadsStrong} tells us that the process takes the form of the maximum value spreading to the rest of the graph, and eventually, the entire graph will be equal to this value. Rather than test that the randomly generated graphs are strongly connected, it is easier to simply check if $f_t$ is a valuation that has already been seen before; the first occurrence of this is the convergence time, the only thing we need to measure in this section. Since the process is deterministic, the convergence time is fixed, no matter how many times the process is run. This means we do not have to worry about the variance within a particular graph and valuation, only the random generation of graphs and valuations itself.

\subsection{\ER graphs and social networks}
The results discussed above mean that we can consider a `worst case initial valuation', where there is a unique maximum and it must spread to all other nodes. For simplicity, we can assume all other nodes have initial value 0 and there is a unique node with maximum 1. We also test random valuations where the values are uniformly randomly chosen from $-b$ to $b$ for some $b$.

The below experiments were run for 100 trials each.
\begin{mytable}
    \centering
    \begin{tabular}{clcccccc}
        \toprule
        \multicolumn{2}{c}{\multirow{2}{*}{Graph}} & \multirow{2}{*}{\shortstack{Average \\ Diameter}\vspace{-0.5em}} & \multirow{2}{*}{\shortstack{Unique \\ Maximum}\vspace{-0.5em}} & \multicolumn{4}{c}{Range ($b$)} \\
        \cmidrule{5-8}
        &&&& 10 & 100 & 1\,000 & 10\,000 \\
        \midrule
        \multirow{4}{*}{\rotatebox[origin=c]{90}{\Erdos}} & 10 & 2.92 & 2.56 & 2.45 & 2.46 & 2.61 & 2.52 \\
        & 100 & 2.00 & 2.00 & 1.75 & 2.00 & 2.00 & 2.00 \\
        & 1\,000 & 2.00 & 2.00 & 1.00 & 1.97 & 2.00 & 2.00 \\
        & 10\,000 & 2.00 & 2.00 & 1.00 & 1.00 & 2.00 & 2.00 \\
        \multirow{4}{*}{\rotatebox[origin=c]{90}{\;\Barabasi}} & 10 & 2.00 & 1.00 & 1.00 & 1.00 & 1.00 & 1.00 \\
        & 100 & 3.00 & 2.86 & 2.12 & 2.69 & 2.79 & 2.80 \\
        & 1\,000 & 4.00 & 3.52 & 2.02 & 3.01 & 3.37 & 3.47 \\
        & 10\,000 & 5.00 & 4.04 & 2.14 & 3.00 & 3.86 & 3.99 \\
        \multicolumn{2}{c}{Twitter*} & 7 & 12.08 & 8.69 & 9.69 & 10.38 & 11.45 \\
        \multicolumn{2}{c}{Wikipedia*} & 6 & 7.74 & 6.01 & 6.70 & 7.51 & 7.71 \\
        \multicolumn{2}{c}{Facebook} & 8 & 6.41 & 2.15 & 4.03 & 5.74 & 6.37 \\
        \multicolumn{2}{c}{Twitch} & 10 & 6.98 & 5.00 & 5.64 & 6.41 & 6.93 \\
        \bottomrule
    \end{tabular}
    \caption{Convergence time of various graphs in the synchronous maximum model}
\end{mytable}
Initially, when running experiments, the average convergence time of Wikipedia* with the unique maximum valuation came out to be 2.35, far lower than expected. The vast majority of cases had a convergence time of 1, and sometimes the convergence time was much higher, around 8 or 9. The reason for this lies in the topological structure of Wikipedia*; the largest SCC has 1\,300 nodes, which only accounts for a quarter of the nodes, and this SCC is an absorbing component, i.e.\ it has no exit edges. Furthermore, 3\,849 of the 3\,858 nodes not in the largest SCC had no incoming edges, meaning that if they were selected to be the unique maximum, it would fizzle out in one step and produce a convergence time of 1. To fix this, the unique maximum was set to only occur in the largest SCC. With this change, the unique maximum does appear to be worse than all randomly generated valuations, aside from the small case of the \ER graph of 10 nodes, but this is likely due to randomness.

The \BA graphs had a slightly higher convergence time than the \ER graphs, but this is expected since they have a higher diameter. Since \BA graphs are undirected, we actually know the convergence time is strongly related to the diameter (\Cref{undPeriod1or2}). We did not show a similar result for the directed case; the challenging proof of \Cref{hardLem} was constructive and provided an upper bound on the convergence time, and with care, this upper bound can be reduced to be in the order of the LCM of all simple cycle sizes. However, this bound is usually enormous compared to the diameter of the graph, and the results of the experiments are closer to the diameter than the LCM, so this is omitted in the theoretical analysis. This means it may be possible to extend the proof techniques that showed the $O(D)$ bound in the undirected case to show the same bound for the directed case.

Diameter is not the sole factor affecting the convergence time however, since the Twitch graph has a higher diameter than Twitter*, but a lower convergence time. The Twitch graph has a smaller number of nodes and edges, but still a high diameter between the two that are furthest apart, whereas the Twitter graph is much larger in both nodes and edges, but more interconnected. Counter-intuitively, it appears that this more interconnected graph has a larger convergence time.

Higher values of $b$ corresponded to larger ranges, which increase the probability of the maximum being unique (in the case of Wikipedia*, we are talking about the maximum of the largest SCC, rather than the overall maximum, as this is what all values converge to). As this probability increases, the average convergence time gets closer and closer to the worst case where there is a unique maximum.

For \ER graphs, the probability that the diameter is not 2 decreases exponentially to zero, meaning with just 100 nodes, the diameter is almost certainly 2. Not only this, but it is also almost always the case that any node $u$ is a part of some diameter, meaning from any node $u$ there exists a node $v$ which is a distance of 2 apart. If this were not the case, then $u$ would be directly connected to all its neighbours, which is exponentially unlikely, and even if such a node somehow existed, there is even a chance that it does not have the unique maximum. Thus, the convergence time is almost always 2 for such graphs. The only mitigating factor would be if $b$ is too small, so there are a large enough number of values already equal to the maximum that they can spread within just one step, in which case it is almost certain that the convergence time will be 1.

\subsection{Asynchronous model}
This section analysed the synchronous maximum model through a theoretical and experimental lens, but without much change to the code, we can also analyse the asynchronous maximum model. This is a model with its own complex set of behaviours, which could prove to be a topic of interest for future research. Each iteration, a random node is selected, and its value is set to the maximum of its neighbours, overriding its own value. There are two key behaviours of this model:
\begin{itemize}
    \item The maximum may disappear. For example, if there is a unique maximum and it is chosen to update, then it will be overwritten.
    \item For strongly connected graphs, the entire graph eventually assumes the same value. In general, each SCC eventually assumes the same value. In either case, the period is 1.
\end{itemize}
Below is a quick proof sketch as to why the second behaviour holds, following the structure of \Cref{chap:load-balancing} and \Cref{chap:max-model}.

The argument in \Cref{markovFiniteSM} also applies in the asynchronous case, so the Markov chain is finite. Then clearly the valuations where all values are identical are absorbing states of size 1, since no further updates can be made from these states. We can show that no matter our valuation and its SCC, there is always a way to exit this SCC and make all values equal.

First, suppose $G$ is strongly connected and has vertices $v_1,\cdots,v_n$. WLOG, suppose $v_n$ currently has the maximum value. Then there is a path from $v_1$ to $v_2$, then to $v_3$ and so on up to $v_n$ (which may repeat vertices); call this path $w_1,\cdots,w_k$. If we update in the sequence $w_{k-1},w_{k-2},\cdots,w_1$, then we can show inductively that all values become maximum. This is because $w_k=v_n$ currently has the maximum value, and on each step, we make $w_i$ take the maximum from $w_{i+1}$ (even if it is already the maximum). Then, if $G$ is a general graph, we can perform this process for all SCCs in topological order from the absorbing states to the topmost ones, but for each SCC, we first update all nodes with exit edges to other SCCs to introduce those values.

Since Twitter* and Wikipedia* are not strongly connected, we keep track of which nodes are equal to the maximum of their neighbours with an array. When no nodes can update, the state has converged. For the dense \ER graphs, this process is slow, but since they are almost certainly strongly connected (especially for large $n$), we can instead check if all values are the same, since this occurs at the same moment. We also keep track of the final value achieved as a percentage of the original maximum. This is unique for all graphs except Twitter* (including Wikipedia*, since only one SCC is non-trivial, so all others will take its value). For Twitter*, we take the final value of its largest SCC. The experiment ran with the valuation being a random shuffling of the vertex labels $1,\cdots,n$. This is the worst case (without factoring in where to place the worst case values), since the vertex labels are all different, so no two can combine and spread throughout the graph; in other words, when a value spreads to the entire graph, we know it must have originated from one vertex only, and there was no way to stop this from happening.

The below experiments were run for 100 trials.
\begin{mytable}
    \centering
    \begin{tabular}{clcccc}
        \toprule
        \multicolumn{2}{c}{\multirow{2}{*}{Graph}} & \multirow{2}{*}{\shortstack{Average\\Diameter}\vspace{-0.3em}} & \multirow{2}{*}{\shortstack{Conv\vspace{0.2em}\\Time}\vspace{-0.3em}} & \multirow{2}{*}{\shortstack{Final\vspace{0.2em}\\Value (\%)}\vspace{-0.3em}} \\
        \\
        \midrule
        \multirow{4}{*}{\rotatebox[origin=c]{90}{\Erdos}} & 10 & 2.92 & 30.38 & 97.00 \\
        & 100 & 2.00 & 530.32 & 99.97 \\
        & 1\,000 & 2.00 & 7\,396.43 & 100.00 \\
        & 10\,000 & 2.00 & 99\,389.17 & 100.00 \\
        \multirow{4}{*}{\rotatebox[origin=c]{90}{\;\Barabasi}} & 10 & 2.00 & 31.95 & 87.40 \\
        & 100 & 3.00 & 545.84 & 99.91 \\
        & 1\,000 & 4.00 & 7\,905.44 & 99.99 \\
        & 10\,000 & 5.00 & 102\,217.08 & 100.00 \\
        \multicolumn{2}{c}{Twitter*} & 7 & 970\,405.99 & 100.00 \\
        \multicolumn{2}{c}{Wikipedia*} & 6 & 59\,238.65 & 99.89 \\
        \multicolumn{2}{c}{Facebook} & 8 & 42\,224.53 & 100.00 \\
        \multicolumn{2}{c}{Twitch} & 10 & 78\,911.45 & 99.99 \\
        \bottomrule
    \end{tabular}
    \caption{Convergence time of various graphs in the asynchronous maximum model}
\end{mytable}
As one might expect, the likelihood that the final value is either the maximum or is close to the maximum increases greatly as the number of nodes increases, since it is much less likely for the maximum to be erased, and much more likely that its value spreads to more nodes initially, making it harder to be erased later on. Also note that the \ER and \BA models seem to yield similar values, despite the fact that the \ER graphs are much more dense.

The convergence times are much larger for this model, mostly due to the fact that only one node can update on each iteration, whereas before they would all update. However, even if we account for this by dividing the convergence times by $n$, notice that the result increases beyond $2D$ for the \ER graphs. For example, we get $7.39643$ for $n=1\,000$ and $9.938917$ for $n=10\,000$, both of which exceed $2D$, since the average diameter is almost always $D=2$. In practice, for social networks, it appears that the convergence time divided by $n$ appears to be $\Theta(D)$, so it is possible that there is a connection.

Since the diameters of the graphs are similar, we can plot points to try to estimate the dependency of the convergence time on the number of nodes $n$.
\begin{myfigure}
    \centering
    \includegraphics[width=0.8\linewidth]{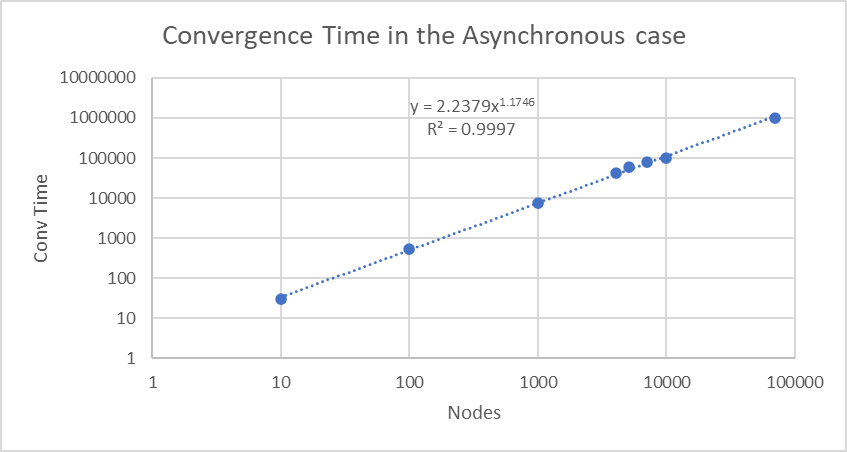}
\end{myfigure}
The relation appears to be linear (the R squared value for a straight line is $R^2=0.9984$), suggesting a theoretical bound that could be looked into in future.                               
\chapter{Concluding Remarks}\label{chap:conclusion}
In this chapter, we conclude our work with a brief recap of the key theorems shown in \cref{chap:load-balancing,chap:max-model}, and a summary of the experimental results. We then outline avenues for future research into this topic, other related models, and related topics.

\section{Conclusion}
\subsection{Load Balancing Model}
In \Cref{chap:load-balancing}, we showed that the sum of the nodes of a graph is invariant under the load balancing model, and that the square sum and maximum are monotonically decreasing. After showing that the Markov chain is finite, this allowed us to prove \Cref{divisionConnected} and \Cref{divisionAny}, which showed that each connected component always converges to a state where all values are one of two adjacent integers; the integers themselves, and how many of each there are, can be determined using the sum invariant and division algorithm. This gives a complete understanding of the unique final state of the graph, which can be checked in code by maintaining the minimum and maximum, and checking if they differ by 1.

We broke the convergence time into a height and width to be analysed separately. We showed that the height depends on the \emph{square sum gap}, which is half the difference between the initial and final square sums, then showed the tightness of this bound by presenting an initial valuation for the complete graph on $2^n$ nodes. This valuation has an initial square sum of $n2^n$, a final square sum of $0$, and an update sequence which takes the maximum amount of $n2^{n-1}$ shrink updates to converge. For the width, we defined the \emph{gambler's ruin valuations} which we believe to be the worst case for the average width, and gave a proof sketch as to why this width is $\Theta(n^3)$. We then gave some informal arguments that could hopefully be extended to show that the expected convergence time of any graph is $\bigO(n^3q)$.

The experiments showed that the maximum and minimum are comparatively quickly brought within 2 of each other, and the vast majority of the convergence time is spent swapping the minimum and maximum values randomly until they happen to collide with each other. We saw that for each $n$, the convergence time of the \ER graph of $n$ nodes counter-intuitively decreases with respect to the square sum gap $q$ up until some optimal value, then increases afterwards as expected. The height (i.e.\ the number of shrink updates) appears to increase linearly with respect to the square sum gap $q$, suggesting that the average number of shrink updates is a constant ratio of the maximum possible number of shrink updates.

Experiments on binomial valuations suggested that the width depended linearly on $n$ and the height depended linearly on $q$, suggesting that the average convergence time was $\Theta(nq)=\Theta(n^2\log n)$. Experiments also reinforced the idea that gambler's ruin valuations have an average convergence time of $\Theta(n^3)$.

\begin{mytable}
    \centering
    \begin{tabular}{ccccc}
        \toprule
        \multirow{2}{*}{Graph/Valuation} & \multicolumn{3}{c}{Average case} & Worst case \\
        \cmidrule(lr){2-4} \cmidrule(lr){5-5}
        & Conv Time & Width & Height & Height \\
        \midrule
        Binomial & $\Theta(nq)$ & $\Theta(n)$ & $\Theta(q)$ & $q$ \\
        Gambler's ruin & $\Theta(n^3)$ & $\Theta(n^3)$ & $1$ & $1$ \\
        \bottomrule
    \end{tabular}

    \caption{Experimental results of \Cref{chap:load-balancing}}
\end{mytable}

\subsection{Synchronous Maximum Model}
In \Cref{chap:max-model}, for undirected graphs, we showed that bipartite graphs spread the maxima of each coloured component to the entire component in at most $D$ steps, and therefore converge with period 2. Also, non-bipartite graphs converge with period 1 in at most $2D$ steps, and spread the global maximum everywhere. For arbitrary undirected graphs, which can be disconnected, we showed that we can split into cases based on the individual periods of each connected component.

For directed graphs, we showed that the local periods and global period are all equal for strongly connected graphs, then used this to show that the global period divides the GCD of all simple cycles $g$. Then, we split the graph into $g$ equivalence classes and showed that for large enough $t$, the value $f_{gt+r}(u)$ (i.e.\ the value of $u$ at time $gt+r$) is the maximum value $f_0(v)$ for all $v\in S^r[u]$, the equivalence class $r$ steps after $u$. We used this to explicitly construct a graph of period $p$ for any $p$ which divides $g$. Finally, for weakly connected graphs, we discussed SCCs and topological sorting, and provided a construction of a graph with an exponential period.

Experiments on undirected graphs confirmed that the convergence time was at most $2D$, but also indicated that directed graphs share the same convergence time. It remains to show via proof that the convergence time of directed graphs is $\Theta(D)$.

We also ran some experiments on the asynchronous model to see if it behaved like the synchronous model, and found that it did not, even if we divide by $n$ to account for the fact that only one node updates on each iteration.

\section{Future Work}
Future research questions can be divided into three categories:
\begin{itemize}
    \item \emph{Unproven results from this thesis.} Experiments indicated that the convergence time of the gambler's ruin valuation was indeed $\Theta(n^3)$ in the worst case, but the proof of $\Omega(n^3)$ is incomplete, and it has not yet been shown that the gambler's ruin valuation is the worst case; finishing off these proofs, or finding simpler ways to show the $\bigO(n^3)$ bound, would solidify these results and potentially yield techniques applicable to future problems.
    
    \item \emph{Research questions stemming from this thesis.} For example, the links between $n$ and $q$ discussed in \Cref{sec:expAnalysis1} could be investigated further; each $n$ appears to have an optimal $q$ for the convergence time on random graphs, and the convergence time for fixed $q$ converges to $q$ for high enough $n$, but this $n$ appears to depend on $q$ also; the reasons for this are currently unknown and could be explored further.

    Another question could be to find examples of graphs that meet certain criteria, or prove that they cannot exist. For example, one could search for a graph and initial valuation such that the synchronous maximum model has exponential convergence time, and possibly even find an example which is strongly connected (although this is unlikely to be possible). It may also be possible to find an example of a graph and valuation that achieves a worst case width of $\Theta(n^3)$ but is more scalable than the gambler's ruin valuation, being dense instead of sparse, or having a higher square sum gap than 2.

    Research could also be put into the asynchronous maximum model, which has so far been unexplored. The expected convergence time appears to be linear, which could be the subject of future investigation. We suspect that the convergence of this process has two stages; first, an entire cycle must obtain the maximum value, and second, this value must propagate to the rest of the graph.
    
    \item \emph{Related questions of interest.} While investigating related questions, a problem came up that is simple to state yet difficult to solve; consider the asynchronous majority model, where a node updates to a value if and only if that value is strictly more prevalent among its neighbours than all other values. For example, if a node has three red neighbours, three blue neighbours and two green neighbours, then it would not update (even if it were green), but if it had four red neighbours, it could update to red. For $k=2$, we can show that the period is 1, since such a state can be reached by performing as many white updates as possible, followed by as many black updates as possible \citep{brill2016pairwise}. This seems to be true for general $k$, but is challenging to prove, since this argument does not appear to generalise.
\end{itemize}                              


\bibliographystyle{anuthesis} 
\cleardoublepage\phantomsection
\bibliography{bib}
\end{document}